%% file: main.tex
\documentclass[11pt,letterpaper]{article}
\usepackage{amsmath,amssymb,amsfonts,amsthm}
\usepackage{mathrsfs}
\usepackage{mathtools}
\usepackage{thm-restate}
\usepackage{tikz}
\usepackage{enumitem}
\usepackage{bbm}
\usepackage{xcolor}
\usepackage{extpfeil}
\usepackage[ruled,vlined,linesnumbered]{algorithm2e}
\usepackage[margin=1in]{geometry}
\usepackage[colorlinks, citecolor = blue, linkcolor = magenta]{hyperref}
\usepackage{cleveref}
\Crefname{subappendix}{Section}{Sections}
\usepackage[most]{tcolorbox}   
\tcbuselibrary{skins, breakable}
\numberwithin{equation}{section}
\input{macros}

\title{Near-Optimal Space Lower Bounds for Streaming CSPs}
\author{Yumou Fei\thanks{Department of EECS, Massachusetts Institute of Technology.}\and Dor Minzer\thanks{Department of Mathematics, Massachusetts Institute of Technology. Supported by NSF CCF award 2227876 and NSF CAREER award 2239160.}\and Shuo Wang\thanks{Department of Mathematics, Massachusetts Institute of Technology. Supported by NSF award 2239160.}}
\date{\vspace{-5ex}}

\begin{document}
\maketitle
\begin{abstract}
    In a streaming constraint satisfaction problem (streaming CSP), a $p$-pass algorithm receives the constraints of an instance sequentially, making $p$ passes over the input in a fixed order, with the goal of approximating the maximum fraction of satisfiable constraints. We show near optimal space lower bounds for streaming CSPs, improving upon prior works:
    \begin{enumerate}
        \item Fei, Minzer and Wang (\textit{STOC 2026}) showed that for any CSP, the basic linear program defines a threshold $\alpha_{\mathrm{LP}}\in [0,1]$ such that, for any $\varepsilon > 0$, an $(\alpha_{\mathrm{LP}} - \varepsilon)$-approximation can be achieved using constant passes and polylogarithmic space, whereas achieving $(\alpha_{\mathrm{LP}} + \varepsilon)$-approximation requires $\Omega(n^{1/3}/p)$ space. We improve this lower bound to $\Omega(\sqrt{n}/p)$, which is nearly tight for a gap version of the problem.
        
        \item  For $p=o(\log n)$, we further strengthen the lower bound to $\Omega(n\cdot2^{-O_{\varepsilon}(p)})$. Combined with existing algorithmic results, this shows that $\alpha_{\mathrm{LP}}$ is not only the limit of multi-pass polylogarithmic-space algorithms, but also the limit of single-pass sublinear-space algorithms on bounded-degree instances.

        \item For certain CSPs, we show that there exists $\alpha < 1$ such that achieving an $\alpha$-approximation requires $\Omega(n/p)$ space. 
    \end{enumerate}
    Our proofs are Fourier analytic, building on the techniques of Fei, Minzer and Wang (\textit{STOC 2026}) and the Fourier-$\ell_1$-based lower bound method of Kapralov and Krachun (\textit{STOC 2019}). 
\end{abstract}

\thispagestyle{empty}
\setcounter{page}{0}
\newpage
{
\setcounter{tocdepth}{3} 
\tableofcontents
}
\newpage
\setcounter{page}{1}

\input{introduction}

\input{prelim}

\input{communication_game}

\input{communication_hardness}

\input{decomposition}

\input{discrepancy}

\input{induction_lemma}

\input{twowise_independence}

\addcontentsline{toc}{section}{References}
\bibliography{reference}
\bibliographystyle{alpha}
\appendix
\input{appendices}

\end{document}

%% file: macros.tex
\newtheorem{theorem}{Theorem}[section]

\newtheorem{proposition}[theorem]{Proposition}
\newtheorem{corollary}[theorem]{Corollary}
\newtheorem{lemma}[theorem]{Lemma}

\theoremstyle{definition}
\newtheorem{definition}[theorem]{Definition}

\newtheorem{remark}[theorem]{Remark}
\newtheorem{notation}[theorem]{Notation}

\allowdisplaybreaks

\newcommand{\bad}{\mathrm{bad}}
\newcommand{\leaf}{\mathrm{leaf}}

\renewcommand{\leq}{\leqslant}
\renewcommand{\le}{\leqslant}
\renewcommand{\geq}{\geqslant}
\renewcommand{\ge}{\geqslant}

\renewcommand{\d}{\,\mathrm{d}}

\newcommand{\calB}{\mathcal{B}}
\newcommand{\calC}{\mathcal{C}}
\newcommand{\calD}{\mathcal{D}}
\newcommand{\calE}{\mathcal{E}}
\newcommand{\calF}{\mathcal{F}}

\newcommand{\calM}{\mathcal{M}}

\newcommand{\calR}{\mathcal{R}}
\newcommand{\calS}{\mathcal{S}}

\newcommand{\calU}{\mathcal{U}}
\newcommand{\calV}{\mathcal{V}}
\newcommand{\calI}{\mathcal{I}}
\newcommand{\calX}{\mathcal{X}}

\newcommand{\bC}{\mathbb{C}}
\newcommand{\bE}{\mathbb{E}}
\newcommand{\bF}{\mathbb{F}}
\newcommand{\bN}{\mathbb{N}}
\newcommand{\bP}{\mathbb{P}}
\newcommand{\bR}{\mathbb{R}}
\newcommand{\bZ}{\mathbb{Z}}

\newcommand{\bfa}{\mathbf{a}}

\newcommand{\bfy}{\mathbf{y}}
\newcommand{\bfz}{\mathbf{z}}
\newcommand{\bfP}{\mathbf{P}}
\newcommand{\bfR}{\mathbf{R}}
\newcommand{\bfY}{\mathbf{Y}}
\newcommand{\bfZ}{\mathbf{Z}}

\newcommand{\bdP}{\boldsymbol{P}}
\newcommand{\bdR}{\boldsymbol{R}}
\newcommand{\bdzeta}{\boldsymbol{\zeta}}
\newcommand{\bdxi}{\boldsymbol{\xi}}

\newcommand{\sfe}{\mathsf{e}}
\newcommand{\sfv}{\mathsf{v}}
\newcommand{\sfH}{\mathsf{H}}
\newcommand{\sfW}{\mathsf{W}}

\newcommand{\fraki}{\mathfrak{i}}

\newcommand{\yes}{\mathrm{yes}}
\newcommand{\no}{\mathrm{no}}

\newcommand{\nil}{\textup{\texttt{nil}}}

\newcommand{\proj}{\mathrm{proj}}

\newcommand{\ZmodN}{\bZ_{N}}
\newcommand{\ZNk}{\bZ_{N}^{k}}

\newcommand{\bd}{\mathsf{bd}}
\newcommand{\itn}{\mathsf{in}}

\newcommand{\supp}{\mathrm{supp}}   

\newcommand{\lp}{{\textsc{BasicLP}}}
\newcommand{\lpval}{\mathrm{val}^{\mathrm{LP}}}
\newcommand{\val}{\mathrm{val}}

\newcommand{\cspF}{\mathrm{CSP}(\mathcal{F})}

\newcommand{\McspF}[2]{\mathsf{MaxCSP}(\calF)[#1, #2]}

\newcommand{\Map}[2]{\mathrm{Map}\left(#1, #2\right)}

\newcommand{\ind}[1]{\mathbbm{1}\left\{#1\right\}}

\newcommand{\tcup}{\textstyle\bigcup}
\newcommand{\tprod}{\textstyle\prod}

\newcommand{\DIHP}{\mathsf{DIHP}}
\newcommand{\CC}{\mathsf{CC}}

\newcommand{\reff}{\mathrm{ref}}
\newcommand{\adv}{\mathrm{adv}}

\newcommand{\Ex}[1]{\bE \left[ #1 \right]}
\newcommand{\Exu}[2]{\underset{#1} \bE \left[ #2 \right] }

\newcommand{\Exs}[2]{\bE_{#1}\left[ #2 \right]}
\renewcommand{\Pr}[1]{\bP \left[ #1 \right]} 
\newcommand{\Prs}[2]{\bP_{#1} \left[ #2 \right]}
\newcommand{\Pru}[2]{\underset{ #1 }\bP \left[ #2 \right]}

\newcommand{\inp}[2]{\left\langle #1, #2 \right\rangle}

%% file: introduction.tex
\section{Introduction}
Constraint satisfaction problems (CSPs) are among the most well studied problems in theoretical computer science. Traditionally, they have been studied in the context of polynomial time computation, and by now the tractability of their decision version~\cite{Bulatov,Zhuk}, as well as of their approximation version~\cite{Raghavendra}, are (for the most part) well understood. Over the last decade CSPs have gained significant interest in the streaming community, with the goal of understanding what approximation ratios efficient streaming algorithms can achieve. This is the subject of the current paper, and our main result is a quantitatively stronger (and nearly optimal in some sense) space lower bound for the classes of single pass and multi-pass streaming algorithms for CSPs.

\subsection{Constraint Satisfaction Problems in the Streaming Model}
A single-pass, space $S$ streaming algorithm receives its input as a stream, and upon viewing each element it is allowed to modify its memory arbitrarily. At the end of the stream, the algorithm is supposed to output an answer.
The study of CSPs in the streaming setting started in~\cite{KKS15}, who considered the max-cut problem. In this problem, the stream consists of the edges of some graph $G$, and the algorithm is supposed to estimate the size of the largest cut in $G$. There is a trivial $O(\log n)$ space algorithm achieving approximation ratio $1/2$, and the work~\cite{KKS15} shows that for all $\varepsilon>0$, achieving approximation ratio $1/2+\varepsilon$ requires $\Omega_{\varepsilon}(\sqrt{n})$ memory.

The work of~\cite{KKS15} inspired a flourish of activity in the study of the performance of streaming algorithms on optimization problems; see~\cite{guruswami2017streaming,KK19,chou2020optimal,AKSY20,AN21,chou2022linear,saxena2023improved,hwang2024oblivious,CGSV24,saxena2025streaming,FMWa,FMW25b,STV,Velus,ABFS} and~\cite{SudanSurvey,Assadi,Singer} for surveys. All of these works consider constraint satisfaction problems, which is a rich class of problems extending the max-cut problem, defined as follows.

\begin{definition}\label{def:CSPs}
Let $k\in\mathbb{N}$ be a parameter, $\Sigma$ be a finite alphabet and ${\cal{F}}\subseteq \{ f: \Sigma^k\rightarrow \{0,1\}\}$ be a family of predicates. An instance $\calI=(\calV, \calC)$ of $\cspF$ (we write $\calI\in\cspF$) consists of a set of variables $\calV$ and a collection of constraints $\calC=(C_{1},\dots,C_{m})$. Each constraint $C_{i}$ is specified as $((\sfv_{i,1},\dots,\sfv_{i,k}),f_{i})$, where $\sfv_{i,j}\in \calV$ is a variable for $j=1,\ldots,k$, and $f_i\in \calF$ is a predicate. 
\end{definition}
Given an instance $\calI$ of $\cspF$, the goal is to find an assignment $\tau\colon \calV\to\Sigma$ satisfying as many of the constraints as possible, i.e., maximizing 
\[
\val_{\calI}(\tau) = \sum\limits_{i=1}^{m}f_i(\tau(\sfv_{i,1}),\ldots,\tau(\sfv_{i,k})).
\]
The value of the instance $\calI$ is defined as $\val_{\calI} = \max_{\tau}\val_{\calI}(\tau)$. We will sometimes discuss the degree of an instance, which is the maximum number of constraints a variable appears in.

All prior works on streaming CSPs either consider a specific problem (such as max-cut or max-directed-cut) or a class of constraint satisfaction problems. They then study the performance of a class of streaming algorithms on them (such as sketching algorithms, single-pass algorithms, mult-pass algorithms), both on the algorithmic front and the hardness front. For example, the work of~\cite{CGSV24} gave an exact characterization of the power of the class of sketching algorithms, which is a subclass of single-pass streaming algorithms. 
Another example, which is most relevant to our paper, is the work of~\cite{KK19}, who improved the result of~\cite{KKS15} and showed that any single-pass, $(1/2+\varepsilon)$-approximation streaming algorithm for max-cut requires $\Omega_{\varepsilon}(n)$ memory. In that work 
the authors show how to replace the $L_2$-based lower bound approach of~\cite{KKS15} by a Fourier-$\ell_1$-based lower bound approach, which gives better quantitative bounds.

\paragraph{Multi-pass algorithms:} the main topic of this paper is multi-pass streaming algorithms. Given an instance $\calI$ of a CSP, a $p$-pass streaming algorithm gets to view the constraints of $\calI$ in the same predetermined order for $p$ times, and it then must output an approximation of $\val_{\calI}$. A multi-pass streaming algorithm is considered efficient if both the pass complexity $p$ and the space complexity are small (by which one typically means poly-logarithmic). Building on~\cite{yoshida2011optimal,saxena2025streaming}, the work~\cite{FMW25b} proposed a general approximation algorithm based on a linear-programming relaxation of $\calI$ due to~\cite{yoshida2011optimal}, called $\lp_{\calI}$, defined as follows. The program has variables $(x_{\sfv,\sigma})_{\sfv\in \calV,\,\sigma\in \Sigma}$, thought of as specifying a probability distribution over the possible assignments to each variable $v\in\calV$, and variables $(z_{i,b})_{i\in [m],\, b\in \Sigma^{k}}$, thought of as specifying a probability distribution over the possible assignments to each clause.
\begin{tcolorbox}[
  enhanced,
  breakable,
  title={$\lp_{\calI}$ for $\calI=(\calV, \calC)$},
  fonttitle=\bfseries,
  colframe=black,
  colback=white,
  boxrule=0.6pt,
  arc=6pt,
  left=6pt,
  right=6pt,
  top=4pt,
  bottom=4pt
]
\begin{alignat*}{3}
\textup{maximize} \quad     &&  \frac{1}{m}\sum_{i=1}^{m} 
            \sum_{b\in\Sigma^{k}} f_i(b)\,z_{i,b} \\
\textup{subject to} \quad && \sum_{\sigma\in\Sigma} x_{\sfv,\sigma}&=1
            \quad &\forall\,\sfv\in \calV \\
           && \sum_{b\in\Sigma^{k}}\mathbbm{1}\{b_{j}=\sigma\}\cdot
             z_{i,b}&=x_{\sfv_{i,j},\,\sigma}
            \quad &\forall\,i\in [m],\,j\in [k],\,\sigma\in \Sigma\\
           && x_{\sfv,\sigma}&\ge 0
            \quad &\forall\,\sfv\in \calV,\,\sigma\in\Sigma \\
           && z_{i,b}&\ge 0
            \quad &\forall\,i\in [m],\,b\in\Sigma^{k}
\end{alignat*}
\end{tcolorbox}
The objective value counts the total mass put on assignment that satisfies the constraints of $\calI$, and the constraints enforce local consistency between the distributions described by the variables of the program. We denote by $\lpval_{\calI}$ the value of $\lp_{\calI}$.

The work~\cite{FMW25b} argued that for 
\begin{equation}\label{eq:def_alpha_LP}
\alpha_{\mathrm{LP}}(\calF):=\inf_{\calI\in \cspF} \frac{\val_{\calI}}{\lpval_{\calI}}
\end{equation}
and any $\varepsilon>0$, a streaming algorithm with $O_{\varepsilon}(1)$ many passes and $O_{\varepsilon}(\log n)$ memory can $\alpha_{\mathrm{LP}}$-approximate the values of $\cspF$ instances.\footnote{Here $n$ stands for the number of variables in the input instance. We additionally assume that the number of constraints is at most polynomial in $n$.} To get a matching hardness result, they showed that for any $\varepsilon>0$, a $p$-pass, space-$S$ streaming algorithm achieving approximation ratio $\alpha_{\mathrm{LP}}(\calF) + \varepsilon$ must satisfy $pS = \Omega_{\varepsilon}(n^{1/3})$. Thus, a sharp transition in the complexity of approximation $\cspF$ occurs at $\alpha_{\mathrm{LP}}(\calF)$, roughly from constant-pass/logarithmic-space, to polynomial. While this gap is very large, it is not optimal. 

\subsection{Main Results}

The main result of this paper is a near-optimal, quantitative improvement of the result of~\cite{FMW25b}. We state it in the form of gap problems: given parameters $1\geq c> s\geq 0$,  $\McspF{c}{s}$ is the promise problem wherein the input instance $\calI\in \cspF$ is promised to either have $\val_{\calI}\geq c$ or $\val_{\calI}\leq s$, and the goal is to distinguish between these two cases.
\begin{theorem}\label{thm:main}
Fix a nonempty instance $\calI\in \cspF$. Let $s:=\val_{\calI}$ and $c:=\lpval_{\calI}$. Then the following statements hold:
\begin{enumerate}[label = (\arabic*)]
\item If $c<1$, then for any fixed error parameter $\varepsilon\in (0,1)$, the memory size of any $p$-pass streaming algorithm for $\McspF{c-\varepsilon}{s+\varepsilon}$ is at least $\max\big\{\Omega_{\varepsilon}(\sqrt{n}/p),\;n\cdot \exp(-O_{\varepsilon}(p))\big\}$.
\item If $c=1$, then for any fixed error parameter $\varepsilon\in (0,1)$, the memory size of any $p$-pass streaming algorithm for $\McspF{1}{s+\varepsilon}$ is at least $\max\big\{\Omega_{\varepsilon}(\sqrt{n}/p),\;n\cdot \exp(-O_{\varepsilon}(p))\big\}$.
\end{enumerate}
\end{theorem}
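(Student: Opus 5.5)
The plan is to follow the architecture of~\cite{FMW25b}: reduce streaming lower bounds to a multi-player, multi-round communication game. The two terms of the bound are obtained from two separate Fourier-analytic hardness analyses of the same game.

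\textbf{Step 1: the communication game.} From the fixed instance $\calI$ and an optimal LP solution $(x,z)$, build two distributions over large instances on $n$ variables: a \emph{yes} and a \emph{no} distribution. Each is a random ``blow-up'' of $\calI$ with $T=O_\varepsilon(1)$ players, each holding a sparse random hypermatching.
\begin{itemize}[leftmargin=1.5em]
\item In the yes case, hidden labels are planted so that each constraint's local view is distributed according to $z_{i,\cdot}$. This uses the LP consistency constraints and forces value at least $c-\varepsilon$; when $c=1$ the LP solution is supported on satisfying tuples, so we get perfect completeness.
\item In the no case, labels are uniform/independent, so the value is at most $s+\varepsilon$ with high probability. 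This follows by a first-moment/union bound over assignments, since the blown-up random instance behaves like $\calI$ and its value is $\val_\calI = s$.
\end{itemize}
A $p$-pass space-$S$ streaming algorithm yields a protocol for this game, in which players speak in order for $p$ rounds with $S$-bit messages. So it suffices to show that any protocol with that communication has $o(1)$ distinguishing advantage unless $S$ exceeds the stated bounds.

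\textbf{Step 2: hardness via Fourier decomposition.} As in~\cite{FMW25b}, reduce to an inductive statement about the posterior distribution of the hidden labels given the transcript so far. The key quantity is the Fourier mass of the conditional density $f$ of the message-set, on characters supported on sets compatible with the players' hypermatchings. We decompose $f$ into level-$\ell$ parts and bound the expected contribution of each level by a hypercontractivity/level-$\ell$ inequality, of the form $\sum_{|S|=\ell}\hat f(S)^2\le (O(S)/\ell)^\ell$.

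\textbf{Step 3: the $\sqrt n/p$ bound.} Use an $L_2$-type induction with a tighter discrepancy estimate in place of the weaker one that gave $n^{1/3}$. The hypermatching hits a fixed set of size $\ell$ with probability about $(\ell/n)^{\ell/2}$. Multiplying by the level-$\ell$ bound gives terms $(O(S)/\sqrt{n\ell})^{\ell}$ roughly, summable when $S\ll\sqrt n$. The per-round advantages accumulate over $pT$ rounds, which yields $\Omega(\sqrt n/p)$.

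\textbf{Step 4: the $n\,2^{-O(p)}$ bound.} Adapt Kapralov--Krachun's Fourier-$\ell_1$ method: track $\sum_S|\hat f(S)|$ weighted by the probability that the hypermatchings make $S$ ``visible''. The point is that in bounded-degree sparse instances a level-$\ell$ set survives with probability $(\ell/n)^{\ell}$ rather than $(\ell/n)^{\ell/2}$, which allows $S$ up to linear in $n$. The induction across passes loses a constant factor per round; after $pT=O_\varepsilon(p)$ rounds this gives $S\ge n\exp(-O_\varepsilon(p))$.

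\textbf{Main obstacle.} I expect the main difficulty to be Step 4. In the multi-pass setting the conditioning on earlier messages destroys independence between hypermatchings and the posterior on labels. Controlling the $\ell_1$ Fourier mass through repeated rounds requires the decomposition into structured pieces (presumably the ``decomposition'' and ``induction lemma'' sections). My first attempt would be to prove a per-round lemma: if the current density has bounded weighted $\ell_1$ mass $B$ at each level, then after one more message of $S$ bits it has mass $C\cdot B$ with $C=O_\varepsilon(1)$. I would then iterate this lemma.
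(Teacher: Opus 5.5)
Your Step 1 matches the paper, which black-boxes the reduction of Lemma~\ref{lem:communication_reduction} to $\DIHP(G,n,\alpha,K)$ with $pmK$ rounds. Steps 3 and 4, however, have a genuine gap, and it is the one you flag yourself. Both mechanisms you propose need each message to be a function of a fresh random matching that is independent of the transcript so far. The first is an $L_2$ level-$\ell$ argument in which ``per-round advantages accumulate''; the second is a per-round Kapralov--Krachun lemma that multiplies the weighted $\ell_1$ mass by $C$ with each message. In a multi-pass protocol, a player's later messages depend on the same matching as its earlier ones. A message may also simply reveal edges, so the sets $A^{(\sfe,j)}$ of a rectangle need not be pseudorandom at all. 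Pointing to ``the decomposition'' therefore leaves the heart of the theorem unproved.

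Step 3 is also misdirected. An $L_2$ analysis of this kind is exactly what gives only $n^{1/3}$ in \cite{FMW25b}. The loss there comes from combining $L_2$ control of the pseudorandom parts with the structured information, not from a slack level-$\ell$ estimate that could be tightened. Your heuristic for Step 4 is also off. Level-$\ell$ sets still survive with probability about $(\ell\alpha/n)^{\ell/2}$ (Lemma~\ref{lem:q_calculation}), and the gain comes from the $\ell_1$ level bounds $F_C(n,\ell,s^*)$.

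The missing idea is to separate structure from randomness \emph{before} any Fourier analysis, and to let each carry a different part of the argument.
\begin{enumerate}
\item After converting to a global protocol (Lemma~\ref{lem:arbitrary_to_global}), every leaf is a structured rectangle $(\bdzeta,R)$: explicitly revealed labeled edges $\bdzeta$, plus $\bfz$-global sets.
\item All round-by-round information goes into $\bdzeta$ and is tracked combinatorially through the weight $\|\bdzeta\|=\sum_i|V_i|^2$. Lemma~\ref{lem:bounded_growth} shows that the expected weight grows by at most a constant factor per round, up to an additive communication term. It also shows that a round creates a cycle with probability $O(\alpha\|\bdzeta\|/n)$. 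This is the sole source of the $n e^{-O(p)}$ term.
\item For the $\sqrt{n}/p$ term, the round-insensitive decomposition of \cite{FMW25b} gives only $\|\bdzeta\|=O(|\Pi|^2)$. This reflects a genuine cycle-finding barrier (Theorem~\ref{thm:bipartiteness_tester}), not a weak discrepancy estimate.
\item One discrepancy lemma (Lemma~\ref{lem:discrepancy_bound}) then serves both regimes. It applies to acyclic rectangles of density $2^{-\gamma n}$ and weight $\gamma n$. There, $\calD_{\yes}(R)/\calD_{\no}(R)$ is an average over $M_i\sim\calD_i$ of $\mathbb{E}_x[g_{\bdzeta}\prod_i h_{i,M_i}]$.
\item The structured product $g_{\bdzeta}$ gets $\ell_1$ Fourier bounds by counting sets that meet each component in $\neq 1$ vertices, using one-wise independence.
\item Inside a product rectangle, the unrevealed parts of the players' matchings are independent pseudo-uniform matchings. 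So the $\ell_1$ induction (Lemma~\ref{lem:induction}) is applied once per player, $K|\calE|=O(1)$ times, and never once per round.
\end{enumerate}
Thus the $\ell_1$ method drives both terms, and $p$ enters only through weight growth.
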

\Cref{thm:main} immediately implies that for general pass complexity $p$, the lower bound of~\cite{FMW25b} can be improved to $pS = \Omega_{\varepsilon}(\sqrt{n})$:

\begin{corollary}\label{cor:alg_and_lb}
    For any predicate family $\calF\subseteq \{f:\Sigma^{k}\rightarrow \{0,1\}\}$ and any $\varepsilon>0$, any $p$-pass streaming algorithm with $S$ bits of memory achieving approximation ratio $\alpha_{\mathrm{LP}}(\calF)+\varepsilon$ (the threshold $\alpha_{\mathrm{LP}}(\calF)$ is defined in \eqref{eq:def_alpha_LP}) for $\cspF$ must satisfy $pS=\Omega_{\varepsilon}(\sqrt{n})$.
\end{corollary}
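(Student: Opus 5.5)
The plan is to derive \Cref{cor:alg_and_lb} from \Cref{thm:main} by the usual reduction from approximation algorithms to gap problems. Fix $\calF$ and $\varepsilon>0$, and suppose $\mathcal{A}$ is a $p$-pass, $S$-space streaming algorithm that outputs an estimate $\widehat{v}$ satisfying $(\alpha_{\mathrm{LP}}(\calF)+\varepsilon)\val_{\calI}\le \widehat{v}\le \val_{\calI}$. (Any other fixed normalization convention for approximation only changes constants.) We may assume $\alpha_{\mathrm{LP}}(\calF)+\varepsilon\le 1$, since otherwise the statement is vacuous.

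By the definition \eqref{eq:def_alpha_LP} of the infimum, there is a nonempty instance $\calI_0\in\cspF$ with
\[
\val_{\calI_0}\le \bigl(\alpha_{\mathrm{LP}}(\calF)+\varepsilon/2\bigr)\lpval_{\calI_0}.
\]
Set $s:=\val_{\calI_0}$ and $c:=\lpval_{\calI_0}$. Note $c>0$: every constraint has some assignment $b$, so $c\ge s\ge \frac1m\max_\tau(\cdot)$. If some predicate is identically zero this needs a little care, but then $s=0\le(\alpha+\varepsilon/2)c$ holds trivially. Let $\delta>0$ be a small constant depending only on $\varepsilon$ and $c$, to be fixed below. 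If $c<1$, apply \Cref{thm:main}(1) with error $\delta$ to the gap problem $\McspF{c-\delta}{s+\delta}$. If $c=1$, apply \Cref{thm:main}(2) to $\McspF{1}{s+\delta}$. In either case the problem requires space $\Omega_{\delta}(\sqrt n/p)$. Since $\calI_0$ is chosen depending only on $\calF$ and $\varepsilon$, this is $\Omega_\varepsilon(\sqrt n/p)$.

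It remains to show that $\mathcal{A}$ solves this gap problem. Write $c'$ for the completeness value, which is $c-\delta$ or $1$. The decision rule is: on input $\calI$, run $\mathcal{A}$ and answer ``yes'' iff $\widehat v> s+\delta$.
\begin{itemize}
\item In the no case, $\widehat v\le \val_{\calI}\le s+\delta$, so $\mathcal{A}$ answers ``no''.
\item In the yes case, $\widehat v\ge(\alpha_{\mathrm{LP}}+\varepsilon)(c-\delta)$.
\end{itemize}
So it suffices to have
\[
(\alpha_{\mathrm{LP}}+\varepsilon)(c-\delta)> s+\delta.
\]
The left side is at least $(\alpha_{\mathrm{LP}}+\varepsilon)c-\delta$, so it suffices that
\[
(\alpha_{\mathrm{LP}}+\varepsilon)c-(\alpha_{\mathrm{LP}}+\varepsilon/2)c=\tfrac{\varepsilon}{2}c>2\delta,
\]
using $s\le(\alpha_{\mathrm{LP}}+\varepsilon/2)c$. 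Choose $\delta=\varepsilon c/5$. If $\mathcal{A}$ is randomized, the same rule works with its success probability.

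Combining the two parts gives $S\ge\Omega_\varepsilon(\sqrt n/p)$, that is, $pS=\Omega_\varepsilon(\sqrt n)$. The only delicate point is that $c=\lpval_{\calI_0}$ could be tiny, which makes $\delta$ small. That is harmless: $\calI_0$ depends only on $(\calF,\varepsilon)$, so all hidden constants remain functions of $\varepsilon$ for the fixed family $\calF$. No step is a genuine obstacle; the work is entirely in \Cref{thm:main}.
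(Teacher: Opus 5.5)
Your argument is correct and is essentially the derivation the paper has in mind when it says Theorem~\ref{thm:main} immediately implies the corollary. You pick $\calI_0$ nearly attaining the infimum in \eqref{eq:def_alpha_LP}, invoke Theorem~\ref{thm:main}(1) or (2) with a small error parameter, and observe that an $(\alpha_{\mathrm{LP}}(\calF)+\varepsilon)$-approximation separates the two sides of the gap; this parallels the paper's proof of Theorem~\ref{thm:one_wise_approximation_resistance}. The only thing to tidy is your justification of $c>0$: it follows directly from choosing $\calI_0$ among instances where the ratio $\val_{\calI}/\lpval_{\calI}$ is defined, i.e.\ $\lpval_{\calI_0}>0$.
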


\begin{remark}
The threshold $\alpha_{\mathrm{LP}}(\calF)$ in Corollary~\ref{cor:alg_and_lb} is optimal due to the algorithm of \cite{FMW25b} mentioned earlier. We note that for the specific problems of max-cut and max-directed-cut, the threshold $\alpha_{\mathrm{LP}}$ equals $1/2$ (see \cite[Theorem 1.5]{FMW25b}). It is not clear whether the space-pass tradeoff $pS=\Omega_{\varepsilon}(\sqrt{n})$ in Corollary~\ref{cor:alg_and_lb} is optimal. However, we do know that the $\Omega(\sqrt{n} /p)$ lower bound in Theorem~\ref{thm:main}(2) cannot be improved by more than $\mathrm{polylog}(n)$ factors (see Section~\ref{subsubsec:sublinear_space_vs_time}). 
\end{remark}
For $p=o(\log n)$, the space lower bound in Theorem~\ref{thm:main} becomes near-linear in $n$. In particular, for single-pass streaming algorithms, we have the following corollary:

\begin{corollary}\label{cor:alg_and_lb_bd_deg}
    For any predicate family $\calF\subseteq \{f:\Sigma^{k}\rightarrow \{0,1\}\}$ and any $\varepsilon>0$, any single-pass streaming algorithm achieving approximation ratio $\alpha_{\mathrm{LP}}(\calF)+\varepsilon$ (the threshold $\alpha_{\mathrm{LP}}(\calF)$ is defined in \eqref{eq:def_alpha_LP}) for $\cspF$ must use at least $\Omega_{\varepsilon}(n)$ bits of memory.
\end{corollary}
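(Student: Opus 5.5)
We derive this from Theorem~\ref{thm:main} with $p=1$, by reducing approximation to a gap problem in the usual way. Fix $\varepsilon>0$ and write $\alpha:=\alpha_{\mathrm{LP}}(\calF)$.

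\textbf{Choosing the instance.} By the definition \eqref{eq:def_alpha_LP} of $\alpha$ as an infimum, we fix a nonempty instance $\calI\in\cspF$ with
\[
\val_{\calI}/\lpval_{\calI}<\alpha+\varepsilon/4 .
\]
We normalize values by the number of constraints, as in the objective of $\lp_{\calI}$, so that $s:=\val_{\calI}$ and $c:=\lpval_{\calI}$ lie in $[0,1]$. Since $\calI$ is nonempty, $c>0$: the uniform LP solution (each $x_{\sfv,\cdot}$ uniform on $\Sigma$, each $z_{i,\cdot}$ uniform on $\Sigma^k$) has value at least the average satisfying density, which is positive unless every predicate is identically $0$. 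If every predicate is identically $0$, the corollary is vacuous, since then $\alpha$ is degenerate and every algorithm trivially succeeds; we set this case aside.

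\textbf{The gap problem.} Choose $\delta>0$ small enough, depending on $\varepsilon$, $c$ and $\alpha$, that
\[
\frac{s+\delta}{c-\delta}<\alpha+\varepsilon/2 .
\]
This is possible because $s/c<\alpha+\varepsilon/4$. If $c<1$, apply Theorem~\ref{thm:main}(1) with error parameter $\delta$ and $p=1$. This gives that any single-pass algorithm for $\McspF{c-\delta}{s+\delta}$ needs space
\[
\max\{\Omega_\delta(\sqrt n),\,n e^{-O_\delta(1)}\}=\Omega_{\varepsilon}(n).
\]
Here $\calI$ and hence $\delta$ depend only on $\calF$ and $\varepsilon$, which are fixed. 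If $c=1$, use part (2) of Theorem~\ref{thm:main} in the same way, with completeness $1\ge 1-\delta$.

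\textbf{From approximation to gap.} Suppose a single-pass algorithm $A$ outputs an estimate $v$ with $(\alpha+\varepsilon)\val_{\calJ}\le v\le \val_{\calJ}$ on every input $\calJ$. Given a gap instance $\calJ$, output ``yes'' if and only if $v>s+\delta$.
\begin{itemize}
\item In the yes case, $v\ge(\alpha+\varepsilon)(c-\delta)>s+\delta$, using $s+\delta<(\alpha+\varepsilon/2)(c-\delta)$ and $c-\delta>0$.
\item In the no case, $v\le \val_{\calJ}\le s+\delta$.
\end{itemize}
So $A$ solves the gap problem in the same space and passes, and the space bound transfers.

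\textbf{Main obstacle.} The only delicate points are bookkeeping: the choice of $\calI$ and $\delta$ must depend only on $\varepsilon$ and $\calF$, so the hidden constants stay $O_\varepsilon$. One must also match the approximation convention. If the algorithm is randomized with success probability $2/3$, the same threshold test works, since the lower bounds of Theorem~\ref{thm:main} are for such algorithms.
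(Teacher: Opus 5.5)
Your proposal is correct and follows the paper's own route. The paper obtains Corollary~\ref{cor:alg_and_lb_bd_deg} by specializing Theorem~\ref{thm:main} to $p=1$, applied to an instance $\calI$ whose ratio $\val_{\calI}/\lpval_{\calI}$ is close to $\alpha_{\mathrm{LP}}(\calF)$; it leaves implicit the choice of $\delta$ and the approximation-to-gap threshold test, which you spell out correctly (the only cosmetic slip is that instances with $\lpval_{\calI}=0$ are excluded simply because the ratio is undefined there, so you may pick $\calI$ with $c>0$ directly, without any assumption on all of $\calF$).
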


Interestingly, the threshold $\alpha_{\mathrm{LP}}(\calF)$ in Corollary~\ref{cor:alg_and_lb_bd_deg} is also optimal for bounded-degree instances. Indeed, as observed by \cite{STV}, combining the algorithm of \cite[Section 5]{saxena2025streaming} with \cite{yoshida2011optimal} yields the following result:

\begin{theorem}[\cite{yoshida2011optimal,saxena2025streaming}]\label{thm:SSSV25}
For any $\varepsilon>0$ and any degree bound $\Delta\geq 1$, there is a single-pass $n^{1-\Omega_{\varepsilon,\Delta}(1)}$-space streaming algorithm that achieves $(\alpha_{\mathrm{LP}}(\calF)-\varepsilon)$-approximation on $\cspF$ instances with maximum degree at most $\Delta$.
\end{theorem}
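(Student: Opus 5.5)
The proof will combine two existing ingredients: a constant-query local algorithm for the LP value on bounded-degree instances due to \cite{yoshida2011optimal}, and the single-pass streaming simulation of such local algorithms from \cite[Section 5]{saxena2025streaming}. We then turn an additive estimate of $\lpval_{\calI}$ into a multiplicative estimate of $\val_{\calI}$ using the definition \eqref{eq:def_alpha_LP} of $\alpha_{\mathrm{LP}}(\calF)$.

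\textbf{Step 1 (local LP estimation).} We first recall the result of \cite{yoshida2011optimal}. It gives a randomized local algorithm for instances of maximum degree at most $\Delta$. The algorithm samples $O_{\varepsilon,\Delta}(1)$ random constraints $i\in[m]$. For each sampled constraint it explores only the radius-$r$ neighborhood of $i$ in the constraint--variable incidence graph, where $r=r(\varepsilon,\Delta)$. This neighborhood has size at most $\Delta^{O(r)}=O_{\varepsilon,\Delta}(1)$. From it the algorithm computes a local value $z_{i,\cdot}$ of a near-optimal solution to $\lp_{\calI}$, obtained by simulating $r$ rounds of a distributed primal--dual (multiplicative-weights) solver. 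Averaging over the samples gives an estimate $\widehat{L}$ with $|\widehat{L}-\lpval_{\calI}|\le \varepsilon'$ with high probability. We will use this as a black box.

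\textbf{Step 2 (streaming simulation).} Next we invoke \cite[Section 5]{saxena2025streaming}. It shows that any local algorithm that makes $q=O(1)$ neighborhood queries of radius $r$ on a degree-$\Delta$ instance can be simulated by a single-pass algorithm using $n^{1-\delta}$ space, for some $\delta=\delta(q,r,\Delta)>0$, even though the stream order is adversarial. The idea is to sample a set of $n^{1-\delta'}$ random variables in advance and to store every constraint touching the current partially explored region. The exploration is organized in $r$ nested layers of sampling rates. With this layering, a constant fraction of the sampled roots have their full radius-$r$ neighborhood recorded with nontrivial probability, whatever order the constraints arrive in. A sampled root whose neighborhood is fully recorded is an unbiased local sample, and one can certify this using the degree bound. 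Combining this with Step 1 gives a single-pass, $n^{1-\Omega_{\varepsilon,\Delta}(1)}$-space estimate $\widehat{L}$ of $\lpval_{\calI}$ up to additive error $\varepsilon'$.

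\textbf{Step 3 (rounding to the approximation ratio).} The algorithm outputs $(\alpha_{\mathrm{LP}}(\calF)-\varepsilon/2)(\widehat{L}-\varepsilon')$. This never exceeds $\val_{\calI}$, because $\val_{\calI}\ge \alpha_{\mathrm{LP}}(\calF)\,\lpval_{\calI}$ by \eqref{eq:def_alpha_LP}. It is also at least $(\alpha_{\mathrm{LP}}(\calF)-\varepsilon)\val_{\calI}$ once $\varepsilon'$ is small relative to $\varepsilon\cdot\val_{\calI}$, since $\lpval_{\calI}\ge\val_{\calI}$. To convert the additive error into a multiplicative one we need $\val_{\calI}$ to be bounded below by a constant $c_{\calF}>0$. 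Predicates that are identically zero can be discarded, and the number of discarded constraints can be counted exactly in $O(\log n)$ space. Every remaining predicate is satisfied by a uniformly random assignment with probability at least $|\Sigma|^{-k}$, so we may take $c_{\calF}=|\Sigma|^{-k}$. Choosing $\varepsilon'=\varepsilon c_{\calF}/4$ then finishes the argument.

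\textbf{Main obstacle.} The main obstacle is Step 2: reconstructing radius-$r$ neighborhoods in one pass when the constraints arrive in an arbitrary order. A neighbor's edges may arrive before we know that the neighbor is relevant. The multi-level sampling of \cite{saxena2025streaming} is what handles this, and it is also the source of the $n^{1-\Omega(1)}$ space and of the doubly-exponential dependence of $\delta$ on $r$. In this proof I would cite that simulation rather than re-derive it, and only check that Yoshida's algorithm fits its interface of constant query count and constant radius.
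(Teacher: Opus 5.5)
Your proposal is correct and follows the paper's route. The paper gives no argument beyond the observation, attributed to \cite{STV}, that Yoshida's constant-query local estimator for the basic LP \cite{yoshida2011optimal} composes with the single-pass bounded-degree simulation of \cite[Section 5]{saxena2025streaming}; your Step 3 rescaling by $\alpha_{\mathrm{LP}}(\calF)$ is the routine conversion that this citation leaves implicit.

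One caveat concerns your informal account of how that simulation works. You describe a constant fraction of pre-sampled roots having their entire radius-$r$ neighborhood recorded, with each such root being an unbiased sample. Under adversarial order with sublinear space, in the worst case only an $n^{-\Omega(1)}$ fraction of neighborhoods can be fully recorded. Moreover, the capture probability generally depends on the neighborhood's size, so type frequencies must be recovered by reweighting. Since you invoke the simulation purely as a black box, this does not affect the argument.
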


\subsubsection{Discussion: Approximation Resistance}

A specific question that has guided prior research in streaming approximability of CSPs is \emph{approximation resistance}. To explain this notion, we define for any predicate family $\calF\subseteq \{f:\Sigma^{k}\rightarrow \{0,1\}\}$ the threshold 
\begin{equation}\label{eq:def_rho_F}
\rho(\calF) = \inf_{\calI\in\cspF} \val_{\cal I}.
\end{equation}
Clearly, the algorithm that simply outputs $\rho(\calF)$ (regardless of the input) is an $\rho(\calF)$-approximation of the values of all $\cspF$ instances. We say $\cspF$ is approximation-resistant against a certain class of algorithms if for any $\varepsilon>0$, no algorithm in that class can solve the gap problem $\McspF{1-\varepsilon}{\rho(\calF)+\varepsilon}$. 

The following approximation-resistance result was proved by \cite{CGSV24}.
\begin{theorem}[{\cite[Theorem 1.2]{CGSV24}}]\label{thm:CGSV_approximation_resistance}
Suppose $\calF\subseteq \{f:\Sigma^{k}\rightarrow\{0,1\}\}$ is a predicate family such that every predicate $f\in \calF$ supports one-wise independence (as formally defined in Section~\ref{subsec:general_notations}). Then for any $\varepsilon>0$, any single-pass streaming algorithm solving $\McspF{1}{\rho(\calF)+\varepsilon}$ must use $\Omega_{\varepsilon}(\sqrt{n})$ bits of memory.
\end{theorem}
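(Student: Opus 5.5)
The statement is cited from \cite{CGSV24}, so the plan is to reproduce the reduction pattern that also drives \Cref{thm:main}, specialized to the single-pass setting. The starting point is a hard instance. By definition of $\rho(\calF)$ as an infimum, fix an instance $\calI^{*}\in\cspF$ with $\val_{\calI^*}\le \rho(\calF)+\varepsilon/2$. Since every $f\in\calF$ supports a one-wise independent distribution $\mu_f$ on $f^{-1}(1)$, setting $z_{i,\cdot}=\mu_{f_i}$ and $x_{\sfv,\cdot}$ uniform on $\Sigma$ is a feasible solution of $\lp_{\calI^*}$ with value $1$. So $\lpval_{\calI^*}=1$ and $\val_{\calI^*}\le\rho(\calF)+\varepsilon/2$. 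We are then exactly in the setting of \Cref{thm:main}(2) with $c=1$ and $s=\val_{\calI^*}$. Taking $p=1$ gives a memory lower bound of $\Omega_\varepsilon(\sqrt n)$ for $\McspF{1}{s+\varepsilon/2}$, and this problem is no harder than $\McspF{1}{\rho(\calF)+\varepsilon}$. (It even gives $\Omega_\varepsilon(n)$ via the second term.)

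If instead I want a self-contained proof that does not pass through \Cref{thm:main}, I would follow the standard CGSV template. I would define a randomized multi-player one-way communication game in the style of the DIHP game. There is a hidden assignment $x\in\Sigma^n$. Each player receives a random hypermatching of $k$-tuples, each labelled with a constraint of $\calI^*$. Each player also receives, for every tuple, either the values $x|_{e}$ shifted by a sample from $\mu_f$ (the YES case) or a uniformly random string (the NO case).

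\begin{enumerate}
\item \textbf{Completeness.} In the YES case the planted assignment satisfies all constraints.
\item \textbf{Soundness.} In the NO case, concentration over random hypermatchings shows that the value is at most $\val_{\calI^*}+o(1)$. The random constraints behave like a blow-up of $\calI^*$, and a union bound over the $|\Sigma|^n$ assignments is enough.
\item \textbf{Indistinguishability.} The core step is to show that any single-pass protocol with $o(\sqrt n)$ communication has advantage $o(1)$. Here one-wise independence is used: the marginal of each single coordinate is uniform in both cases. Hence in the Fourier expansion of the message's indicator function, only characters supported on at least two coordinates of some hyperedge contribute. Applying the hypercontractive (KKL-type) bound to the level-$\ell$ Fourier weight, combined with the birthday-type probability that a random hypermatching of size $\alpha n$ covers a given set of $\ell$ coordinates in that way, gives a total-variation bound of $O(s^2/n)$ per player. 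A hybrid argument over the $O_\varepsilon(1)$ players then finishes the proof.
\end{enumerate}

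I expect the Fourier step to be the main obstacle, specifically controlling the level-$2$ weight for non-Boolean $\Sigma$ and summing the higher levels. I would handle it by working over $\ZmodN$ characters and bounding the level-$\ell$ weight by $(O(s)/\ell)^{\ell}$.
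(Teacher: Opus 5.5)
Your first paragraph is correct and matches how the paper recovers this cited result. The paper proves the stronger Theorem~\ref{thm:one_wise_approximation_resistance} by picking $\calI$ with $\val_{\calI}\le\rho(\calF)+\varepsilon/2$, noting $\lpval_{\calI}=1$ via Proposition~\ref{prop:one_wise_LP_val_1}, and applying Theorem~\ref{thm:main}(2), which for $p=1$ even gives $\Omega_{\varepsilon}(n)$.

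Your self-contained CGSV-style sketch is therefore unnecessary, and as written it is loose, so you can drop it. For instance, constraints of the form $f(x_{|e}-y)$ do not lie in $\cspF$ unless you add a device such as keeping only zero-labelled edges or blowing up the variable set. Also, your per-player Fourier bound controls a $\chi^{2}$-type distance, so the total variation is $O(s/\sqrt{n})$, not $O(s^{2}/n)$.
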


Before this work, it was not clear whether the lower bound in Theorem~\ref{thm:CGSV_approximation_resistance} can be improved. For the specific problem of max-cut,\footnote{Max-cut corresponds to the predicate family $\calF\subseteq \{f:\{0,1\}^{2}\rightarrow \{0,1\}\}$ that consists of the single predicate $f$ defined by $f(x_{1},x_{2})=1$ if and only if $x_{1}\neq x_{2}$ (which clearly supports one-wise independence). For this family $\calF$, we have $\rho(\calF)=\alpha_{\mathrm{LP}}(\calF)=1/2$.} \cite{KK19} proved that any single-pass streaming algorithm solving $\mathsf{MaxCut}[1,1/2+\varepsilon]$ must use $\Omega_{\varepsilon}(n)$ bits of memeory. Later, \cite{chou2022linear} generalized \cite{KK19}'s result to predicate families that satisfy a certain condition (stronger than supporting one-wise independence). However, they were not able to generalize their lower bound to all predicate families that support one-wise independence, due to technical complications (as explained in \cite[Section 1.4]{chou2022linear}).

One of the technical components (see the discussion at the end of Section~\ref{subsubsec:improve_discrepancy}) in the proof of Theorem~\ref{thm:main} is to address these complications and generalize \cite{KK19}'s result to all predicate families supporting one-wise independence. Indeed, Theorem~\ref{thm:main} implies the following strengthening of Theorem~\ref{thm:CGSV_approximation_resistance}:

\begin{theorem}\label{thm:one_wise_approximation_resistance}
Suppose $\calF\subseteq \{f:\Sigma^{k}\rightarrow\{0,1\}\}$ is a predicate family such that every predicate $f\in \calF$ supports one-wise independence (as formally defined in Section~\ref{subsec:general_notations}). Then for any $\varepsilon>0$, any single-pass streaming algorithm solving $\McspF{1}{\rho(\calF)+\varepsilon}$ must use $\Omega_{\varepsilon}(n)$ bits of memory.
\end{theorem}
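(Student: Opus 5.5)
The plan is to derive Theorem~\ref{thm:one_wise_approximation_resistance} from Theorem~\ref{thm:main}(2) with $p=1$. For that case the second term of the lower bound is $n\cdot\exp(-O_{\varepsilon}(1))=\Omega_{\varepsilon}(n)$. So it suffices to exhibit, for every $\varepsilon>0$, a nonempty instance $\calI\in\cspF$ with $\lpval_{\calI}=1$ and $\val_{\calI}\le \rho(\calF)+\varepsilon/2$. Theorem~\ref{thm:main}(2) applied to this $\calI$, with error parameter $\varepsilon/2$, then gives that $\McspF{1}{\val_{\calI}+\varepsilon/2}$ requires $\Omega_{\varepsilon}(n)$ space. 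Since $\val_{\calI}+\varepsilon/2\le \rho(\calF)+\varepsilon$, the promise in $\McspF{1}{\rho(\calF)+\varepsilon}$ is weaker, so any algorithm solving it also solves the former problem (its NO instances are a subset). This yields the claimed bound.

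To construct $\calI$, first use the definition of $\rho(\calF)$ as an infimum to pick an instance $\calI_0=(\calV,\calC_0)$ with $\val_{\calI_0}\le\rho(\calF)+\varepsilon/2$, where $\val$ is normalized as a fraction of satisfied constraints, as in the gap problem. For each predicate $f\in\calF$, fix a distribution $\mu_f$ supported on $f^{-1}(1)$ whose $k$ marginals are all uniform on $\Sigma$; this is what one-wise independence provides. In $\lp_{\calI_0}$, set $x_{\sfv,\sigma}=1/|\Sigma|$ for every variable and $\sigma$, and set $z_{i,b}=\mu_{f_i}(b)$ for every constraint. The consistency constraints hold because each marginal of $\mu_{f_i}$ is uniform and so matches $x_{\sfv_{i,j},\cdot}$. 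Every constraint contributes mass $1$ on satisfying assignments, so $\lpval_{\calI_0}=1$. Thus $\calI:=\calI_0$ already meets the requirement.

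The only points needing care are bookkeeping. First, the normalization of $\val_{\calI}$: the introduction defines it as a count, but the gap problem and $\lpval$ use fractions, so I will read everything as fractions. Second, $\rho(\calF)$ is an infimum that need not be attained, which is why the slack $\varepsilon/2$ is built into the choice of $\calI_0$. Third, the hidden constant in $\Omega_{\varepsilon}(n)$ may depend on $\calI_0$. This is harmless, because $\calI_0$ is chosen as a function of $\varepsilon$ and $\calF$ alone.

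Hence there is no real obstacle beyond Theorem~\ref{thm:main}, which carries all the difficulty.
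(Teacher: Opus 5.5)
Your proposal is correct and is essentially the paper's proof: choose $\calI$ with $\val_{\calI}\le\rho(\calF)+\varepsilon/2$, get $\lpval_{\calI}=1$ from the uniform-marginal, one-wise independent LP solution (Proposition~\ref{prop:one_wise_LP_val_1}), and apply Theorem~\ref{thm:main}(2) with $p=1$, where the term $n\cdot\exp(-O_{\varepsilon}(p))$ becomes $\Omega_{\varepsilon}(n)$. Your bookkeeping remarks on normalizing $\val$ as a fraction, on the infimum possibly not being attained, and on constants depending on the chosen instance are accurate, and they only spell out what the paper leaves implicit.
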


\begin{remark}
Theorem~\ref{thm:main} actually implies that this approximation resistance also holds against, say, $o(\log n)$-pass $n^{0.99}$-space, or $n^{o(1)}$-pass $n^{0.49}$-space streaming algorithms.
\end{remark}
\subsubsection{Discussion: Query-to-Communication Lifting}\label{subsubsec:sublinear_space_vs_time}

The dichotomy results of \cite{yoshida2011optimal,FMW25b} revealed a phenomenological connection between the multi-pass streaming model and the ``bounded-degree query model'' from the property testing literature. We do not attempt to fully elaborate on this connection here.\footnote{We refer the interested reader to \cite[Section 1.2.2]{FMW25b} for a more detailed discussion.} Instead, we provide an informal account of how it relates to the present work.

A standard approach to proving multi-pass streaming lower bounds for $\McspF{c}{s}$ is to consider a corresponding communication problem. In this formulation, a constant number of players each hold a sequence of constraints over a common set of variables, and their goal is to distinguish whether the value of the combined CSP instance is at least $c$ or at most $s$. If this communication problem requires $f(n)$ bits of communication, then a standard reduction implies that any $p$-pass streaming algorithm for $\McspF{c}{s}$ must use $\Omega(f(n)/p)$ bits of space.

In communication complexity, a successful paradigm (e.g., \cite{raz1999separation,goos2017query,Gooslifting}) is to first establish a \emph{query} lower bound for a suitably defined ``query version'' of the problem,\footnote{Typically, the query model is strictly weaker than the corresponding communication model.} and then ``lift'' this bound to the communication setting.

Interestingly, the bounded-degree query model studied in \cite{yoshida2011optimal} serves as an appropriate query analogue of the communication problem for $\McspF{c}{s}$. In this model, \cite{yoshida2011optimal} shows that for any $\varepsilon > 0$, achieving an $(\alpha_{\mathrm{LP}} + \varepsilon)$-approximation for $\cspF$ requires $\Omega_{\varepsilon}(\sqrt{n})$ queries. The work of \cite{FMW25b} lifts this query lower bound to a communication lower bound of $\Omega_{\varepsilon}(n^{1/3})$, thereby yielding an $\Omega(n^{1/3}/p)$ space lower bound for $p$-pass streaming algorithms. However, their lifting incurs a loss in the exponent, reducing it from $1/2$ to $1/3$. Our Theorem~\ref{thm:main} can be viewed as a refinement of their query-to-communication lifting that avoids this polynomial loss.

Conversely, algorithms in the query model can usually be simulated by multi-pass streaming algorithms. For example, the bipartiteness tester of \cite{goldreich1999sublinear,kaufman2004tight} implies the following result:\footnote{The algorithm of \cite{goldreich1999sublinear} works not only for bounded-degree graphs, but also for all regular graphs. To extend the algorithm to general graphs, one can use either the reduction in \cite[Section 4]{kaufman2004tight} or (a straightforward adaptation of) the one in \cite[Section 4]{FMW25b}.} 
\begin{theorem}[\cite{goldreich1999sublinear,kaufman2004tight}]\label{thm:bipartiteness_tester}
For any fixed constant $\varepsilon\in (0,1)$, there is a $\mathrm{polylog}(n)$-pass, $\sqrt{n}\cdot\mathrm{polylog}(n)$-space streaming algorithm for $\mathsf{MaxCut}[1,1-\varepsilon]$.
\end{theorem}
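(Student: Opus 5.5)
The plan is to simulate the Goldreich--Ron bipartiteness tester (as sharpened in \cite{kaufman2004tight}) by a multi-pass streaming algorithm. The tester, run on an $n$-vertex graph that is $\varepsilon$-far from bipartite, performs $\mathrm{poly}(1/\varepsilon)$ rounds. In each round it picks a random start vertex $s$, launches $K=\sqrt{n}\cdot\mathrm{polylog}(n)$ lazy random walks of length $L=\mathrm{polylog}(n)$ from $s$, and rejects if two walks reach a common vertex with different parities (an odd cycle). A bipartite graph is never rejected.

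\textbf{Step 1 (reduction to a regular-like walk).} For an edge stream we cannot assume bounded degree, so I will first use the reduction of \cite[Section 4]{kaufman2004tight}, or the adaptation of \cite[Section 4]{FMW25b}, which replaces high-degree vertices by expander-like gadgets or reweights the walk. The walk then becomes the simple random walk with self-loops padding each vertex to a common degree. The analysis of \cite{goldreich1999sublinear} applies to regular graphs, so it transfers. For $\mathsf{MaxCut}[1,1-\varepsilon]$ the ``far'' side is exactly ``$\varepsilon m$ edges must be deleted'', which is $\varepsilon$-far in the general-graph model after this normalization. Completeness is immediate, since value $1$ means the graph is bipartite.

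\textbf{Step 2 (one walk step per pass).} I will keep the current positions of all $K$ walks, together with their parities, in $O(K\log n)$ space. To advance every walk by one step, one pass suffices. For each walk at vertex $v$ we need a uniformly random incident edge of $v$, or a self-loop with the appropriate probability. We obtain it by reservoir sampling over the edges incident to $v$ during the pass, which also counts $\deg(v)$ so the lazy/self-loop probability can be computed. Independent reservoirs per walk cost $O(\log n)$ bits each. Hence $L$ passes simulate one round, and $\mathrm{poly}(1/\varepsilon)\cdot L=\mathrm{polylog}(n)$ passes simulate the whole tester. Throughout, we also record every visited (vertex, parity) pair. Storing all $KL$ such pairs costs $\sqrt{n}\,\mathrm{polylog}(n)$ space, and we reject if some vertex is seen with both parities.

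\textbf{Main obstacle.} The delicate point is the degree normalization in Step 1. The walk must correspond to the one analyzed in \cite{goldreich1999sublinear}, whose soundness argument uses rapid mixing and collision probabilities on regular graphs. A naive simple random walk on an irregular graph does not give $\sqrt{n}$ collision bounds, because its stationary distribution is nonuniform. I would handle this by padding every vertex to degree $d_{\max}$ with self-loops where $d_{\max}$ is small, and otherwise by splitting high-degree vertices into groups of bounded degree connected by constant-degree expanders. The expander edges can be generated implicitly from the vertex's edge order in the stream. In that case one must check that the split graph is still $\Omega(\varepsilon)$-far from bipartite and has $O(m)$ vertices, and this is what I would verify first using the argument in \cite{kaufman2004tight}. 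With this in place, space is $\sqrt{n}\cdot\mathrm{polylog}(n)$ and passes are $\mathrm{polylog}(n)$.
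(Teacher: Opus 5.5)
Your high-level route matches what the paper has in mind. The paper gives no proof: it cites \cite{goldreich1999sublinear,kaufman2004tight} and, in a footnote, points to the degree reduction of \cite[Section 4]{kaufman2004tight} or \cite[Section 4]{FMW25b}. Your Step 2 is fine: one pass per walk step, an independent reservoir per walk, $O(\log n)$ bits per walk, and $\sqrt{n}\,\mathrm{polylog}(n)$ stored (vertex, parity) pairs.

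The gap is in Step 1, which you correctly flag as the crux but resolve with the wrong parameters. If high-degree vertices are split into groups of \emph{bounded} degree, the virtual graph has $\Theta(m)$ vertices, as you yourself write. The Goldreich--Ron tester on an $N$-vertex graph uses $\sqrt{N}\cdot\mathrm{poly}(\log N/\varepsilon)$ walks, so your space becomes $\sqrt{m}\cdot\mathrm{polylog}(n)$. That is $n^{3/4}$ already for $m=n^{3/2}$, not $\sqrt{n}\cdot\mathrm{polylog}(n)$. The split must instead use blocks of size about the average degree $\bar{d}=2m/n$, i.e.\ $\lceil \deg(v)/\bar{d}\rceil$ copies of $v$. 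Then the virtual graph has $O(n)$ vertices and maximum degree $O(\bar{d})$. You then need exactly the fact recorded in the paper's footnote: the analysis of \cite{goldreich1999sublinear} applies to (almost-)regular graphs of arbitrary, non-constant degree, with complexity independent of the degree.

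The gadget joining the copies is also underspecified in ways that matter.
\begin{itemize}
\item It must connect copies of $v$ only through even-length paths, e.g.\ a bipartite expander between the edge-carrying copies and auxiliary copies. A general expander on the copies contains odd cycles, so bipartite inputs would be rejected, and ``completeness is immediate'' fails for the virtual graph.
\item Since you assign $v$'s edges to copies by stream order, which is adversarial, an $O(1)$-degree gadget does not preserve farness once blocks have size $\bar{d}$. Take a $d$-regular bipartite expander on $(L,R)$ and about $d$ hubs, each adjacent to all of $L\cup R$, and stream each hub's $L$-edges before its $R$-edges. This input is $\Omega(1)$-far from bipartite. Yet in the split graph you can color each hub's $L$-copies opposite to $L$ and its $R$-copies opposite to $R$. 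This violates only $O(n)$ gadget edges plus one mixed block of $O(d)$ edges per hub, an $o(1)$ fraction of the $\Theta(dn)$ edges when $1\ll d\ll n$.
\item A gadget of degree $\Theta(\bar{d})$ fixes this: giving each $v$ the majority color of its copies shows that farness is preserved up to constants.
\end{itemize}

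Finally, apply the split uniformly instead of branching on whether $d_{\max}$ is small. Computing $d_{\max}$, even approximately, is not possible in $\sqrt{n}\,\mathrm{polylog}(n)$ space with $\mathrm{polylog}(n)$ passes, by a disjointness reduction. After splitting, the degree bound $O(\bar{d})$ is known from a single edge counter. A walk step from copy $(v,j)$ is then simulated in one pass by reservoir sampling within the $j$-th block of $v$'s incident edges in stream order, with gadget edges generated implicitly.
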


Theorem~\ref{thm:bipartiteness_tester} implies that the $\Omega(\sqrt{n} /p)$ lower bound in Theorem~\ref{thm:main}(2) cannot be improved by more than $\mathrm{polylog}(n)$ factors; for the other lower bound $n\cdot 2^{-O(p)}$ featured in Theorem~\ref{thm:main}(2), the exponent $O(p)$ cannot be improved to sub-polynomial in $p$.

\subsubsection{Beyond $\sqrt{n}$ Queries}

There are also some MaxCSP problems for which the query complexity in the bounded-degree query model is known to be $\Omega(n)$. For example, for the predicate family $\mathsf{E3Lin}$\footnote{The family $\mathsf{E3Lin}$ consists of two predicates $f_{0},f_{1}:\bF_{2}^{3}\rightarrow \{0,1\}$ defined by $f_{b}(x_{1},x_{2},x_{3})=1$ if and only if $x_{1}+x_{2}+x_{3}=b$, for each $b\in \bF_{2}$.}, \cite{BOT02} showed that $\mathsf{MaxE3Lin}[1,1/2+\varepsilon]$ requires $\Omega_{\varepsilon}(n)$ queries. By slightly extending our techniques for proving Theorem~\ref{thm:main}, we are also able to lift \cite{BOT02}'s query lower bound to a communication lower bound, yielding the following result:

\begin{theorem}\label{thm:multi_pass_lin}
Suppose $\calF\subseteq \{f:\Sigma^{k}\rightarrow\{0,1\}\}$ is a predicate family such that every predicate $f\in \calF$ supports two-wise independence (as formally defined in Section~\ref{subsec:general_notations}). Then for any $\varepsilon>0$, any $p$-pass streaming algorithm solving $\McspF{1}{\rho(\calF)+\varepsilon}$ must use $\Omega_{\varepsilon}(n/p)$ bits of memory.
\end{theorem}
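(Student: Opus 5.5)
We would prove Theorem~\ref{thm:multi_pass_lin} through the same communication framework used for Theorem~\ref{thm:main}, changing only the hard distribution and the discrepancy analysis. The steps are: reduce the $p$-pass streaming problem to a multi-player one-way communication game (the players hold constraint sequences over a common variable set, and a space-$S$ $p$-pass algorithm gives a protocol with $O(pS)$ communication); define a \textsc{yes} and a \textsc{no} distribution; and show that the protocol's advantage is $o(1)$ whenever the communication is $o_\varepsilon(n)$.

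\textbf{Hard distribution.} Fix $\varepsilon$. Since every $f\in\calF$ supports two-wise independence, choose for each $f$ a distribution $\mu_f$ on $f^{-1}(1)$ with pairwise-uniform marginals. Take $\calI_0$ (with $\val_{\calI_0}\le\rho(\calF)+\varepsilon/2$) as the template, blown up over $n$ variables into a random bounded-degree hypergraph of constant-length constraint batches. In the \textsc{yes} case, sample a hidden assignment $x\in\Sigma^n$ and, for each hyperedge $e$, a shift and labels so that the constraint is satisfied by $x$, in the style of the hidden-partition games of \cite{KK19,CGSV24}. In the \textsc{no} case, the labels are uniform. The \textsc{no}-instance has value at most $\rho(\calF)+\varepsilon$ with high probability by a standard concentration and union bound over assignments, since the number of constraints is $\Theta_\varepsilon(n)$ with large constant density. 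This is where we use the base instance $\calI_0$ attaining $\rho(\calF)$, as in Theorem~\ref{thm:CGSV_approximation_resistance}.

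\textbf{Indistinguishability.} We bound the total-variation advantage via the Fourier-$\ell_1$ method of \cite{KK19}, combined with the inductive decomposition and discrepancy machinery developed for Theorem~\ref{thm:main}. Condition on each player's message, which defines a set $A$ of density $2^{-C}$ in that player's input space. The advantage is controlled by $\sum_{S\neq\emptyset}|\widehat{1_A}(S)|$, weighted by the probability that the hidden hyperedges' characters project onto $S$. Two-wise independence means every constraint contributes a character of weight $0$ or at least $3$ on its $k$ variables; weight-$1$ and weight-$2$ characters vanish. So only Fourier levels $\ell$ whose support is covered by hyperedges, with each touched edge contributing at least $3$ variables, survive. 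The probability that a fixed set of size $\ell$ is covered this way is about $(\ell/n)^{\ell/3}\cdot$ (at most $(\ell/n)^{\ell/2}$ for the $\sqrt n$ regime, but now strictly better). Combining this with the level-$\ell$ bound $\sum_{|S|=\ell}\widehat{1_A}(S)^2\le (O(C)/\ell)^{\ell}$ (the hypercontractive/KKL-type inequality), Cauchy--Schwarz gives a contribution of roughly $(O(C)/\ell)^{\ell/2}\cdot (n^{\ell}/\ell^{\ell})^{1/2}\cdot (\ell/n)^{2\ell/3}$ per level. This is $o(1)$ summed over $\ell$ once $C\le\delta n$. The gain of $n^{-\ell/6}$ over the one-wise case is exactly what upgrades $\sqrt n$ to $n$, mirroring the \cite{BOT02} query argument where collisions need three-way overlaps.

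\textbf{Passing to $p$ passes.} To get $\Omega(n/p)$ rather than $n2^{-O(p)}$, the induction over the rounds of the protocol must lose only an additive $O(C/n)$ per round rather than a multiplicative factor. We would run the induction lemma used for Theorem~\ref{thm:main}, conditioning on the transcript prefix and tracking the Fourier mass of the conditioned hidden-assignment distribution. Two-wise independence makes the relevant discrepancy term quadratic-free, so the error after $r$ rounds stays bounded by $\sum_{r}O(C/n)^{1/2}$-type terms measured in total communication. I expect this step to be the main obstacle: verifying that the paper's decomposition into ``low-degree structured'' and ``pseudorandom'' parts still closes when the threshold level jumps from $2$ to $3$. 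In particular, the extension of the KK19 $\ell_1$ bound to general alphabets must handle characters over $\bZ_N^k$ without losing the cubic coverage exponent. My first attempt would be to redo the discrepancy lemma with level-$3$ projections in place of level-$2$ ones and check that the resulting recursion sums to $O(pS/n)$.
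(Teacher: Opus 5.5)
There is a genuine gap. The two steps where your plan does real work would not go through, and the ideas that actually produce the linear bound are missing.

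\textbf{The round-by-round induction.} You propose to condition on the transcript prefix, track the Fourier mass of the hidden assignment, and lose only an additive $O(C/n)$ per round. This is exactly the \cite{KK19}-style argument that breaks in the multi-pass setting. Once a player speaks twice, its message is no longer a function of a fresh matching independent of the conditioning. Also, the low-level $\ell^1$ loss per message is multiplicative. Two-wise independence changes neither fact, and nothing in your sketch supports an additive loss.

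In the paper, the discrepancy step contains no induction over rounds at all. The $2^{-O(p)}$ in Theorem~\ref{thm:main} comes from the \emph{decomposition}: the weight $\|\bdzeta\|$ may grow by a constant factor per round (Lemma~\ref{lem:bounded_growth}). The fix for the two-wise case is also in the decomposition: replace acyclicity by almost-acyclicity, meaning any $\ell$ exposed edges cover at least $\ell(k-1.1)$ vertices. By a first-moment argument as in \cite{BOT02} (Lemma~\ref{lem:random_graph_no_dense_sets}), the random input hypergraph is itself $\delta n$-locally-almost-acyclic with probability $1-o(1)$. So any leaf rectangle with at most $10^{3}|\Pi|$ exposed edges and a non-almost-acyclic $\bdzeta$ lies inside an $o(1)$-probability event. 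Consequently only the \emph{number} of exposed edges must be controlled, and the potential $\phi$ does this in expectation for any number of rounds (Lemma~\ref{lem:decomposition_twowise}). This gives the round-independent bound $\CC(G,n,\alpha,K)\geq\gamma n$ of Theorem~\ref{thm:twowise}. The factor $1/p$ then comes only from Lemma~\ref{lem:communication_reduction}.

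\textbf{The indistinguishability estimate and the structured part.} Your estimate does not give a linear threshold even for a single message. At $\ell=3$, the term $(C/\ell)^{\ell/2}(n/\ell)^{\ell/2}(\ell/n)^{2\ell/3}$ equals $\Theta(C^{3/2}n^{-1/2})$, which is $o(1)$ only when $C\ll n^{1/3}$. You also state the covering probability as $(\ell/n)^{\ell/3}$ but then use $(\ell/n)^{2\ell/3}$.

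More importantly, a one-player Fourier bound does not address the multi-player game. There no player holds $x$, and one must control $\Ex{g_{\bdzeta}\prod_{i}h_{i,M_i}}$. You never treat the structured part $g_{\bdzeta}$, which now has to tolerate up to $\gamma n$ exposed edges. The one-wise argument (Lemmas~\ref{lem:no_size_1_by_induction} and~\ref{lem:structured_part}) needs $\|\bdzeta\|\le\gamma n$, i.e.\ only about $\sqrt{n}$ edges in a large component.

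The missing ingredient is Lemma~\ref{lem:structured_part_2}, which works as follows:
\begin{enumerate}
\item Write each edge factor as $1$ plus characters supported on at least three of its vertices; this is where two-wise independence enters.
\item Show, via a count of degree-one vertices, that almost-acyclicity forces $\|\sum_{e\in E'}B(e)\|_{\sfH}\geq\max\{2t,\tfrac{4}{5}|E'|\}$ (Lemma~\ref{lem:ell_at_least_t_and_r}).
\item Enumerate connected edge sets in the bounded-degree hypergraph $H_{\bdzeta}$ to obtain $\|g_{\bdzeta}^{=\ell}\|_{\sfW}\leq (O(1)\cdot\gamma n/\ell)^{\ell/2}$.
\end{enumerate}
With this base case, the existing induction lemma (Lemma~\ref{lem:induction}), run over the $K|\calE|$ players rather than over rounds, completes the discrepancy bound without change. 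That lemma uses only one-wise independence. So your worry about ``level-$3$ projections in place of level-$2$'' in the pseudorandom part is misplaced: the changes are confined to the decomposition and the structured part.
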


A recent work \cite{fei2025unbounded} showed that for any predicate family that has \emph{unbounded width},\footnote{The definition of unbounded width is slightly complicated (see \cite{feder1998computational} or \cite[Appendix A]{fei2025unbounded}); a nice point of reference is that if deciding satisfiability of $\cspF$ instances is (known to be) $\textbf{NP}$-hard, then $\calF$ has unbounded width.} there exists $\varepsilon>0$ such that $\McspF{1}{1-\varepsilon}$ requires $\Omega_{\varepsilon}(n)$ queries in the bounded-degree query model. Using the same techniques, it is not hard to lift this lower bound also to the multi-pass streaming model (answering \cite[Question 1.11]{fei2025unbounded}):

\begin{theorem}\label{thm:unbounded_width}
For any predicate family $\calF\subseteq \{f:\Sigma^{k}\rightarrow\{0,1\}\}$ that has unbounded width, there exists a constant $\varepsilon>0$ such that any $p$-pass streaming algorithm for $\McspF{1}{1-\varepsilon}$ must use at least $\Omega(n/p)$ bits of memory.
\end{theorem}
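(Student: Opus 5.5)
The plan is to prove Theorem~\ref{thm:unbounded_width} the same way as Theorem~\ref{thm:multi_pass_lin}, by lifting the query lower bound of \cite{fei2025unbounded}. Any $p$-pass, space-$S$ streaming algorithm yields a protocol with $O(pS)$ communication for the multi-player communication game of Section~\ref{subsec:general_notations} onward (the constant number of players each hold a block of constraints). So it suffices to show that this game has communication complexity $\Omega(n)$ on a suitable pair of yes/no distributions. These distributions come from the ``base instance'' of \cite{fei2025unbounded}: a fixed constant-size gadget, together with a $\Omega(n)$-query-hard pair of distributions over bounded-degree instances. In the yes case, instances are built from planted satisfying assignments. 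In the no case, instances have value at most $1-\varepsilon$ because they contain (up to $o(1)$ fraction of removable constraints) a high-girth random structure that refutes satisfiability robustly. The key feature used in \cite{fei2025unbounded} is the unbounded-width characterization: there is a gadget CSP instance that is locally satisfiable in a strong sense (every $o(n)$-size sub-instance admits a consistent local solution) yet globally far from satisfiable.

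The main steps, in order, are as follows.

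\textbf{Step 1.} Encode the construction of \cite{fei2025unbounded} into the hypermatching-style communication game used for Theorem~\ref{thm:main}. The players receive random hypermatchings on disjoint-ish variable sets, with labels determined either by a hidden planted assignment (yes) or by the uniform/``local-distribution'' labeling (no). Verify the value bounds in each case using the properties of \cite{fei2025unbounded} plus the standard fact that random bounded-degree instances with constant-fraction matchings inherit global properties with high probability.

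\textbf{Step 2.} Apply the decomposition of Section~\ref{sec:decomposition} and the discrepancy/induction lemmas (Sections~\ref{sec:discrepancy}, \ref{sec:induction_lemma}) to reduce indistinguishability to a Fourier-$\ell_1$ bound on the players' message sets. In the $\sqrt{n}$ regime the obstruction was short cycles (collisions) among hyperedges seen by a small message. Here, exactly as in Section~\ref{sec:twowise}, the plan is to show that Fourier coefficients supported on structures of size $o(n)$ vanish in expectation. For two-wise independence this was cancellation of coefficients on acyclic or small-cycle patterns. For unbounded width, I would replace it by the local consistency property: for any sub-instance of size at most $\delta n$, the yes and no distributions induce identical marginals. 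This is precisely what makes the query lower bound $\Omega(n)$, so Fourier coefficients of level below $\delta n$ agree.

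\textbf{Step 3.} Sum the Kapralov–Krachun style $\ell_1$ bounds over levels $\ge \delta n$, using hypercontractivity/level-$d$ inequalities as in Section~\ref{sec:twowise}. This shows the total variation distance between the transcripts is $o(1)$ unless the communication is $\Omega(n)$.

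The main obstacle is Step~2. The unbounded-width instances of \cite{fei2025unbounded} are not given by a predicate supporting two-wise independence, so the clean Fourier cancellation of Section~\ref{sec:twowise} is unavailable. One must instead show that the relevant Fourier coefficients of the no-distribution match those of the yes-distribution on every sub-hypergraph of size below $\delta n$. My first attempt would be to design both distributions as mixtures over ``local solutions'' guaranteed by the width property (a Sherali–Adams-type solution of $\delta n$ levels). Consistency of such a solution on all $\le\delta n$-variable sets makes the low-level Fourier marginals coincide, which is what the induction lemma needs.
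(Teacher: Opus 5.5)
There is a genuine gap, and it sits exactly where you flag the main obstacle: you are missing the reduction that removes it. By the result of \cite{dalmau2013robust} (as used in \cite{fei2025unbounded}), an unbounded-width family $\calF$ can simulate $\textsc{3Sum}_{\mathsf{G}}$ for some Abelian group $\mathsf{G}$. Every predicate $S^{\mathsf{G}}_{b}$ \emph{does} support two-wise independence, since the uniform distribution on $\{x_1+x_2+x_3=b\}$ has uniform pairwise marginals. So the premise of your Step~2 is the wrong turn: you assume the hard instances do not come from predicates supporting two-wise independence, and hence that a new cancellation mechanism is needed.

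The paper's proof runs as follows:
\begin{enumerate}
\item Take a $\textsc{3Sum}_{\mathsf{G}}$ instance $\calI$ with $\val_{\calI}<2/3$.
\item Use Proposition~\ref{prop:two_wise_LP_val_1} to get an LP solution of value $1$ whose local distributions are two-wise independent.
\item Pass it through (the proof of) Lemma~\ref{lem:communication_reduction} to obtain a game $\DIHP(G,n,\alpha,K)$ in which every $\mu_{\sfe}$ is two-wise independent.
\item Apply Theorem~\ref{thm:twowise} as a black box, giving $\CC(G,n,\alpha,K)=\Omega(n)$.
\item Compose with the gadget reduction of \cite[Section 5]{fei2025unbounded}.
\end{enumerate}
The one thing to check in the last step, which your proposal never addresses, is that the composed reduction fits the communication/streaming model. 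The resulting $\calF$-instance must consist of a fixed, input-independent set of constraints, plus, for each player, $O(n)$ constraints generated deterministically from that player's own input. Then completeness $1$ and soundness $1-\varepsilon'$ transfer with $O(n)$ variables and constraints.

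Your proposed replacement would also not work inside this framework. Agreement of yes/no marginals on all sets of at most $\delta n$ variables (Sherali--Adams-type local consistency) is a query-model statement. In a communication protocol each player can compute an arbitrary global function of its entire labeled matching. Handling that is precisely what the decomposition/discrepancy machinery does, and local consistency does not imply it.

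Within that machinery, the yes distribution must be planted in the form $x_{|e}-w$ with $w\sim\mu_{\sfe}$. What neutralizes structured information of size up to $\gamma n$ is a property of $\mu_{\sfe}$ itself. Two-wise independence kills the level-$1$ and level-$2$ Fourier coefficients of each edge factor $g_e$, and together with almost-acyclicity this yields the Fourier growth bound of Lemma~\ref{lem:structured_part_2}.

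Likewise, Lemma~\ref{lem:induction} does not require low-level Fourier coefficients of two distributions to coincide. It requires the $(n,C,s^*,\delta)$-boundedness of Definition~\ref{def:Cs_star_bounded}, and it uses one-wise independence through Lemma~\ref{lem:SVD}.

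Finally, to get $\Omega(n/p)$ for arbitrary $p$ you need the round-independent decomposition into fair rectangles (Lemma~\ref{lem:decomposition_twowise}). The round-sensitive decomposition of Section~\ref{sec:decomposition} only yields $\beta^{r}n$.
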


\subsection{Technical Overview}
In this section we discuss our techniques, which heavily build on~\cite{KK19} and~\cite{FMW25b}. The heart of the proof of~\Cref{thm:main} is a communication lower bound for the distributional hidden partition problem (DIHP) from~\cite{FMW25b}, which we describe next.

 \subsubsection{The DIHP Problem}
 Given an instance $\calI$ of $\cspF$ and a solution $\{x_{\sfv,\sigma}\}_{\sfv\in\calV, \sigma\in\Sigma}, \{z_{i,b}\}_{i\in[m],b\in\Sigma^{k}}$ to $\lp_{\calI}$, we first consider an intermediate object called a distribution-labeled graph. It is convenient to turn the LP solution into distributions over the Abelian groups $\mathbb{Z}_{N}$ and $\mathbb{Z}_{N}^k$. To do that, one picks a large enough $N$ so that $x_{\sfv,\sigma} N$ are all integers, and for each $\sfv\in\calV$ of $\calI$ picks a partition $\{I_{\sfv,\sigma}\}_{\sigma\in \Sigma}$ of $\mathbb{Z}_N$ where $I_{\sfv,\sigma}$ has size $x_{\sfv,\sigma} N$. Now for each constraint $C_i$, the distribution $\mu_{C_i}$ is defined as sampling $b = (b_1,\ldots,b_k)\in\Sigma^{k}$ according to $z_{i,b}$, and then for each variable $\{\sfv_{i,j}\}_{j=1,\ldots,k}$ in $C_i$, replace $b_j$ with a random element from $I_{\sfv_{i,j},b_j}$. We denote the underlying $k$-uniform hypergraph by $G$, and the distribution over $\mathbb{Z}_N^k$ on each edge ${\sf e}$ by $\mu_{\sf e}$.

With this setting in mind, the DIHP problem is defined as follows. The number of players is $|E(G)|K$, where $K$ is a large constant.
Let $n$ be thought of tending to infinity, and consider the blow-up $H$ of $G$ resulting from replacing each vertex in it by $n$ new vertices. For each hyperedge ${\sf e}\in E(G)$ we have $K$ corresponding players, each one of them receiving a randomly chosen matching of size $\alpha n$ between the clouds in $H$ corresponding to the vertices of ${\sf e}$. The edges in $H$ are additionally labeled, in a way that depends on whether we generate a yes or a no instance:
\begin{enumerate}
    \item In $\mathcal{D}_{\yes}$, we first sample $x\in \mathbb{Z}_N^{\calV\times [n]}$ uniformly. Then, for each edge $e$ sent to a player, we consider ${\sf e}$ the edge of $G$ that generated it, sample $w\sim \mu_{\sf e}$, and label $e$ by $x_{|e}-w$.
    \item In $\mathcal{D}_{\no}$, the label of each edge $e$ of $H$ is chosen uniformly from $\mathbb{Z}_N^k$.
\end{enumerate}
An instance of DIHP is generated either from $\mathcal{D}_{\yes}$
or $\mathcal{D}_{\no}$, and the goal of the players is to distinguish between the two cases. As proved in~\cite{FMW25b}, a communication lower bound of $f(n)$ on this problem implies a memory lower bound of $\Omega_{\varepsilon}(f(n)/p$) for any $p$-pass algorithm solving 
$\McspF{c-\varepsilon}{s+\varepsilon}$ where $c = \lpval_{\calI}$ and $s=\val_{\calI}$, and the rest of the discussion is focused on the DIHP problem.

\subsubsection{Previous Approaches and Challenges}\label{subsubsec:previous_challenges}

Consider any protocol $\Pi$ for DIHP. A key idea in~\cite{FMWa,FMW25b} is that, while a naive application of the discrepancy method fails to give decent lower bounds for $\Pi$, it is successful assuming the protocol $\Pi$ is \emph{global}, a notion we introduce next. 

Note that if we choose a random $k$-uniform matching of size $\alpha n$ on $kn$ vertices, then the probability a particular edge will appear in it is roughly $\frac{\alpha}{n}$. More generally, the probability $i$ specific edges will appear in it is roughly $\left(\frac{\alpha}{n}\right)^i$.
Informally, a protocol $\Pi$ is called global if each one of its induced rectangles $R = A^{(1)}\times\ldots\times A^{(|E(G)|K)}$ has a similar behavior. Also, we call this type of rectangles \emph{global} rectangles. Namely, if we sample a labeled matching from any one of the $A^{(j)}$'s, each subset of $i$ specific edges will  appear there with probability $O((\alpha/n)^i)$. One might expect the following two hopes to be true: 
\begin{itemize}
    \item general communication protocols can be reduced to global protocols; 
    \item the discrepancy of global protocols can be bounded.
\end{itemize}
First, notice that there are a lot examples of non-global communication protocols. To see this, we consider the following protocol: after receiving its input, the first player writes one of the labeled edges in its matching on the blackboard. Then, for every induced rectangle $R= A^{(1)}\times \dots \times A^{(|E(G)|K)}$, there exists one certain edge that appears in every labeled matching in $A^{(1)}$. So, it seems impossible to directly reduce general protocols to global ones. To eliminate this issue, \cite{FMW25b} proves that every general $\Pi$ can actually be transformed into a protocol of a certain type, and every induced rectangle $R =A^{(1)}\times \dots\times A^{(|E(G)|K)} $ of the transformed protocol is in the following form: there exists tuples of labeled partial matchings $\bdzeta= \left(\bfz^{(1)},\dots,\bfz^{(|E(G)|K)}\right)$,  such that (1) for every $i\in [|E(G)|K]$, every labeled matching in $A^{(i)}$ subsumes $\bfz^{(i)}$;  and (2) the rest parts of $A^{(i)}$ look global. In this case, we simply refer to $(\bdzeta,R)$ as a structured rectangle.

The main technical result in~\cite{FMW25b} is that given a communication protocol $\Pi$ with $O(n^{1/3})$ bits of communication: 
\begin{itemize}
    \item one can transform $\Pi$ into a protocol such that most of its induced rectangle are (1) structured, and thus can be written as $(\bdzeta,R)$; (2) large, which means $\calD_{\no}\geq 2^{-O(n^{1/3})}$; (3) the total number of labeled edges in $\bdzeta$ is small (say, at most $O(n^{1/3})$); (4) edges in $\bdzeta$ do not form a cycle, i.e., there is no collection of $r$ edges that cover at most $r(k-1)$ vertices. 
    \item every rectangle with the above four conditions has $\calD_{\no}(R)\approx \calD_{\yes}(R)$. 
\end{itemize}
Indeed, the first part can be extended to protocols with $O(\sqrt{n})$ communication with corresponding weaker guarantees. The barrier to an $\Omega(\sqrt{n})$ lower bound lies in the second part, namely, proving a structured rectangle $(\bdzeta,R)$, which satisfies (1) $\calD_{\no}(R)\geq 2^{-O(\sqrt{n})}$; (2) the total number of labeled edges in $\bdzeta$ is at most $O(\sqrt{n})$; (3) edges in $\bdzeta$ do not form a cycle, has small discrepancy under the two measures. 

Furthermore, when one tries to prove a near-linear lower bound against protocols with a bounded number of rounds of communication, the first part also breaks. Briefly speaking, our previous argument is round-insensitive, and there actually exists a communication protocol using $\widetilde{O}(\sqrt{n})$ bits to find a cycle in the union of $|E|K$ labeled matchings. This means the previous argument cannot  show that most structured rectangles $(\bdzeta,R)$ do not contain any cycle when the communication cost exceeds $\sqrt{n}$.

With this description in mind, the argument in this paper has to overcome the following two challenges:
\begin{enumerate}
    \item Better discrepancy bound for structured rectangles: as mentioned before, the discrepancy  bound for structured rectangles in~\cite{FMW25b} is the bottleneck in the above argument, and improving it (up to $\sqrt{n}$) would immediately translate to better lower bounds for multi-pass algorithms. Having said that, the Fourier analytic arguments there seem tight and no specific component could be improved in an obvious way.
    \item Better characterization of structured rectangles with low discrepancy: the above argument is incapable of getting lower bounds better than $\sqrt{n}$, even for a small number of passes $p$. The issue is that if we only consider the number of exposed edges in the structured part, we are unable to tell the difference between a collection of $t$ disjoint edges being exposed, and a collection of connected $t$ edges (in which further addition of edges is more likely to create cycles)\footnote{Please see Section \ref{subsec:good_rectangles} for a more detailed explanation.}. Thus, it seems necessary to find another characterization of structured rectangles with small discrepancy instead of considering the number of edges exposed. 
\end{enumerate}

\subsubsection{The Weight of a Restriction}\label{subsubsec:KK19_weight}

To solve the second issue we use an idea from~\cite{KK19}. Denoting the graph of exposed edges by $H_{\bdzeta}$ and letting its connected components be $V_1,\ldots,V_t$, the weight of $H_{\bdzeta}$ is defined as 
$\|\bdzeta\| = \sum\limits_{i=1}^{t}|V_i|^2$. The work~\cite{KK19} analyzed the behavior of $\|\bdzeta\|$ under edge exposure when each player only speaks once, and showed that it is a good progress measure towards cyclicity for up to linearly many bits of communication. 

The parameter $\|\bdzeta\|$ plays an important role in our analysis as well: we show that it is a good progress measure toward cyclicity for general protocols, as a replacement of the number of edges exposed by $\bdzeta$. In particular, we give an analog of the decomposition lemma of~\cite{FMW25b} with respect to it (instead the progress measure $|\bdzeta| = \sum_{i}|\supp(\bfz^{(i)})|$ used previously).
For our purposes, we have to consider protocols consisting of several rounds of communication, and we show that $\|\bdzeta\|$ typically grows by at most a constant factor in each round of communication. Thus, in the end it is typically at most as large as $2^{O(p)}
|\Pi|$.\footnote{This is the source of the exponential dependence on $p$ in the $n\cdot 2^{-O(p)}$ space lower bound featured in Theorem~\ref{thm:main}.}

\subsubsection{Improving the Discrepancy Bound}\label{subsubsec:improve_discrepancy}

To address the first issue raised at the end of Section~\ref{subsubsec:previous_challenges}, we again draw inspiration from~\cite{KK19}. The central argument of~\cite{KK19} analyzes how the distribution of the hidden partition $x \in \ZmodN^{\calV \times [n]}$,\footnote{In \cite{KK19}, the hidden partition lies in $\bF_{2}^{n}$.} conditioned on the transcript of the communication so far, evolves over the course of the protocol. By controlling the Fourier-$\ell_{1}$ mass of the probability density function of $x$, \cite{KK19} improved the single-pass space lower bound for $\mathsf{MaxCut}[1,1/2+\varepsilon]$ from $\Omega(\sqrt{n})$ (obtained in \cite{KKS15} via Fourier-$\ell_{2}$ methods) to $\Omega(n)$.

Our situation appears analogous to the state of affairs prior to the breakthrough of~\cite{KK19}. In particular, we seek to strengthen the \emph{multi-pass} space lower bound for $\mathsf{MaxCut}[1,1/2+\varepsilon]$ from $\Omega(n^{1/3}/p)$, as established in \cite{FMWa} using Fourier-$\ell_{2}$ techniques, to the near-optimal $\Omega(\sqrt{n}/p)$. It is thus natural to suspect the Fourier-$\ell_{1}$-based techniques of \cite{KK19} could be used to overcome the first issue mentioned in Section~\ref{subsubsec:previous_challenges}, and yield a near-optimal multi-pass space lower bound.

However, it is not immediately clear how the techniques of \cite{KK19} can be extended to establish multi-pass streaming lower bounds. At a high level, two main obstacles arise:

\begin{enumerate}
\item To control the evolution of the Fourier-$\ell_{1}$ mass of the density function of $x$, \cite{KK19} applies an induction on the number of rounds in the protocol. At low Fourier levels, the $\ell_{1}$-mass bound deteriorates by a constant factor with each additional message. While this degradation is tolerable in the single-pass setting --- where there are only a constant number of players, each speaking at most once --- it becomes prohibitive in the multi-pass setting, as it leads to an exponential dependence on the number of rounds (and hence on the number of passes). Consequently, this approach cannot yield an $\Omega(\sqrt{n}/p)$-type lower bound.

\item The induction argument in \cite{KK19} also critically relies on the fact that each player's message is generated from a fresh random labeled matching that is independent of the prior communication transcript. This independence assumption breaks down in the multi-pass setting, where players may speak multiple times.
\end{enumerate}
The strength of the structure-vs.-randomness framework of \cite{FMWa,FMW25b} lies precisely in its ability to handle such obstacles. From the perspective of this framework, however, the argument of \cite{KK19} has an intriguing feature: it does not rely on any explicit decomposition of the players' messages into ``structured'' and ``pseudorandom'' components. As noted in Section~\ref{subsubsec:KK19_weight}, \cite{KK19} does carry out a combinatorial analysis of the (highly structured) edge-exposure protocol using the notion of weight $\|\bdzeta\|$. Importantly, this analysis is \emph{not} invoked in their treatment of general single-pass protocols. Instead, the Fourier-$\ell_{1}$ method developed in \cite{KK19} can be viewed as a \emph{generalization} of this combinatorial analysis: it provides a unified approach that simultaneously captures both structured information and pseudorandom noise, which may coexist in general protocols.

The key idea enabling us to incorporate the techniques of \cite{KK19} into the framework of \cite{FMW25b} is to apply the Fourier-$\ell_{1}$-based induction argument of \cite{KK19} only to the pseudorandom component of each player's message. We begin by using the decomposition technique from \cite[Section~7]{FMW25b} to cleanly separate the structured information in the messages from the pseudorandom noise. The structured components of all players' messages are then aggregated and analyzed via the combinatorial method described in Section~\ref{subsubsec:KK19_weight}, while the pseudorandom components are handled using the Fourier-$\ell_{1}$-based induction argument.

This approach overcomes the two high-level obstacles discussed above. Nevertheless, several technical issues must still be addressed:

\begin{enumerate}
    \item In the induction argument of \cite{KK19}, each player's message is derived from a labeled matching that is not only independent of prior communication, but also \emph{uniformly random}. While the independence lacked in the multi-pass setting can be restored by the idea described above, the resulting labeled matchings are no longer perfectly uniform, but only ``pseudorandom.'' Fortunately, such pseudorandom matchings can be shown to be sufficiently similar to uniform for the argument to carry through.

    \item While the above discussion suffices to obtain the desired $\Omega(\sqrt{n}/p)$ space lower bound for $\mathsf{MaxCut}[1,1/2+\varepsilon]$, extending the result to general CSPs introduces additional challenges that are largely orthogonal to the issues discussed so far. In particular, we must first establish the \emph{single-pass} result (Theorem~\ref{thm:one_wise_approximation_resistance}), which was previously unknown.
\end{enumerate}

\subsection{Open Problems}\label{sec:open}
We finish this introductory section with a few open directions for future research.
\begin{enumerate}
    \item While the multi-pass lower bound provided in~\Cref{thm:main} is tight in general, it can be improved for some problems as in~\Cref{thm:multi_pass_lin,thm:unbounded_width}. It would be interesting to know if there are a natural criteria dictating whether the space-pass tradeoff is $\widetilde{\Theta}(\sqrt{n})$ or $\widetilde{\Theta}(n)$, and even more interesting if there are intermediate behaviors. Perhaps the ``query version'' of this question has to be answered first; see \cite[Question 1.13]{fei2025unbounded}.
    
    \item The algorithm (Theorem~\ref{thm:SSSV25}) of \cite{saxena2025streaming} that complements Corollary~\ref{cor:alg_and_lb_bd_deg} only works for bounded degree instances, because we cannot afford the degree-reduction step in~\cite{FMW25b} as it requires more than a single pass. The recent work \cite{ABFS} shows that the bounded-degree assumption in Theorem~\ref{thm:SSSV25} can be removed for the specific problem of max-directed-cut. For general CSPs, it remains open whether there is a single-pass $n^{1-\Omega_{\varepsilon}(1)}$-space algorithm for $(\alpha_{\mathrm{LP}}(\calF)-\varepsilon)$-approximation of $\cspF$.
    
    \item As noted in the first item above, it is unknown whether the space-pass tradeoff grows from $\widetilde{\Theta}(\sqrt{n})$ to $\widetilde{\Theta}(n)$ continuously or discretely as a function of the approximation ratio $\alpha\in [\alpha_{\mathrm{LP}}(\calF),1]$. In the single-pass setting, the mystery lies \emph{below} the LP threshold $\alpha_{\mathrm{LP}}(\calF)$: for the specific problem of max-directed-cut (whose LP threshold equals $1/2$), \cite{chou2020optimal} shows that $(4/9-\varepsilon)$-approximation can be achieved by single-pass $O_{\varepsilon}(\log n)$-space algorithms, while $(4/9+\varepsilon)$-approximation requires $\Omega_{\varepsilon}(\sqrt{n})$-space. As shown by \cite{KK19}, achieving $(1/2+\varepsilon)$-approximation of max-directed-cut in the single-pass setting requires $\Omega_{\varepsilon}(n)$ space. It is unknown whether the space complexity grows continuously as a function of the approximation ratio $\alpha \in [4/9, 1/2]$. For general CSPs, it is also unknown whether there is a critical approximation ratio at which the single-pass space complexity jumps from $\mathrm{polylog}(n)$ to $n^{\Omega(1)}$. 
\end{enumerate}

%% file: prelim.tex
\section{Preliminaries}

\subsection{General Notations}\label{subsec:general_notations}

In this subsection we summarize general notational conventions used throughout the paper. Additional notation will be introduced as needed, typically within dedicated ``Notation'' environments.

\paragraph{Arithmetic.} We use the convention $0^{0}=1$. For a real number $x$, we denote $x^{+}=\max\{x,0\}$.

\paragraph{Probability.}  
For a finite set $\Lambda$, we write $\Exs{x \in \Lambda}{\cdot}$ and $\Prs{x \in \Lambda}{\cdot}$ to denote expectation and probability, respectively, when $x$ is drawn uniformly at random from $\Lambda$.  
If $x$ is sampled according to a specific distribution $\calD$ over $\Lambda$, we write $x \sim \calD$ in place of $x \in \Lambda$. A \emph{probability mass function} on $\Lambda$ is a function $p:\Lambda\rightarrow [0,\infty)$ such that $\sum_{x\in \Lambda}p(x)=1$, while a \emph{probability density function} is a function $f:\Lambda\rightarrow [0,\infty)$ such that $\Exs{x\in \Lambda}{f(x)}=1$. A \emph{right stochastic matrix}, or a \emph{Markov kernel}, is a matrix in which each row is a probability mass function on the set of columns.  

\paragraph{Hilbert space.}  
For a finite set $\Lambda$, we denote by $L^2(\Lambda)$ the (finite-dimensional) Hilbert space of complex-valued functions on $\Lambda$, equipped with the inner product
\[
\langle f, g \rangle := \Exu{x \in \Lambda}{f(x)\, \overline{g(x)}}.
\]

\paragraph{Fourier analysis.}  
We denote the finite cyclic group $\bZ / N\bZ$ by $\bZ_N$, where $N\geq 2$ is an integer. Throughout the paper, the capital letter $N$ is reserved exclusively for this notation. For any finite index set $\Lambda$, the collection of Fourier characters on the product group $\bZ_N^{\Lambda}$ is indexed by $\bZ_{N}^{\Lambda}$ itself. More precisely, for $b \in \bZ_N^{\Lambda}$, the associated character function $\chi_b : \bZ_N^{\Lambda} \to \bC$ is defined by
\[
\chi_b(x) := \exp\left( \frac{2\pi \mathrm{i}}{N} \sum_{v \in \Lambda} b_v x_v \right),
\]
where $\mathrm{i}$ denotes the imaginary unit. For a function $f\in L^{2}(\ZmodN^{\Lambda})$, we write $\widehat{f}(b):=\langle f,\chi_{b}\rangle$, and the Fourier $\ell^{1}$-norm of $f$ is denoted by
\[
\left\|f\right\|_{\sfW}:=\sum_{b\in \ZmodN^{\Lambda}}\left|\widehat{f}(b)\right|.
\]

\paragraph{Vectors and maps.}  
For a vector $x \in \bZ_N^{\Lambda}$ or $x \in [0,1]^{\Lambda}$, we denote its coordinates by subscripts: $x_v$ for each $v \in \Lambda$.  
A related notion is that of a \emph{map} $\bfy : \Lambda \to \Gamma$. We use boldface symbols for maps, especially when their images are themselves vectors, to distinguish them from ordinary vectors.  
For $v \in \Lambda$, the value of the map at $v$ is denoted by $\bfy(v)$.  
The collection of all such maps is denoted by $\Map{\Lambda}{\Gamma}$.

\paragraph{Support sets.}
Let $\Gamma$ be a domain containing a distinguished \emph{nullity element}. For either a vector $x \in \Gamma^{\Lambda}$ or a map $\bfy \in \Map{\Lambda}{\Gamma}$, the \emph{support} of $x$ or $\bfy$ --- denoted $\supp(x)$ or $\supp(\bfy)$ --- is the set of elements $v \in \Lambda$ such that $x_v$ or $\bfy(v)$ is not equal to the nullity element. For example, when $\Gamma = \bZ_N$, the nullity element is the additive identity $0$. In some cases, the domain $\Gamma$ is taken to be a disjoint union of an Abelian group and a special symbol --- such as $\bZ_N^k \cup \{\nil\}$ --- in which case the nullity element is the special symbol $\nil$, rather than the identity of the group. The \emph{Hamming weight} of a vector $x \in \Gamma^{\Lambda}$ or a map $\bfy \in \Map{\Lambda}{\Gamma}$, denoted by $\|x\|_{\sfH}$ or $\|\bfy\|_{\sfH}$, respectively, is the cardinality of its support set.

\paragraph{Degree decomposition.} Let $\Lambda$ and $\Gamma$ be finite sets. A function $f:\Gamma^{\Lambda}\rightarrow\bC$ is said to be a $d$-junta, where $d$ is a nonnegative integer no larger than $|\Lambda|$, if $f(x)$ depends on at most $d$ coordinates of $x$. A function $f$ is said to have \emph{degree} at most $d$ if $f$ is a linear combination of $d$-juntas. For any function $f\in L^{2}(\Gamma^{\Lambda})$, we let $f^{\leq d}$ be the orthogonal projection of $f$ to the linear subspace of $L^{2}(\Gamma^{\Lambda})$ consisting of functions of degree at most $d$. In the case $\Gamma=\ZmodN$, there is a convenient explicit formula for $f^{\leq d}$: if $f\in L^{2}(\ZmodN^{\Lambda})$ we will write the degree-$d$ part of $f$ as
\[
f^{=d}:=\sum_{b\in \ZmodN^{\Lambda},\, |\supp(b)|=d}\left\langle f,\chi_{b}\right\rangle\cdot \chi_{b},
\]
and we then have $f^{\leq d}=\sum_{i=0}^{d}f^{=i}$.

\paragraph{CSPs and hypergraphs.}  
Throughout the paper, $\Sigma$ denotes the CSP alphabet, and the lowercase letter $k$ always refers to the arity of predicates. The calligraphic letter $\calF$ always denotes a nonempty finite set of predicates mapping from $\Sigma^{k}$ to $\{0,1\}$. Correspondingly, all hypergraphs in this paper are assumed to be $k$-uniform. When the context is clear, hyperedges are sometimes simply referred to as edges. A set of hyperedges in a \(k\)-uniform hypergraph is said to contain a cycle if there exist \(\ell\) hyperedges in the set that cover at most \(\ell(k-1)\) vertices, for some \(\ell \geq 1\). A connected component of a hypergraph is said to be nontrivial if it contains at least two vertices.

\paragraph{Blow-up of hypergraphs.} Sometimes a hypergraph undergoes a \emph{blow-up}, in which each original vertex is replaced by $n$ copies. We adopt the following notational convention: pre-blowup vertices and hyperedges are denoted using sans-serif font (e.g., $\mathsf{v}$ and $\mathsf{e}$), while post-blowup vertices and hyperedges are written in standard math font (e.g., $v$ and $e$).

\paragraph{One-wise/two-wise independence.} Let $\Gamma$ be a finite set. A probability distribution over $\Gamma^{k}$ is called \emph{one-wise independent} if its marginal on each of the $k$ coordinates is the uniform distribution on $\Gamma$. It is called \emph{two-wise independent} if its marginal on any two distinct coordinates is the uniform distribution over $\Gamma^{2}$. For any one-wise independent distribution $\mu$, we denote by $\mu(\cdot)$ the probability mass function of $\mu$. A predicate $f:\Sigma^{k}\rightarrow\{0,1\}$ is said to support one-wise (respectively, two-wise) independence if there exists a one-wise (respectively, two-wise) independent distribution supported on $f^{-1}(1)$.

\subsection{Hypercontractivity}
Hypercontractive inequalities on product spaces have been crucial tools in establishing streaming lower bounds for approximating CSPs. We need the following version in this paper:
\begin{proposition}
    [{\cite[Theorem 10.21]{o2014analysis}}]\label{prop:classical_hypercontractivity}
    Let $\Lambda$ and $\Gamma$ be finite sets. For any function $f:\Gamma^\Lambda \rightarrow \bR$ with degree at most $d$ and any real number $q\geq 2$, we have
    \begin{align*}
        \lVert f \rVert_q \leq \left(\sqrt{(q-1)|\Gamma|} \right)^d \lVert f \rVert_2
    \end{align*}
\end{proposition}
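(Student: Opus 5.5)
The plan is to derive the bound from $(2,q)$-hypercontractivity of the noise operator on the product space $\Gamma^{\Lambda}$ with the uniform measure, whose minimum atom probability is $\lambda:=1/|\Gamma|$. I would prove the following single statement: the operator $T_\rho$ satisfies $\|T_\rho g\|_q\le\|g\|_2$ for every $g$ whenever
\[
0\le\rho\le\rho_0:=\frac{1}{\sqrt{q-1}}\,\lambda^{1/2-1/q}.
\]
Here $T_\rho$ keeps each coordinate with probability $\rho$ and rerandomizes it otherwise. On the Efron--Stein decomposition $g=\sum_{S\subseteq\Lambda}g^{=S}$ it acts as $T_\rho g=\sum_S\rho^{|S|}g^{=S}$.

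Given this, the proposition follows quickly. Write $f=\sum_{j\le d}f^{=j}$, where $f^{=j}:=\sum_{|S|=j}f^{=S}$; since $f$ has degree at most $d$, only levels $j\le d$ appear. Put $g:=\sum_{j\le d}\rho_0^{-j}f^{=j}$, so that $T_{\rho_0}g=f$. By orthogonality of the levels,
\[
\|f\|_q=\|T_{\rho_0}g\|_q\le\|g\|_2=\Bigl(\sum_{j\le d}\rho_0^{-2j}\|f^{=j}\|_2^2\Bigr)^{1/2}\le\rho_0^{-d}\|f\|_2 .
\]
Finally $\rho_0^{-1}=\sqrt{q-1}\,|\Gamma|^{1/2-1/q}\le\sqrt{(q-1)|\Gamma|}$, which is the claimed constant.

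The hypercontractivity statement I would prove in two steps. \emph{Tensorization:} if each one-coordinate operator $T_\rho$ on $\Gamma$ is $(2,q)$-hypercontractive, then so is the product operator on $\Gamma^{\Lambda}$. I would show this by induction on $|\Lambda|$. Apply the one-coordinate bound in the last variable, and then use Minkowski's integral inequality in the form $\bigl\|\,\|F\|_{L^2_y}\bigr\|_{L^q_x}\le\bigl\|\,\|F\|_{L^q_x}\bigr\|_{L^2_y}$, valid because $q\ge2$, to swap the order of the norms. The induction hypothesis then applies to the remaining coordinates. This step is routine.

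\emph{One coordinate:} this is the step I expect to be the main obstacle. For $g:\Gamma\to\bR$, write $g=a+h$ with $a=\bE g$ and $\bE h=0$, so that $T_\rho g=a+\rho h$. Since $\lambda$ is the minimum atom probability, $\|h\|_\infty\le\lambda^{-1/2}\|h\|_2$, and interpolation gives
\[
\|h\|_q\le\|h\|_\infty^{1-2/q}\|h\|_2^{2/q}\le\lambda^{-(1/2-1/q)}\|h\|_2 .
\]
It remains to show $\|a+\rho h\|_q^2\le a^2+\|h\|_2^2$ when $\rho\le\rho_0$. For this I would first try to reduce to the two-point (Bonami--Gross) inequality. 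The route is to write $h$ as a mixture of $\pm$-symmetric directions, or equivalently to simulate $\Gamma$ by biased bits, and use the known bound for a $\lambda$-biased bit. Its constant is exactly of the form $(q-1)^{-1/2}\lambda^{1/2-1/q}$, consistent with the target.

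If that reduction gets messy, I would fall back on a cruder direct estimate. Expand $\bE|a+\rho h|^q$ via a Taylor bound on $t\mapsto|a+t|^q$; the linear term vanishes because $\bE h=0$, and the remaining terms are controlled by $\|h\|_q\le\lambda^{-(1/2-1/q)}\|h\|_2$. This only needs to reach the weaker constant $\sqrt{(q-1)|\Gamma|}$ rather than the sharp one, which leaves room to absorb the losses.
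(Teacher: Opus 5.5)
The paper gives no proof of this proposition; it only cites O'Donnell. Two of your steps are standard and correct: the Minkowski tensorization, and the passage from $(2,q)$-hypercontractivity of $T_\rho$ to the degree-$d$ bound via $g=\sum_{j\le d}\rho^{-j}f^{=j}$.

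\textbf{The gap.} What is missing is the one-coordinate inequality $\|a+\rho h\|_q^2\le a^2+\|h\|_2^2$ for mean-zero $h:\Gamma\to\bR$. This carries all of the content, and neither of your routes delivers it.

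The mixture route fails because of convexity. Decompose the law of $h$ as $\sum_j w_j\nu_j$ with mean-zero two-point laws $\nu_j$; their biases need not even be at least $\lambda$. The two-point bound then only gives $\bE|a+\rho h|^q\le\sum_j w_j(a^2+\sigma_j^2)^{q/2}$. Since $t\mapsto(a^2+t)^{q/2}$ is convex, this is \emph{at least} $(a^2+\sum_j w_j\sigma_j^2)^{q/2}=\|a+h\|_2^q$, which is the wrong direction.

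Simulating $\Gamma$ by bits does not transfer hypercontractivity without loss either. The reason is that $T_\rho$ on $\Gamma$ damps every nonconstant character by $\rho$, not by $\rho^{j}$ at level $j$ of the simulating cube.

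The fallback has no real room against the control you propose. Relative to $\|h\|_q\le\lambda^{-(1/2-1/q)}\|h\|_2$, the ratio between $\rho_0$ and the needed $((q-1)|\Gamma|)^{-1/2}$ is only $|\Gamma|^{1/q}$, which tends to $1$ as $q\to\infty$. Consistent with this, the natural Taylor--H\"older estimate $\bE|a+s|^q\le a^q+\tfrac{q(q-1)}{2}\|s\|_q^2(a+\|s\|_q)^{q-2}$, with $s=\rho h$, is too weak for large $q$ when $a\approx\sqrt{q}\,\|h\|_2$.

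\textbf{A fix.} Control $h$ in $L^\infty$ rather than $L^q$; this closes the gap at exactly the constant in the statement.
\begin{enumerate}
\item Under the uniform measure, $\|h\|_\infty^2\le|\Gamma|\,\|h\|_2^2$. Set $\rho=((q-1)|\Gamma|)^{-1/2}$ and $\beta=\rho\|h\|_\infty$.
\item Since $\rho h$ has mean zero and takes values in $[-\beta,\beta]$, use the chord bound
\[
|a+t|^q\le\tfrac{\beta-t}{2\beta}|a-\beta|^q+\tfrac{\beta+t}{2\beta}|a+\beta|^q .
\]
Averaging gives $\bE|a+\rho h|^q\le\bE|a+\beta\varepsilon|^q$, where $\varepsilon$ is a uniform sign.
\item The two-point Bonami--Gross inequality then yields
\[
\|a+\rho h\|_q^2\le a^2+(q-1)\beta^2\le a^2+\|h\|_2^2=\|a+h\|_2^2 .
\]
\item Your tensorization and degree reduction with this $\rho$ give $\|f\|_q\le\rho^{-d}\|f\|_2=\bigl(\sqrt{(q-1)|\Gamma|}\bigr)^d\|f\|_2$ directly. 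The sharper $\rho_0$ is never needed.
\end{enumerate}
For symmetric $h$, your mixture idea can be repaired by conditioning on $|h|$ and applying Minkowski in $L^{q/2}$, which does recover $\rho_0$. The chord bound is what handles non-symmetric $h$.
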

As is standard in many applications, the above hypercontractivity result is used to obtain the following level-$d$ inequality.
\begin{proposition}\label{prop:level_d_classical}
    Let $\Lambda$ and $\Gamma$ be finite sets. For any function $f: \Gamma^{\Lambda} \rightarrow \bR$ and any positive integer $d\leq 2\log_{2}(\|f\|_{2}/\|f\|_{1})$, we have 
    \begin{align*}
        \left\| f^{\leq d}\right\|_2^2\leq \|f\|_{1}^{2}\cdot\left( \frac{8|\Gamma|}{d}\log_{2}\left(\frac{\|f\|_{2}}{\|f\|_{1}}\right)\right)^d . 
    \end{align*}
\end{proposition}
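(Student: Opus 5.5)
The plan is the standard derivation of a level-$d$ inequality from hypercontractivity via duality and Hölder. Write $g=f^{\leq d}$. Since $g$ is an orthogonal projection of $f$,
\[
\|g\|_2^2=\langle f,g\rangle .
\]
For $q\geq 2$ with conjugate exponent $q'=q/(q-1)\in(1,2]$, Hölder's inequality gives
\[
\langle f,g\rangle\leq \|f\|_{q'}\,\|g\|_q .
\]
Because $g$ has degree at most $d$, Proposition~\ref{prop:classical_hypercontractivity} bounds $\|g\|_q\leq ((q-1)|\Gamma|)^{d/2}\|g\|_2$. Dividing by $\|g\|_2$ (the case $g=0$ is trivial), we obtain
\[
\|g\|_2^2\leq \|f\|_{q'}^2\,\big((q-1)|\Gamma|\big)^{d}.
\]

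Next I control $\|f\|_{q'}$ by interpolating between $\|f\|_1$ and $\|f\|_2$. Write $1/q'=(1-\theta)/1+\theta/2$, so that $\theta=2(1-1/q')=2/q$. Log-convexity of $L^p$ norms (Riesz--Thorin, or Hölder applied to $|f|^{q'}=|f|^{a}|f|^{q'-a}$) gives
\[
\|f\|_{q'}\leq \|f\|_1^{1-2/q}\,\|f\|_2^{2/q}=\|f\|_1\,(\|f\|_2/\|f\|_1)^{2/q}.
\]
Setting $L=\log_2(\|f\|_2/\|f\|_1)$, this combines with the previous step to give
\[
\|g\|_2^2\leq \|f\|_1^2\cdot 2^{4L/q}\cdot\big((q-1)|\Gamma|\big)^d .
\]

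It remains to optimize over $q$. I choose $q=4L/d$, so that $2^{4L/q}=2^{d}$. The hypothesis $d\leq 2L$ guarantees $q\geq 2$, as Proposition~\ref{prop:classical_hypercontractivity} requires. Then
\[
2^d\,(q-1)^d|\Gamma|^d\leq \big(2\cdot 4L|\Gamma|/d\big)^d=\big(8|\Gamma|L/d\big)^d,
\]
which is exactly the claimed bound.

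The only real care needed is this choice of $q$ and checking $q\geq 2$ from the hypothesis $d\leq 2\log_2(\|f\|_2/\|f\|_1)$. The interpolation step is standard, and nothing else presents an obstacle.
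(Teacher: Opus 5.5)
Your proposal is correct and follows essentially the same argument as the paper: you write $\|f^{\leq d}\|_2^2=\langle f,f^{\leq d}\rangle$, apply H\"older with exponents $q$ and $q/(q-1)$, bound $\|f\|_{q/(q-1)}\leq \|f\|_1^{(q-2)/q}\|f\|_2^{2/q}$ by interpolation, and apply Proposition~\ref{prop:classical_hypercontractivity}. You then make the same choice $q=4d^{-1}\log_2(\|f\|_2/\|f\|_1)$, and the hypothesis $d\leq 2\log_2(\|f\|_2/\|f\|_1)$ gives $q\geq 2$, exactly as you check.
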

\begin{proof}
    For any $q\geq 2$, we have
    \begin{align*}
        \left\|f^{\leq d} \right\|_2^2 = \left\langle f,f^{\leq d}\right\rangle &\leq \left\| f^{\leq d}\right\|_q \cdot\lVert f\rVert_{q/(q-1)}\leq  \left\| f^{\leq d}\right\|_q \cdot \lVert f\rVert_{1}^{(q-2)/q}\lVert f\rVert_{2}^{2/q}\\
        &\leq \left(\sqrt{(q-1)|\Gamma|}\right)^d\left\|  f^{\leq d}\right\|_2 \cdot \lVert f\rVert_{1}^{(q-2)/q}\lVert f\rVert_{2}^{2/q},
    \end{align*}
    where the second and third transitions are by H\"{o}lder's inequality, and the fourth transition is by \Cref{prop:classical_hypercontractivity}. Thus, we have 
    \begin{align}\label{ineq:level_d_classical}
         \left\| f^{\leq d} \right\|_2^2 \leq \|f\|_{1}^{2}\cdot \left((q-1)|\Gamma|\right)^d \left(\frac{\|f\|_{2}}{\|f\|_{1}}\right)^{4/q}. 
    \end{align}
    Taking $q = 4d^{-1}\log_{2} \left(\lVert f\rVert_{2}/\|f\|_{1}\right)$ yields the conclusion.
\end{proof}

\subsection{The Basic Linear Program}

We show that Theorem~\ref{thm:one_wise_approximation_resistance} follows from Theorem~\ref{thm:main} using the following fact about $\lp_{\calI}$.

\begin{proposition}\label{prop:one_wise_LP_val_1}
Suppose $\calF\subseteq \{f:\Sigma^{k}\rightarrow\{0,1\}\}$ is a predicate family such that every predicate $f\in \calF$ supports one-wise independence. Then for any instance $\calI=(\calV,\calC)\in \cspF$, we have $\lpval_{\calI}=1$. 
\end{proposition}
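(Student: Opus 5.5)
We prove this by writing down an explicit feasible solution of $\lp_{\calI}$ with objective value $1$. The objective is an average of quantities $\sum_{b} f_i(b) z_{i,b} \le \sum_b z_{i,b}$, and the first two constraint families force $\sum_b z_{i,b} = \sum_\sigma x_{\sfv_{i,1},\sigma} = 1$. Hence $\lpval_{\calI}\le 1$ always, and a feasible point of value $1$ gives equality.

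The construction uses the uniform distribution on each variable. Set $x_{\sfv,\sigma} = 1/|\Sigma|$ for every $\sfv\in\calV$ and $\sigma\in\Sigma$; this is nonnegative and sums to $1$ over $\sigma$. For each constraint $C_i=((\sfv_{i,1},\dots,\sfv_{i,k}),f_i)$, the hypothesis gives a one-wise independent distribution $\mu_i$ over $\Sigma^k$ supported on $f_i^{-1}(1)$. Set $z_{i,b}:=\mu_i(b)$.

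We then verify the marginal constraints. For each $j\in[k]$ and $\sigma\in\Sigma$, the sum $\sum_b \mathbbm{1}\{b_j=\sigma\}\mu_i(b)$ is the $j$-th marginal of $\mu_i$ evaluated at $\sigma$. By one-wise independence this equals $1/|\Sigma| = x_{\sfv_{i,j},\sigma}$. This holds even if the same variable appears in several positions of $C_i$ or in several constraints, because every $x_{\sfv,\cdot}$ is the same uniform vector, so no conflicts arise. Nonnegativity of $z$ is immediate.

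Finally, we compute the objective. Since $\mu_i$ is supported on $f_i^{-1}(1)$, we have $\sum_b f_i(b) z_{i,b} = \sum_b \mu_i(b) = 1$ for each $i$, so the objective equals $1$ (the instance is nonempty, so $m\ge 1$). Therefore $\lpval_{\calI}\ge 1$, and combined with the upper bound, $\lpval_{\calI}=1$.

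There is no real obstacle here. The only point worth flagging is the repeated-variable case, and choosing the uniform $x$ for every variable handles it automatically.
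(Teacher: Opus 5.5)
Your proof is correct and follows the paper's argument: the paper also takes each $\{x_{\sfv,\sigma}\}_{\sigma}$ to be uniform and each $\{z_{i,b}\}_{b}$ to be a one-wise independent distribution supported on $f_i^{-1}(1)$, and observes that this feasible solution has objective value $1$. Your explicit check of the upper bound $\lpval_{\calI}\le 1$ and of the repeated-variable case fills in details the paper leaves implicit.
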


\begin{proof}
Notice that the solution where 
\begin{enumerate}[label=(\arabic*)]
\item $\{x_{\sfv,\sigma}\}_{\sigma\in \Sigma}$ represents the uniform distribution over $\Sigma$ for any variable $\sfv\in \calV$, and 
\item $\{z_{i,b}\}_{b\in \Sigma^{k}}$ represents a one-wise independent distribution supported on $f_{i}^{-1}(1)$ for any constraint $C_{i}=((\sfv_{i,1},\dots,\sfv_{i,k}),f_{i})$ in $\calC$.
\end{enumerate}
achieves objective value 1 in $\lp_{\calI}$.
\end{proof}

\begin{proof}[Proof of Theorem~\ref{thm:one_wise_approximation_resistance} assuming Theorem~\ref{thm:main}]
For any $\varepsilon>0$, we can pick an instance $\calI\in\cspF$ with $\val_{\calI}\leq \rho(\calF)+\varepsilon/2$ (by the definition~\eqref{eq:def_rho_F}). By Proposition~\ref{prop:one_wise_LP_val_1} we know that $\lpval_{\calI}=1$. We can then apply Theorem~\ref{thm:main}(2) to the instance $\calI$ to yield the hardness of $\McspF{1}{\rho(\calF)+\varepsilon}$.
\end{proof}

Similarly to the proof of Proposition~\ref{prop:one_wise_LP_val_1}, it is easy to derive the following proposition, which will be used in Section~\ref{subsec:implication_twowise} to prove Theorems~\ref{thm:multi_pass_lin} and~\ref{thm:unbounded_width}.

\begin{proposition}\label{prop:two_wise_LP_val_1}
Suppose $\calF\subseteq \{f:\Sigma^{k}\rightarrow \{0,1\}\}$ is a predicate family such that every predicate $f\in\calF$ supports \emph{two-wise} independence. Then for any instance $\calI=(\calV,(C_{1},\dots,C_{m}))\in\cspF$, there exists a solution $\left((x_{\sfv,\sigma})_{\sfv \in \calV,\, \sigma \in \Sigma}, (z_{i,b})_{i \in [m],\, b \in \Sigma^k}\right)$ to $\lp_{\calI}$ achieving objective value 1 such that $\{z_{i,b}\}_{b\in \Sigma^{k}}$ represents a two-wise independent distribution over $\Sigma^{k}$ for any $i\in [m]$.
\end{proposition}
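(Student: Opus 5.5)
The construction mirrors the proof of Proposition~\ref{prop:one_wise_LP_val_1}, with one-wise independence replaced by two-wise independence. For every variable $\sfv\in\calV$ and every $\sigma\in\Sigma$ we set $x_{\sfv,\sigma}=1/|\Sigma|$, so each variable's marginal is the uniform distribution on $\Sigma$. For each constraint $C_i=((\sfv_{i,1},\dots,\sfv_{i,k}),f_i)$, the hypothesis says $f_i$ supports two-wise independence. We therefore fix a two-wise independent distribution $\mu_i$ on $\Sigma^k$ with $\supp(\mu_i)\subseteq f_i^{-1}(1)$, and set $z_{i,b}=\mu_i(b)$ for every $b\in\Sigma^k$.

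Next we check feasibility in $\lp_{\calI}$.
\begin{enumerate}[label=(\arabic*)]
\item Nonnegativity of the $x_{\sfv,\sigma}$ and $z_{i,b}$ is immediate.
\item For each $\sfv$, $\sum_{\sigma}x_{\sfv,\sigma}=1$ holds by construction.
\item For the consistency constraints, fix $i\in[m]$, $j\in[k]$ and $\sigma\in\Sigma$. Then $\sum_b\ind{b_j=\sigma}z_{i,b}$ is the probability that coordinate $j$ equals $\sigma$ under $\mu_i$.
\end{enumerate}
For item (3), two-wise independence implies one-wise independence when $k\ge 2$: the marginal on a single coordinate $j$ is obtained by summing the uniform marginal on the pair $(j,j')$ over the $j'$ coordinate, which gives the uniform distribution. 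So this probability is $1/|\Sigma|=x_{\sfv_{i,j},\sigma}$.

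The objective value is $\frac1m\sum_i\sum_b f_i(b)\mu_i(b)=1$, since each $\mu_i$ is supported on $f_i^{-1}(1)$. By construction, each $\{z_{i,b}\}_b$ is the two-wise independent distribution $\mu_i$, which is what the statement requires.

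The only subtle point is the degenerate arity $k=1$. There, two-wise independence (a condition on pairs of distinct coordinates) is vacuous and does not imply a uniform marginal, so the deduction in item (3) fails. I would read "supports two-wise independence" as including one-wise independence, which is the intended meaning and automatic for $k\ge2$. Alternatively one can note that the later applications only use $k\ge 2$. Apart from this caveat the argument is routine.
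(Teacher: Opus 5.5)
Your proof is correct and is the construction the paper intends. The paper only says the result follows ``similarly to the proof of Proposition~\ref{prop:one_wise_LP_val_1}'': set $x_{\sfv,\sigma}$ uniform, take $z_{i,\cdot}$ to be a two-wise independent distribution on $f_i^{-1}(1)$, and use that two-wise independence implies uniform marginals for the consistency constraints. Your $k=1$ caveat is justified: for $k=1$ the literal statement fails (two constraints $\ind{\cdot=a}$ and $\ind{\cdot=b}$ with $a\neq b$ on one variable force LP value at most $1/2$), but the paper's two-wise machinery assumes $k\ge 3$ anyway (Lemma~\ref{lem:decomposition_twowise}).
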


%% file: communication_game.tex
\section{Streaming Lower Bound from Communication Complexity}

The purpose of this section is to introduce (as a black box) a key lemma in \cite[Section 5]{FMW25b} showing that communication lower bounds for a certain communication game imply space lower bounds for streaming approximation of CSPs. In Sections~\ref{subsec:labeled_matchings} to~\ref{subsec:communication_game}, we present the necessary definitions for formalizing the communication game, and we refer the reader to \cite{FMW25b} for more detailed discussion of their motivations. 

\subsection{Labeled Matchings}\label{subsec:labeled_matchings}

The first ingredient in the communication game is labeled matchings, which are combinatorial objects that have been widely used in establishing streaming lower bounds for approximating CSPs. While several different sets of notation have been used in the literature to represent labeled matchings, in this paper we adopt the formalism introduced in \cite[Section 5]{FMW25b}. We list the relevant definitions below.

\begin{definition}
For finite sets $U_{1},\dots,U_{k}$ of equal cardinality, we call the tuple $\calU=(U_{1},\dots,U_{k})$ a \emph{$k$-universe}. The \emph{cardinality} of $\calU$, denoted by $|\calU|$, is defined to be the common cardinality of the sets $U_{i}$. For convenience, we use the shorthand $\tcup \calU$ for the union $\bigcup_{i\in [k]}U_{i}$ and $\tprod \calU$ for the Cartesian product $\prod_{i=1}^{k}U_{i}$.
\end{definition}

\begin{definition}
For a $k$-universe $\calU$ and a nonnegative integer $m\leq |\calU|$, we let $\calM_{\calU,m}$ denote the collection of all matchings (without labels) in the complete $k$-partite hypergraph $(\tcup \calU,\tprod\calU)$ (the hypergraph with vertex set $\tcup \calU$ and edge set $\tprod\calU$) with $m$ edges. We also write $\calM_{\calU,\leq m}:=\bigcup_{d=0}^{m}\calM_{\calU,d}$.
\end{definition}

\begin{definition}
For a $k$-universe $\calU$ and a nonnegative integer $m\leq |\calU|$, we define the following space of labeled matchings:
$$\Omega^{\calU,m}:=\left\{\bfy\in\Map{\tprod \calU}{\ZNk\cup\{\nil\}}:\supp(\bfy)\text{ is a matching with }m\text{ edges}\right\}.$$
Here, $\supp(\bfy)$ denotes the support of $\bfy$, i.e., the edges in $\prod \calU$ mapped to $\bZ_N^k$ (see \Cref{subsec:general_notations}). 
\end{definition}

\subsection{The Markov Kernel}\label{subsec:Markov_kernel}

The notion of Markov kernels provides significant convenience in both the formalization and the analysis of the communication game. The following notation will be helpful in defining the relevant Markov transitions.

\begin{notation}\label{notation:vector_subscript}
Suppose $\Lambda$ is a ground set and \(x \in \mathbb{Z}_N^{\Lambda}\) is a $\ZmodN$-vector indexed by $\Lambda$. If \(e = (v_1, \dots, v_k)\) is a tuple of elements with each \(v_i \in \Lambda\) for \(i \in [k]\), we denote by \(x_{|e}\) the vector \((x_{v_1}, \dots, x_{v_k}) \in \mathbb{Z}_N^k\). 
\end{notation}


\begin{definition}\label{def:Markov_kernel}
Fix a $k$-universe $\calU$, a positive integer $m\leq |\calU|$, and a one-wise independent distribution $\mu$ over $\ZNk$. We define a right stochastic matrix $\bfP^{\calU,m}_{\mu}:\ZmodN^{\bigcup \calU}\times \Omega^{\calU,m}\rightarrow[0,+\infty)$ as follows.  For each $x\in \bZ_{N}^{\bigcup \calU}$ and $\bfy\in \Omega^{\calU,m}$, the entry $\bfP^{\calU,m}_{\mu}(x,\bfy)$ is the probability that the output of the following process equals $\bfy$:
\begin{enumerate}
\item sample a matching $M$ uniformly at random from $\calM_{\calU,m}$; 
\item let $\bfz\in \Omega^{\calU,m}$ have support $\supp(\bfz)=M$, and
\item for each edge $e\in M$, draw $w_{e}\in \ZNk$ independently from $\mu$ and set $\bfz(e)=x_{|e}-w_{e}$, where the subtraction is performed in the Abelian group $\ZNk$;
\item output $\bfz$.
\end{enumerate}
\end{definition}

\subsection{Distribution-Labeled $k$-Graphs}\label{subsec:distribution_labeled_graph}

The communication game $\DIHP(G,n,\alpha,K)$ (to be defined in \Cref{subsec:communication_game}) is based on an abstract structure $G$ called a distribution-labeled $k$-graph, which we now define as follows.

\begin{definition}\label{def:distribution_k_graph}
A \emph{distribution-labeled $k$-graph} $G$ consists of the following data: a vertex set $\calV$; a multi-set $\calE$ of hyperedges, each an ordered $k$-tuple of distinct vertices in $\calV$; a positive integer $N$; and a collection of probability distributions $(\mu_{\sfe})_{\sfe\in \calE}$, where each $\mu_{\sfe}$ is a one-wise independent probability distribution on the Abelian group $\ZNk$. 
\end{definition}

The second parameter $n$ in $\DIHP(G,n,\alpha,K)$ is the blow-up factor of the distribution-labeled $k$-graph $G$. The set-theoretic structure of the blow-up is captured by the following definition.

\begin{definition}\label{def:associated_combinatorial}
Given a distribution-labeled $k$-graph $G=(\calV,\calE,N,(\mu_{\sfe})_{\sfe\in \calE})$ and a positive integer $n$, we define the following associated combinatorial objects.
\begin{enumerate}
\item The set $\calV\times [n]$, i.e. the $n$-blow-up of the vertex set $\calV$, will be referred to as the ground set.
\item For each $\sfv\in \calV$, let $U_{\sfv}:=\{\sfv\}\times [n]$ be the subset of $\calV\times [n]$ consisting of the $n$ copies of $\sfv$.
\item We associate with each hyperedge $\sfe=(\sfv_{1},\dots,\sfv_{k})\in \calE$ the $k$-universe $\calU_{\sfe}:=(U_{\sfv_{1}},\dots,U_{\sfv_{k}})$.
\end{enumerate}
\end{definition}

Note that in the above definition, for each $\sfe\in \calE$, the $k$-universe $\calU_{\sfe}$ is \emph{embedded} in the set $\calV\times [n]$, in the sense that $\tcup\calU_{\sfe}$ is a subset of $\calV\times [n]$. In general, we record the following notational convention.

\begin{notation}\label{notation:embed_universe}
A $k$-universe $\calU$ is said to be \emph{embedded} in a finite set $\Lambda$ if $\tcup\calU$ is a subset of $\Lambda$.
\end{notation}
\subsection{The Communication Game}\label{subsec:communication_game}

The following notation will be helpful in defining the communication game, as well as in later parts of the paper.

\begin{definition}
Fix a distribution-labeled \(k\)-graph \(G = (\calV, \calE, N, (\mu_{\sfe})_{\sfe \in \calE})\). The Abelian group \(\mathbb{Z}_N^{\calV \times [n]}\) will play a central role throughout the paper. For each edge \(\sfe \in \calE\), recall from \Cref{def:associated_combinatorial} that \(\bigcup \calU_{\sfe} \subseteq \calV \times [n]\). We denote by \(\proj_{\sfe}\) the canonical projection from \(\mathbb{Z}_N^{\calV \times [n]}\) onto \(\mathbb{Z}_N^{\bigcup \calU_{\sfe}}\).
\end{definition}

We are now ready to define the communication game $\DIHP(G,n,\alpha,K)$.

\begin{definition}\label{def:communication_game}
Given a distribution-labeled $k$-graph $G=(\calV,\calE,N,(\mu_{\sfe})_{\sfe\in \calE})$, parameters $n,K\in \bN$ and $\alpha\in (0,1)$, we define the communication game $\DIHP(G,n,\alpha,K)$ as follows:

\begin{enumerate}
\item There are $|\calE|\cdot K$ players, each indexed by a pair $(\sfe,j)$, where $\sfe\in \calE$ and $j\in [K]$.

\item Each player $(\sfe,j)$ receives as input a labeled matching in $\Omega^{\calU_{\sfe},\alpha n}$. 

\item \textbf{The no distribution:} define $\calD_{\no}$ to be the uniform distribution on the Cartesian product $\prod_{(\sfe,j)\in \calE\times [K]}\Omega^{\calU_{\sfe},\alpha n}$, i.e. each player gets an independent uniformly random input.

\item \textbf{The yes distribution:} define $\calD_{\yes}$ to be the joint distribution of $(\bfy^{(\sfe,j)})_{(\sfe,j)\in \calE\times [K]}$ obtained by the following procedure:
\begin{itemize}
\item Sample a uniformly random vector $x\in \ZmodN^{\calV\times [n]}$. 
\item For each player $(\sfe,j)\in \calE\times [K]$, independently draw a labeled matching $\bfy^{(\sfe,j)}\in \Omega^{\calU_{\sfe},\alpha n}$ according to the distribution given by the probability mass function $\bfP^{\calU_{\sfe},\alpha n}_{\mu_{\sfe}}\big(\proj_{\sfe}(x),\cdot\big)$.
\end{itemize}
\end{enumerate}
The goal of the players is to decide whether their inputs $(\bfy^{(\sfe,j)})_{(\sfe,j)\in \calE\times [K]}$ comes from $\calD_{\yes}$ or $\calD_{\no}$.
\end{definition}

\begin{remark}
Throughout this paper, whenever we refer to the communication game $\DIHP(G,n,\alpha,K)$, we treat $G,\alpha$ and $K$ as fixed parameters, and consider the asymptotic regime $n\rightarrow\infty$. 
\end{remark}

As is standard in distributional communication complexity, we measure the performance of a communication protocol by its ``advantage'', defined as follows.

\begin{definition}\label{def:advantage}
A deterministic communication protocol $\Pi$ for $\DIHP(G,n,\alpha ,K)$ computes a function $\Pi:\prod_{(\sfe,j)\in \calE\times [K]}\Omega^{\calU_{\sfe},\alpha n}\rightarrow\{0,1\}$. We define its \emph{advantage} in the communication game as 
\begin{align*}
    \mathrm{adv}(\Pi) : = \left| \Pru{\bfY\sim \mathcal{D}_\yes}{\Pi(\bfY)  =1 } - \Pru{\bfY\sim \mathcal{D}_\no}{\Pi(\bfY) =1 }\right|,
\end{align*}
where $\bfY$ denotes a joint input $\bfY=(\bfy^{(\sfe,j)})_{(\sfe,j)\in \calE\times [K]}$.
\end{definition}

The communication complexity of $\DIHP$ is then defined as follows. 

\begin{definition}
The \emph{communication cost} of a protocol~$\Pi$, denoted by~$|\Pi|$, is the total number of bits broadcasted by all players across all rounds during its execution.  
The \emph{communication complexity} of the game $\DIHP(G, n, \alpha, K)$, denoted by~$\CC(G, n, \alpha, K)$, is the minimum communication cost over all protocols~$\Pi$ that satisfy $\adv(\Pi) \geq 0.1$.
\end{definition}

The first result of our paper is the following improved communication lower bound of the $\DIHP(G,n,\alpha,K)$ game. 
\begin{theorem}\label{thm:sqrt_lower_bound}
    Fix a distribution-labeled $k$-graph $G=(\calV,\calE,N,(\mu_{\sfe})_{\sfe\in \calE})$, an integer $K>0$ and a parameter $\alpha\in \big(0,2^{-20}N^{-10k}|\calV|^{-2}\big]$. There exists a constant $\gamma=\gamma(G,\alpha,K)>0$ such that $\CC(G,n,\alpha,K)\geq \gamma \sqrt{n}$.
\end{theorem}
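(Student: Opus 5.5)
The plan is to combine the structure-vs.-randomness decomposition of \cite{FMW25b} with the Fourier-$\ell_1$ induction of \cite{KK19}, as outlined in Section~\ref{subsubsec:improve_discrepancy}. Fix a deterministic protocol $\Pi$ with $|\Pi|\le\gamma\sqrt n$. We will show $\adv(\Pi)<0.1$.

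\textbf{Step 1: Decomposition.} Using the decomposition technique of \cite[Section 7]{FMW25b}, refine $\Pi$, round by round, into a protocol whose leaves are structured rectangles $(\bdzeta,R)$, with $R=\prod_{(\sfe,j)}A^{(\sfe,j)}$ and $\bdzeta=(\bfz^{(\sfe,j)})$. Each $A^{(\sfe,j)}$ consists of labeled matchings extending $\bfz^{(\sfe,j)}$, and the remainder is global (pseudorandom): the conditional probability that any $i$ further specified labeled edges appear is $O((C\alpha/n)^i)$. Whenever a player's set violates globality, we expose the offending edges into $\bdzeta$ and repeat. Standard density-increment bookkeeping shows that, except on a set of leaves of $\calD_{\no}$-mass $o(1)$, three things hold: $\calD_{\no}(R)\ge 2^{-O(\gamma\sqrt n)}$; the number of exposed edges is $O(\gamma\sqrt n)$; and $\bdzeta$ spans no cycle. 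The last point follows because $t=O(\gamma\sqrt n)$ edges, each landing (by globality) roughly uniformly among $\sim n$ vertices, close a cycle with probability $O(t^2/n)=O(\gamma^2)$, by a birthday-type bound. The same bound holds for $\calD_{\yes}$, so it suffices to bound the discrepancy of each good leaf.

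\textbf{Step 2: Discrepancy of a good structured rectangle via Fourier-$\ell_1$.} Write $\calD_{\yes}(R)=\Exs{x}{\prod_{(\sfe,j)}\bfP(\proj_\sfe x, A^{(\sfe,j)})}$. Then track the conditional density $f_t$ of the hidden vector $x$ after conditioning on the first $t$ players' sets. The acyclic exposed part $\bdzeta$ contributes a factor whose Fourier transform is supported on characters tied to the forest's components. Because each $\mu_\sfe$ is one-wise independent and the exposed edges form a forest, every nonconstant character in this factor involves a whole tree component, and its coefficient is bounded by products of $|\widehat{\mu_\sfe}|\le 1$. For the pseudorandom part, we imitate \cite{KK19}. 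Conditioning on a global set $A$ acts on the density by averaging over a random edge set, and we bound the Fourier-$\ell_1$ mass of $f_t$ at each level $d$ by roughly $(C\,|\Pi|/\sqrt n)^{d}$ using the level-$d$ inequality Proposition~\ref{prop:level_d_classical}. For this we use that $\|f\|_2/\|f\|_1\le 2^{O(|\Pi|)}$, which gives $\|f^{\le d}\|_2^2\lesssim (C|\Pi|/d)^d$. A pseudorandom matching hits any fixed $d$-set of coordinates through $d/k$ edges with probability $\approx(\alpha/n)^{d/k}$, which converts the $\ell_2$ level bound into an $\ell_1$ bound of the required size. The key difference from \cite{KK19} is that the induction is applied to each refined round's pseudorandom component only, with the structured pieces absorbed into $\bdzeta$. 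This avoids the per-round constant-factor loss, since structure is removed before it can compound. The smallness $\alpha\le 2^{-20}N^{-10k}|\calV|^{-2}$ absorbs the alphabet factors $|\Gamma|^d=N^{kd}$ and the graph-size factors from summing over which of the $|\calE|K$ players' edges realize a character.

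\textbf{Step 3: Conclusion.} Summing the level-$d$ contributions gives $|\calD_{\yes}(R)/\calD_{\no}(R)-1|\le \sum_{d\ge1}(C\gamma)^{d}+O(\gamma^2)$, which is small for small $\gamma$. Summing over good leaves, and adding the $o(1)+O(\gamma^2)$ mass of bad leaves, yields $\adv(\Pi)<0.1$.

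The main obstacle is Step 2: making the $\ell_1$ induction work when matchings are only pseudorandom and not uniform, and when one-wise-independent (non-uniform) $\mu_\sfe$ create low-degree Fourier terms along exposed trees. My first attempt would be to bound the pseudorandom averaging operator's action on characters directly through the global hitting probabilities. I would then use acyclicity of $\bdzeta$ to show that characters cannot be ``closed'' by structured edges alone.
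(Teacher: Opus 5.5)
Your overall architecture is the paper's: the decomposition of \cite{FMW25b}, a split of each player's factor into the structured part $g_{\bdzeta}$ times a pseudorandom mixture, then a \cite{KK19}-style Fourier-$\ell^1$ induction. Step~1 is fine as a citation of Lemma~\ref{lem:regularity_decomposition}. But Step~2, where all the new work lies, rests on quantitative claims that are false, so the argument does not go through as proposed.

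\textbf{The proposed Fourier bound is false.} The invariant you want, $\|f_t^{=d}\|_{\sfW}\lesssim (C|\Pi|/\sqrt n)^{d}$, fails for trivial protocols. Already for Max-Cut, a player who announces one labeled edge ($O(\log n)$ bits) makes the posterior density of $x$ equal to $1\pm\chi_b$ with $|\supp(b)|=2$, i.e.\ level-$2$ mass $1$. A one-bit message giving the parity of the labels on $s$ canonical edges of the player's matching is a global set whose induced density has mass $1$ at level $2s$, whereas your bound would give $(C/\sqrt n)^{2s}$. These $\ell^1$-masses genuinely grow with $n$. The paper's invariant is $(n,C,s^*,\delta)$-boundedness (Definition~\ref{def:Cs_star_bounded}), with level bounds $F_{C}(n,d,s^*)=\bigl(C\sqrt{n\max\{s^*,d\}}/d\bigr)^{d/2}$. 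Smallness only appears after this mass is weighted by the probability that the next pseudorandom matching covers every vertex of $\supp(b)$ by edges meeting $\supp(b)$ in at least two vertices.

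\textbf{The closing probability is wrong.} It is about $(d\alpha/n)^{d/2}$ (Lemma~\ref{lem:q_calculation}), not $(\alpha/n)^{d/k}$. One-wise independence only kills characters meeting an edge in exactly one vertex, so pairs dominate.

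\textbf{The core lemma is missing.} You need to show that boundedness survives one multiplication by $\bdR^{\Lambda,M}_{\mu}[\phi_A]$, at the cost of a constant factor in $C$ and a doubling of $s^*$. Levels above $s^*$ and above $C^{-2}n$ must be handled separately, the latter via $\|f\|_\infty\le 2^{s^*}$. This is Lemma~\ref{lem:induction}, and your sketch offers no substitute for it.

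\textbf{The structured part is mishandled.} Nonconstant characters of $g_{\bdzeta}$ need not cover whole tree components. Lemma~\ref{lem:structure_part_mean} only gives $\supp(b)\subseteq\bigcup_i V_i$ with $|\supp(b)\cap V_i|\neq 1$. Counting such $b$ at level $\ell$ (at most $N^{\ell}(20\|\bdzeta\|/\ell)^{\ell/2}$ by Lemma~\ref{lem:no_size_1_by_induction}) is exactly where the weight $\|\bdzeta\|=\sum_i|V_i|^2$ enters.

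This is also where $\sqrt n$ really comes from. The discrepancy lemma works at linear scale: weight at most $\gamma n$ and per-player relative density at least $2^{-\gamma n}$, hence $s^*=\Theta(\gamma n)$ rather than $O(|\Pi|)$. The $\sqrt n$ enters only through the decomposition, which yields acyclicity and $\|\bdzeta\|\lesssim|\Pi|^2$.

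\textbf{The induction runs over players, not rounds.} The constant-factor loss is avoided not because structure is removed. It is avoided because the induction runs over the $K|\calE|$ players of a single leaf $R=\prod A^{(\sfe,j)}$, with each $A^{(\sfe,j)}$ already aggregating all rounds, via the product formula of Lemma~\ref{lem:relating_yes_no}. A per-round induction, as you suggest, would compound the loss in $C$, the doubling of $s^*$, and the per-step failure probability over up to $|\Pi|$ rounds.

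\textbf{Step~3 needs the one-sided argument.} This method only gives $\calD_{\yes}(R)\ge(1-10^{-3})\calD_{\no}(R)$, because matching tuples failing the test contribute a nonnegative but uncontrolled amount. Step~3 should therefore use the one-sided counting argument, not a bound on $|\calD_{\yes}(R)/\calD_{\no}(R)-1|$.
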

\begin{remark}\label{rem:polylog_protocol}
Previously, \cite[Theorem 5.13]{FMW25b} proved the lower bound $\CC(G,n,\alpha,K)\geq \Omega(n^{1/3})$. The $\Omega(\sqrt{n})$ lower bound in theorem~\ref{thm:sqrt_lower_bound} cannot be improved by more than $\mathrm{polylog}(n)$ factors, due to Theorem~\ref{thm:bipartiteness_tester}. In fact, the result of Theorem \ref{thm:bipartiteness_tester} implies that for some distribution-labeled $2$-graph $G$, there is a $\mathrm{polylog}(n)$-round protocol for $\DIHP(G,n,\alpha,K)$ with cost $\widetilde{O}(\sqrt{n})$ as long as $\alpha K$ is a sufficiently large constant.
\end{remark}

Every communication protocol can be viewed as having a certain number of \emph{communication rounds}, and in each round only one player broadcasts a message of certain length. Motivated by studying the space lower bounds of streaming algorithms with bounded number of passes, it is also of interest to consider how efficient protocols can be if we impose a limit on the number of communication rounds.

\begin{definition}
For any positive integer $r$, we define $\CC(G,n,\alpha,K,r)$ to be the minimum communication cost over all protocols $\Pi$ that use at most $r$ communication rounds and satisfy $\adv(\Pi)\geq 0.1$.
\end{definition}
Recall from Remark~\ref{rem:polylog_protocol} that the lower bound in Theorem~\ref{thm:sqrt_lower_bound} cannot be improved by much as long as we allow $\mathrm{polylog}(n)$ rounds of communication. It is not clear, however, whether we can prove stronger lower bound for communication protocols with fewer rounds of communication. In this direction, we prove the following communication lower bound against protocols with bounded rounds of communication. 
\begin{theorem}\label{thm:linear_lower_bound}
     Fix a distribution-labeled $k$-graph $G=(\calV,\calE,N,(\mu_{\sfe})_{\sfe\in \calE})$, an integer $K>0$ and a parameter $\alpha\in \big(0,2^{-20}N^{-10k}|\calV|^{-2}\big]$. There exists a constant $\beta= \beta(G,\alpha,K)>0$ such that $\CC(G,n,\alpha,K,r)\geq \beta^{r}n$ for any positive integer $r$.
\end{theorem}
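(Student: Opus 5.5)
Our plan for Theorem~\ref{thm:linear_lower_bound} reuses the structure-vs.-randomness approach behind Theorem~\ref{thm:sqrt_lower_bound}, but measures progress by the weight $\|\bdzeta\|$ of Section~\ref{subsubsec:KK19_weight} rather than by the number of exposed edges $|\bdzeta|$. Fix a deterministic $r$-round protocol $\Pi$ of cost $C\le \beta^{r}n$. We will show $\adv(\Pi)<0.1$, by proving that the leaves of $\Pi$ can be refined into structured rectangles $(\bdzeta,R)$ on which $\calD_{\yes}(R)\approx\calD_{\no}(R)$, outside an exceptional set of small $\calD_{\no}$-mass.

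\textbf{Step 1: round-by-round decomposition.} Apply the decomposition lemma of \cite[Section~7]{FMW25b} once per round to the speaking player's current rectangle. Whenever that player's set $A^{(i)}$ fails to be global, we expose a set of labeled edges so that the remainder becomes global again. The exposed set is chosen with respect to the weighted potential $\|\bdzeta\|=\sum_i |V_i|^2$ over the components of the exposed graph $H_{\bdzeta}$. The key combinatorial claim mirrors \cite{KK19}'s analysis of edge exposure. Suppose a message of $c$ bits is sent in a round. Then, conditioned on the past, the expected weight after that round is at most $O(\|\bdzeta\|+c)$, with constants depending only on $G,\alpha,K$. The point is that a pseudorandom matching hits an existing component of size $s$ with probability about $s\alpha/n$. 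Merging components of sizes $s$ and $t$ raises the weight by $2st$, and summing over pairs gives an increase bounded by a constant times the current weight. Iterating over $r$ rounds, together with Markov's inequality, gives $\|\bdzeta\|\le 2^{O(r)}C$ except on a set of small probability. Throughout, we also keep the rectangles large: $\calD_{\no}(R)\ge 2^{-O(C)}$.

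\textbf{Step 2: acyclicity.} A cycle appears only when a newly exposed edge has two endpoints in the same component, or joins components in a way that closes a cycle. The probability of this is at most a constant times $\sum_i |V_i|^2/n$, which is $O(\|\bdzeta\|/n)$. With $\|\bdzeta\|\le 2^{O(r)}\beta^{r}n$ and $\beta$ small, the union of exposed edges is acyclic with probability $0.99$.

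\textbf{Step 3: discrepancy.} For each structured acyclic rectangle $(\bdzeta,R)$ with $\|\bdzeta\|\le \delta n$ and $\calD_{\no}(R)\ge 2^{-O(\delta n)}$, we need to show $\calD_{\yes}(R)\ge(1-o(1))\calD_{\no}(R)$. We expect to prove this by the Fourier-$\ell_1$ induction of \cite{KK19} on the density of the hidden vector $x$, conditioned on the pseudorandom parts only. The structured, acyclic part contributes exactly uniform marginals by one-wise independence. The pseudorandom part is controlled by the level-$d$ inequality (Proposition~\ref{prop:level_d_classical}), applied with $d$ up to linear in $n$. Because the pseudorandom matchings are close to uniform, the induction degrades only by constant factors per round, and these factors are absorbed into $\beta^{r}$.

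\textbf{Main obstacle.} We expect Step 3 to be the hardest, for two reasons. First, general $\mu_{\sfe}$ are only one-wise independent, so the cancellation used in \cite{KK19} for max-cut fails. The plan is to group Fourier coefficients according to the components of $H_{\bdzeta}$: a character survives only if it is supported on whole components, and it is bounded component-wise via the tree structure. Second, the induction must be shown to tolerate pseudorandom rather than uniform matchings. Once Steps 1 through 3 are in place, choosing $\beta$ small in terms of $G,\alpha,K$ and summing the error terms gives $\adv(\Pi)<0.1$, which proves the bound $\CC(G,n,\alpha,K,r)\ge\beta^{r}n$.
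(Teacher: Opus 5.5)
Your Steps~1 and~2 follow the paper's decomposition: pass to a global protocol (Lemma~\ref{lem:arbitrary_to_global}), bound the expected weight and the probability of closing a cycle in one round (Lemma~\ref{lem:bounded_growth}), then finish with Markov and a union bound over rounds (Lemma~\ref{lem:analysis_of_global_protocols}). Two bookkeeping corrections are needed.

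First, the one-round bound is $k^{2}(2\|\bdzeta\|+\phi(\bdzeta,R)+3|\Pi|)$. The \emph{accumulated} potential enters, because a density loss incurred earlier can be paid out later as exposed edges. So it is not $O(\|\bdzeta\|+c)$ in the current message alone. Your conclusion $\|\bdzeta\|\leq 2^{O(r)}|\Pi|$ still holds, since the expected potential grows by at most $3|\Pi|$ per round.

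Second, ``large'' must mean relative density $|A^{(\sfe,j)}|/|\Omega^{\calU_{\sfe},\alpha n}_{\bfz^{(\sfe,j)}}|\geq 2^{-W}$ for each player, not $\calD_{\no}(R)\geq 2^{-O(C)}$. The latter fails after refinement, since each exposed labeled edge shrinks the mass by a factor of order $n^{k-1}$. The relative form is what the discrepancy step uses, via \eqref{eq:expectation_of_A_M}, to make the fibers $A_{i,M}$ large.

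The genuine gap is Step~3. You describe the induction of \cite{KK19} on the posterior density of $x$, run along the protocol and losing a constant per round. That fails once a player speaks twice. There is then no fresh (pseudo-)uniform matching, and the new posterior is not the old one times a new factor: that player's single factor $\bdP^{\calU_{\sfe},\alpha n}_{\mu_{\sfe}}[\phi_{A}]$ is \emph{replaced} by the one for some $A'\subseteq A$. Absorbing losses into $\beta^{r}$ handles the quantitative obstacle, not this structural one.

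The missing idea is to drop the protocol from the discrepancy analysis and work with each good leaf rectangle statically.
\begin{itemize}
\item Lemma~\ref{lem:separating_structure_pseudorandom} writes player $(\sfe,j)$'s factor as $g_{\sfe,\bfz^{(\sfe,j)}}$ times an average of $h_{i,M}=\bdR^{\calV\times[n],M}_{\mu_{\sfe}}[\phi_{A_{i,M}}]$. The average is over a pseudo-uniform matching distribution $\calD_{i}$ (Proposition~\ref{prop:global_vs_pseudouniformity}).
\item Sampling the $M_i$ independently, Lemma~\ref{lem:induction} is applied once per player, $K|\calE|$ steps in all. Hence $\gamma$ in Lemma~\ref{lem:discrepancy_bound} does not depend on $r$, and the whole $\beta^{r}$ loss comes from the decomposition.
\end{itemize}

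The base case also needs more than uniform marginals. $g_{\bdzeta}$ carries Fourier mass at levels $\geq 2$ that the pseudorandom factors can move to level $0$, so you need $\|g_{\bdzeta}^{=\ell}\|_{\sfW}\leq(20N^{2}\|\bdzeta\|/\ell)^{\ell/2}$. This comes from two ingredients:
\begin{itemize}
\item Lemma~\ref{lem:structure_part_mean}(2): surviving characters have support inside $\bigcup_i V_i$ meeting each $V_i$ in $0$ or at least $2$ vertices, which is weaker than covering whole components.
\item The counting bound of Lemma~\ref{lem:no_size_1_by_induction}.
\end{itemize}
This is where $\|\bdzeta\|\leq\gamma n$ is consumed, not just acyclicity.

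Finally, your diagnosis of the one-wise obstacle is off. One-wise independence does not destroy the cancellation of \cite{KK19}: weight-one characters on an edge are still killed (Lemma~\ref{lem:SVD}). The new issue for $k\geq 3$ is characters meeting an edge in $2,\dots,k-1$ vertices, which the paper handles with the internal/boundary count of Lemma~\ref{lem:q_calculation}.
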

\begin{remark}
    Note that our two main results for the $\DIHP$ games are incomparable. Theorem \ref{thm:sqrt_lower_bound} gives an $\Omega(\sqrt{n})$ lower bound regardless of rounds. Theorem \ref{thm:linear_lower_bound} gives an almost linear lower bound when $r=o(\log n)$ but only yields a trivial bound when $r\geq  \log_{1/\beta} n$. 
\end{remark}
The proofs of the two theorems will occupy Sections~\ref{sec:communication_lower_bound} to~\ref{sec:induction}.

\subsection{Streaming Lower Bounds}

We are now ready to state the following lemma from \cite{FMW25b} that transfers communication lower bounds to the streaming setting.\footnote{In \cite[Lemma 5.14]{FMW25b}, the stated memory lower bound for streaming is $(pmK)^{-1}\cdot \CC(G,n,\alpha,K)$, without round limits in the communication complexity. However, the proof of the lemma given in \cite{FMW25b} actually produces the lower bound $(pmK)^{-1}\cdot \CC(G,n,\alpha,K,pmK)$, because it transforms any $p$-pass streaming algorithm into a $pmK$ round communication protocol.} 

\begin{lemma}[{\cite[Lemma 5.14]{FMW25b}}]\label{lem:communication_reduction}
Fix a nonempty $\cspF$ instance $\calI = (\calV, (C_1, \dots, C_m))$, and let $s := \val_{\calI}$ and $c := \lpval_{\calI}$.  
Then there exists a distribution-labeled $k$-graph~$G=(\calV,\calE,N,(\mu_{\sfe})_{\sfe\in \calE})$ such that for any fixed error parameter $\varepsilon \in (0,1)$ and constants
\begin{equation}\label{eq:reduction_alpha_K}
0<\alpha \leq (100k)^{-1}\varepsilon
\qquad \text{and} \qquad 
K \geq 100 \alpha^{-1} \varepsilon^{-2} N^{2k} \cdot |\calV| \log |\Sigma|,
\end{equation}
the following holds for sufficiently large $n$:
\begin{enumerate}[label=(\arabic*)]
    \item If $c < 1$, then any $p$-pass algorithm for $\McspF{c - \varepsilon}{s + \varepsilon}$ requires at least $(pmK)^{-1} \cdot \CC(G, n, \alpha, K, pmK)$ bits of memory on input instances with $|\calV| \cdot n$ variables and at most $mK\cdot n$ constraints.
    
    \item If $c = 1$, then any $p$-pass algorithm for $\McspF{1}{s + \varepsilon}$ requires at least $(pmK)^{-1} \cdot \CC(G, n, \alpha,K, pmK)$ bits of memory on input instances with $|\calV| \cdot n$ variables and at most $mK\cdot n$ constraints.
\end{enumerate}
\end{lemma}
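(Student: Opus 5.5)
The plan is a standard "hidden-partition to CSP" reduction with three stages: build $G$ from an optimal solution of $\lp_{\calI}$, turn each $\DIHP$ input into a CSP instance whose value separates the yes and no cases, and simulate a $p$-pass streaming algorithm by a communication protocol.

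\textbf{Step 1 (constructing $G$).} Fix an optimal solution $(x_{\sfv,\sigma}),(z_{i,b})$ of $\lp_{\calI}$; one exists with rational entries since the LP has rational data. Choose $N$ so that every $x_{\sfv,\sigma}N$ is an integer. For each $\sfv$, fix a partition $\{I_{\sfv,\sigma}\}_{\sigma\in\Sigma}$ of $\ZmodN$ with $|I_{\sfv,\sigma}|=x_{\sfv,\sigma}N$, and write $\pi_{\sfv}:\ZmodN\to\Sigma$ for the cell map. Set $\calE=\{\sfe_i\}_{i\in[m]}$ with $\sfe_i=(\sfv_{i,1},\dots,\sfv_{i,k})$. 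Let $\mu_{\sfe_i}$ be the law of the vector obtained by drawing $b$ with probability $z_{i,b}$ and then replacing each $b_j$ by a uniform element of $I_{\sfv_{i,j},b_j}$. This law is one-wise independent: by the LP marginal constraint, coordinate $j$ lands in $I_{\sfv_{i,j},\sigma}$ with probability $x_{\sfv_{i,j},\sigma}$, and it is then uniform in a set of size $x_{\sfv_{i,j},\sigma}N$. Hence every element of $\ZmodN$ has probability exactly $1/N$.

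\textbf{Step 2 (from labeled matchings to CSP instances).} Given a joint input $\bfY$, the variables of the CSP instance are the $|\calV|n$ vertices of $\calV\times[n]$. For each player $(\sfe_i,j)$ and each edge $e\in\supp(\bfy^{(\sfe_i,j)})$ whose label is $0\in\ZNk$, we add the constraint $(e,f_i)$; all other edges are discarded. This gives at most $mK\alpha n\le mKn$ constraints.
\begin{itemize}
\item \emph{No case.} The retained edges form a random sparse blow-up of the constraint hypergraph of $\calI$, independent of any planted structure. We fix an assignment in $\Sigma^{\calV\times[n]}$ and apply a Chernoff bound, then take a union bound over all $|\Sigma|^{|\calV|n}$ assignments. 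The expected number of retained edges is about $mK\alpha nN^{-k}$, so the choice $K\ge100\alpha^{-1}\varepsilon^{-2}N^{2k}|\calV|\log|\Sigma|$ makes the union bound go through. We should conclude $\val\le s+\varepsilon$ with high probability. The comparison of a typical random blow-up with $\val_{\calI}$ uses that a blow-up assignment induces a distribution on assignments of $\calI$. The smallness $\alpha\le\varepsilon/(100k)$ controls the correlation introduced by the matching structure.
\item \emph{Yes case.} Conditioning on label $0$ means $w=x_{|e}$, so an edge $e$ is retained with probability proportional to $\mu_{\sfe_i}(x_{|e})$. Under the assignment $\tau(u)=\pi_{\sfv}(x_u)$ for $u\in U_{\sfv}$, the expected satisfied fraction is $\frac1m\sum_i\sum_b z_{i,b}f_i(b)=c$. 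Concentration then gives $\val\ge c-\varepsilon$. When $c=1$ the value is exactly $1$, since every retained edge satisfies $f_i$ under $\tau$.
\end{itemize}

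\textbf{Step 3 (simulation).} Order the players as $(\sfe_1,1),\dots,(\sfe_m,K)$, and let the stream be the concatenation of their retained constraints in this order. A $p$-pass streaming algorithm with memory $S$ becomes a protocol: each player runs the algorithm on its own segment and broadcasts the memory state to the next player, once per pass. This uses $pmK$ rounds and at most $pmK\cdot S$ bits, and the last player outputs the answer. Since the algorithm succeeds with probability $2/3$ and the concentration events fail with probability $o(1)$, the protocol has advantage $\ge0.1$. Therefore $pmKS\ge\CC(G,n,\alpha,K,pmK)$.

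The main obstacle is the no-case bound: showing that the random label-$0$ subinstance has value at most $\val_{\calI}+\varepsilon$ uniformly over all assignments of the blow-up. This needs the Chernoff plus union-bound calculation together with an argument relating the value of a blown-up instance to $\val_{\calI}$. The latter is where the precise constants in \eqref{eq:reduction_alpha_K} enter.
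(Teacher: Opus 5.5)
Your proposal is correct and matches the reduction behind this lemma, which the paper imports from \cite{FMW25b} without reproducing its proof. The ingredients line up with what the paper does say about that proof:
\begin{enumerate}
\item the one-wise independent distributions $\mu_{\sfe}$ are built from a rational optimal $\lp_{\calI}$ solution via the partitions $\{I_{\sfv,\sigma}\}$, exactly as in the technical overview;
\item each player deterministically turns its labeled matching into constraints by keeping the label-$0$ edges;
\item the no case uses a concentration-plus-union bound over all $|\Sigma|^{|\calV|n}$ assignments, with $\alpha\le\varepsilon/(100k)$ absorbing the correlations from sampling a matching and the $N^{2k}$ factor in $K$ paying for the $N^{-k}$ retention rate;
\item the simulation passes the memory state player to player, giving the $pmK$-round protocol mentioned in the paper's footnote.
\end{enumerate}
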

It is now clear that Theorems~\ref{thm:sqrt_lower_bound} and~\ref{thm:linear_lower_bound} together imply Theorem~\ref{thm:main}.

\begin{proof}[Proof of Theorem~\ref{thm:main} assuming Theorems~\ref{thm:sqrt_lower_bound} and~\ref{thm:linear_lower_bound}]

We pick constants $\alpha$ and $K$ such that
\[0<\alpha <\min \left\{(100k)^{-1}\varepsilon,\, 2^{-20}N^{-10k} |\calV|^{-2}\right\}\qquad\text{and}\qquad K  \geq 100\alpha^{-1}\varepsilon^{-2}N^{2k}\cdot |\calV|\log|\Sigma|.\]
By Theorem \ref{thm:sqrt_lower_bound} we have $$\mathsf{CC}(G,n,\alpha,K,pmK) \geq \mathsf{CC}(G,n,\alpha, K)= \Omega_{\alpha,K}(\sqrt{n}).$$
Similarly, by Theorem \ref{thm:linear_lower_bound}, we have 
\begin{align*}
    \mathsf{CC}(G,n,\alpha,K,pmK)\geq \beta(G,\alpha ,K)^{pmK}\cdot n,
\end{align*}
where $\beta(G,\alpha,K)$ is some constant depends on $G,\alpha ,K$. It then follows from Lemma~\ref{lem:communication_reduction}(1) that any $p$-pass algorithm for $\mathsf{MaxCSP}(\calF)[c-\varepsilon,s+\varepsilon]$ requires $$\max\left\{\Omega_{\varepsilon}(\sqrt{n}/p),\,n\cdot\exp(-O_{\varepsilon}(p))\right\}$$ bits of memory. The second statement in Theorem~\ref{thm:main} similarly follows from Lemma~\ref{lem:communication_reduction}(2).
\end{proof}

%% file: communication_hardness.tex
\section{Communication Lower Bound for DIHP}\label{sec:communication_lower_bound}

This section is devoted to the proofs of Theorems~\ref{thm:sqrt_lower_bound} and~\ref{thm:linear_lower_bound}. As in \cite{FMWa,FMW25b}, the argument follows the standard structure-vs.-randomness framework in communication complexity, and consists of the two main steps:  

\begin{enumerate}
    \item Given a protocol $\Pi$ with low communication cost, the first step is to decompose each rectangle induced by $\Pi$ into smaller subrectangles. 
Following \cite{FMWa,FMW25b}, we show that after this decomposition, most rectangles are ``good'', in the sense that each encodes a well-structured piece of information together with a controlled form of pseudorandom noise.

A key difference from \cite{FMWa,FMW25b} is that in our setting the goodness of these rectangles must be established under two distinct sets of assumptions. 
For the proof of Theorem~\ref{thm:sqrt_lower_bound}, we assume $|\Pi| = o(\sqrt{n})$. 
In contrast, for the proof of Theorem~\ref{thm:linear_lower_bound}, we assume $|\Pi| = o(n)$ and additionally require that $\Pi$ uses few communication rounds.

The ``decomposition lemma'' needed under the first assumption was already established in \cite[Lemma 6.11]{FMW25b}. 
Under the second set of assumptions, however, a new decomposition lemma is required; this is provided by Lemma~\ref{lem:linear_decomposition}.

    \item The second step is to show for each ``good'' rectangle $R$ a discrepancy bound of the form\footnote{Note that our discrepancy bound is \emph{one-sided}, unlike the two-sided bound $|\calD_{\yes}(R)-\calD_{\no}(R)|\leq 10^{-3}\cdot \calD_{\no}(R)$ proved in \cite{FMWa,FMW25b}. We are content with one-sided bounds because they are sufficient for proving communication lower bounds.}
    \[
\calD_{\yes}(R)\geq (1-10^{-3})\cdot \calD_{\no}(R).
    \]
    This is provided by our ``discrepancy lemma,'' Lemma~\ref{lem:discrepancy_bound}. The ``goodness'' assumption in our discrepancy lemma is weaker than its counterparts in \cite{FMWa,FMW25b}, which is the key reason we are able to obtain improved communication lower bounds.
\end{enumerate}
These two steps are then combined to complete the proof of the communication lower bounds.

The new decomposition lemma (Lemma~\ref{lem:linear_decomposition}) will be proved in Section~\ref{sec:decomposition}; its proof combines ideas in \cite[Section 5.2]{KK19} and \cite[Appendix A]{FMW25b}. The proof of  the discrepancy lemma (Lemma~\ref{lem:discrepancy_bound}) will take up Sections~\ref{sec:discrepancy} and~\ref{sec:induction}, where we combine ideas from \cite[Section 7]{KK19} and \cite[Section 7]{FMW25b}. This
section is devoted to laying out the overarching framework that connects these
components. In particular, we formalize the notion of “good” rectangles in Sections~\ref{subsec:pseudorandomness_notion} and~\ref{subsec:good_rectangles}.
Then, in Section~\ref{subsec:main_lemmas}, we lay out the main lemmas, from which we derive the desired communication
lower bound in Section~\ref{subsec:communication_result}. 

\subsection{Pseudorandomness Notions}\label{subsec:pseudorandomness_notion}

A “good” rectangle is one in which structural information and pseudorandom part are
cleanly separated, with the information contained in the structural part being of bounded size. In this subsection, we formalize the notions of pseudorandomness for sets of
labeled matchings. This will allow us, in Section~\ref{subsec:good_rectangles}, to control the pseudorandomness in rectangles.

Throughout this subsection, we fix a $k$-universe $\calU=(U_{1},\dots,U_{k})$ and a positive integer $m\leq |\calU|$. We will consider pseudorandomness notions for the space of labeled matchings $\Omega^{\calU,m}$.
Our notion is based on the following type of restriction on the space $\Omega^{\calU,m}$.

\begin{definition}\label{def:restrictions}
We define the set of \emph{restrictions} to be $\Omega^{\calU,\leq m}:=\bigcup_{0\leq d\leq m}\Omega^{\calU,d}$, i.e., the subset of $\Map{\tprod \calU}{\ZNk\cup\{\nil\}}$ that consists of all labeled matchings with at most $m$ edges. For each such labeled matching $\bfz\in\Omega^{\calU,\leq m}$, we let $\Omega^{\calU,m}_{\bfz}\subseteq \Omega^{\calU,m}$ be the restricted domain defined by
\[
\Omega^{\calU,m}_{\bfz}:=\left\{\bfy\in \Omega^{\calU,m}:\bfy(e)=\bfz(e)\text{ for all }e\in\supp(\bfz)\right\}.
\]
\end{definition}
For notational convenience, we use the following notion. 
\begin{notation}\label{notation:universe_minus_matching}
Fix a $k$-universe $\calU$, a nonnegative integer $m\leq |\calU|$, and a restriction $\bfz$ on the space $\Omega^{\calU,m}$. We denote by $\calU_{\setminus \supp(\bfz)}$ the $k$-universe $(U_{1}',U_{2}',\dots,U_{k}')$ defined by setting for each $i\in [k]$
$$U_{i}':=U_{i}\setminus\left\{u:\text{some edge in $\supp(\bfz)$ has $u$ as its $i$-th vertex}\right\}.$$
We use $\calM_{\calU,m,\bfz}$ as a shorthand for $\calM_{\calU_{\setminus\supp(\bfz)},\,m-|\supp(\bfz)|}$.
\end{notation}

Before formalizing the main notion of pseudorandomness, we define the following convenient concept of subsumption of restrictions.

\begin{definition}
    For two restrictions $\bfz,\bfz'\in \Omega^{\calU,\leq m}$, we say $\bfz'$ \emph{subsumes} $\bfz$ if $\supp(\bfz)\subseteq \supp(\bfz')$ and for all $e\in \supp(\bfz)$ we have $\bfz(e)=\bfz'(e)$. 
\end{definition}

We are now ready to define pseudorandomness for sets of labeled matchings:
\begin{definition}\label{def:global_set}
A subset $A\subseteq \Omega^{\calU,m}$ is said to be \emph{$\bfz$-global} if $A\subseteq \Omega^{\calU,m}_{\bfz}$, and for all restrictions $\bfz'$ that subsume $\bfz$ we have 
\[
\frac{\left|A\cap \Omega^{\calU,m}_{\bfz'}\right|}{\left|\Omega^{\calU,m}_{\bfz'}\right|}\leq 2^{|\supp(\bfz')|-|\supp(\bfz)|}\cdot \frac{\left|A\cap \Omega^{\calU,m}_{\bfz}\right|}{\left|\Omega^{\calU,m}_{\bfz}\right|}.
\]
When $\bfz = \boldsymbol{0}$ is the trivial restriction, we simply say 
that $A$ is global (omitting the $\bfz$).
\end{definition}

In words, for a set $A$ and a restriction $\bfz$, we say that $A$ is 
$\bfz$-global if any further restrictions $\bfz'$ that subsumes $\bfz$ increases
the relative density of $A$ by factor at most $2^{|\supp(\bfz')|-|\supp(\bfz)|}$.

Given the notion of globalness for labeled matchings defined in Definition~\ref{def:global_set}, it is natural to consider the following analogous pseudorandomness notion for \emph{unlabeled} matchings:

\begin{definition}\label{def:pseudo_uniform}
A distribution $\calD$ over $\calM_{\calU,m}$ is said to be \emph{pseudo-uniform} if for any nonnegative integer $d\leq m$ and any partial matching $S\in \calM_{\calU,d}$, we have
\[
\Pru{M\sim \calD}{S\subseteq M}\leq 2^{d}\cdot\Pru{M\in \calM_{\calU,m}}{S\subseteq M}.
\]
\end{definition}

Definitions~\ref{def:global_set} and~\ref{def:pseudo_uniform} are related by the following observation.

\begin{proposition}\label{prop:global_vs_pseudouniformity}
If $\bfy$ is a uniformly random element of a global set $A\subseteq \Omega^{\calU,m}$, then the distribution of $\supp(\bfy)$ is pseudo-uniform.
\end{proposition}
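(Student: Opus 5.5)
The plan is to unfold both sides as counting ratios over labeled matchings and observe that the two notions line up exactly once the labels are summed out. Fix $d\le m$ and a partial matching $S\in\calM_{\calU,d}$. Since $\bfy$ is uniform on $A$,
\[
\Pru{\bfy\in A}{S\subseteq \supp(\bfy)}=\frac{|\{\bfy\in A: S\subseteq\supp(\bfy)\}|}{|A|}.
\]
An element $\bfy$ with $S\subseteq\supp(\bfy)$ lies in $\Omega^{\calU,m}_{\bfz'}$ for exactly one restriction $\bfz'$ with $\supp(\bfz')=S$, namely $\bfz'=\bfy|_S$. The restrictions with support $S$ correspond to labelings $S\to\ZNk$, so there are $N^{kd}$ of them. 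Hence the numerator equals $\sum_{\bfz'}|A\cap\Omega^{\calU,m}_{\bfz'}|$, the sum running over all $\bfz'$ with $\supp(\bfz')=S$.

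Next I bound each summand using globalness with $\bfz=\boldsymbol 0$. Every restriction subsumes the trivial one, and $\Omega^{\calU,m}_{\boldsymbol 0}=\Omega^{\calU,m}$, so Definition~\ref{def:global_set} gives
\[
|A\cap\Omega^{\calU,m}_{\bfz'}|\le 2^{d}\,\frac{|A|}{|\Omega^{\calU,m}|}\,|\Omega^{\calU,m}_{\bfz'}|.
\]
Summing over the $N^{kd}$ choices of $\bfz'$ and dividing by $|A|$ yields
\[
\Pru{\bfy\in A}{S\subseteq \supp(\bfy)}\le 2^d\sum_{\bfz'}\frac{|\Omega^{\calU,m}_{\bfz'}|}{|\Omega^{\calU,m}|}.
\]

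The remaining step is to identify this sum with the uniform probability. The sum $\sum_{\bfz'}|\Omega^{\calU,m}_{\bfz'}|$ counts exactly the labeled matchings whose support contains $S$. In $\Omega^{\calU,m}$ every matching in $\calM_{\calU,m}$ carries the same number $N^{km}$ of labelings. So the ratio
\[
\frac{\sum_{\bfz'}|\Omega^{\calU,m}_{\bfz'}|}{|\Omega^{\calU,m}|}=\frac{|\{M\in\calM_{\calU,m}:S\subseteq M\}|\cdot N^{km}}{|\calM_{\calU,m}|\cdot N^{km}}=\Pru{M\in\calM_{\calU,m}}{S\subseteq M},
\]
which is the bound required by Definition~\ref{def:pseudo_uniform}. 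There is no real obstacle; the only point needing care is that the labels are uniformly many per matching, so the labeled density equals the unlabeled density and the factor $2^d$ passes through the sum unchanged.
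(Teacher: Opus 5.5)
Your proposal is correct and matches the paper's proof step for step: split the event $S\subseteq\supp(\bfy)$ over the $N^{kd}$ labelings of $S$, apply the globalness inequality with the trivial restriction to each piece, and identify the resulting sum with the uniform probability. Your remark that every matching in $\calM_{\calU,m}$ carries exactly $N^{km}$ labelings justifies the paper's final equality, which rewrites the uniform labeled probability as the unlabeled probability required by the pseudo-uniformity definition.
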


\begin{proof}
This follows from direct calculation:
    \begin{align*}
        \Pru{\bfy\in A}{M\subseteq \supp(\bfy)} &= \sum_{\substack{\bfz \in\Omega^{\calU,|M|} \\ \supp(\bfz) = M}} \frac{\left|A\cap \Omega^{\calU,m}_{\bfz}\right|}{\left|A\right|}
        \leq 2^{|M|} \sum_{\substack{\bfz \in\Omega^{\calU,|M|} \\ \supp(\bfz) = M}}\frac{\left|\Omega^{\calU,m}_{\bfz}\right|}{\left|\Omega^{\calU,m}\right|}   \\
        &=2^{|M|}\cdot\Pru{\bfy\in \Omega^{\calU,m}}{M\subseteq \supp(\bfy)} \qedhere
    \end{align*}
\end{proof}
The following corollary will be useful in Section~\ref{subsec:component_growing}.
\begin{corollary}\label{cor:edge_appears_in_globalset}
    Given a restriction $\bfz$ on $\Omega^{\calU,m}$ such that $|\supp(\bfz)|\leq \min\{m,|\calU|/2\}$ and a $\bfz$-global set $A\subseteq \Omega^{\calU,m}_{\bfz}$. For any matching $S\in\calM_{\calU,\leq m}$ containing $\supp(\bfz)$, we have
    \begin{align*}
        \Pru{\bfy\in A}{S\subseteq \supp(\bfy)}\leq \left(\frac{6^{k}m}{|\calU|^{k}}\right)^{|S|-|\supp(\bfz)|}. 
    \end{align*}
\end{corollary}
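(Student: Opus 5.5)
The plan is to reduce to the pseudo-uniformity statement of Proposition~\ref{prop:global_vs_pseudouniformity}, applied to the restricted universe $\calU':=\calU_{\setminus\supp(\bfz)}$ with $m':=m-|\supp(\bfz)|$ edges. Then we compute the probability that a uniformly random matching in $\calM_{\calU',m'}$ contains a fixed partial matching.

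\textbf{Step 1: pass to the restricted universe.} Write $d_0:=|\supp(\bfz)|$ and $T:=S\setminus\supp(\bfz)$, so $|T|=|S|-d_0$. If $T$ uses a vertex covered by $\supp(\bfz)$, then no $\bfy\in\Omega^{\calU,m}_{\bfz}$ can contain $S$, since $S$ is a matching containing $\supp(\bfz)$. So we may assume $T$ is a matching in $\calU'$. Deleting the restricted edges gives a bijection between $\Omega^{\calU,m}_{\bfz}$ and $\Omega^{\calU',m'}$. Under this bijection, $\Omega^{\calU,m}_{\bfz'}$ for $\bfz'$ subsuming $\bfz$ corresponds to $\Omega^{\calU',m'}_{\bfz'\setminus\bfz}$. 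Hence $A$ being $\bfz$-global means its image $A'$ is a global subset of $\Omega^{\calU',m'}$.

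\textbf{Step 2: apply pseudo-uniformity.} By Proposition~\ref{prop:global_vs_pseudouniformity},
\[
\Pru{\bfy\in A}{S\subseteq\supp(\bfy)}=\Pru{\bfy'\in A'}{T\subseteq\supp(\bfy')}\leq 2^{|T|}\,\Pru{M\in\calM_{\calU',m'}}{T\subseteq M}.
\]

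\textbf{Step 3: count uniform matchings.} Let $n':=|\calU'|=|\calU|-d_0\geq|\calU|/2$, using the hypothesis $d_0\leq|\calU|/2$. Let $t:=|T|\leq m'\leq n'$. A uniformly random $m'$-matching in a complete $k$-partite hypergraph with parts of size $n'$ contains $t$ given disjoint edges with probability
\[
\frac{m'(m'-1)\cdots(m'-t+1)}{\bigl(n'(n'-1)\cdots(n'-t+1)\bigr)^{k}}.
\]
This holds by symmetry: the number of $m'$-matchings is $\binom{n'}{m'}^k (m'!)^{k-1}$, and we count those containing $T$ via the ratio of the corresponding quantities for $(n'-t,m'-t)$. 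Bounding the ratio term by term by $m'/(n'-i)^k$ is delicate when $t$ is close to $n'$. Instead, I would use the cruder bound
\[
\frac{(m')_t}{((n')_t)^k}\leq\frac{m'^t}{(n'^t/e^t)^{k}}\cdot\frac{1}{\dots}.
\]
More cleanly, the bound $(n')_t\geq (n'/e)^t$ follows from $n'!/(n'-t)!\geq (n'/e)^t$ when $t\leq n'$, which is a Stirling-type estimate.

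This gives a per-edge factor of at most $e^k m/(n')^k\leq (2e)^k m/|\calU|^k$. Multiplying by the factor $2$ from Step 2 yields a per-edge factor of $2(2e)^k m/|\calU|^k\leq 6^k m/|\calU|^k$ for $k\geq 1$? Checking: $2(2e)^k=2\cdot 5.44^k$, which is at most $6^k$ only when $k\geq 7$. So this crude bound is too weak. The Stirling factor must be handled more carefully, and that is the main obstacle.

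\textbf{Step 4: the sharper estimate.} The fix is to bound the falling factorial ratio directly,
\[
\frac{(m')_t}{((n')_t)^k}\leq\left(\frac{m'}{n'}\right)^t\cdot\frac{1}{((n')_t)^{k-1}}.
\]
This uses $(m')_t/(n')_t\leq (m'/n')^t$, since $(m'-i)/(n'-i)\leq m'/n'$. It then remains to show $((n')_t)^{k-1}\geq (n'/3)^{t(k-1)}\cdot 3^{-t}$ or a similar bound.

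The estimate $(n')_t\geq (n'/e)^t$ gives the per-edge factor $2\cdot 2^k e^{k-1}m/|\calU|^k$. This is at most $6^k m/|\calU|^k$ because $2\cdot 2^k e^{k-1}=(4/e)(2e)^{k-1}\cdot\ldots$. Concretely, it suffices that $2^{k+1}e^{k-1}\leq 6^k$, that is, $(2/e)\cdot(2e/6)^{k}\cdot e\leq 1$. Since $2e/6\approx 0.906<1$, this holds for all $k\geq 1$: at $k=1$ the left side is $4\leq 6$, and for larger $k$ the ratio $2e/6<1$ only helps. This completes the plan.
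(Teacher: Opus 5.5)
Your proposal is correct and follows essentially the paper's route: pass to $\calU_{\setminus\supp(\bfz)}$, use pseudo-uniformity of $\supp(\bfy)\setminus\supp(\bfz)$ to pick up the factor $2^{|S|-|\supp(\bfz)|}$, write the uniform-matching probability as $(m')_t/((n')_t)^k$, and peel off $(m'-i)/(n'-i)\le m'/n'$ before bounding the remaining product. Your estimate $(n')_t\ge (n'/e)^t$ with $n'\ge|\calU|/2$ justifies the paper's final transition, which the paper leaves unexplained and which a term-by-term bound $|\calU|-d\ge|\calU|/6$ cannot give when $|S|$ is close to $|\calU|$. In a write-up, drop the abandoned crude bound and fix the algebra where you restate $2^{k+1}e^{k-1}\le 6^k$; your actual check, the base case $4\le 6$ plus the ratio $2e/6<1$, is the right one.
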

\begin{proof}
    When $\bfy$ is a uniformly random element of $A$, the induced distribution of $\supp(\bfy)\setminus \supp(\bfz)$ is a pseudo-uniform distribution over $\mathcal{M}_{\calU, m, \bfz}$ since $A$ is $\bfz$-global. Then, by Definition \ref{def:pseudo_uniform}, we have 
    \begin{align*}
        \Pru{\bfy\in A}{S\subseteq \supp(\bfy)} & =  \Pru{\bfy\in A}{S\setminus \supp(\bfz)\subseteq \supp(\bfy)\setminus \supp(\bfz)} \\
        &\leq 2^{|S| - |\supp(\bfz)|}\Pru{M\in \calM_{\calU,m,\bfz}}{S\setminus \supp(\bfz)\subseteq M}. 
    \end{align*}
    So, it suffices to prove 
    \begin{align*}
        \Pru{M\in \calM_{\calU,m,\bfz}}{S\setminus \supp(\bfz)\subseteq M}\leq \left(\frac{6^{k-1}m}{|\calU|^{k}}\right)^{|S|-|\supp(\bfz)|}. 
    \end{align*}
    The left-hand side of the above display is
    \[
    \prod_{d=|\supp(z)|}^{|S|-1}
    \frac{m-d}{(|\calU|-d)^{k}}\leq \left(\frac{m}{|\calU|}\right)^{|S|-|\supp(\bfz)|}\prod_{d=|\supp(z)|}^{|S|-1}
    \frac{1}{(|\calU|-d)^{k-1}}\leq \left(\frac{6^{k-1}m}{|\calU|^{k}}\right)^{|S|-|\supp(\bfz)|}.\qedhere\]
\end{proof}

\subsection{``Good'' Rectangles}\label{subsec:good_rectangles}

Now, we turn to pseudorandomness notions for \textit{rectangles}. In this subsection, we fix a distribution-labeled $k$-graph $G=(\calV,\calE,N,(\mu_{\sfe})_{\sfe\in \calE]})$ and a communication game $\DIHP(G,n,\alpha,K)$.

Recall from \Cref{def:communication_game} that in the communication game \(\DIHP(G, n, \alpha, K)\), the joint input to the $|\calE| \cdot K$ players is an element $\bfY$ in the product space \(\prod_{(\sfe, j) \in \calE \times [K]} \Omega^{\calU_{\sfe}, \alpha n}\). As is standard in communication complexity, a subset of this product space that is a Cartesian product is referred to as a \emph{rectangle}, formally defined below.

\begin{definition}
A subset \(R \subseteq \prod_{(\sfe, j) \in \calE \times [K]} \Omega^{\calU_{\sfe}, \alpha n}\) is called a \emph{rectangle} if it is a Cartesian product of sets \(A^{(\sfe, j)} \subseteq \Omega^{\calU_{\sfe}, \alpha n}\), one for each \((\sfe, j) \in \calE \times [K]\); that is,
\[
R = \prod_{(\sfe, j) \in \calE \times [K]} A^{(\sfe, j)}.
\]
\end{definition}
Then, it is natural to extend our definitions of global sets to rectangles, which requires each component $A^{(\sfe,j)}$ to be a global set. 
\begin{definition}\label{def:structured_rectangle}
Let \(\bdzeta = \left(\bfz^{(\sfe, j)}\right)_{(\sfe, j) \in \calE \times [K]}\) be a sequence where each \(\bfz^{(\sfe, j)}\) is a restriction on the space \(\Omega^{\calU_{\sfe}, \alpha n}\). A rectangle \(R = \prod_{(\sfe, j) \in \calE \times [K]} A^{(\sfe, j)}\) is called \emph{\(\bdzeta\)-global} if each set \(A^{(\sfe, j)}\) is \(\bfz^{(\sfe, j)}\)-global. When a rectangle $R$ is $\bdzeta$-global, we also say that the pair $(\bdzeta,R)$ is a \emph{structured rectangle}.
\end{definition}

In order for a structured rectangle $(\bdzeta,R)$ to be ``good,'' we need the restriction sequence $\bdzeta$ to satisfy additional properties. We would like to consider the following notion of \emph{weight} of a restriction sequence, inspired by \cite[Definition 5.1]{KK19}:

\begin{definition}\label{def:weight_of_restriction}
    Given a sequence of restriction \(\bdzeta = \left(\bfz^{(\sfe, j)}\right)_{(\sfe, j) \in \calE \times [K]}\), consider the hypergraph $H_{\bdzeta}$ with vertex set $\calV\times [n]$ and edge set
    \[
    \bigcup_{(\sfe,j)\in \calE\times [K]} \supp(\bfz^{(\sfe,j)}).
    \]
    Let $V_{1},\dots,V_{t}\subseteq \calV\times [n]$ be the list of nontrivial connected components of $H_{\bdzeta}$. We define the \emph{weight} of the restriction sequence $\bdzeta$ to be
    \[
    \|\bdzeta\|:=\sum_{i=1}^{t}|V_{i}|^{2}.
    \]
\end{definition}
We will also need the definition of ``cyclic'' restriction sequences stated as follows. 
\begin{definition}\label{def:cyclic_res_seq}
A restriction sequence \(\bdzeta = \left(\bfz^{(\sfe, j)}\right)_{(\sfe, j) \in \calE \times [K]}\) is said to be \emph{cyclic} if either the edge sets \(\left(\supp(\bfz^{(\mathsf{e},j)})\right)_{(\sfe,j)\in \calE\times [K]}\) are not pairwise disjoint, or the hypergraph $H_{\bdzeta}$ contains a cycle (for the definition of cycle-freeness in hypergraphs, see Section~\ref{subsec:general_notations}).
\end{definition}

The notion of weight defined above is crucial in the proof of the linear lower bound against protocols with bounded rounds of communication (Theorem \ref{thm:linear_lower_bound}). Intuitively, the quantity $\|\bdzeta\|$ controls the ``tendency'' of $\bdzeta$ to become cyclic, as will be formalized in Lemma~\ref{lem:bounded_growth}. To get some intuition, let $\bdzeta=(\bfz^{(\sfe,j)})_{(\sfe,j)\in \calE\times [K]}$ be an acyclic restriction, and consider the probability that $H_{\bdzeta}$ forms a cycle when we add a uniformly random $k$-edge $e$ to the graph $H_{\bdzeta}$. It is not hard to see that $H_{\bdzeta}$ forms a cycle with $e$ if and only if $e$ intersects some connected component $V$ of $H_{\bdzeta}$ on at least two vertices, which happens with probability $\approx \|\bdzeta\|/n^2$.\footnote{In our application we care about the case of adding a (pseudo-)uniformly random matching of size $\Theta(n)$ instead of adding a single edge. This is addressed formally in Lemma~\ref{lem:bounded_growth} by  careful calculations.} The work \cite{FMW25b} use the following cruder potential function to measure the tendency of $H_{\bdzeta}$ to become cyclic:
\begin{align*}
    |\bdzeta|:= \sum_{(\sfe,j)\in \calE\times[K]} \left|\supp(\bfz^{(\sfe,j)})\right|.
\end{align*}
The probability that $H_{\bdzeta}$ forms a cycle with a random edge is approximately at most $|\bdzeta|^2/n^2$, as $\|\bdzeta\|\lesssim |\bdzeta|^{2}$. To compare the two potential functions, consider the following two cases:
\begin{itemize}
    \item $H_{\bdzeta}$ is a graph of $t$ connected edges, forming a connected component of size $\approx kt$; in this case, $|\bdzeta|^2 \approx \|\bdzeta\|$, so the two bounds $|\bdzeta|^2/n^2$ and $\|\bdzeta\|/n^2$ are roughly the same.
    \item $H_{\bdzeta}$ is a graph of $t$ disjoint edges, forming $t$ connected components each of size $k$; in this case, $|\bdzeta| \approx \|\bdzeta\|$, so the bound $\|\bdzeta\|/n^2$ is much tighter than $|\bdzeta|^{2}/n^{2}$.
\end{itemize}
Provided that one expects $H_{\bdzeta}$ to have relatively small connected components, the above two examples suggest that $\|\bdzeta\|$ may be a more useful measure, and this is indeed the case.

We are now ready to give the formal definition of ``good'' rectangles. The rationale behind the three technical requirements in the following definition will become clear in \Cref{sec:discrepancy}.

\begin{definition}\label{def:good_rec_2}
    Let $W_{1},W_{2}$ be positive real numbers. We say a structured rectangle $(\boldsymbol{\zeta},R)$, where \(R=\prod_{(\sfe, j) \in \calE \times [K]} A^{(\sfe, j)}\) and \(\bdzeta = \left(\bfz^{(\sfe, j)}\right)_{(\sfe, j) \in \calE \times [K]}\), is $(W_{1},W_{2})$-good if the following conditions hold:
    \begin{enumerate}[label=(\arabic*)]
        \item The restriction sequence $\bdzeta$ is not cyclic.
        \item  $\left|A^{(\sfe,j)}\right| / \left|\Omega^{\calU_{\sfe},\alpha n}_{\bfz^{(\sfe,j)}}\right|\geq 2^{-W_{1}}$ for all $(\sfe,j)\in \calE\times [K]$.  
        \item The weight of $\bdzeta$ is at most $W_{2}$, i.e. $\lVert \bdzeta \rVert \leq W_{2}$. 
    \end{enumerate}
\end{definition}

\subsection{Three Main Lemmas}\label{subsec:main_lemmas}

We now present the two decomposition lemmas and the discrepancy lemma promised in the introductory text of Section~\ref{sec:communication_lower_bound}. The following decomposition lemma needed for proving Theorem~\ref{thm:sqrt_lower_bound} is already proved in \cite[Appendix A]{FMW25b}.

\begin{lemma}[First decomposition lemma,{~\cite[Lemma 6.11]{FMW25b}}]\label{lem:regularity_decomposition}
    Fix a distribution-labeled $k$-graph $G=(\calV,\calE,N,(\mu_{\sfe})_{\sfe\in \calE})$, an integer $K>0$ and a parameter $\alpha \in (0,1)$. There exists a constant $\eta>0$ such that given any communication protocol $\Pi$ for $\DIHP(G,n,\alpha,K)$ with $|\Pi|\leq \eta \sqrt{n}$, there exists a collection $\calR$ of pairwise-disjoint structured rectangles $(\boldsymbol{\zeta},R)$ in the space \(\prod_{(\sfe, j) \in \calE \times [K]} \Omega^{\calU_{\sfe}, \alpha n}\) such that the following conditions hold: 
    \begin{enumerate}[label=(\arabic*)]
        \item \(\calD_{\no}\left(\bigcup_{(\boldsymbol{\zeta},R)\in \calR}R\right)\geq 0.99\).
        \item Each $(\boldsymbol{\zeta},R)\in\calR$ is $\left(10^5|\Pi|,10^{10}k^{2}|\Pi|^{2}\right)$-good.\footnote{Note that the goodness condition here contains the requirement $\|\bdzeta\|\leq 10^{10}K^{2}|\Pi|^{2}$. In the original statement of \cite[Lemma 6.11]{FMW25b}, this particular requirement is replaced by the stronger assumption $\sum_{(\sfe,j)}|\supp(\bfz^{(\sfe,j)})|\leq 10^{5}|\Pi|$ (while all other components of the goodness condition are unchanged). Since the bound $\sum_{(\sfe,j)}|\supp(\bfz^{(\sfe,j)})|\leq 10^{5}|\Pi|$ clearly implies $\|\bdzeta\|\leq 10^{10}K^{2}|\Pi|^{2}$, the original result of \cite[Lemma~6.11]{FMW25b} directly implies our Lemma~\ref{lem:regularity_decomposition}.}
        \item For each $(\boldsymbol{\zeta},R)\in \calR $, there exists $a_R \in \{0,1\}$ such that $\Pi(\bfY) = a_R$ for every $\bfY \in R$. 
    \end{enumerate}
\end{lemma}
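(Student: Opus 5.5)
The plan is to derive the lemma directly from the original decomposition lemma \cite[Lemma 6.11]{FMW25b}, as the footnote to the statement indicates, rather than re-running the decomposition argument of \cite[Appendix A]{FMW25b}. With the same $G,K,\alpha$, that lemma supplies a constant $\eta>0$ such that every protocol $\Pi$ with $|\Pi|\le \eta\sqrt n$ admits a collection $\calR$ of pairwise disjoint structured rectangles $(\bdzeta,R)$ with the following properties:
\begin{itemize}
\item $\calD_{\no}$ gives their union mass at least $0.99$;
\item $\Pi$ is constant on each of them;
\item each $\bdzeta$ is not cyclic;
\item each component density satisfies $|A^{(\sfe,j)}|/|\Omega^{\calU_{\sfe},\alpha n}_{\bfz^{(\sfe,j)}}|\ge 2^{-10^5|\Pi|}$;
\item $|\bdzeta|=\sum_{(\sfe,j)}|\supp(\bfz^{(\sfe,j)})|\le 10^5|\Pi|$.
\end{itemize}
I will take this exact collection $\calR$ and the same $\eta$. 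Then items (1) and (3) of the statement, together with conditions (1) and (2) of Definition~\ref{def:good_rec_2} (with $W_1=10^5|\Pi|$), carry over verbatim. The first step is to confirm that the original notions of restriction, $\bfz$-globalness and non-cyclicity coincide with Definitions~\ref{def:global_set}, \ref{def:structured_rectangle} and~\ref{def:cyclic_res_seq}; the paper states that the remaining conditions are unchanged.

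The only new thing to verify is condition (3) of Definition~\ref{def:good_rec_2}, namely $\|\bdzeta\|\le 10^{10}k^2|\Pi|^2$. Let $V_1,\dots,V_t$ be the nontrivial connected components of $H_{\bdzeta}$. Every vertex of a nontrivial component lies on at least one edge of $H_{\bdzeta}$, and each edge contributes at most $k$ vertices. Hence
\[
\sum_{i=1}^t |V_i| \le k\,\Bigl|\bigcup_{(\sfe,j)}\supp(\bfz^{(\sfe,j)})\Bigr| \le k|\bdzeta|.
\]
Since all terms are nonnegative,
\[
\|\bdzeta\|=\sum_{i=1}^t|V_i|^2\le\Bigl(\sum_{i=1}^t|V_i|\Bigr)^2\le k^2|\bdzeta|^2\le 10^{10}k^2|\Pi|^2 .
\]
Combining this with the previous paragraph shows that every $(\bdzeta,R)\in\calR$ is $(10^5|\Pi|,10^{10}k^2|\Pi|^2)$-good, which proves the lemma.

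I do not expect any real obstacle. The only point needing care is the bookkeeping in the first step: checking that the goodness conditions in \cite{FMW25b} are stated with the same normalization, namely density relative to $\Omega_{\bfz}$ rather than to $\Omega$, and the same notion of hypergraph cycle, including disjointness of the supports across players. If any constant differed there, it could be absorbed by shrinking $\eta$, since $\eta$ is only required to exist.
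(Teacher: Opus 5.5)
Your proposal is correct and matches the paper's argument exactly: the lemma is imported from \cite[Lemma 6.11]{FMW25b}, and the only new check is $\|\bdzeta\|\le\bigl(\sum_i|V_i|\bigr)^2\le k^2|\bdzeta|^2\le 10^{10}k^2|\Pi|^2$, which you verify correctly. Your fallback remark that differing constants ``could be absorbed by shrinking $\eta$'' is not valid, because the goodness thresholds scale with $|\Pi|$ rather than with $\eta\sqrt{n}$; the remark is never needed, however, since the remaining conditions are identical to those in \cite{FMW25b}.
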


The next decomposition lemma, which is needed for Theorem~\ref{thm:linear_lower_bound}, will be proved in Section~\ref{sec:decomposition}.

\begin{restatable}[Second decomposition lemma]{lemma}{lineardecomposition}\label{lem:linear_decomposition}
    Fix a distribution-labeled $k$-graph $G=(\calV,\calE,N,(\mu_{\sfe})_{\sfe\in \calE})$, an integer $K>0$ and a parameter $\alpha \in (0,1)$. There exist constants $\eta_{1},\eta_{2}\in (0,1)$ (depending only on $G,\alpha$ and $K$) such that given any $r$-round communication protocol $\Pi$ for $\DIHP(G,n,\alpha,K)$ with $|\Pi|\leq \eta_{1}^{r} n$, there exists a collection $\calR$ of pairwise-disjoint structured rectangles $(\boldsymbol{\zeta},R)$ in the space \(\prod_{(\sfe, j) \in \calE \times [K]} \Omega^{\calU_{\sfe}, \alpha n}\) such that the following conditions hold: 
    \begin{enumerate}[label=(\arabic*)]
        \item \(\calD_{\no}\left(\bigcup_{(\boldsymbol{\zeta},R)\in \calR}R\right)\geq 0.99\).
        \item Each $(\boldsymbol{\zeta},R)\in\calR$ is $\left(\eta_{2}^{-r}|\Pi|,\eta_{2}^{-r}|\Pi|\right)$-good. 
        \item For each $(\boldsymbol{\zeta},R)\in \calR $, there exists $a_R \in \{0,1\}$ such that $\Pi(\bfY) = a_R$ for every $\bfY \in R$. 
    \end{enumerate}
\end{restatable}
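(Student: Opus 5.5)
We prove the lemma by following the iterative decomposition of \cite[Appendix A]{FMW25b}. The progress measure $|\bdzeta|$ used there is replaced by the weight $\|\bdzeta\|$ of Definition~\ref{def:weight_of_restriction}, and we track $\|\bdzeta\|$ round by round as in \cite[Section 5.2]{KK19}. We process the protocol tree round by round. We maintain, for each node $\nu$ reached after $t$ rounds, a partition of the rectangle $R_\nu$ into structured subrectangles $(\bdzeta,R)$, together with a small set of discarded ``bad'' rectangles. At the root there is a single structured rectangle: trivial restrictions and the full space, which is global.

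\emph{Refinement step.} Suppose player $(\sfe,j)$ speaks at round $t+1$ on a structured rectangle $(\bdzeta,R)$. The message partitions $A^{(\sfe,j)}$ into pieces. For each piece $B$ we run the greedy globalization procedure of \cite[Appendix A]{FMW25b}. While $B$ is not $\bfz^{(\sfe,j)}$-global, we take a maximal violating restriction $\bfz'$ and split off $B\cap\Omega_{\bfz'}$ with the new restriction $\bfz'$. This produces $\bfz'$-global pieces whose density is controlled by the density-increment argument. Pieces of density below $2^{-W_1}$, and pieces whose new restriction makes $\bdzeta$ cyclic, are discarded as bad. The density bookkeeping of \cite{FMW25b} gives the following: under $\calD_{\no}$ conditioned on the rectangle, the number of newly fixed edges $s$ has an exponential tail, roughly $\Pr{s\ge \ell}\le 2^{-\Omega(\ell)+O(\text{message length}+\text{previous density loss})}$. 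Summed over the protocol, this keeps condition~(2) with $W_1=\eta_2^{-r}|\Pi|$. Condition~(3) on the protocol value is automatic once we refine through all rounds.

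\emph{Weight growth (main obstacle).} The new content is showing that $\|\bdzeta\|$ grows by at most a constant factor per round, plus an additive $O(\text{bits sent})$, and that cycles are rare. Fix a rectangle where player $(\sfe,j)$'s set $A$ is $\bfz$-global. The newly exposed edges are drawn, in the appropriate sense, from a pseudo-uniform matching. By Corollary~\ref{cor:edge_appears_in_globalset}, any specific set of $s$ new edges appears with probability at most $(6^k\alpha n/n^k)^{s}$.

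We then bound $\Ex{\|\bdzeta_{\text{new}}\|}$ by a KK19-style expansion. We enumerate the ways new edges can attach to existing components $V_1,\dots,V_t$ and merge them. A new edge joining components $V_{a_1},\dots,V_{a_q}$ costs a factor $\prod|V_{a_i}|/n$ up to constants, and it increases $\sum|V|^2$ by at most $(\sum|V_{a_i}|+k)^2$. With $\alpha\le 2^{-20}N^{-10k}|\calV|^{-2}$ small, this series converges. The new weight is then at most $C\|\bdzeta\|+C s$ in expectation, and by a moment/Chernoff-type argument also with exponential tails at scale $\|\bdzeta\|+|\Pi|$.

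The cycle probability is bounded the same way. A new edge meeting one component in two vertices has probability at most $\sum|V_i|^2\cdot O(\alpha/n)$, which is $O(\|\bdzeta\|/n)$. Summed over rounds, this is $O(\eta_2^{-r}|\Pi|/n)$. This quantity is small because $|\Pi|\le\eta_1^r n$ with $\eta_1\ll\eta_2$.

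The delicate point is that the restrictions are chosen adaptively by the globalization procedure, not sampled. We therefore need the probability bounds to hold for the distribution of the globalized pieces under $\calD_{\no}$. We plan to handle this as \cite{FMW25b} does: we charge each fixed set of exposed edges against the density it removes, and we use the union bound over the at most $2^{|\Pi|}$ message outcomes. Iterating over the $r$ rounds gives $\|\bdzeta\|\le C^{r}O(|\Pi|)$ except on a set of $\calD_{\no}$-mass $0.01$. Choosing $\eta_2=1/(2C)$ and $\eta_1$ small enough to make the cyclicity and bad-mass contributions below $0.01$ then yields conditions (1)--(3).
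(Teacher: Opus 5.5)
Your architecture matches the paper's (Lemmas~\ref{lem:bounded_growth} and~\ref{lem:analysis_of_global_protocols}): globalize via \cite{FMW25b}, prove a one-round bound of the form (expected new weight) $\lesssim\|\bdzeta\|+{}$(potential), add a cycle bound $O(\alpha\|\bdzeta\|/n)$ from pseudo-uniformity, and iterate over the $r$ rounds.

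\textbf{The gap: handling adaptivity.} The gap is in the step you yourself flag as delicate. A union bound over the $\le 2^{|\Pi|}$ message outcomes cannot control adaptivity. The probability that a round closes a cycle is only polynomially small (of order $\alpha\|\bdzeta\|/n$), while $|\Pi|$ may be as large as $\eta_1^r n$, so the factor $2^{|\Pi|}$ destroys the bound.

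The missing observation is that adaptivity costs nothing. In a round, the speaker's set $A$ is $\bfz$-global, because the parent rectangle is structured. The round \emph{partitions} $A$ into pieces $A_{(i)}$. Let $E_{(i)}$ be the set of newly exposed edges of piece $i$; every $\bfy\in A_{(i)}$ satisfies $E_{(i)}\subseteq\supp(\bfy)\setminus\supp(\bfz)$. Hence, for any configuration that is monotone under adding edges (e.g.\ an exposed chain joining components $V_q,V_s$, or a closed chain returning to one component),
\[
\sum_i\frac{|A_{(i)}|}{|A|}\cdot\ind{E_{(i)}\text{ contains it}}\le\Pru{\bfy\in A}{\supp(\bfy)\setminus\supp(\bfz)\text{ contains it}},
\]
and the right-hand side is bounded directly by Corollary~\ref{cor:edge_appears_in_globalset}. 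No union bound over messages and no tail bounds are needed.

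\textbf{Secondary issues.}
\begin{enumerate}
\item \emph{Tail bounds.} Your exponential-tail claims (for the number $s$ of newly fixed edges, and a ``Chernoff-type'' tail for the new weight) are unsupported. A global protocol (Definition~\ref{def:global_protocol}) only controls the \emph{average} potential increment, and the merging events are strongly dependent. The tails are also unnecessary: expectations plus Markov suffice, subject to the next two points.
\item \emph{Truncation.} You must propagate the truncated quantity $\min\{\|\bdzeta\|,6^{-k-1}n\}$. The chain series converges only when $6^{k}\alpha\|\bdzeta\|/n<1$, and smallness of $\alpha$ alone does not give this; in any case the lemma assumes only $\alpha\in(0,1)$.
\item \emph{Choice of $\eta_2$.} You need $\eta_2$ much smaller than the reciprocal of the growth rate. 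The paper takes $\eta_2=10^{-5}k^{-2}$ against the bound $(10k^2)^r c$ on the expected weight. With $\eta_2=1/(2C)$, Markov only gives failure probability about $2^{-r}$, which is not below $0.01$ for small $r$.
\item \emph{Per-edge accounting.} Your per-edge accounting of the weight increment is not valid as stated. If $e_1$ meets $V_q$ and $V_a$ while $e_2$ meets $V_a$ and $V_s$, the cross term $|V_q||V_s|$ appears although no single new edge touches both. Cycles can likewise be closed by several new edges, each meeting a component in only one vertex. The paper bounds the new weight by $k^2|E_{(i)}|+k^2\sum_{q,s}X_{q,s,i}|V_q||V_s|$. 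It then estimates the mass where $X_{q,s,i}=1$ through alternating vertex sequences of length $2d$, of which there are at most $|V_q||V_s|\,\|\bdzeta\|^{d-1}$; this yields a geometric series.
\end{enumerate}
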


The following discrepancy lemma is needed for proving both Theorem~\ref{thm:sqrt_lower_bound} and Theorem~\ref{thm:linear_lower_bound}.

\begin{restatable}[Discrepancy lemma]{lemma}{discrepancybound}\label{lem:discrepancy_bound}
    Fix a distribution-labeled $k$-graph $G=(\calV,\calE,N,(\mu_{\sfe})_{\sfe\in \calE})$, an integer $K>0$ and a parameter $\alpha\in \left(0,2^{-20}N^{-10k}|\calV|^{-2}\right]$. There exists a constant $\gamma\in (0,1)$ such that for any $(\gamma n, \gamma n)$-good structured rectangle $(\boldsymbol{\zeta},R)$, we have
    \begin{align*}
        \calD_{\yes}(R)\geq (1-10^{-3})\cdot \calD_{\no}(R).
    \end{align*}
\end{restatable}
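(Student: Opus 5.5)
The plan is to write $\calD_{\yes}(R)$ as an expectation over the hidden vector $x$ and compare it with $\calD_{\no}(R)$ one player at a time. Since every $\mu_{\sfe}$ is one-wise independent, each single label $x_{|e}-w_e$ is marginally uniform. Define for each player
\[
f^{(\sfe,j)}(x):=\frac{\bfP^{\calU_{\sfe},\alpha n}_{\mu_{\sfe}}\bigl(\proj_{\sfe}(x),A^{(\sfe,j)}\bigr)}{\bigl|A^{(\sfe,j)}\bigr|/\bigl|\Omega^{\calU_{\sfe},\alpha n}\bigr|}.
\]
Then $\Exs{x}{f^{(\sfe,j)}(x)}=1$ and $\calD_{\yes}(R)/\calD_{\no}(R)=\Exs{x}{\prod_{(\sfe,j)}f^{(\sfe,j)}(x)}$. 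The task is to show this expectation is at least $1-10^{-3}$. I would order the players arbitrarily and follow \cite[Section 7]{KK19}: let $F_t$ be the normalized product of the first $t$ factors, viewed as a probability density of $x$ conditioned on "the first $t$ inputs lie in their sets". I would then control $F_t$ inductively in Fourier space.

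\textbf{Step 1: split each factor.} Because $A^{(\sfe,j)}$ is $\bfz^{(\sfe,j)}$-global, each $f^{(\sfe,j)}$ factors as $g^{(\sfe,j)}\cdot h^{(\sfe,j)}$.
\begin{itemize}
\item The structured part is $g^{(\sfe,j)}(x)=\prod_{e\in\supp(\bfz^{(\sfe,j)})}N^{k}\mu_{\sfe}\bigl(x_{|e}-\bfz^{(\sfe,j)}(e)\bigr)$. Its Fourier transform is supported on characters living on the edges of $\supp(\bfz^{(\sfe,j)})$, and each edge contributes $\widehat{\mu_{\sfe}}$, which vanishes on weight-one characters by one-wise independence.
\item The pseudorandom part $h^{(\sfe,j)}$ is an average over $\bfy$ uniform in $A^{(\sfe,j)}$ of products over the remaining edges. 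By Proposition~\ref{prop:global_vs_pseudouniformity} and Corollary~\ref{cor:edge_appears_in_globalset}, the support of those remaining edges is pseudo-uniform.
\end{itemize}
The product of all $g$'s is handled combinatorially. Since $\bdzeta$ is not cyclic, expanding $\prod g$ gives a sum over forests of edges, and the expectation of each term factors over the components of $H_{\bdzeta}$. The constant term of the whole product is $1$, and the Fourier mass of $\prod g$ at a character $b$ is concentrated on $b$ supported inside components of $H_{\bdzeta}$.

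\textbf{Step 2: induction on the pseudorandom parts.} This is where the $\ell_1$ method of \cite{KK19} enters. I would show inductively that for every level $d$ the Fourier-$\ell_1$ mass of the pseudorandom part of $F_t$ at level $d$, i.e. $\sum_{|\supp(b)|=d}|\widehat{F_t}(b)|$, is at most $(C\sqrt{\gamma})^{d}$ times a combinatorial factor. Multiplying by $h^{(\sfe,j)}$ convolves Fourier coefficients along the random matching. The level-$d$ inequality (Proposition~\ref{prop:level_d_classical}), applied to the indicator of $A^{(\sfe,j)}$ with density $\ge 2^{-\gamma n}$, bounds the $\ell_2$ mass of $h^{(\sfe,j)}$ at level $2d$ by roughly $(O(\gamma n/d))^{d}\cdot(d/n)^{d}$ after accounting for the $\approx n^{d}$ ways to pick $d$ matching edges. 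Pseudo-uniformity (with the factor $2^{d}$ and $\alpha\le 2^{-20}N^{-10k}|\calV|^{-2}$) replaces the uniformity used in \cite{KK19}, and Cauchy--Schwarz converts the $\ell_2$ bound into an $\ell_1$ bound. The key design choice is that the structured edges of every player are set aside into $\prod g$ before the induction. The induction therefore only sees fresh pseudorandom matchings, which restores the independence \cite{KK19} needs. The per-step loss should then appear only at high levels, where $\gamma$ is small enough to absorb it, instead of compounding over the $|\calE|K$ players.

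\textbf{Step 3: combine the two parts.} I would bound $\Exs{x}{\prod g\cdot\prod h}=\sum_b\widehat{\prod g}(b)\,\widehat{\prod h}(-b)$. The $b=0$ term is $1$. Any nonzero $b$ must be supported on vertices of nontrivial components of $H_{\bdzeta}$, or on edges of the pseudorandom matchings attached to them. The number of relevant characters of level $d$ that touch components of total size $s$ is controlled by $\sum_i|V_i|^{2}=\|\bdzeta\|\le\gamma n$: a pseudorandom edge must hit a component in two places to create a closed term, which happens with probability about $|V_i|^{2}/n^{2}$. Summing the resulting geometric series gives a total error $O(\sqrt{\gamma})$, and choosing $\gamma$ small makes it at most $10^{-3}$. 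Only a lower bound is needed, so I would only need the absolute sum of the nonzero terms to be at most $10^{-3}$.

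The main obstacle is Step 2 together with its interaction with Step 3. Concretely, the $\ell_1$ bound must survive multiplication by pseudo-uniform (not uniform) matchings while the structured characters supplied by $\prod g$ shift the levels. I would first attempt an induction hypothesis that is weighted by the component structure of $H_{\bdzeta}$: each character carries a factor $|V_i|/n$ per component it touches. This mirrors the role of $\|\bdzeta\|$ in the intuition given in Section~\ref{subsec:good_rectangles}.
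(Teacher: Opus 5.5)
Your overall architecture is the paper's: Lemma~\ref{lem:relating_yes_no}, the structured/pseudorandom split of Lemma~\ref{lem:separating_structure_pseudorandom}, pseudo-uniformity of the residual matchings, and a Kapralov--Krachun $\ell^1$ induction over the $K|\calE|$ pseudorandom parts. But Steps~2 and~3 as you set them up do not close.

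\textbf{Step 2: the inductive hypothesis is wrong.} You want the level-$d$ Fourier $\ell^1$ mass of the pseudorandom product to be $(C\sqrt{\gamma})^{d}$ times an unspecified factor. As a bound on total level-$d$ mass this fails already for one factor, unless that factor grows like $n^{d/2}$. Take $\mu$ one-wise independent and supported on $\{z:z_1+z_2=0\}$, and let $A\subseteq\Map{M}{\ZNk}$ impose $\bdxi(e)_1+\bdxi(e)_2=0$ on $\gamma n$ edges, so $A$ has density $N^{-\gamma n}$. By Lemma~\ref{lem:SVD}, $\bdR^{\calV\times[n],M}_{\mu}[\phi_A]$ then has about $(N-1)\gamma n$ level-$2$ coefficients of modulus $1$. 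The correct scale is $F_C(n,d,s^*)$, which is large at low levels. Smallness appears only after multiplying by the probability $(O(tm/n^2))^{t/2}$ that a pseudo-uniform matching internally covers a fixed $t$-set (Lemma~\ref{lem:q_calculation}).

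\textbf{Step 3: the final pairing cannot be controlled.} You pair $\widehat{\prod g}$ against $\widehat{\prod h}$ at the end, but Step~2 does not give you enough to bound this. $\widehat{g_{\bdzeta}}$ can have roughly $N^{\ell}(20\gamma n/\ell)^{\ell/2}$ coefficients of modulus up to $1$ at level $\ell$. A level-wise $\ell^1$ bound on $\prod h$ says nothing about its mass on that particular set. Your heuristic that a pseudorandom edge must hit a component twice only captures a single matching closing a component in one step. It misses chains in which $h_1$ moves mass to an intermediate level that $h_2,h_3,\dots$ later cancel. Also, the $b=0$ term is $\Ex{\prod h}$, not $1$.

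\textbf{The missing idea: do not separate the two products.} The paper takes $g_{\bdzeta}$ itself as the base case and multiplies in $h_{1,M_1},h_{2,M_2},\dots$ one at a time.
\begin{itemize}
\item By acyclicity, $\widehat{g_{\bdzeta}}(b)\neq 0$ only if $\supp(b)$ meets every component of $H_{\bdzeta}$ in $0$ or at least $2$ vertices (Lemma~\ref{lem:structure_part_mean}).
\item Lemma~\ref{lem:no_size_1_by_induction} counts such sets by $(20\|\bdzeta\|/\ell)^{\ell/2}$, so $g_{\bdzeta}$ is $(n,20N^2,\gamma n\log_2 N,0)$-bounded. This is exactly the component-weighted hypothesis you were reaching for, and $\|\bdzeta\|\le\gamma n$ enters only here.
\item Lemma~\ref{lem:induction} then shows that, for most $M\sim\calD_i$, multiplying an $(n,C,s^*,\delta_0)$-bounded function by $h_{i,M}$ stays bounded. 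The cost is that $C$ is multiplied by $2^{22}\eta^{-2}N^{8k}$, $s^*$ doubles, and $\delta_0$ grows by $\eta^{-1}\delta$.
\item That lemma bounds the transfer of Fourier mass from every level $t$ to every level $\ell$. Transfer to level $0$ gives $|\Ex{fh}-\Ex{f}|\le\delta$ in expectation over $M$. Levels $\ell\le s^*$ use the level-$d$ inequality. Higher levels use $\|\phi_A\|_2\le 2^{s^*/2}$ and the $\ell^\infty$ bound $2^{s^*}$.
\item These losses do compound, contrary to your expectation, but that is harmless because there are only $K|\calE|$ players.
\end{itemize}

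\textbf{Fiberwise density.} The level-$d$ inequality must be applied fiberwise to $\phi_{A_{i,M}}$ on $\Map{M}{\ZNk}$, not to $A^{(\sfe,j)}$ itself. The density of $A_{i,M}$ is not automatically at least $2^{-\gamma n}$. The paper obtains $|A_{i,M}|\ge 2^{-s^*}N^{km_i}$ for most $M\sim\calD_i$ by Markov's inequality from goodness condition~(2).
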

The proof of Lemma~\ref{lem:discrepancy_bound} will take up Sections~\ref{sec:discrepancy} and~\ref{sec:induction}. 

\subsection{Communication Lower Bounds}\label{subsec:communication_result}

Assuming the lemmas in Section~\ref{subsec:main_lemmas}, we can now prove Theorems~\ref{thm:sqrt_lower_bound} and~\ref{thm:linear_lower_bound}.

\begin{proof}[Proof of Theorem \ref{thm:sqrt_lower_bound}]
    We apply Lemmas~\ref{lem:regularity_decomposition} and~\ref{lem:discrepancy_bound} to obtain constants $\eta$ and $\gamma$, respectively. We will show that any protocol $\Pi$ for $\DIHP(G,n,\alpha,K)$ with
    \begin{equation}\label{eq:assumption_Pi}
    |\Pi|\leq \min\left\{\eta,10^{-10}k^{-2}\gamma\right\}\cdot\sqrt{n}
    \end{equation}
    must have $\adv(\Pi)<0.1$.

    We apply Lemma~\ref{lem:regularity_decomposition} to $\Pi$, and let $\calR$ be the collection of structured rectangles obtained. By \eqref{eq:assumption_Pi} and conclusion (2) of Lemma~\ref{lem:regularity_decomposition}, we know that each structured rectangle in $\calR$ is $(\gamma n,\gamma n)$-good. It then follows from Lemma~\ref{lem:discrepancy_bound} that $\calD_{\yes}(R)\geq (1-10^{-3})\cdot \calD_{\no}(R)$ for each $(\bdzeta,R)\in \calR$. 

    By conclusion (3) of Lemma~\ref{lem:regularity_decomposition}, the function
    \[
    \Pi:\prod_{(\sfe,j)\in \calE\times [K]}\Omega^{\calU_{\sfe},\alpha n}\longrightarrow\{0,1\}
    \]
    is constant on $R$ for each $(\bdzeta,R)\in\calR$. Let $\calR_{1}$ be the collection of all structured rectangles in $\calR$ on which $\Pi$ evaluates to 1. Let $p_{R}:=\min\{\calD_{\yes}(R),\calD_{\no}(R)\}$ for each $(\bdzeta,R)\in \calR$, and define 
    \[
    p_{(1)}:=\sum_{(\bdzeta,R)\in \calR_{1}}p_{R},\qquad p_{(0)}:=\sum_{(\bdzeta,R)\in \calR\setminus \calR_{1}}p_{R},\qquad\text{and }p_{(*)}:=p_{(0)}+p_{(1)}.
    \]
    We thus have
    \[
    p_{(*)}=\sum_{(\bdzeta,R)\in\calR}p_{R}\geq \sum_{(\bdzeta,R)\in \calR}(1-10^{-3})\cdot\calD_{\no}(R)\geq (1-10^{-3})\cdot0.99>0.9,
    \]
    and hence
    \[
    p_{(1)}\leq \Pru{\bfY\sim\calD_{\yes}}{\Pi(\bfY)=1}\leq 1-p_{(0)}< p_{(*)}+0.1-p_{(0)}=p_{(1)}+0.1.
    \]
    Similarly, $\Prs{\bfY\sim\calD_{\no}}{\Pi(\bfY)=1}$ also lies in the interval $\big[p_{(1)},p_{(1)}+0.1\big)$, and therefore by Definition~\ref{def:advantage} we have $\adv(\Pi)<0.1$.
\end{proof}

\begin{proof}[Proof of Theorem~\ref{thm:linear_lower_bound}]
We apply Lemmas~\ref{lem:linear_decomposition} and~\ref{lem:discrepancy_bound} to obtain constants $\eta_{1},\eta_{2}$ and $\gamma$, respectively. We show that any $r$-round protocol $\Pi$ for $\DIHP(G,n,\alpha,K)$ with $|\Pi|\leq \min\{\eta_{1}^{r},\eta_{2}^{r}\gamma\}\cdot n$ must have $\adv(\Pi)<0.1$. 

We apply Lemma~\ref{lem:linear_decomposition} to $\Pi$, and let $\calR$ be the collection of structured rectangles obtained. By the upper bound on $|\Pi|$ and conclusion (2) of Lemma~\ref{lem:linear_decomposition}, we know that each structured rectangle in $\calR$ is $(\gamma n,\gamma n)$-good. The rest of the proof is identical to the proof of Theorem~\ref{thm:sqrt_lower_bound} above.
\end{proof}

%% file: decomposition.tex
\section{The Decomposition Lemma}\label{sec:decomposition}
In this section, we prove the second decomposition lemma, Lemma \ref{lem:linear_decomposition}. The proof is similar to that of the first decomposition lemma in \cite[Appendix A]{FMW25b}, consisting of two steps:
\begin{enumerate}
\item We first transform an arbitrary protocol into a \emph{global protocol} --- a protocol where roughly speaking, every message broadcasted by a player corresponds to structured rectangles. In the proof of Lemma~\ref{lem:linear_decomposition}, this first step is  essentially the same as its counterpart in \cite[Sections~A.1 and~A.2]{FMW25b}, which we will elaborate on in Section~\ref{subsec:transform_protocol}. 

\item The second step (carried out in Sections~\ref{subsec:component_growing} and~\ref{subsec:most_are_good}) is to show that in the division of the joint input space resulting from a global protocol, most structured rectangles are good (in the sense of Definition~\ref{def:good_rec_2}). This step deviates from \cite[Appendix A]{FMW25b} because here we need to keep track of the \emph{weight} of the restriction sequences of structured rectangles (as defined in Definition~\ref{def:weight_of_restriction}), due to the condition (3) in Definition~\ref{def:good_rec_2}. Our argument here is inspired by the analysis of the ``component growing protocol'' in \cite[Section 5.2]{KK19}. 
\end{enumerate} 

Throughout this section, we fix a distribution-labeled $k$-graph $G=(\calV,\calE,N,(\mu_{\sfe})_{\sfe\in\calE})$ along with an integer $K > 0$ and a parameter $\alpha > 0$. 

\subsection{Transforming Protocols into Global Protocols}\label{subsec:transform_protocol}

Before formalizing the notion of ``global protocol'' in Definition~\ref{def:global_protocol}, we introduce the following potential function to quantify the amount of information during the communication process.

\begin{definition}\label{def:restriction_potential}
    For restrictions $\boldsymbol{\zeta}=\left(\bfz^{(\sfe,j)}\right)_{(\sfe,j)\in \calE\times [K]}$ and a rectangle $R= \prod_{(\sfe, j) \in \calE\times [K]}A^{(\sfe, j)}$ such that $A^{(\sfe, j)}\subseteq \Omega^{\calU_\sfe,m}_{\bfz^{(\sfe,j)}}$,  we define the potential of $(\boldsymbol{\zeta},R)$ as: 
    \begin{align*}
    \phi(\boldsymbol{\zeta},R):= \sum_{(\sfe,j)\in \calE\times [K]} \left|\supp(\bfz^{(\sfe,j)})\right|+ \log_{2} \left(\frac{\left|\Omega^{\calU,m}_{\bfz^{(i)}}\right|}{|A^{(i)}|}
    \right). 
    \end{align*}
\end{definition}

\begin{definition}\label{def:global_protocol}
    A communication protocol $\Pi$ for $\textsf{DIHP}(G,n,\alpha,K)$ is called an $(r,c)$-round global communication protocol if it specifies the following procedure of communications: 
    \begin{itemize}
        \item the $K|\calE|$ players take turns to send messages according to $\Pi$; 
        \item there are at most $r$ rounds of communication, and there is only one player sending a message in a single round;
        \item the length of message in each round of communication is not bounded; instead, from the perspective of rectangles, after each round of communications, a $\boldsymbol{\zeta}$-global rectangle $R$ is further partitioned into a disjoint union of rectangles $R_{(1)},\dots,R_{(\ell)}$ such that: (1) $R_{(i)}$ is $\boldsymbol{\zeta}_{(i)}$-global; (2) $\boldsymbol{\zeta}_{(i)}$ subsumes $\boldsymbol{\zeta}$; (3) the following inequality holds: 
        \begin{equation}\label{eq:global_protocol_potential_increase}
            \sum_{i=1}^\ell\frac{|R_{(i)}|}{|R|}\phi(\boldsymbol{\zeta}_{(i)},R_{(i)})\leq \phi(\boldsymbol{\zeta},R) + 3 c. 
        \end{equation}
    \end{itemize}
\end{definition}
Note that in global protocols, the ``communication cost'' is measured by the increase of (average) potential instead of total number of bits broadcasted.


The following lemma is slightly different from \cite[Lemma A.4]{FMW25b}, but its proof is the same as the one provided in \cite[Section A.2]{FMW25b} and is thus omitted here.

\begin{lemma}\label{lem:arbitrary_to_global}
    Given an $r$-round communication protocol $\Pi$ for $\mathsf{DIHP}(G,n,\alpha,K)$, we can construct an $(r,|\Pi|)$-round global protocol $\Pi^{\reff}$ for $\mathsf{DIHP}(G,n,\alpha,K)$ such that for any leaf rectangle $R$ of $\Pi^{\reff}$, the output of $\Pi$ is constant on $R$. 
\end{lemma}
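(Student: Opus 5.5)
The construction follows the density-increment (``structure-vs.-randomness'') refinement of \cite[Section A.2]{FMW25b}. Rounds of $\Pi$ are simulated one at a time while we maintain a partition of the input space into structured rectangles $(\bdzeta,R)$, and we keep the amortized potential increase per round under control.

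Suppose that at the start of some round of $\Pi$ we hold a $\bdzeta$-global rectangle $R=\prod A^{(\sfe,j)}$, and that player $(\sfe,j)$ speaks. The message partitions $A^{(\sfe,j)}$ into at most $2^{c_t}$ pieces $A_1,\dots,A_s$, where $c_t$ is the message length in this round. Only the $(\sfe,j)$ coordinate changes, so the problem reduces to a single set $A\subseteq\Omega^{\calU,m}_{\bfz}$ that is $\bfz$-global.

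\textbf{Step 1: make each piece global again by repeated restriction.} Apply the following procedure to each piece $A_i$. While the current set $B\subseteq\Omega^{\calU,m}_{\bfz'}$ is not $\bfz'$-global, pick a maximal violating restriction $\bfz''$ subsuming $\bfz'$; this is a restriction whose density increment exceeds $2^{|\supp(\bfz'')|-|\supp(\bfz')|}$. Split off $B\cap\Omega^{\calU,m}_{\bfz''}$ as a new part with restriction $\bfz''$, and continue with the remainder $B\setminus\Omega_{\bfz''}$.

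We must check that the potential $\phi$ accounts for each split.
\begin{itemize}
\item For the split-off part, $|\supp|$ increases by $d=|\supp(\bfz'')|-|\supp(\bfz')|$, while $\log_2(|\Omega_{\bfz''}|/|B\cap\Omega_{\bfz''}|)$ decreases by more than $d$ relative to $B$. So that part's potential does not increase.
\item Choosing a maximal violator ensures that the split-off part is itself $\bfz''$-global. This is the standard argument: any further violation of the part would compose with $\bfz''$ into a larger violator of $B$.
\end{itemize}
The leftover part only loses density. Following \cite{FMW25b}, we stop once the leftover has negligible relative mass, or we charge it via the usual averaging.

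\textbf{Step 2: potential bookkeeping.} Splitting $A$ into the $s$ message pieces raises the average of $\log_2(|\Omega_{\bfz}|/|A_i|)$ by at most the entropy of the split. That entropy is at most $c_t$ plus a constant, since $\sum_i(|A_i|/|A|)\log(|A|/|A_i|)\le\log s$. Step 1 does not increase the average potential, except for the bounded contribution of the leftover pieces, which \cite{FMW25b} bound by another $O(1)+c_t$.

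Together this gives the inequality \eqref{eq:global_protocol_potential_increase} with $3c$. One can take $c=c_t$ per round, or take $c=|\Pi|$ as a uniform bound; the lemma as stated only needs $c=|\Pi|$ per round, which is generous. The number of rounds is unchanged, namely $r$. Each leaf rectangle of $\Pi^{\reff}$ is contained in a leaf of $\Pi$, so the output of $\Pi$ is constant on it.

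\textbf{Main obstacle.} The delicate step is Step 1: showing that the iterative extraction terminates with every piece global, while the leftover's potential contribution stays bounded. Since the argument is identical to \cite[Section A.2]{FMW25b}, I would import it verbatim, tracking only the round count.
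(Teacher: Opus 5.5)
Your proposal is correct and follows the same route as the paper, which omits the proof and defers to the density-restoring refinement of \cite[Section A.2]{FMW25b}: in each round you split the speaker's set by the message, make every piece global again by repeatedly peeling off a maximal violating restriction, and observe that the resulting leaves refine those of $\Pi$ within the same $r$ rounds. The one thing to tighten is the bookkeeping for the remainder. The extraction must run to completion, because stopping once the leftover is negligible would leave a part that is non-global or uncovered, while the leaves must be global and partition the input space. The successive remainders then contribute at most $\int_0^1\log_2(1/x)\,\mathrm{d}x=\log_2 e$ in total, not $O(1)+c_t$, so a round with $c_t\geq 1$ bits raises the average potential by at most $c_t+\log_2 e\leq 3c_t\leq 3|\Pi|$.
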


\subsection{A Global Protocol Grows Components Slowly}\label{subsec:component_growing}

Kapralov and Krachun~\cite{KK19} observed that, informally, a single-pass fully-structured\footnote{That is, a one-way global protocol in which every communicated message consists solely of structured information, with no pseudorandom noise.} protocol for $\DIHP(G,n,\alpha,K)$ can be viewed as follows: each player reveals a subset of edges from their labeled matching, with the collective goal of forming a cycle using the revealed edges.

To show that such fully structured protocols are unlikely to succeed, Kapralov and Krachun demonstrated in \cite[Section~5.2]{KK19} that the connected components in the (hyper)graph induced by the revealed edges cannot grow rapidly. In this subsection, we develop an analogous argument for \emph{general global protocols}, rather than single-pass fully-structured ones.

\begin{lemma}\label{lem:bounded_growth}
Suppose that in one round of communication of an $(r,c)$-round global protocol for $\DIHP(G,n,\alpha,K)$, a structured rectangle $(\bdzeta, R)$ is partitioned (according to the speaking player's message) into the disjoint union of structured rectangles
\[
(\bdzeta_{(1)},R_{(1)}),(\bdzeta_{(2)},R_{(2)})\dots,(\bdzeta_{(\ell)},R_{(\ell)}).
\]
We assume that $\|\bdzeta\|\leq 6^{-k-1}n$. Then we have
\begin{equation}\label{eq:bounded_growth}
\sum_{i=1}^{\ell}\frac{|R_{(i)}|}{|R|}\cdot\|\bdzeta_{(i)}\|\leq k^{2}\big(2\|\bdzeta\|+\phi(\bdzeta,R)+3c\big),
\end{equation}
and if (in addition) $\bdzeta$ is not cyclic (as per Definition~\ref{def:cyclic_res_seq}), we also have
\begin{equation}\label{eq:bounded_cyclic_prob}
\sum_{i=1}^{\ell}\frac{|R_{(i)}|}{|R|}\cdot\ind{\bdzeta_{(i)}\text{ is cyclic}}\leq \frac{6^{k+1}\alpha\|\bdzeta\|}{n}.
\end{equation}
\end{lemma}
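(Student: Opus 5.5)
The plan is to reduce both inequalities to statements about a single speaking player $(\sfe,j)$. In one round only that player's set $A:=A^{(\sfe,j)}$ is refined. It is split into pieces $A_{(i)}$, each $\bfz_{(i)}$-global, where $\bfz_{(i)}$ subsumes $\bfz:=\bfz^{(\sfe,j)}$; all other coordinates of $\bdzeta$ and $R$ are unchanged. So $|R_{(i)}|/|R|=|A_{(i)}|/|A|$, and we can view $i$ as the random piece containing a uniform $\bfy\in A$. Note that $\bfz_{(i)}$ adds to $\bdzeta$ only edges of $\supp(\bfy)\setminus\supp(\bfz)$.

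For \eqref{eq:bounded_cyclic_prob}, the approach is as follows. If $\bdzeta_{(i)}$ is cyclic while $\bdzeta$ is not, then $\supp(\bfy)$ must contain a new edge that either repeats an edge of another player or, together with the other new edges, closes a cycle in $H_{\bdzeta}$. Since new edges of one player form a matching, any cycle they create must use at least one new edge meeting some component $V_t$ of $H_{\bdzeta}$ in two vertices. Alternatively, a minimal cycle passes through a chain of new edges linking components, and the first of these can be charged to a pair of vertices of one current component. To handle chains cleanly, I would first try charging each minimal cycle to an edge set $S\supseteq\supp(\bfz)$ of new edges forming a path between two vertices of the same component. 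Then I would apply \Cref{cor:edge_appears_in_globalset} to the $\bfz$-global set $A$: each extra edge costs a factor $6^k\alpha n/n^k$. Summing over the choices, a path of $s$ new edges starting and ending in $V_t$ has at most $|V_t|^2 n^{(k-1)s-1}k^{O(s)}$ choices. This gives a total of roughly $\sum_t|V_t|^2\cdot 6^k\alpha/n\cdot\sum_s(6^kk^2\alpha)^{s-1}$, which is at most $6^{k+1}\alpha\|\bdzeta\|/n$ because $\alpha$ is small. The hypothesis $\|\bdzeta\|\le 6^{-k-1}n$ guarantees $|\supp(\bfz)|\le|\calU|/2$, as the corollary requires. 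The coincidence case (a new edge equal to another player's edge) is counted similarly, since such an edge lies inside one component.

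For \eqref{eq:bounded_growth}, I will write $\|\bdzeta_{(i)}\|$ as the squared component sizes after adding $F:=\supp(\bfz_{(i)})\setminus\supp(\bfz)$. Merging components via a hyperedge $e$ replaces $\sum|V|^2$ by $(\sum|V|+\cdots)^2$. I would bound the new weight by $k^2\bigl(2\|\bdzeta\|+|F|\bigr)$ plus cross terms. Cross terms arise when new edges chain through old components, and their expectation over $i$ should again be bounded by pseudo-uniformity: the expected number of new edges touching $V_t$ is $O(\alpha|V_t|/n^{0})\cdot$ small, contributing $O(\|\bdzeta\|)$. Next, $|F|$ is controlled on average by the potential. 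By \Cref{def:restriction_potential} each piece satisfies $\phi(\bdzeta_{(i)},R_{(i)})\ge|\supp(\bfz_{(i)})|$ summed over players, so $\sum_i\frac{|R_{(i)}|}{|R|}|F_i|\le\phi(\bdzeta,R)+3c$ by \eqref{eq:global_protocol_potential_increase}. Convexity $(a+b)^2\le 2a^2+2b^2$-type bounds, with at most $k$ vertices per edge, produce the factor $k^2$.

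The main obstacle is the growth bound. Revealed new edges are chosen adaptively by the message, not randomly, so one cannot use pseudo-uniformity for them. Only the unrevealed remainder is pseudo-random, and the revealed part must be paid for by the potential. I would therefore bound the weight of new edges purely combinatorially: a new matching edge joins at most $k$ components, and the squared size of each new component is at most $k\sum(\text{old squares})+k^2(\text{edges})^2$. Here the quadratic-in-edges term is the danger. To address it, I would first try to exploit that a matching of new edges can create large components only through old components, so that the squared-size increase is bounded by $k^2(2\sum_{\text{touched}}|V_t|^2+|F|)$ after a Cauchy–Schwarz argument grouped per old component.
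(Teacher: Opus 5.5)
Your handling of the potential term and the overall shape of your argument for \eqref{eq:bounded_cyclic_prob} match the paper, but there is a genuine gap in \eqref{eq:bounded_growth}. The route you finally commit to is false: a deterministic, per-piece bound like $\|\bdzeta_{(i)}\|\le k^{2}(2\|\bdzeta\|+|F_i|)$, obtained by Cauchy--Schwarz grouped per old component, does not hold.

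Here is a counterexample. Take $k=2$. Let the only old components be single edges $\{u_j,w_j\}$, $j\le t$, held by another player $(\sfe^*,j')$ on the same universe. Let the speaker's piece $A_{(i)}=\Omega^{\calU_{\sfe^*},\alpha n}_{\bfz_{(i)}}$ reveal the new edges $(u_{j+1},w_j)$, $j<t$; this piece is trivially $\bfz_{(i)}$-global, e.g.\ when $A=\Omega^{\calU_{\sfe^*},\alpha n}_{\bfz}$. These $t-1$ matching edges fuse everything into one path on $2t$ vertices, so $\|\bdzeta_{(i)}\|=4t^{2}$, while $k^{2}(2\|\bdzeta\|+|F_i|)=36t-4$. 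This is consistent with $\|\bdzeta\|\le 6^{-k-1}n$ and even keeps $\bdzeta_{(i)}$ acyclic.

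Merging old components costs $(\sum_q|V_q|)^{2}$ rather than $\sum_q|V_q|^{2}$, and only averaging over pieces can tame that. Your earlier remark that cross terms are controlled by pseudo-uniformity was closer. However, the relevant quantity is not the number of new edges touching $V_t$, which is of order $\alpha|V_t|$ and not small. It is the probability that two \emph{distinct} old components get merged.

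The missing idea is precisely the tool you dismissed. Adaptivity does not matter, because one only needs the parent's distribution. The paper writes $\|\bdzeta_{(i)}\|\le k^{2}|F_i|+k^{2}\sum_{q,s}X_{q,s,i}|V_q||V_s|$, with $X_{q,s,i}$ the indicator that $V_q,V_s$ share a component of $H_{\bdzeta_{(i)}}$. If $X_{q,s,i}=1$ for $q\ne s$, then \emph{every} $\bfy\in A_{(i)}$ contains, in $\supp(\bfy)\setminus\supp(\bfz)$, a chain of new edges from $V_q$ to $V_s$ through intermediate old components. So the $|R_{(i)}|/|R|$-mass of such pieces is at most the probability of this event for uniform $\bfy\in A$. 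The $\bfz$-globalness of $A$ and Corollary~\ref{cor:edge_appears_in_globalset} bound that probability by $6^{k+1}\alpha|V_q||V_s|/n$; this is the same reasoning you use for cyclicity. Diagonal terms give $k^{2}\|\bdzeta\|$. Off-diagonal terms give $6^{k+1}k^{2}\alpha\|\bdzeta\|^{2}/n\le k^{2}\|\bdzeta\|$ by the hypothesis $\|\bdzeta\|\le 6^{-k-1}n$. Finally, $|F_i|$ is paid by the potential as you say.

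Two smaller fixes are needed in your cyclicity part. First, the claim that some new edge must meet one component in two vertices is false: two new edges each joining $V_1$ to $V_2$ close a cycle. Second, your count $|V_t|^{2}n^{(k-1)s-1}$ forces you to assume $\alpha$ small, which the lemma does not. The right count is $|V_q|^{2}\|\bdzeta\|^{s-1}n^{(k-2)s}$, and the series then has ratio $6^{k}\alpha\|\bdzeta\|/n\le\alpha/6$ by the weight hypothesis.
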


\begin{proof} We divide the proof into the following three steps.

\paragraph{Step 1: understanding the weight increment.} We write
\begin{alignat}{2}
R&=\prod_{(\sfe,j)\in \calE\times [K]}A^{(\sfe,j)},\qquad&R_{(i)}&=\prod_{(\sfe,j)\in\calE\times [K]}A^{(\sfe,j)}_{(i)}\text{ for each }i\in [\ell],\label{eq:rec_partition_notation_1}\\
\bdzeta&=\left(\bfz^{(\sfe,j)}\right)_{(\sfe,j)\in \calE\times[K]},\qquad&\bdzeta_{(i)}&=\left(\bfz^{(\sfe,j)}_{(i)}\right)_{(\sfe,j)\in \calE\times[K]}\text{ for each }i\in [\ell].\label{eq:rec_partition_notation_2}
\end{alignat}
Suppose $(\sfe^*,j^*)\in \calE\times [K]$ is the player who speaks at the current communication round. Then we have $A^{(\sfe,j)}=A^{(\sfe,j)}_{(i)}$ and $\bfz^{(\sfe,j)}=\bfz^{(\sfe,j)}_{(i)}$ for all $(\sfe,j)\neq (\sfe^*,j^*)$ and all $i\in [K]$.

Recalling the notation in Definition~\ref{def:weight_of_restriction}, let $V_{1},\dots,V_{t}\subseteq \calV\times [n]$ denote the nontrivial connected components of the hypergraph $H_{\bdzeta}$. For each $i\in [\ell]$, the hypergraph $H_{\bdzeta_{(i)}}$ is obtained from $H_{\bdzeta}$ by adding the edges in the set
\begin{equation}\label{eq:def_of_E_i}
E_{(i)}:=\supp\left(\bfz^{(\sfe^*,j^*)}_{(i)}\right)\setminus \supp\left(\bfz^{(\sfe^*,j^*)}\right).
\end{equation}
Note that $E_{(i)}$ is a matching in the family $\calM_{\calU_{\sfe^*},\leq \alpha n}$. Therefore, if $V'\subseteq \calV\times [n]$ is a nontrivial connected component of $H_{\bdzeta_{(i)}}$, then either $V'$ spans exactly one edge in $E_{(i)}$, in which case $|V'|=k$, or $V'$ contains $V_{q}$ for some $q\in [t]$, in which case we have\footnote{To see this, note that there are at most $\sum_{q\in [t],\, V_{q}\subseteq V'}|V_{q}|$ edges of $E_{(i)}$ lying within $V'$, each accounting for at most $(k-1)$ vertices in $V'\setminus(\bigcup_{q\in [t],\, V_{q}\subseteq V'}V_{q})$.} 
\[
|V'|\leq k\cdot\sum_{q\in [t],\;V_{q}\subseteq V'}|V_{q}|.
\]

For any $q,s\in [t]$ and any $i\in [\ell]$, we define
\begin{equation}\label{eq:def_of_X_qsi}
X_{q,s,i}:=\begin{cases}
1,&\text{if }V_{q}\text{ and }V_{s}\text{ belong to the same connected component in }H_{\bdzeta_{(i)}},\\
0,&\text{otherwise}.
\end{cases}
\end{equation}
From the discussion in the last paragraph, it is easy to deduce that for each $i\in [\ell]$, we have
\begin{equation}\label{eq:bdzeta_i_expression}
\|\bdzeta_{(i)}\|\leq k^{2}\left|E_{(i)}\right|+k^{2}\sum_{q,s\in [t]}X_{q,s,i}|V_{q}||V_{s}|.
\end{equation}

\paragraph{Step 2: using the globalness condition.} For each positive integer $d$ and each pair of indices $q,s\in[t]$, let $S(q,s,d)$ be the collection of vertex sequences $(v_{1},v_{2},\dots,v_{2d})\in(\tcup \calU_{\sfe^*})^{2d}$ that satisfy the following conditions:
\begin{enumerate}[label=(\arabic*)]
\item We have $v_{1}\in V_{q}$ and $v_{2d}\in V_{s}$.
\item For any $j\in [d]$, the vertices $v_{2j-1}$ and $v_{2j}$ are distinct.
\item For any $j\in [d-1]$, there exists $i\in [t]$ such that $v_{2j},v_{2j+1}\in V_{i}$.
\end{enumerate}
Let $S(q,s)=\bigcup_{d=1}^{+\infty}S(q,s,d)$. Note that the cardinality of the sets $S(q,s,d)$ can be bounded by
\begin{equation}\label{eq:number_of_sequences}
|S(q,s,d)|\leq |V_{q}||V_{s}|\cdot\left(\sum_{i=1}^{t}|V_{i}|^{2}\right)^{d-1}=|V_{q}||V_{s}|\cdot\|\bdzeta\|^{d-1}.
\end{equation}

A matching $M\in \calM_{\calU_{\sfe^*},d}$ is called an \emph{exact cover} of a sequence $(v_{1},\dots,v_{2d})\in S(q,s,d)$ if the edges of $M$ can be listed as $e_{1},\dots,e_{d}$ such that $e_{j}$ contains $v_{2j-1}$ and $v_{2j}$ for all $j\in [d]$. The number of exact covers of a given sequence in $S(q,s,d)$ is clearly at most $n^{(k-2)d}$. By definitions \eqref{eq:def_of_E_i} and \eqref{eq:def_of_X_qsi}, for any distinct $q,s\in [t]$ and any $i\in [\ell]$, if $X_{q,s,i}=1$ then $E_{(i)}$ contains an exact cover of a sequence in $S(q,s)$. Furthermore, if $E_{(i)}$ contains an exact cover of a sequence in $S(q,s)$, then for all $\bfy\in A^{(\sfe^*,j^*)}_{(i)}$, the matching $\supp(\bfy)\setminus \supp\left(\bfz^{(\sfe^*,j^*)}\right)$ contains an exact cover of a sequence in $S(q,s)$. Therefore, for any distinct $q,s\in [t]$ we have
\begin{align*}
&\quad\sum_{i=1}^{\ell}\frac{|R_{(i)}|}{|R|}\cdot X_{q,s,i}=\sum_{i=1}^{\ell}\frac{\left|A^{(\sfe^*,j^*)}_{(i)}\right|}{\left|A^{(\sfe^*,j^*)}\right|}\cdot X_{q,s,i}\\
&\leq \Pru{\bfy\in A^{(\sfe^*,j^*)}}{\supp(\bfy)\setminus \supp\left(\bfz^{(\sfe^*,j^*)}\right)\text{ contains an exact cover of a sequence in }S(q,s)}\\
&\leq \sum_{d=1}^{+\infty}|S(q,s,d)|\cdot n^{(k-2)d}\cdot\left(\frac{6^{k}\alpha n}{n^{k}}\right)^{d}\tag{using Corollary~\ref{cor:edge_appears_in_globalset} and $|\supp(\bfz^{(\sfe^*,j^*)})|\leq \|\bdzeta\|\leq n/2$}\\
&\leq \frac{6^{k}\alpha}{n}\cdot|V_{q}||V_{s}|\cdot\sum_{d=1}^{+\infty}\left(\frac{6^{k}\alpha\|\bdzeta\|}{n}\right)^{d-1}\leq \frac{6^{k+1}\alpha}{n}\cdot|V_{q}||V_{s}|.\tag{using \eqref{eq:number_of_sequences} and $\|\bdzeta\|\leq 6^{-k-1}n$}
\end{align*}
Plugging this into \eqref{eq:bdzeta_i_expression} yields
\begin{align*}
\sum_{i=1}^{\ell}\frac{|R_{(i)}|}{|R|}\cdot\|\bdzeta_{(i)}\|&\leq k^{2}\sum_{\substack{q,s\in[t]\\ q=s}}|V_{q}||V_{s}|+\frac{6^{k+1}k^{2}\alpha}{n}\sum_{\substack{q,s\in [t]\\ q\neq s}}|V_{q}|^{2}|V_{s}|^{2}+k^{2}\sum_{i=1}^{\ell}\frac{|R_{(i)}|}{|R|}\cdot\left|E_{(i)}\right|\\
&\leq k^{2}\|\bdzeta\|+\frac{6^{k+1}k^{2}\alpha}{n}\|\bdzeta\|^{2}+k^{2}\sum_{i=1}^{\ell}\frac{|R_{(i)}|}{|R|}\cdot\left|\supp\left(\bfz^{(\sfe^*,j^*)}_{(i)}\right)\right|\\
&\leq 2k^{2}\|\bdzeta\|+k^{2}\sum_{i=1}^{\ell}\frac{|R_{(i)}|}{|R|}\cdot\phi(\bdzeta_{(i)},R_{(i)})\leq k^{2}\big(2\|\bdzeta\|+\phi(\bdzeta,R)+3c\big),
\end{align*}
where we used \eqref{eq:global_protocol_potential_increase} in the last transition. This proves the inequality \eqref{eq:bounded_growth}.

\paragraph{Step 3: understanding cyclicity.} In the rest of the proof, we assume $\bdzeta$ is not cyclic as in part (2) of the Lemma. For any fixed $i\in [\ell]$, we claim that assuming $E_{(i)}$ does not contain an exact cover of any sequence in $\bigcup_{q\in [t]}S(q,q)$, the restriction sequence $\bdzeta_{(i)}$ cannot be cyclic. Since $\bdzeta$ is not cyclic, it suffices to prove that $E_{(i)}$ does not contain any edge in the hypergraph $H_{\bdzeta}$, and $H_{\bdzeta_{(i)}}$ does not have cycles. The former statement follows easily from the assumption, since any edge in the hypergraph $H_{\bdzeta}$ is an exact cover of a sequence in $\bigcup_{q\in[t]}S(q,q,1)$. We next focus on the latter statement.

Let $E=\bigcup_{(\sfe,j)\in \calE\times [K]}\supp(\bfz^{(\sfe,j)})$ be the edge set of the hypergraph $H_{\bdzeta}$. We run a breadth-first search (BFS) on the hypergraph $H_{\bdzeta_{(i)}}$ (whose edge set is $E\cup E_{(i)}$) and rank all edges in $E\cup E_{(i)}$ by the time they are discovered in the BFS. This yields a total order on $E\cup E_{(i)}$ such that each hyperedge $e\in E\cup E_{(i)}$ is incident to at most one vertex that is covered by some hyperedge preceding $e$ in the order. Indeed, if some hyperedge $e$ violates this condition, then by the time the BFS discovers $e$, it has also found a cycle of distinct hyperedges $e=e_{0},e_{1},e_{2},\dots,e_{d-1}\in E\cup E_{(i)}$ and distinct vertices $v_{0},v_{1},v_{2},\dots,v_{d-1}\in \calV\times [n]$ such that $e_{i}$ is incident to both $v_{i}$ and $v_{i+1\pmod d}$, for each $i\in \{0,1,\dots,d-1\}$. Then the collection of edges in $\{e_{0},e_{1}\dots,e_{d-1}\}$ that are from $E_{(i)}$ is an exact cover of a sequence in $\bigcup_{q\in [t]}S(q,q)$, contradicting the assumption. Therefore, we have a total order on the edge set of $H_{\bdzeta_{(i)}}$ such that each edge is incident to at most one vertex that is covered by edges preceding it. It is then easy to see that $H_{\bdzeta_{(i)}}$ is cycle-free (see \Cref{subsec:general_notations} for the definition of cycle-freeness in hypergraphs).

We can now calculate
\begin{align*}
&\quad\sum_{i=1}^{\ell}\frac{|R_{(i)}|}{|R|}\cdot \ind{\bdzeta_{(i)}\text{ is cyclic}}=\sum_{i=1}^{\ell}\frac{\left|A^{(\sfe^*,j^*)}_{(i)}\right|}{\left|A^{(\sfe^*,j^*)}\right|}\cdot \ind{\bdzeta_{(i)}\text{ is cyclic}}\\
&\leq \sum_{i=1}^{\ell}\frac{\left|A^{(\sfe^*,j^*)}_{(i)}\right|}{\left|A^{(\sfe^*,j^*)}\right|}\cdot \ind{E_{(i)}\text{ contains an exact over of a sequence in }\bigcup\nolimits_{q\in [t]}S(q,q)}\\
&\leq \Pru{\bfy\in A^{(\sfe^*,j^*)}}{\supp(\bfy)\setminus \supp\left(\bfz^{(\sfe^*,j^*)}\right)\text{ contains an exact cover of a sequence in }\bigcup\nolimits_{q\in [t]}S(q,q)}\\
&\leq \sum_{q=1}^{t}\sum_{d=1}^{+\infty}|S(q,q,d)|\cdot n^{(k-2)d}\cdot\left(\frac{6^{k}\alpha n}{n^{k}}\right)^{d}\tag{using Corollary~\ref{cor:edge_appears_in_globalset} and $|\supp(\bfz^{(\sfe^*,j^*)})|\leq \|\bdzeta\|\leq n/2$}\\
&\leq \sum_{d=1}^{+\infty}\left(\frac{6^{k}\alpha\|\bdzeta\|}{n}\right)^{d}\leq \frac{6^{k+1}\alpha \|\bdzeta\|}{n}.\tag{using \eqref{eq:number_of_sequences} and $\|\bdzeta\|\leq 6^{-k-1}n$}
\end{align*}
This proves the inequality \eqref{eq:bounded_cyclic_prob} (under the assumption that $\bdzeta$ is not cyclic).
\end{proof}

\subsection{Most Rectangles in a Global Protocol are ``Good''}\label{subsec:most_are_good}

As mentioned in the introductory text to Section~\ref{sec:decomposition}, our next goal is to show that most structured rectangles produced by a global protocol are good. This is achieved in Lemma~\ref{lem:analysis_of_global_protocols} below, for which the following two definitions provide some convenient notations.

\begin{definition}
Let $\Pi$ be an $r$-round global protocol.  
For each $0 \leq d \leq r$, we let $\mathcal{R}^{d}(\Pi)$ denote the collection of structured rectangles obtained after $d$ rounds of communication in $\Pi$.  
In particular, we write $\calR^\leaf(\Pi)=\calR^r(\Pi)$ for the set of structured rectangles at the leaves of the protocol tree.
\end{definition}

\begin{definition}\label{def:deleting_bad_pairs}
Let $\Pi$ be a global protocol for $\DIHP(G,n,\alpha,K)$ and let $W$ be a positive real number. We define
$\calR^{\mathrm{bad}}(\Pi,W)$ as the following set of structured rectangles:
\begin{align*}
    \calR^{\mathrm{bad}}(\Pi,W) :=\left\{(\boldsymbol{\zeta},R)\in\calR^\leaf(\Pi): \text{$(\boldsymbol{\zeta},R)$ is not $(W,W)$-good as per \Cref{def:good_rec_2}}\right\}.
\end{align*}
\end{definition}

\begin{lemma}\label{lem:analysis_of_global_protocols}
    For any fixed distribution-labeled $k$-graph $G$, integer $K>0$ and parameter $\alpha \in (0,1)$, there exist constants $\eta_{1},\eta_{2}>0$ such that for any $(r,c)$-round global protocol $\Pi$ for $\mathsf{DIHP}(G,n,\alpha,K)$ such that $c\leq \eta_{1}^{-r}n$, we have
    \begin{equation}\label{eq:most_are_good_conclusion}
        \sum_{(\bdzeta,R)\in \calR^{\bad}(\Pi,\eta_{2}^{-r}c)}\calD_{\no}(R) \leq 0.01. 
    \end{equation}
\end{lemma}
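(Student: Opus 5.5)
Throughout I read the hypothesis as $c\le \eta_{1}^{r}n$, which is the form needed in Lemma~\ref{lem:linear_decomposition}. The plan is to follow the protocol tree round by round under $\calD_{\no}$. I will use that $\calD_{\no}$ is uniform on the product space, so when a rectangle $R$ is split into $R_{(1)},\dots,R_{(\ell)}$ the conditional probabilities are exactly $|R_{(i)}|/|R|$. These are the weights appearing in \eqref{eq:global_protocol_potential_increase}, \eqref{eq:bounded_growth} and \eqref{eq:bounded_cyclic_prob}.

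Let $(\bdzeta_d,R_d)$ be the random structured rectangle in $\calR^{d}(\Pi)$ containing $\bfY\sim\calD_{\no}$. The following quantities will be tracked:
\[
\Phi_d:=\Ex{\phi(\bdzeta_d,R_d)},\qquad w_d:=\Ex{\|\bdzeta_d\|}.
\]
Initially $\bdzeta_0$ is trivial and $R_0$ is the whole space, so $\Phi_0=w_0=0$. Averaging \eqref{eq:global_protocol_potential_increase} gives $\Phi_{d+1}\le\Phi_d+3c$, hence $\Phi_d\le 3cd$.

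Both $|\supp(\bfz)|$ and $\log_2(|\Omega_{\bfz}|/|A|)$ are nonnegative, because $A\subseteq\Omega_{\bfz}$. So every term of $\phi$ is bounded by $\phi$ itself. By Markov, the mass of leaf rectangles violating condition (2) of Definition~\ref{def:good_rec_2} with $W_1=W$ is at most $3cr/W$.

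For the weight I want to apply \eqref{eq:bounded_growth} in expectation. This gives $w_{d+1}\le k^2(2w_d+3cd+3c)$, hence $w_d\le (3k^2)^{d}\cdot 3c(d+1)$. The catch is that Lemma~\ref{lem:bounded_growth} requires $\|\bdzeta\|\le 6^{-k-1}n$ (and, for \eqref{eq:bounded_cyclic_prob}, acyclicity) at every step. I expect this to be the main technical point.

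To handle it I will run a stopped process. A branch is frozen at the first round $d$ at which $\|\bdzeta_d\|>W$ or $\bdzeta_d$ becomes cyclic, where $W:=\eta_2^{-r}c$. The stopped versions of $\phi$ and $\|\cdot\|$ at frozen nodes keep their last value. Choosing $\eta_1$ small relative to $\eta_2$ ensures $W\le 6^{-k-1}n$ whenever $c\le\eta_1^{r}n$, so Lemma~\ref{lem:bounded_growth} applies at every non-frozen node. The recursion above then holds for the stopped expectations, since frozen nodes contribute no growth.

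The three failure modes are then bounded separately.

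\textbf{Weight exceeding $W$.} Since stopped weights are monotone in the relevant sense, a leaf with weight $>W$ means the stopped weight at time $r$ exceeds $W$. By Markov this has probability at most $w_r/W\le (3k^2)^r\,3c(r+1)\,\eta_2^{r}/c$. This is at most $0.003$ once $\eta_2\le (100k^2)^{-1}$ (and $r\ge1$).

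\textbf{Becoming cyclic.} A branch that becomes cyclic must do so at some first round $d$ from a non-frozen, acyclic node. Summing \eqref{eq:bounded_cyclic_prob} over rounds gives probability at most $\sum_{d<r}6^{k+1}\alpha w_d/n\le 6^{k+1}\alpha\, r(3k^2)^{r}\,3cr/n$. This is at most $0.003$ provided $c\le\eta_1^{r}n$ with $\eta_1$ small enough depending on $k,\alpha$.

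\textbf{Potential.} Condition (2) fails with probability at most $3cr/W=3r\eta_2^{r}\le 0.003$.

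Frozen leaves are exactly those that failed, so every other leaf is $(W,W)$-good. Summing the three bounds yields \eqref{eq:most_are_good_conclusion}.

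The step I expect to need the most care is verifying that freezing is compatible with the averaged inequalities. Concretely, the bound \eqref{eq:bounded_growth} involves $\phi(\bdzeta,R)$ at the current node, so I must check that taking expectations over only the non-frozen nodes still gives $w_{d+1}\le k^2(2w_d+\Phi_d+3c)$ with the unstopped $\Phi_d\le 3cd$. This works because all summands are nonnegative.
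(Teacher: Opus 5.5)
Your proposal is correct and follows the paper's proof: the same three failure modes, Markov's inequality on the potential, the weight recursion from \eqref{eq:bounded_growth}, and a union bound over rounds using \eqref{eq:bounded_cyclic_prob}. The one difference is that you enforce the hypothesis $\|\bdzeta\|\le 6^{-k-1}n$ by freezing branches, whereas the paper truncates the weight as $\min\{\|\bdzeta\|,6^{-k-1}n\}$; you are also right that the hypothesis should read $c\le\eta_1^{r}n$, which is how the paper's proof uses it. A small numerical fix is needed: $\eta_2\le(100k^2)^{-1}$ does not make $3(r+1)(3k^2\eta_2)^r$ or $3r\eta_2^r$ at most $0.003$ when $r=1$. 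Take $\eta_2$ smaller instead, e.g.\ the paper's $\eta_2=10^{-5}k^{-2}$.
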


We observe that the desired \Cref{lem:linear_decomposition} (restated below) follows immediately from \Cref{lem:analysis_of_global_protocols}. 

\lineardecomposition*

\begin{proof}[Proof of Lemma~\ref{lem:linear_decomposition} assuming Lemma~\ref{lem:analysis_of_global_protocols}]
Let $\eta_{1},\eta_{2}$ be the constants obtained from Lemma~\ref{lem:analysis_of_global_protocols}. We apply Lemma~\ref{lem:arbitrary_to_global} to transform $\Pi$ into an $(r,|\Pi|)$-round global protocol $\Pi^{\reff}$. We claim that $$\calR= \calR^{\mathrm{leaf}}(\Pi^{\mathrm{ref}})\setminus \calR^{\bad}(\Pi^{\mathrm{ref}},\eta_{2}^{-r}c)$$ satisfies the three conditions stated in the lemma.  

The second condition follows directly from Definition~\ref{def:deleting_bad_pairs}, and the third condition follows from the guarantee of Lemma~\ref{lem:arbitrary_to_global}.  
The first condition follows from Lemma~\ref{lem:analysis_of_global_protocols} together with the obvious identity
\[
    \sum_{(\boldsymbol{\zeta}, R) \in \calR^{\mathrm{leaf}}(\Pi^{\mathrm{ref}})} 
    \calD_{\no}(R) = 1.\qedhere
\]
\end{proof}

We are now ready to prove Lemma~\ref{lem:analysis_of_global_protocols}.

\begin{proof}[Proof of Lemma~\ref{lem:analysis_of_global_protocols}] We let $\eta_{2}=10^{-5}k^{-2}$; the value of $\eta_{1}$ will be determined later. For each joint input $\bfY\in \prod_{(\sfe,j)\in\calE\times [K]}\Omega^{\calU_{\sfe},\alpha n}$ and each nonnegative integer $d\leq r$, we let $\big(\bdzeta(\bfY,d),R(\bfY,d)\big)$ be the unique structured rectangle in $\calR^{d}(\Pi)$ that contains $\bfY$. Note that the support of each component of $\bdzeta(\bfY,d)$ is non-decreasing in $d$.

In the rest of the proof, we use $\bfY$ to denote a uniformly random element of the joint input space $\prod_{(\sfe,j)\in\calE\times [K]}\Omega^{\calU_{\sfe},\alpha n}$, i.e. we have $\bfY\sim\calD_{\no}$. Note that the desired conclusion \eqref{eq:most_are_good_conclusion} is equivalent to
\begin{equation}\label{eq:most_are_good_goal}
\Pru{\bfY}{\big(\bdzeta(\bfY,r),R(\bfY,r)\big)\text{ is not }(\eta_{2}^{-r}c,\eta_{2}^{-r}c)\text{-good}}\leq 0.01.
\end{equation}
Recall from Definition~\ref{def:good_rec_2} that a structured rectangle $(\bdzeta,R)$ with
\(R=\prod_{(\sfe, j) \in \calE \times [K]} A^{(\sfe, j)}\) is not $(\eta_{2}^{-r}c,\eta_{2}^{-r}c)$-good only if one of the following holds:
\begin{enumerate}[label=(\arabic*)]
\item For some $(\sfe,j)\in \calE\times [K]$ we have $\left|A^{(\sfe,j)}\right| \leq 2^{-\eta_2^{-r}c}\left|\Omega^{\calU_{\sfe},\alpha n}_{\bfz^{(\sfe,j)}}\right|$. In this case, by Definition~\ref{def:restriction_potential} we also have $\phi(\bdzeta,R)\geq \eta_{2}^{-r}c$.
\item The weight of $\bdzeta$ is at least $\eta_{2}^{-r}c$, i.e. $\|\bdzeta\|\geq \eta_{2}^{-r}c$.
\item The restriction sequence $\bdzeta$ is cyclic.
\end{enumerate}

\paragraph{Bounding potential.} By Definition~\ref{def:global_protocol}, for any $d\in [r]$ we have (with probability 1)
\[
\Exu{\bfY}{\phi\big(\bdzeta(\bfY,d),R(\bfY,d)\big)\,\Big|\,\bdzeta(\bfY,d-1),R(\bfY,d-1)}\leq \phi\big(\bdzeta(\bfY,d-1),R(\bfY,d-1)\big)+3c.
\]
Taking expectation over $\bfY$ and taking sum over $d$ yields
\begin{equation}\label{eq:expectation_potential}
\Exu{\bfY}{\phi\big(\bdzeta(\bfY,d),R(\bfY,d)\big)\Big.}\leq 3dc\qquad\text{for any }d\in[r],
\end{equation}
and by Markov's inequality we have
\begin{equation}\label{eq:most_are_good_1}
\Pru{\bfY}{\phi\big(\bdzeta(\bfY,r),R(\bfY,r)\big)\geq \eta_{2}^{-r}c\Big.}\leq 3r\cdot\eta_{2}^{r}\leq 10^{-3}.
\end{equation}

\paragraph{Bounding weight.} By Lemma~\ref{lem:bounded_growth}, for any $d\in [r]$ we have (with probability 1)
\begin{align*}
&\quad\Exu{\bfY}{\min\left\{\|\bdzeta(\bfY,d)\|,6^{-k-1}n\right\}\,\Big|\,\bdzeta(\bfY,d-1),R(\bfY,d-1)}\\
&\leq k^{2}\left(2\min\left\{\|\bdzeta(\bfY,d-1)\|,6^{-k-1}n\right\}+\phi\big(\bdzeta(\bfY,d-1),R(\bfY,d-1)\big)+3c\right).
\end{align*}
Taking expectation over $\bfY$,  applying \eqref{eq:expectation_potential}, and inducting over $d$, we get
\begin{equation}\label{eq:expectation_weight}
\Exu{\bfY}{\min\left\{\|\bdzeta(\bfY,s)\|,6^{-k-1}n\right\}}\leq \sum_{d=1}^{s}3dc\cdot(2k^{2})^{s+1-d}\leq (10k^{2})^{s}c\qquad\text{for any }s\in [r].
\end{equation}
By Markov's inequality we have
\begin{equation}\label{eq:most_are_good_2}
\Pru{\bfY}{\|\bdzeta(\bfY,r)\|\geq \eta_{2}^{-r}c}\leq (10\eta_{2} k^{2})^{r}\leq 10^{-3}
\end{equation}
as long as $\eta_{2}^{-r}c\leq 6^{-k-1}n$, which we can easily guarantee since $c\leq \eta_{1}^{r}n$ and we can choose $\eta_{1}$ to be sufficiently small.

\paragraph{Probability of cyclicity.} By Lemma~\ref{lem:bounded_growth}, for any $d\in [r]$ we have (with probability 1)
\begin{align*}
&\quad\Pru{\bfY}{\bdzeta(\bfY,d)\text{ is cyclic but }\bdzeta(\bfY,d-1)\text{ is not cyclic}\,\Big|\,\bdzeta(\bfY,d-1),R(\bfY,d-1)}\\
&\leq \frac{6^{k+1}}{n}\min\left\{\|\bdzeta(\bfY,d-1)\|, 6^{-k-1}n\right\}.
\end{align*}
Taking expectation over $\bfY$, applying \eqref{eq:expectation_weight} and taking union bound over $d$, we get
\begin{equation}\label{eq:most_are_good_3}
\Pru{\bfY}{\bdzeta(\bfY,r)\text{ is cyclic}}\leq \frac{6^{k+1}}{n}\cdot (10k^{2})^{r}c\leq 10^{-3}
\end{equation}
as long as $c\leq 6^{-k-5}(10k^{2})^{-r}n$, which we can easily guarantee since $c\leq \eta_{1}^{r}n$ and we can choose $\eta_{1}$ to be sufficiently small.

Combining \eqref{eq:most_are_good_1}, \eqref{eq:most_are_good_2} and \eqref{eq:most_are_good_3}, we conclude that \eqref{eq:most_are_good_goal} holds if $\eta_{1}$ is a small enough constant.
\end{proof}

%% file: discrepancy.tex
\section{One-Sided Discrepancy of Good Rectangles}\label{sec:discrepancy}
The goal of this section is to prove the discrepancy lemma (Lemma~\ref{lem:discrepancy_bound}), modulo a Fourier analytic lemma that we prove in Section~\ref{sec:induction}. We begin with a high-level overview of the proof strategy.

Recall that in \Cref{lem:discrepancy_bound}, we are given a restriction sequence \(\bdzeta\) and a $\bdzeta$-global rectangle $R = \prod_{(\sfe,j)\in \calE\times [K]} A^{(\sfe,j)}$. We will define a probability density function $f^{(\sfe,j)}$ on $\ZmodN^{\calV\times [n]}$, induced by the set $A^{(\sfe,j)}$. As is shown in \cite[Section 7.1]{FMW25b}, the quantities $\calD_{\yes}(R)$ and $\calD_{\no}(R)$ can be related by the identity (see Lemma~\ref{lem:relating_yes_no})
\begin{align}\label{eq:relating_yes_no}
\calD_{\yes}(R)=\calD_{\no}(R)\cdot \Exu{x\in \ZmodN^{\calV\times [n]}}{\prod_{(\sfe,j)\in \calE\times [K]}f^{(\sfe,j)}(x)}.
\end{align}
Thus, it suffices to show that the expectation of the product \(\prod_{(\sfe,j)} f^{(\sfe,j)}(x)\) is not much smaller than 1.

To analyze each function $f^{(\sfe,j)}$ (associated to a single player), it is shown in \cite[Section 7.2]{FMW25b} that $f^{(\sfe,j)}$ can be decomposed as the product of two functions, which represent the ``structure part'' and ``pseudorandom part'' of $f^{(\sfe,j)}$, respectively. In Section~\ref{subsec:separating_structure_random}, we will present the same decomposition as in \cite{FMW25b} but using a slightly different formalism.

It is after this decomposition that our strategy diverges from \cite{FMW25b}. 

In \cite{FMW25b}, the analytical tool of \emph{global hypercontractivity} is used to obtain $L^{2}$-control over (the Fourier-level weights of) the ``pseudorandom part'' functions. Although this $L^{2}$-control is, in a sense, nearly optimal on its own, the analysis of \cite{FMW25b}\footnote{For the specific case of Max-Cut, the paper~\cite{FMWa} combines the pseudorandom-part and structure-part analyses in a slightly different manner, which, however, suffers the same loss as in \cite{FMW25b}.} incurs a significant loss when integrating the pseudorandom-part control with the structure-part information.\footnote{Consequently, both~\cite{FMW25b} and~\cite{FMWa} obtain only $\Omega(n^{1/3})$ communication lower bounds.}

It turns out that the ``structure'' inherent in the structure-part functions is more optimally captured through the lens of \emph{Fourier $\ell^{1}$-norm} than via $L^{2}$-based controls. However, maintaining Fourier $\ell^{1}$-norm controls for \emph{products} of multiple functions is a much more intricate task than in the $L^{2}$ world. Notably, this challenge was successfully addressed in~\cite{KK19} in the context of Max-Cut. In Section~\ref{subsec:discrepancy_bound}, we demonstrate how to preserve Fourier $\ell^{1}$-norm control through an induction lemma (Lemma~\ref{lem:induction}, the proof of which is deferred to Section~\ref{sec:induction}), resembling the approach of~\cite{KK19}. Before that, we lay out some Fourier-analytic framework in Section~\ref{subsec:fourier_growth}, and in Section~\ref{subsec:structured} we use the framework to analyze products of structure-part functions.

\subsection{Relating YES and NO Distributions}\label{subsec:calculating_discrepancy}
In this subsection, we introduce a formula from \cite{FMW25b} that expresses the ratio $\calD_{\yes}(R)/\calD_{\no}(R)$ as the expectation of a product of density functions. 

Since the YES distribution $\calD_{\yes}$ is defined from the Markov kernel in Definition~\ref{def:Markov_kernel}, the main task is to analyze that Markov kernel. As demonstrated in \cite{FMW25b}, it is convenient to view Markov kernels as \emph{pull-back} linear operators that map functions on the target space to functions on the source space.

\begin{notation}\label{notation:linear_operator}
Fix a $k$-universe $\calU$, a nonnegative integer $m\leq |\calU|$, and a one-wise independent distribution $\mu$ over $\ZNk$. The (right) stochastic matrix \(\bfP^{\calU,m}_{\mu}: \mathbb{Z}_N^{\bigcup\calU} \times \Omega^{\calU,m} \to \mathbb{R}\), defined in \Cref{def:Markov_kernel}, can be viewed as a linear operator
\[
\bdP^{\calU,m}_{\mu} : L^2(\Omega^{\calU, m}) \to L^2\big(\mathbb{Z}_N^{\bigcup\calU}\big)
\]
given by
\[
\bdP^{\calU,m}_{\mu}[f](x) = \sum_{\bfy \in \Omega^{\calU_{\sfe}, \alpha n}} \bfP^{\calU,m}_{\mu}(x, \bfy) f(\bfy),
\]
for all \(x \in \mathbb{Z}_N^{\bigcup\calU}\) and \(f \in L^2(\Omega^{\calU,m})\).
\end{notation}

We denote this pull-back operator by the italic bold symbol $\bdP^{\calU,m}_{\mu}[\cdot]$, distinguishing it from the matrix expression $\bfP^{\calU,m}_{\mu}(\cdot, \cdot)$ to reflect that, while formally distinct, the two represent the same underlying Markov transition.

The operator $\bdP_\mu^{\calU,m}$ satisfies the following two basic properties (see \cite[Section 7.1]{FMW25b} for their proofs). 
\begin{proposition}\label{prop:infinity_norm_contraction_P}
For any $f\in L^{2}(\Omega^{\calU,m})$, we always have $\left\|\bdP^{\calU,m}_{\mu}[f]\right\|_{\infty}\leq \|f\|_{\infty}$.
\end{proposition}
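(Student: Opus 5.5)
The plan is to exploit the fact that $\bdP^{\calU,m}_{\mu}$ is the pull-back operator of a right stochastic matrix, so each output value is an average of input values. By Notation~\ref{notation:linear_operator}, for every fixed $x\in\ZmodN^{\bigcup\calU}$,
\[
\bdP^{\calU,m}_{\mu}[f](x)=\sum_{\bfy\in\Omega^{\calU,m}}\bfP^{\calU,m}_{\mu}(x,\bfy)\,f(\bfy).
\]
This is therefore the expectation of $f(\bfy)$ when $\bfy$ is drawn from the probability mass function $\bfP^{\calU,m}_{\mu}(x,\cdot)$.

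The only input needed is that the row $\bfP^{\calU,m}_{\mu}(x,\cdot)$ really is a probability mass function on $\Omega^{\calU,m}$. I will check this directly from Definition~\ref{def:Markov_kernel}. The entries are the output probabilities of a random process: a uniform matching $M\in\calM_{\calU,m}$, independent draws $w_e\sim\mu$, and the labels $x_{|e}-w_e$. That process always outputs an element of $\Omega^{\calU,m}$, because its support is $M$, a matching with $m$ edges, and its labels lie in $\ZNk$. Hence the entries are nonnegative and sum to $1$ over $\bfy$.

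With this in hand, the triangle inequality gives
\[
\left|\bdP^{\calU,m}_{\mu}[f](x)\right|\leq\sum_{\bfy}\bfP^{\calU,m}_{\mu}(x,\bfy)\,|f(\bfy)|\leq\|f\|_{\infty}\sum_{\bfy}\bfP^{\calU,m}_{\mu}(x,\bfy)=\|f\|_{\infty}.
\]
This bound holds for complex-valued $f$ as well. Taking the maximum over $x$ then yields $\left\|\bdP^{\calU,m}_{\mu}[f]\right\|_{\infty}\leq\|f\|_{\infty}$.

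There is no real obstacle here. The only point needing care is confirming that the process in Definition~\ref{def:Markov_kernel} produces a legitimate element of $\Omega^{\calU,m}$, so that the rows sum exactly to one. One-wise independence of $\mu$ plays no role in this step.
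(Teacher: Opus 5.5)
Your proposal is correct: each value $\bdP^{\calU,m}_{\mu}[f](x)$ is an average of values of $f$ against the probability row $\bfP^{\calU,m}_{\mu}(x,\cdot)$, and the triangle inequality then gives the bound, for complex-valued $f$ as well. This is the standard argument. The paper does not reprove the proposition but defers it to \cite{FMW25b}; the only ingredient needed is that $\bfP^{\calU,m}_{\mu}$ is a right stochastic matrix, which is exactly what you verify.
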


\begin{proposition}
The operator $\bdP^{\calU,m}_{\mu}$ maps a density function on $\Omega^{\calU,m}$ to a density function on $\ZmodN^{\bigcup\calU}$.
\end{proposition}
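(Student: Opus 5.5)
The plan is to read the statement directly off Definition~\ref{def:Markov_kernel} and Notation~\ref{notation:linear_operator}, since $\bdP^{\calU,m}_{\mu}$ is the pull-back operator of a right stochastic matrix. Let $f$ be a density function on $\Omega^{\calU,m}$. Then $f\geq 0$ and $\Exs{\bfy\in\Omega^{\calU,m}}{f(\bfy)}=1$. We must show two things: $\bdP^{\calU,m}_{\mu}[f]\geq 0$ pointwise, and $\Exs{x\in\ZmodN^{\bigcup\calU}}{\bdP^{\calU,m}_{\mu}[f](x)}=1$. Nonnegativity is immediate, because every entry $\bfP^{\calU,m}_{\mu}(x,\bfy)$ is a probability and $f\geq 0$.

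For the normalization, exchange the order of summation:
\[
\Exu{x}{\bdP^{\calU,m}_{\mu}[f](x)}=\sum_{\bfy\in\Omega^{\calU,m}}f(\bfy)\cdot\Exu{x}{\bfP^{\calU,m}_{\mu}(x,\bfy)}.
\]
It then suffices to show that the column average $\Exs{x}{\bfP^{\calU,m}_{\mu}(x,\bfy)}$ equals $1/|\Omega^{\calU,m}|$ for every $\bfy$. In other words, if $x$ is uniform on $\ZmodN^{\bigcup\calU}$, then the output of the sampling process in Definition~\ref{def:Markov_kernel} is uniform on $\Omega^{\calU,m}$. Granting this, the right-hand side becomes $\Exs{\bfy}{f(\bfy)}=1$.

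To prove the uniformity claim, condition on the matching $M$, which is uniform on $\calM_{\calU,m}$ and independent of $x$. Because $M$ is a matching, its edges are vertex-disjoint. Hence the restrictions $x_{|e}$ for $e\in M$ are independent and each is uniform on $\ZNk$. The noise terms $w_e\sim\mu$ are independent of $x$ and of each other. So for each $e$, the label $x_{|e}-w_e$ is uniform on $\ZNk$, since a uniform group element minus an independent element is uniform. These labels are also jointly independent across $e\in M$. Consequently, conditioned on $M$, the labeling is uniform over all labelings of $M$. Combined with the uniformity of $M$, the output $\bfz$ is uniform on $\Omega^{\calU,m}$, because every element of $\Omega^{\calU,m}$ corresponds to a unique pair consisting of a matching and a labeling.

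No step here is a genuine obstacle. The only point needing care is the disjointness of the edges of $M$, which is what makes the $x_{|e}$ independent. This argument does not use the one-wise independence of $\mu$; that hypothesis is relevant elsewhere in the paper but not here.
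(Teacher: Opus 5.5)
Your proof is correct and is the standard argument; the paper does not prove this proposition itself but defers to \cite[Section 7.1]{FMW25b}. Nonnegativity comes from $\bfP^{\calU,m}_{\mu}$ having nonnegative entries, and normalization from the fact that a uniform $x$ pushes forward to the uniform distribution on $\Omega^{\calU,m}$, so every column of $\bfP^{\calU,m}_{\mu}$ averages to $|\Omega^{\calU,m}|^{-1}$. One point deserves to be stated explicitly: each $x_{|e}$ is uniform on $\ZNk$ because the parts $U_{1},\dots,U_{k}$ of the $k$-universe are disjoint (the hypergraph is $k$-partite, so $e$ has $k$ distinct vertices). This is a separate fact from the vertex-disjointness of distinct edges of $M$, which is what gives independence across $e$.
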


The following notation will be useful throughout this section.
\begin{notation}\label{notation:density_function}
Given a $k$-universe $\calU$, a nonnegative integer $m\leq |\calU|$, and a nonempty set $A\subseteq \Omega^{\calU,m}$, the density function of the uniform distribution on $A$ is denoted by $\phi_{A}:\Omega^{\calU,m}\rightarrow[0,\infty)$, specifically defined as 
\[\phi_{A}(\bfy):=\begin{cases}
\left|\Omega^{\calU,m}\right|/|A|, &\text{if }\bfy\in A,\\
0, &\text{if }\bfy\not\in A.
\end{cases}\]
\end{notation}
We remark that the density function \(\bdP^{\calU,m}_{\mu}[\phi_A]\) corresponds to taking the joint distribution of $(x,\bfy)\in \ZmodN^{\bigcup\calU}\times \Omega^{\calU,m}$ in the Markov transition (described in Definition~\ref{def:Markov_kernel}), conditioning on the event $\{\bfy\in A\}$, and then taking the marginal distribution of $x$.

We are now ready to present the formula for $\calD_{\yes}(R)/\calD_{\no}(R)$ from \cite{FMW25b}.

\begin{lemma}[\cite{FMW25b}, Lemma 7.5]\label{lem:relating_yes_no}
Fix a $\DIHP(G,n,\alpha,K)$ communication game, where $G=(\calV,\calE,N,(\mu_{\sfe})_{\sfe\in\calE})$. Given a rectangle $R=\prod_{(\sfe,j)\in\calE\times [K]}A^{(\sfe,j)}$, where $A^{(\sfe,j)}\subseteq \Omega^{\calU_{\sfe},\alpha n}$, we have
\[
\calD_{\yes}(R)=\calD_{\no}(R)\cdot \Exu{x\in \ZmodN^{\calV\times [n]}}{\prod_{(\sfe,j)\in \calE\times [K]}\bdP^{\calU_{\sfe},\alpha n}_{\mu_{\sfe}}\Big[\phi_{A^{(\sfe,j)}}\Big]\circ\proj_{\sfe}(x)}.
\]
\end{lemma}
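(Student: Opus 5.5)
This is a direct computation: write both sides as sums over joint inputs $\bfY\in R$, expand the yes distribution through the Markov kernels, and use the fact that each kernel is uniform on average over $x$.

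\textbf{Step 1: the two probabilities.} By Definition~\ref{def:communication_game}, for any $\bfY=(\bfy^{(\sfe,j)})$ we have
\[
\calD_{\yes}(\bfY)=\Exu{x\in \ZmodN^{\calV\times[n]}}{\prod_{(\sfe,j)}\bfP^{\calU_{\sfe},\alpha n}_{\mu_{\sfe}}\big(\proj_{\sfe}(x),\bfy^{(\sfe,j)}\big)},
\]
and $\calD_{\no}(\bfY)=\prod_{(\sfe,j)}|\Omega^{\calU_{\sfe},\alpha n}|^{-1}$. Summing over $\bfY\in R$, the product structure of $R$ lets me exchange the sum and the product:
\[
\calD_{\yes}(R)=\Exu{x}{\prod_{(\sfe,j)}\sum_{\bfy\in A^{(\sfe,j)}}\bfP^{\calU_{\sfe},\alpha n}_{\mu_{\sfe}}(\proj_{\sfe}(x),\bfy)}.
\]

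\textbf{Step 2: rewrite each factor with $\phi_A$.} By Notations~\ref{notation:linear_operator} and~\ref{notation:density_function},
\[
\sum_{\bfy\in A}\bfP(x',\bfy)=\frac{|A|}{|\Omega|}\,\bdP[\phi_A](x').
\]
Pulling the constants $\prod_{(\sfe,j)}|A^{(\sfe,j)}|/|\Omega^{\calU_{\sfe},\alpha n}|=\calD_{\no}(R)$ out of the expectation gives the identity exactly.

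\textbf{Where care is needed.} The only point needing care is bookkeeping. The variable $x$ is shared across players, so the expectation must stay outside the product, and the exchange of sum and product is valid only because $R$ is a Cartesian product. I also need $\proj_{\sfe}(x)$ to be the correct argument of each kernel, which holds because $\calU_{\sfe}$ is embedded in $\calV\times[n]$. One-wise independence is not used here; it only ensures that $\bdP[\phi_A]$ is a density, which is not needed for the identity.
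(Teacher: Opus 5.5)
Your computation is correct. Writing $\calD_{\yes}(R)$ as an average over $x$ of a product of kernel sums is valid because $R$ is a Cartesian product and the players' inputs are conditionally independent given $x$; together with the identity $\sum_{\bfy\in A}\bfP^{\calU,m}_{\mu}(x',\bfy)=\frac{|A|}{|\Omega^{\calU,m}|}\bdP^{\calU,m}_{\mu}[\phi_A](x')$, this is the standard direct calculation behind the lemma, which the paper does not reprove but imports from \cite{FMW25b}. Two cosmetic remarks: the ``uniform on average over $x$'' fact you announce at the start is never used (and is not needed), and $\phi_A$ is only defined for nonempty $A$, so you should set aside the degenerate case where some $A^{(\sfe,j)}=\emptyset$, in which $\calD_{\yes}(R)=\calD_{\no}(R)=0$ trivially.
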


\subsection{Separating Structured and Pseudorandom Parts}\label{subsec:separating_structure_random}
As mentioned earlier, the goal of this subsection is to express each function
\begin{equation}\label{eq:function_to_separate}
    \bdP^{\calU_{\sfe},\alpha n}_{\mu_{\sfe}}\Big[\phi_{A^{(\sfe,j)}}\Big]\circ\proj_{\sfe},\qquad\text{for }(\sfe,j)\in\calE\times [K]
\end{equation}
as the product of two functions: the structured part and the pseudorandom part. We start with the definition of the structured part. 
\begin{definition}\label{def:structured_function}
Fix a $\DIHP(G,n,\alpha,K)$ communication game, where $G=(\calV,\calE,N,(\mu_{\sfe})_{\sfe\in\calE})$. Let $\bfz$ be a restriction on the space $\Omega^{\calU,m}$, and an edge $\sfe\in \calE$. We define a density function $g_{\sfe,\bfz}:\ZmodN^{\calV\times [n]}\rightarrow [0,+\infty)$ by
\[
g_{\sfe,\bfz}(x):=\prod_{e\in \supp(\bfz)}N^{k}\mu_{\sfe}\big(x_{|e}-\bfz(e)\big).
\]
\end{definition}

Next, we define the pseudorandom part. For that purpose, we introduce another Markov kernel closely related to the one in Definition~\ref{def:Markov_kernel}. While the Markov transition in Definition~\ref{def:Markov_kernel} samples a labeled matching $\bfy\in\Omega^{\calU,m}$ based on an input $x\in\ZmodN^{\bigcup\calU}$, it is also natural to fix a matching $M\in\calM_{\calU,m}$ and sample a labeling of $M$ based on an input $x\in \ZmodN^{\bigcup\calU}$ (or an $x\in\ZmodN^{\Lambda}$, if $\tcup\calU\subseteq \Lambda$), as captured by the next definition.

\begin{definition}\label{def:Markov_R}
Let $\calU$ be a $k$-universe embedded in a finite set $\Lambda$ (recall Notation~\ref{notation:embed_universe}). Fix a one-wise independent distribution $\mu$ over $\ZNk$ and a matching $M\in \calM_{\calU,m}$ of size $m\leq |\calU|$. We define a right stochastic matrix $\bfR^{\Lambda,M}_{\mu}:\ZmodN^{\Lambda}\times \Map{M}{\ZmodN^{k}}\rightarrow [0,+\infty)$ by
\[
\bfR^{\Lambda,M}_{\mu}(x,\bdxi):=\prod_{e\in M}\mu\big(x_{|e}-\bdxi(e)\big).
\]
We may then define a linear operator $\bdR^{\Lambda,M}_{\mu}:L^{2}\big(\Map{M}{\ZNk}\big)\rightarrow L^{2}\big(\ZmodN^{\Lambda}\big)$ as in Notation~\ref{notation:linear_operator}. 
\end{definition}

Analogously to Propositions~\ref{prop:infinity_norm_contraction_P} and~\ref{prop:preverses_density_function}, the operator $\bdR_\mu^{\Lambda,M}$ satisfies the following two basic properties. 
\begin{proposition}\label{prop:infinity_norm_contraction}
For any $f\in L^{2}\big(\Map{M}{\ZNk}\big)$, we always have $\left\|\bdR^{\Lambda,M}_{\mu}[f]\right\|_{\infty}\leq \|f\|_{\infty}$.
\end{proposition}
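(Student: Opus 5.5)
The plan is to unfold the operator pointwise and reduce the claim to an elementary bound on a convex combination. By Definition~\ref{def:Markov_R}, for each fixed $x\in\ZmodN^{\Lambda}$ the row $\bdxi\mapsto \bfR^{\Lambda,M}_{\mu}(x,\bdxi)=\prod_{e\in M}\mu\big(x_{|e}-\bdxi(e)\big)$ is a nonnegative function on $\Map{M}{\ZNk}$. The first step is therefore to check that this row is a probability mass function. Since the factor indexed by $e$ depends only on the coordinate $\bdxi(e)$, the sum over $\bdxi$ factorizes as
\[
\sum_{\bdxi}\prod_{e\in M}\mu\big(x_{|e}-\bdxi(e)\big)=\prod_{e\in M}\sum_{w\in\ZNk}\mu\big(x_{|e}-w\big)=\prod_{e\in M}1=1 .
\]
Each inner sum equals $1$ because $w\mapsto x_{|e}-w$ is a bijection of $\ZNk$ and $\mu$ is a probability mass function. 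This is exactly the sense in which $\bfR^{\Lambda,M}_{\mu}$ is called right stochastic.

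The second step is to bound $\bdR^{\Lambda,M}_{\mu}[f](x)$ itself. By Notation~\ref{notation:linear_operator} it equals $\sum_{\bdxi}\bfR^{\Lambda,M}_{\mu}(x,\bdxi)f(\bdxi)$, which is a convex combination of values of $f$. The triangle inequality then gives
\[
\big|\bdR^{\Lambda,M}_{\mu}[f](x)\big|\leq \sum_{\bdxi}\bfR^{\Lambda,M}_{\mu}(x,\bdxi)\,|f(\bdxi)|\leq \|f\|_{\infty}.
\]
Taking the maximum over $x$ proves the claim, and this works equally for complex-valued $f$.

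There is no real obstacle here. The only point needing care is the factorization of the row sum, which relies on the edges of $M$ carrying independent coordinates $\bdxi(e)$ of the map $\bdxi$. One-wise independence of $\mu$ is not needed for this statement; it matters only for the companion fact that $\bdR^{\Lambda,M}_{\mu}$ preserves density functions.
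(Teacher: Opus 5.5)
Your proof is correct and matches what the paper intends. The paper gives no separate proof: it notes that $\bfR^{\Lambda,M}_{\mu}$ is right stochastic and that the property holds analogously to Proposition~\ref{prop:infinity_norm_contraction_P} for $\bdP^{\calU,m}_{\mu}$ (proved in \cite{FMW25b}), and your row-sum factorization followed by the convex-combination bound makes exactly that reasoning explicit.

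One minor correction to your closing aside: preserving density functions does not need one-wise independence either. The edges of $M$ are vertex-disjoint, so the $x_{|e}$ are independent and uniform on $\ZNk$, which gives $\Exs{x}{\bfR^{\Lambda,M}_{\mu}(x,\bdxi)}=N^{-k|M|}$ with no assumption on $\mu$. One-wise independence is what drives Lemma~\ref{lem:SVD} and Lemma~\ref{lem:structure_part_mean}.
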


\begin{proposition}\label{prop:preverses_density_function}
The operator $\bdR^{\Lambda,M}_{\mu}$ maps a density function on the space $\Map{M}{\ZNk}$ to a density function on $\ZmodN^{\Lambda}$.
\end{proposition}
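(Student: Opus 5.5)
The statement to prove is Proposition~\ref{prop:preverses_density_function}: $\bdR^{\Lambda,M}_{\mu}$ maps density functions on $\Map{M}{\ZNk}$ to density functions on $\ZmodN^{\Lambda}$. The plan is to check the two defining properties of a density function directly from Definition~\ref{def:Markov_R}: nonnegativity, and average value $1$ under the uniform measure. The only input needed is that $\mu$ is a probability distribution on $\ZNk$, together with the fact that the edges of $M$ are pairwise disjoint.

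First, nonnegativity. Let $f$ be a density on $\Map{M}{\ZNk}$, so $f\geq 0$ and $\Exs{\bdxi}{f(\bdxi)}=1$, where $\bdxi$ is uniform on $\Map{M}{\ZNk}$, a set of size $N^{k|M|}$. Then
\[
\bdR^{\Lambda,M}_{\mu}[f](x)=\sum_{\bdxi}\prod_{e\in M}\mu\big(x_{|e}-\bdxi(e)\big)\,f(\bdxi)
\]
is a sum of products of nonnegative terms, hence nonnegative.

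Second, the average value. Average over $x\in\ZmodN^{\Lambda}$ and swap the order of summation. It then suffices to show that for each fixed $\bdxi$,
\[
\Exs{x\in\ZmodN^{\Lambda}}{\prod_{e\in M}\mu\big(x_{|e}-\bdxi(e)\big)}=N^{-k|M|}.
\]
Since $M$ is a matching, the tuples $x_{|e}$ for $e\in M$ involve disjoint sets of coordinates of $\Lambda$, and each has $k$ distinct coordinates. So when $x$ is uniform, the vectors $x_{|e}$ are independent and each is uniform on $\ZNk$. The expectation therefore factors as a product over $e\in M$. Each factor equals $\Exs{w\in\ZNk}{\mu(w-\bdxi(e))}=N^{-k}\sum_{w}\mu(w)=N^{-k}$, because translation is a bijection of $\ZNk$.

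Plugging this back in gives
\[
\Exs{x}{\bdR^{\Lambda,M}_{\mu}[f](x)}=N^{-k|M|}\sum_{\bdxi}f(\bdxi)=\Exs{\bdxi}{f(\bdxi)}=1.
\]
The only step needing care is the independence of the blocks $x_{|e}$, which is exactly where disjointness of the matching edges is used. One-wise independence of $\mu$ is not needed for this statement.
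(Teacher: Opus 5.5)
Your proof is correct. Nonnegativity is immediate, and the mean-one identity follows, as you say, from the fact that the blocks $x_{|e}$ for $e\in M$ are independent and uniform on $\ZNk$; this holds because $M$ is a matching in a $k$-partite hypergraph, so its edges are disjoint and each has $k$ distinct vertices. You are also right that one-wise independence of $\mu$ plays no role here.

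The paper states this proposition without proof, by analogy with the corresponding facts for $\bdP^{\calU,m}_{\mu}$ from \cite{FMW25b}, and your direct verification is the natural argument. The same edge-by-edge factorization appears in the proof of Lemma~\ref{lem:SVD}: taking $\mathbf{a}=\mathbf{0}$ there, where $r(\mathbf{0})=1$, gives $\Ex{\bdR^{\Lambda,M}_{\mu}[f]}=\widehat{f}(\mathbf{0})=\Ex{f}$.
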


In order to obtain a formal relation between the Markov operator $\bdR^{\Lambda,M}_{\mu}$ and $\bdP^{\calU,m}_{\mu}$, we define a canonical embedding map in Definition~\ref{def:embedding}. The following notations prepare for the definition.

\begin{definition}\label{def:embedding}
Fix a $k$-universe $\calU$ and a nonnegative integer $m\leq |\calU|$. For any restriction $\bfz$ on the space $\Omega^{\calU,m}$ and any matching $M\in \calM_{\calU,m,\bfz}$, there is a canonical embedding
\[
\fraki_{M,\bfz}:\Map{M}{\ZNk}\xhookrightarrow{\phantom{mmm}} \Omega^{\calU,m}_{\bfz}.
\]
This embedding proceeds by mapping any $\bdxi\in \Map{M}{\ZNk}$ to a labeled matching $\bfy\in \Omega^{\calU,m}_{\bfz}$ defined by
\begin{enumerate}
\item $\bfy(e)=\bfz(e)$ for $e\in \supp(\bfz)$,
\item $\bfy(e)=\bdxi(e)$ for $e\in M$,
\item $\bfy(e)=\nil$ for all other $e\in \tprod\calU$.
\end{enumerate}
\end{definition}

Note that, as $M$ ranges over all matchings in $\calM_{\calU,m,\bfz}$, the images of the maps $\fraki_{M,\bfz}$ form a partition of the space $\Omega^{\calU,m}$. 

We are now ready to present the separation of the function \eqref{eq:function_to_separate} into structured and pseudorandom parts.

\begin{lemma}\label{lem:separating_structure_pseudorandom}
Fix a $\DIHP(G,n,\alpha,K)$ communication game, where $G=(\calV,\calE,N,(\mu_{\sfe})_{\sfe\in\calE})$, and fix an edge $\sfe\in \calE$. Given a restriction $\bfz$ on $\Omega^{\calU_{\sfe},\alpha n}$ and a $\bfz$-global set $A\subseteq \Omega^{\calU_{\sfe},\alpha n}_{\bfz}$, for any $x\in \ZmodN^{\calV\times [n]}$ we have
\begin{equation}\label{eq:separating_structure_from_randomness}
\bdP^{\calU_{\sfe},\alpha n}_{\mu_{\sfe}}[\phi_{A}]\big(\proj_{\sfe}(x)\big)=g_{\sfe,\bfz}(x)\sum_{M}\frac{\left|\fraki_{M,\bfz}^{-1}(A)\right|}{|A|}\bdR^{\calV\times [n],\,M}_{\mu_{\sfe}}\left[\phi_{\fraki_{M,\bfz}^{-1}(A)}\right](x),
\end{equation}
where the sum is over all matchings $M\in \calM_{\calU_{\sfe},\alpha n,\bfz}$.
\end{lemma}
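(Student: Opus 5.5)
This identity is a direct computation: unfold both sides as sums over labeled matchings and match the terms. I would not use the globalness of $A$ anywhere; only $A\subseteq\Omega^{\calU_{\sfe},\alpha n}_{\bfz}$ is needed. Write $\calU=\calU_{\sfe}$, $m=\alpha n$, $\mu=\mu_{\sfe}$ and $x'=\proj_{\sfe}(x)$.

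\textbf{Expanding the left-hand side.} By Notation~\ref{notation:linear_operator} and Definition~\ref{def:Markov_kernel},
\[
\bdP^{\calU,m}_{\mu}[\phi_A](x')=\frac{|\Omega^{\calU,m}|}{|A|}\sum_{\bfy\in A}\frac{1}{|\calM_{\calU,m}|}\prod_{e\in\supp(\bfy)}\mu\big(x'_{|e}-\bfy(e)\big).
\]
Since $x'_{|e}=x_{|e}$ for every $e\in\tprod\calU$, each product can be written in terms of $x$. I would also use the count $|\Omega^{\calU,m}|=|\calM_{\calU,m}|\cdot N^{km}$, so the prefactor simplifies to $N^{km}/|A|$.

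\textbf{Splitting the sum over $A$.} The images of the embeddings $\fraki_{M,\bfz}$, over $M\in\calM_{\calU,m,\bfz}$, partition $\Omega^{\calU,m}_{\bfz}\supseteq A$. So every $\bfy\in A$ is uniquely $\fraki_{M,\bfz}(\bdxi)$ with $\bdxi\in\fraki_{M,\bfz}^{-1}(A)$. For such $\bfy$, the set $\supp(\bfy)$ is the disjoint union of $\supp(\bfz)$ and $M$, and the product splits as
\[
\prod_{e\in\supp(\bfz)}\mu\big(x_{|e}-\bfz(e)\big)\cdot\prod_{e\in M}\mu\big(x_{|e}-\bdxi(e)\big).
\]
The first factor equals $N^{-k|\supp(\bfz)|}g_{\sfe,\bfz}(x)$, and the second is $\bfR^{\calV\times[n],M}_{\mu}(x,\bdxi)$.

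\textbf{Matching the right-hand side.} Unfolding $\bdR$ with the density $\phi_{B_M}$, where $B_M:=\fraki_{M,\bfz}^{-1}(A)$, gives
\[
\frac{|B_M|}{|A|}\,\bdR^{\calV\times[n],M}_{\mu}[\phi_{B_M}](x)=\frac{N^{k(m-|\supp(\bfz)|)}}{|A|}\sum_{\bdxi\in B_M}\bfR^{\calV\times[n],M}_{\mu}(x,\bdxi).
\]
This uses $|\Map{M}{\ZNk}|=N^{k|M|}$ with $|M|=m-|\supp(\bfz)|$. Comparing constants, the left-hand side contributes $N^{km}\cdot N^{-k|\supp(\bfz)|}/|A|$ per term, which is exactly $N^{k(m-|\supp(\bfz)|)}/|A|$, so both sides agree term by term.

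The only points needing care are the two cardinality identities, $|\Omega^{\calU,m}|=|\calM_{\calU,m}|N^{km}$ and $|\Map{M}{\ZNk}|=N^{k(m-|\supp(\bfz)|)}$. I would also note that if $B_M$ is empty, its term on the right is zero, so leaving $\phi_{B_M}$ undefined there causes no problem.
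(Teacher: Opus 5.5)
Your proposal is correct and follows essentially the same route as the paper: both expand $\bdP^{\calU_{\sfe},\alpha n}_{\mu_{\sfe}}[\phi_A]$ via the explicit kernel formula $\bfP^{\calU_{\sfe},\alpha n}_{\mu_{\sfe}}(\proj_{\sfe}(x),\bfy)=N^{k\alpha n}\,|\Omega^{\calU_{\sfe},\alpha n}|^{-1}\prod_{e\in\supp(\bfy)}\mu_{\sfe}(x_{|e}-\bfy(e))$. Both then split each product over $\supp(\bfz)$ and $M=\supp(\bfy)\setminus\supp(\bfz)$ using the partition given by the embeddings $\fraki_{M,\bfz}$, and match the constant $N^{k(\alpha n-|\supp(\bfz)|)}$ against the unfolded right-hand side. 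Your remarks are also accurate: the paper's argument likewise uses only $A\subseteq\Omega^{\calU_{\sfe},\alpha n}_{\bfz}$ and never the globalness of $A$, and terms with an empty preimage contribute zero.
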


\begin{proof}
For elements $x\in \ZmodN^{\calV\times [n]}$ and $\bfy\in \Omega^{\calU_{\sfe},m}_{\bfz}$, it is easy to deduce from Definition~\ref{def:Markov_kernel} that
\[
\bfP^{\calU_{\sfe},\alpha n}_{\mu_{\sfe}}(\proj_{\sfe}(x),\bfy)=\frac{(N^{k})^{\alpha n}}{\left|\Omega^{\calU_{\sfe},\alpha n}\right|}\prod_{e \in\supp(\bfy)} \mu_{\sfe}\big(x_{|e} - \bfy(e)\big).
\]
Therefore, for any $x\in\ZmodN^{\calV\times [n]}$ we have
\begin{align}
&\quad\bdP^{\calU_{\sfe},\alpha n}_{\mu_{\sfe}}[\phi_{A}]\big(\proj_{\sfe}(x)\big)\nonumber\\
&=\sum_{\bfy\in A}\left(\frac{\left|\Omega^{\calU_{\sfe},\alpha n}\right|}{|A|}\cdot \frac{(N^{k})^{\alpha n}}{\left|\Omega^{\calU_{\sfe},\alpha n}\right|}\prod_{e \in\supp(\bfy)} \mu_{\sfe}\big(x_{|e} - \bfy(e)\big)\right)\nonumber\\
&=\Exu{\bfy\in A}{\prod_{e\in \supp(\bfz)}\left(N^{k}\mu_{\sfe}\big(x_{|e}-\bfz(e)\big)\right)\cdot \prod_{e\in \supp(\bfy)\setminus \supp(\bfz)}\left(N^{k}\mu_{\sfe}\big(x_{|e}-\bfy(e)\big)\right)}\nonumber\\
&=N^{k(\alpha n-|\supp(\bfz)|)}\cdot g_{\sfe,\bfz}(x)\sum_{M}\Exu{\bfy\in A}{\ind{\supp(\bfy)\setminus\supp(\bfz)=M}\cdot\bfR^{\calV\times [n],\,M}_{\mu_{\sfe}}\left(x,\fraki_{M,\bfz}^{-1}(\bfy)\right)},\label{eq:separating_lemma_LHS}
\end{align}
where the sum on the last line is over all matchings $M\in \calM_{\calU_{\sfe},\alpha n,\bfz}$.

On the other hand, for any $x\in\ZmodN^{\calV\times [n]}$ and any $M\in\calM_{\calU_{\sfe},\alpha n,\bfz}$, we have
\begin{align}
\bdR^{\calV\times [n],\,M}_{\mu_{\sfe}}\left[\phi_{\fraki_{M,\bfz}^{-1}(A)}\right](x)&=\sum_{\bdxi\in\fraki_{M,\bfz}^{-1}(A)}\left(\frac{(N^{k})^{|M|}}{\left|\fraki_{M,\bfz}^{-1}(A)\right|}\bfR^{\calV\times [n],\,M}_{\mu_{\sfe}}(x,\bdxi)\right)\nonumber\\
&=N^{k(\alpha n-|\supp(\bfz)|)}\cdot\Exu{\bdxi\in\fraki_{M,\bfz}^{-1}(A)}{\bfR^{\calV\times [n],\,M}_{\mu_{\sfe}}(x,\bdxi)}.\label{eq:separating_lemma_RHS}
\end{align}
Comparing the right-hand sides of \eqref{eq:separating_lemma_LHS} and \eqref{eq:separating_lemma_RHS} immediately yields the conclusion.
\end{proof}

We remark that on the right-hand side of \eqref{eq:separating_structure_from_randomness}, the first factor $g_{\sfe,\bfz}(x)$ is the structured part, and the sum over $M$ is the pseudorandom part.

\subsection{Interlude: Controlling Fourier Growth}\label{subsec:fourier_growth}

As mentioned earlier, the key idea of~\cite{KK19} is to maintain suitable Fourier $\ell^{1}$-type control under taking products of functions. To this end, they introduce the notion of \emph{$(C,s^*)$-boundedness}, which imposes a carefully calibrated upper bound on the Fourier $\ell^{1}$-norm of the degree-$d$ component of a function, separately for each $d$. 

For fixed parameters $C$ and $s^*$, the bound at level $d$ takes the following form:
\begin{definition}[{\cite[Definition 4.3]{KK19}}]\label{def:Fourier_growth_bound}
Let $n,d$ be integers such that $1\leq d\leq n$, and let $C,s^*$ be positive real numbers. We define a value $F_{C}(n,d,s^*)$ by
\[
F_{C}(n,d,s^*)=\left(\frac{C\sqrt{n\cdot\max\{s^*,d\}}}{d}\right)^{d/2}=\begin{cases}
\left(C\sqrt{s^*n}/d\right)^{d/2},&\text{if }d\leq s^*,\\
\left(C^{2}n/d\right)^{d/4},&\text{if }d>s^*.
\end{cases}
\]
\end{definition}

We record the following two simple observations about the Fourier growth bound $F_{C}(n,\ell,s^*)$.
\begin{proposition}\label{prop:Fourier_bound_non_decreasing}
The function $F_{C}(n,d,s^*)$ is non-decreasing in the first and third variables.
\end{proposition}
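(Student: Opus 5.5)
Monotonicity in $n$ and in $s^*$ can be checked directly from the closed form in Definition~\ref{def:Fourier_growth_bound}. No earlier results are needed.

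\emph{Dependence on $n$.} Fix $d$, $C$ and $s^*$. The base of the power is $C\sqrt{n\cdot\max\{s^*,d\}}/d$. It is positive, since $C, s^*, d > 0$ and $n\geq 1$, and it is non-decreasing in $n$ because $\sqrt{n}$ is. The exponent $d/2$ is a positive constant, and $t\mapsto t^{d/2}$ is non-decreasing on $[0,\infty)$. Hence $F_C(n,d,s^*)$ is non-decreasing in $n$.

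\emph{Dependence on $s^*$.} Fix $n$, $d$ and $C$. The quantity $\max\{s^*,d\}$ is non-decreasing in $s^*$, and so is its square root. Consequently the base is non-decreasing in $s^*$. Raising to the fixed positive power $d/2$ preserves this.

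The piecewise form in the definition could obscure matters at the breakpoint $s^*=d$, but it causes no trouble. Working with the $\max$ form avoids any case analysis. Alternatively, one checks that both branches agree at $s^*=d$, each equal to $(C^2n/d)^{d/4}$. For $s^*>d$ the expression $(C\sqrt{s^*n}/d)^{d/2}$ is increasing in $s^*$, and for $s^*\leq d$ the expression $(C^2n/d)^{d/4}$ does not depend on $s^*$.

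There is no real obstacle here. The only point to take care of is the positivity of the base, which holds because $C, s^*, d > 0$ and $n\geq 1$; this is what allows the monotonicity of $t\mapsto t^{d/2}$ to be applied.
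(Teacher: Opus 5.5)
Your proof is correct and is the direct verification the paper intends; the paper records this as a simple observation and gives no proof. It follows immediately from the $\max$ form in Definition~\ref{def:Fourier_growth_bound}, since the base $C\sqrt{n\max\{s^*,d\}}/d$ is non-decreasing in both $n$ and $s^*$, and the exponent $d/2$ is a fixed positive number.
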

\begin{proposition}\label{prop:Fourier_growth_crude_bound}
Let $n,d$ be integers such that $1\leq d\leq n$, and let $C,s^*$ be positive real numbers. We always have
\[
F_{C}(n,d,s^*)\leq 2^{s^*/4}\left(\frac{C^{2}n}{d}\right)^{d/4}.
\]
\end{proposition}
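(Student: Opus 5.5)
We need to prove $F_C(n,d,s^*)\le 2^{s^*/4}(C^2n/d)^{d/4}$, and we split into the two cases of Definition~\ref{def:Fourier_growth_bound}.

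\emph{Case $d>s^*$.} Here $F_{C}(n,d,s^*)=(C^{2}n/d)^{d/4}$, and since $2^{s^*/4}\ge 1$ the bound is immediate.

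\emph{Case $d\le s^*$.} Here $F_C(n,d,s^*)=(C\sqrt{s^*n}/d)^{d/2}$. We compare it with $(C^2n/d)^{d/4}=(C\sqrt{n/d})^{d/2}$ by taking the ratio:
\[
\frac{F_{C}(n,d,s^*)}{(C^{2}n/d)^{d/4}}=\left(\frac{\sqrt{s^*n}/d}{\sqrt{n/d}}\right)^{d/2}=\left(\frac{s^*}{d}\right)^{d/4}.
\]
So it suffices to show $(s^*/d)^{d/4}\le 2^{s^*/4}$, i.e.\ $(s^*/d)^{d}\le 2^{s^*}$. Write $t=s^*/d\ge 1$. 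The claim becomes $t^{d}\le 2^{td}$, i.e.\ $t\le 2^{t}$, which holds for all real $t\ge 1$, indeed for all real $t$.

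There is no real obstacle. The only care needed is in the algebraic simplification of the ratio, checking that the powers of $C$ and $n$ cancel exactly. They do: both expressions are raised to the power $d/2$ with base $C\sqrt n$ times a factor depending only on $s^*$ and $d$.
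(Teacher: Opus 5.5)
Your proof is correct and follows the paper's argument: the same case split, with the case $d>s^*$ immediate from the definition and the case $d\le s^*$ reducing to $s^*/d\le 2^{s^*/d}$. Your explicit computation of the ratio $(s^*/d)^{d/4}$ is the detail the paper leaves implicit.
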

\begin{proof}
In the case $d>s^*$, this directly follows from Definition~\ref{def:Fourier_growth_bound}. In the case $d\leq s^*$, the conclusion follows by simply using the inequality $s^*/d\leq 2^{s^*/d}$. 
\end{proof}

We now introduced a slightly generalized version of the $(C,s^*)$-boundedness notion in \cite[Definition 4.3]{KK19}. Recall from Section~\ref{subsec:general_notations} that the notation $\|\cdot\|_{\sfW}$ stands for the Fourier $\ell^{1}$-norm of a function.

\begin{definition}\label{def:Cs_star_bounded}
    Fix parameters $n,C,s^*\geq 1$ and $\delta\geq 0$. Given a finite set $\Lambda$ and a function $f: \ZmodN^{\Lambda}\rightarrow [0,+\infty)$, we say that $f$ is $(n,C,s^*,\delta)$-bounded if it satisfies the following conditions:
    \begin{enumerate}[label=(\arabic*)]
        \item The expected value of $f$ lies in the range $[1-\delta,1+\delta]$, i.e. $1-\delta\leq \Ex{f}\leq 1+\delta$. 
        \item For any integer $d$ such that $1\leq d\leq  C^{-2}n$, we have $\left\|f^{=d}\right\|_{\sfW}\leq F_{C}(n,d,s^*)$.
        \item The maximum value of $f$ is at most $2^{s^*}$, i.e. $\|f\|_{\infty}\leq 2^{s^*}$.
    \end{enumerate}
\end{definition}

Although the second condition in Definition~\ref{def:Cs_star_bounded} imposes upper bounds only on the first $\lfloor C^{-2}n\rfloor +1$ levels, the following proposition shows that some control over higher levels can be obtained via the third condition.

\begin{proposition}\label{prop:Cs_star_high_degree_bound}
Fix parameters $n,C,s^*\geq 1$ and $\delta\geq 0$. For any $f:\ZmodN^{\Lambda}\rightarrow[0,+\infty)$ that is $(n,C,s^*,\delta)$-bounded, we have
\[
\left\|f^{=d}\right\|_{\sfW}\leq 2^{s^*/2}\left(\frac{3N|\Lambda|}{d}\right)^{d/2}
\]
for any nonnegative integer $d$.
\end{proposition}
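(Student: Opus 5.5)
The plan is to forget the level-wise bounds in condition (2) of Definition~\ref{def:Cs_star_bounded} and use only conditions (1) and (3), via Cauchy--Schwarz plus Parseval. The Fourier $\ell^{1}$-norm of $f^{=d}$ is a sum over the characters $\chi_b$ with $|\supp(b)|=d$. Let $T_d$ denote the number of such characters. By Cauchy--Schwarz,
\[
\left\|f^{=d}\right\|_{\sfW}\leq \sqrt{T_d}\cdot\Big(\sum_{|\supp(b)|=d}|\widehat f(b)|^{2}\Big)^{1/2}=\sqrt{T_d}\,\left\|f^{=d}\right\|_{2}.
\]
So it suffices to bound the two factors separately.

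\textbf{Counting characters.} We have $T_d=\binom{|\Lambda|}{d}(N-1)^{d}$. Using $\binom{|\Lambda|}{d}\leq (e|\Lambda|/d)^{d}$, this gives $T_d\leq (eN|\Lambda|/d)^{d}$. Hence $\sqrt{T_d}\leq (eN|\Lambda|/d)^{d/2}$, which is $(3N|\Lambda|/d)^{d/2}$ up to a spare factor $(3/e)^{d/2}$. For $d\ge 1$ this spare factor is at least $(3/e)^{1/2}>1$.

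\textbf{Bounding the $L^2$ mass.} By Parseval and nonnegativity of $f$,
\[
\left\|f^{=d}\right\|_{2}^{2}\leq \|f\|_{2}^{2}=\Ex{f^{2}}\leq \|f\|_{\infty}\Ex{f}\leq 2^{s^*}(1+\delta),
\]
using conditions (3) and (1). This yields $\|f^{=d}\|_{\sfW}\leq \sqrt{1+\delta}\,2^{s^*/2}(eN|\Lambda|/d)^{d/2}$.

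\textbf{Absorbing $\sqrt{1+\delta}$ and the case $d=0$.} The only delicate point is removing the factor $\sqrt{1+\delta}$. For $d\geq 1$, I would absorb it into the spare factor $(3/e)^{d/2}$, which works whenever $1+\delta\leq 3/e$. For $d=0$, the statement (with $0^0=1$) reads $\Ex{f}\leq 2^{s^*/2}$. This follows from $\Ex{f}\leq 1+\delta\leq \sqrt2\leq 2^{s^*/2}$, using $s^*\geq 1$.

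\textbf{Main obstacle.} Both steps need $\delta$ small, roughly $\delta\leq 0.1$, which is the regime in which the boundedness notion is used. If the proposition is meant for arbitrary $\delta$, I would try to sharpen the $L^2$ step for $d\ge1$ to $\|f\|_2^2-(\Ex f)^2$. If that is still not enough, I would state the (harmless) restriction $\delta\le 0.1$ explicitly. The rest is routine.
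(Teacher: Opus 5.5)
Your argument is the paper's argument (Cauchy--Schwarz over the level-$d$ characters, a count of those characters, and an $L^{2}$ bound) except at the $L^{2}$ step, and there your version is the correct one.

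\textbf{The $L^{2}$ step.} The paper uses only condition (3), bounding $\|f\|_{2}\le\|f\|_{\infty}\le 2^{s^*}$. That yields $2^{s^*}\bigl(3N|\Lambda|/d\bigr)^{d/2}$, not the stated $2^{s^*/2}\bigl(3N|\Lambda|/d\bigr)^{d/2}$. Getting the exponent $s^*/2$ requires condition (1) as well, through $\|f\|_{2}^{2}\le\|f\|_{\infty}\Ex{f}\le 2^{s^*}(1+\delta)$, exactly as you do. The slip is harmless downstream: the proposition is only invoked at levels $t\ge s^*$, where an extra factor $2^{s^*/2}\le 2^{t/2}$ only changes constants.

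\textbf{The restriction on $\delta$.} Your caution is justified: the statement is false for arbitrary $\delta\ge 0$.
\begin{itemize}
\item At $d=0$ it asserts $\Ex{f}\le 2^{s^*/2}$. This fails for $f\equiv 3/2$ with $s^*=1$ and $\delta=1/2$.
\item At $d=1$, take $N=2$, $n=1$, $C=2$, so condition (2) is vacuous. Let $f=2^{s^*-1}(1+\chi_{b})$ with $\|b\|_{\sfH}=1$; it is $(1,2,s^*,2^{s^*-1})$-bounded. Then $\|f^{=1}\|_{\sfW}=2^{s^*-1}$, which exceeds $2^{s^*/2}\sqrt{6|\Lambda|}$ once $2^{s^*}>24|\Lambda|$.
\end{itemize}
So no refinement of the $L^{2}$ step (such as passing to the variance) can rescue the general case, and a restriction on $\delta$ must be stated.

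\textbf{Your threshold is too tight.} Your bound $\delta\le 0.1$ does not cover the uses in Lemmas~\ref{lem:inductive_bound_for_0}, \ref{lem:inductive_bound_for_low} and~\ref{lem:inductive_bound_for_intermediate}. There $\delta$ may be as large as $1/2$, always with $d\ge 1$. Instead of $\binom{|\Lambda|}{d}\le(e|\Lambda|/d)^{d}$, use $\binom{|\Lambda|}{d}\le|\Lambda|^{d}/d!$. Note that $3^{d}d!/d^{d}$ is increasing in $d$ (the ratio of consecutive terms is $3(d/(d+1))^{d}\ge 3/e$) and equals $3$ at $d=1$. This gives $\binom{|\Lambda|}{d}N^{d}(1+\delta)\le(3N|\Lambda|/d)^{d}$ for all $d\ge 1$ whenever $\delta\le 2$, which covers every application. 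The case $d=0$ separately needs $1+\delta\le 2^{s^*/2}$, for instance $\delta\le\sqrt{2}-1$.
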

\begin{proof}
We can simply apply Cauchy-Schwarz inequality to the Fourier 1-norm and calculate as follows:
\begin{align*}
\left\|f^{=d}\right\|_{\sfW}&=\sum_{x\in\ZmodN^{\Lambda},\;\|x\|_{\sfH}=d}\left|\widehat{f}(x)\right|\leq \sqrt{\sum_{x\in\ZmodN^{\Lambda},\;\|x\|_{\sfH}=d}1}\cdot \sqrt{\sum_{x\in\ZmodN^{\Lambda},\;\|x\|_{\sfH}=d}\left|\widehat{f}(x)\right|^{2}}\\
&\leq \sqrt{\binom{|\Lambda|}{d}N^{d}}\cdot \left\|f\right\|_{2}\leq \sqrt{\binom{|\Lambda|}{d}N^{d}}\cdot \left\|f\right\|_{\infty}\leq 2^{s^*/2}\left(\frac{3N|\Lambda|}{d}\right)^{d/2}.\qedhere
\end{align*}
\end{proof}

The next proposition shows that in the notion of $(n,C,s^*,\delta)$-boundedness, small modifications of the first parameter $n$ can be offset by adjustments on the second parameter $C$.
\begin{proposition}\label{prop:boundedness_adjustment}
Fix parameters $n,C,s^*\geq 1$ and $\delta\geq 0$. For any integer $n'$ such that $n/2\leq n'\leq 2n$, any $(n,C,s^*,\delta)$-bounded function is also $(n',2C,s^*,\delta)$-bounded.
\end{proposition}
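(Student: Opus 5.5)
We need to show that for $n/2\le n'\le 2n$, an $(n,C,s^*,\delta)$-bounded $f$ is $(n',2C,s^*,\delta)$-bounded. Conditions (1) and (3) of Definition~\ref{def:Cs_star_bounded} do not involve $n$ or $C$, so they carry over unchanged. The only work is condition (2): for every integer $d$ with $1\le d\le (2C)^{-2}n'$, we must show $\|f^{=d}\|_{\sfW}\le F_{2C}(n',d,s^*)$.

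First we check that the range of $d$ is covered by the hypothesis. Since $n'\le 2n$, we have
\[
(2C)^{-2}n'=\tfrac14 C^{-2}n'\le \tfrac12 C^{-2}n\le C^{-2}n.
\]
So every $d$ in the new range satisfies $1\le d\le C^{-2}n$, and the hypothesis gives $\|f^{=d}\|_{\sfW}\le F_{C}(n,d,s^*)$.

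It remains to compare $F_C(n,d,s^*)$ with $F_{2C}(n',d,s^*)$ directly from Definition~\ref{def:Fourier_growth_bound}. Write $m=\max\{s^*,d\}$. Using $n'\ge n/2$,
\[
F_{2C}(n',d,s^*)=\left(\frac{2C\sqrt{n'm}}{d}\right)^{d/2}\ge\left(\frac{2C\sqrt{nm/2}}{d}\right)^{d/2}=\left(\sqrt2\cdot\frac{C\sqrt{nm}}{d}\right)^{d/2}\ge F_C(n,d,s^*).
\]
The last inequality holds because $\sqrt2\ge1$ and all quantities are positive.

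We also need $F_{C}(n',d,s^*)$ to be defined, i.e. $d\le n'$. This holds because $d\le (2C)^{-2}n'\le n'$, using $C\ge1$.

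There is no real obstacle here. The only points needing care are that the factor $2$ in $2C$ absorbs the factor $1/\sqrt2$ lost from $n'\ge n/2$, and that the new degree range $d\le(2C)^{-2}n'$ lies inside the old range $d\le C^{-2}n$. This uses $n'\le 2n$: the range shrinks by the factor $4$ from doubling $C$ and grows by at most $2$ from $n'$, so the net effect is a shrink by at least $2$.
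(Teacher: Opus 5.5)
Your proposal is correct and follows the paper's own argument: you show the new degree range $d\le(2C)^{-2}n'$ sits inside the old range $d\le C^{-2}n$, and that $F_{2C}(n',d,s^*)\ge F_C(n,d,s^*)$ follows directly from $n'\ge n/2$. The extra checks you include (conditions (1) and (3) carry over unchanged, and $d\le n'$) are details the paper leaves implicit.
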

\begin{proof}
This is because $(2C)^{-2}n'\leq C^{-2}n$ and $F_{2C}(n',d,s^*)\geq F_{C}(n,d,s)$ for any $d$.
\end{proof}
\subsection{Analyzing the Structured Part}\label{subsec:structured}

Recall that Lemma~\ref{lem:relating_yes_no} expresses the ratio $\calD_{\yes}(R)/\calD_{\no}(R)$ as the expectation of a product of $K|\calE|$ functions. By Lemma~\ref{lem:separating_structure_pseudorandom}, each of these functions admits a decomposition into a structured part and a pseudorandom part. Consequently, we are led to analyze a product of $2K|\calE|$ functions.

Our strategy is to first control the Fourier growth of the product of the $K|\calE|$ structure-part functions. We then show that this control does not deteriorate too rapidly as the $K|\calE|$ pseudorandom-part functions are incorporated into the product one at a time.

The product of the $K|\calE|$ structure-part functions, as highlighted in the following definition, is the object of study of this subsection. 

\begin{definition}\label{def:g_bdzeta}
Given a restriction sequence \(\bdzeta = \left(\bfz^{(\sfe, j)}\right)_{(\sfe, j) \in \calE \times [K]}\), we define the product function $g_{\bdzeta}\in L^{2}\big(\ZmodN^{\calV\times [n]}\big)$ by (recalling Definition~\ref{def:structured_function})
\begin{equation}\label{eq:g_full_expansion}
g_{\bdzeta}(x)=\prod_{(\sfe,j)\in\calE\times [K]}g_{\sfe,\bfz^{(\sfe,j)}}(x)=\prod_{(\sfe,j)\in\calE\times [K]}\prod_{e\in \supp(\bfz^{(\sfe,j)})}N^{k}\mu_{\sfe}\big(x_{|e}-\bfz^{(\sfe,j)}(e)\big).
\end{equation}
\end{definition}

The function $g_{\bdzeta}$ defined in Definition~\ref{def:g_bdzeta} satisfies the following nice properties.

\begin{lemma}\label{lem:structure_part_mean}
Fix a $\DIHP(G,n,\alpha,K)$ communication game, where $G=(\calV,\calE,N,(\mu_{\sfe})_{\sfe\in\calE})$. Let $\bdzeta$ be a restriction sequence that is not cyclic (as per Definition~\ref{def:cyclic_res_seq}), and let $V_{1},\dots,V_{t}\subseteq \calV\times [n]$ be the nontrivial connected components of the hypergraph $H_{\bdzeta}$ (as defined in Definition~\ref{def:weight_of_restriction}). Then the following holds:
\begin{enumerate}[label=(\arabic*)]
\item The function $g_{\bdzeta}$ is a probability density function, i.e. $\Ex{g_{\bdzeta}}=1$.
\item For any $b\in \ZmodN^{\calV\times [n]}$, we have $\inp{g_{\bdzeta}}{\chi_{b}}\neq 0$ only if $\supp(b)\subseteq \bigcup_{i\in [t]}V_{i}$ and $|\supp(b)\cap V_{i}|\neq 1$ for any $i\in [t]$. 
\end{enumerate}
\end{lemma}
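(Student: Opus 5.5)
Both parts rest on the non-cyclicity of $\bdzeta$: the edges of $H_{\bdzeta}$ are pairwise distinct across players and form a hyperforest. Write $E=\bigcup_{(\sfe,j)}\supp(\bfz^{(\sfe,j)})$, and for each $e\in E$ let $h_e(x_{|e}):=N^{k}\mu_{\sfe}(x_{|e}-\bfz(e))$, where $(\sfe,j)$ is the unique player whose restriction contains $e$. Then $g_{\bdzeta}=\prod_{e\in E}h_e(x_{|e})$. Since $\mu_{\sfe}$ is one-wise independent, each $h_e$ is a density on $\ZNk$ whose marginal on every single coordinate is uniform. Equivalently, $\widehat{h_e}(c)=0$ whenever $c\in\ZNk$ has exactly one nonzero coordinate, and $\widehat{h_e}(0)=1$.

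For part (1), I would order the edges of each component using the BFS ordering already exhibited in the proof of Lemma~\ref{lem:bounded_growth}. Under this ordering, each edge meets at most one vertex covered by earlier edges; this follows from acyclicity. I then integrate out the edges in reverse order. The last edge $e$ has at least $k-1$ vertices not appearing in any other edge. Summing $x$ over those private coordinates turns $h_e$ into the marginal of $\mu_{\sfe}$ on at most one coordinate, times $N^{k}/N^{k-1}$. By one-wise independence this equals $1$ after normalization. I then remove $e$ and induct, which gives $\Ex{g_{\bdzeta}}=1$.

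For part (2), I expand the product in Fourier:
\[
g_{\bdzeta}=\sum_{(c_e)_{e\in E}}\prod_{e\in E}\widehat{h_e}(c_e)\,\chi_{c_e}(x_{|e}).
\]
Hence $\widehat{g_{\bdzeta}}(b)$ is a sum over families $(c_e)$ with $\sum_e \iota_e(c_e)=b$, where $\iota_e$ places $c_e$ on the vertices of $e$. Only families with each $c_e$ of Hamming weight $0$ or $\geq 2$ contribute. Coordinates outside $\bigcup_i V_i$ are never touched, so $\supp(b)\subseteq\bigcup_i V_i$ is immediate.

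The main point is to show that $|\supp(b)\cap V_i|\ne 1$ for each $i$. Fix a component $V_i$ and the subfamily of edges in it with $c_e\neq0$, which is a sub-hyperforest $F$. I argue that if $F$ is nonempty, the resulting vector $b$ restricted to $V_i$ has at least two nonzero coordinates. To see this, take $F$ and its BFS/leaf structure within each of its own connected pieces. Since $F$ is a hyperforest with at least one edge, some piece has either a single edge, or at least two "leaf" edges. A leaf edge $e$ shares at most one vertex with the other edges of the piece. A single edge contributes $c_e$ of weight at least $2$ directly.

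For a leaf edge $e$, the vertices of $e$ not shared with the rest of the piece carry exactly $c_e$'s entries there. Since $c_e$ has weight at least $2$, at least one such private vertex is nonzero in $b$.

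Having two leaf edges in a piece (the standard fact that a tree with at least two edges has at least two leaves, adapted to hypertrees via the BFS order: the first and last edges in the order are leaves) gives two distinct nonzero coordinates. If $F$ is empty, $b$ vanishes on $V_i$. So every nonvanishing term yields $|\supp(b)\cap V_i|\neq 1$, and hence so does $\widehat{g_{\bdzeta}}(b)\neq0$. The step needing most care is this hypertree leaf argument. In particular I must check that a leaf edge's private vertices are really not covered by other edges of $F$, which holds because $F\subseteq E$ is acyclic.
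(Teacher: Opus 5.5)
Your argument is correct. Part (1) is the same reverse-BFS peeling the paper relies on: it cites \cite[Lemma 7.10]{FMW25b} for (1) and runs exactly this peeling inside its proof of (2).

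For part (2) you take a different route. The paper never expands $g_{\bdzeta}$ in Fourier. After the independence argument for $\supp(b)\not\subseteq\bigcup_i V_i$, it assumes $\supp(b)\cap V_1=\{v^*\}$ and orders the edges in $V_1$ by a BFS started at $v^*$. It then peels these edges off inside $\mathbb{E}[\overline{\chi_b}\,g_{\bdzeta}]$; each conditional expectation, given $x_{v^*}$ and the earlier edges, equals $1$ by one-wise independence. Afterwards $x_{v^*}$ occurs only in $\chi_b$, so the expectation is $0$.

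You instead write $\widehat{g_{\bdzeta}}(b)$ as a sum over families $(c_e)$ with every $c_e$ of weight $0$ or $\ge 2$, and prove a purely combinatorial statement about hyperforests. This is precisely the strategy the paper adopts later for the two-wise case, in Section~\ref{subsubsec:structured_two_wise} (the sets $\calB(E)$, identity \eqref{eq:two_wise_structured_convolution}, and Lemma~\ref{lem:ell_at_least_t_and_r}). Your version gives (1) for free, since only the all-zero family sums to $0$, and it carries over verbatim to weight $\ge 3$. The paper's peeling proof is shorter and needs no structural fact about hypertrees beyond the BFS order.

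One parenthetical needs repair: the first edge of a BFS order need not be a leaf. If you start the BFS inside the middle edge of a path of three edges, that middle edge may be discovered first, and it meets the other two edges at distinct vertices. The last edge of a BFS order is a leaf, so you can rerun the BFS from a private vertex of it. That leaf is then discovered first, and the new last edge is a second, distinct leaf.

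Alternatively, use the degree count from Lemma~\ref{lem:ell_at_least_t_and_r}. A connected acyclic piece with $r$ edges has exactly $(k-1)r+1$ vertices, hence at least $(k-2)r+2$ vertices of degree one. At each such vertex, $b$ equals the corresponding entry of the unique $c_e$ containing it, and the weight-$\ge 2$ vectors $c_e$ vanish on at most $(k-2)r$ vertex positions in total. So at least two of these coordinates of $b$ are nonzero.
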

\begin{proof}
The first statement is identical to \cite[Lemma 7.10]{FMW25b}, so we omit its proof. We prove the second statement below.

Suppose $b\in \ZmodN^{\calV\times [n]}$ is a character index such that $\supp(b)\not\subseteq \bigcup_{i\in [t]}V_{i}$. Then it is easy to see that when $x$ is a uniformly random vector in $\ZmodN^{\calV\times [n]}$, the restriction of $x$ to $\bigcup_{i\in [t]}V_{i}$ is independent with $\sum_{v\in \calV\times [n]}b_{v}x_{v}$ (which is uniformly distributed in $\ZmodN$). Since $g_{\bdzeta}(x)$ only depends on the restriction of $x$ to $\bigcup_{i\in [t]}V_{i}$ and $\chi_{b}(x)$ only depends on $\sum_{v\in \calV\times [n]}b_{v}x_{v}$, it follows that $g_{\bdzeta}(x)$ and $\chi_{b}(x)$ are independent, and hence we have \[\inp{g_{\bdzeta}}{\chi_{b}}=\Exu{x}{g_{\bdzeta}(x)\overline{\chi_{b}(x)}}=\Exu{x}{g_{\bdzeta}(x)}\Exu{x}{\overline{\chi_{b}(x)}}=0.\]

In the rest of the proof, we assume that $\supp(b)\subseteq \bigcup_{i\in [t]}V_{i}$ and (without loss of generality) $|\supp(b)\cap V_{1}|=1$. 

We write $\bdzeta=\big(\bfz^{(\sfe,j)}\big)_{(\sfe,j)\in \calE\times [K]}$. Let $E$ be the edge set of $H_{\bdzeta}$ and let $E_{1}$ be the collection of edges in $E$ that lie in $V_{1}$. For each edge $e\in E$, we let $\langle e\rangle$ be the original hyperedge $\sfe\in \calE$ such that $e\in \tprod \calU_{\sfe}$. Since the edge sets $\supp(\bfz^{(\sfe,j)})$ are pairwise disjoint as $(\sfe,j)$ ranges in $\calE\times [K]$, we can define a map $\widetilde{\bfz}:E\rightarrow\ZNk$ by letting $\widetilde{\bfz}(e)=\bfz^{(\sfe,j)}(e)$ for each $(\sfe,j)\in \calE\times [K]$ and each $e\in \supp(\bfz^{(\sfe,j)})$. Now we can write
\begin{equation}\label{eq:structure_part_rewrite}
g_{\bdzeta}(x)=\prod_{e\in E}N^{k}\mu_{\langle e\rangle}\Big(x_{|e}-\widetilde{\bfz}(e)\Big).
\end{equation}

Let $v^*$ be the single element of $\supp(b)\cap V_{1}$. We run a breadth-first search (BFS) on the hypergraph $H_{\bdzeta}$ starting at the vertex $v^*$, and rank all edges in $E_{1}$ by the time they are discovered in the BFS.\footnote{The BFS argument has been used in the proofs of Lemma~\ref{lem:bounded_growth} and \cite[Lemma 7.10]{FMW25b}.} This yields a total order $\prec$ on $E_{1}$ such that each hyperedge $e\in E_{1}$ is incident to at most one vertex that is covered by some hyperedge preceding $e$ in the order. Indeed, if some hyperedge $e$ violates this condition, then by the time the BFS discovers $e$, it has also found a cycle of distinct hyperedges $e=e_{0},e_{1},e_{2},\dots,e_{\ell-1}\in E_{1}$ and distinct vertices $v_{0},v_{1},v_{2},\dots,v_{\ell-1}\in V_{1}$ such that $e_{i}$ is incident to both $v_{i}$ and $v_{i+1\pmod \ell}$, for each $i\in \{0,1,\dots,\ell-1\}$. The $\ell$ hyperedges $e_{0},\dots,e_{\ell-1}$ thus together cover at most $\ell(k-1)$ vertices, violating the assumption that $H_{\bdzeta}$ is cycle-free.

For notational convenience, we extend the total order $\prec$ on $E_{1}$ to $E$, by letting $e'\prec e$ for all edges $e'\in E\setminus E_{1}$ and $e\in E_{1}$. Using the total order on $E$, we now analyze the expectation $\Exs{x}{g_{\bdzeta}(x)\overline{\chi_{b}(x)}}$. For any $e \in E_{1}$, we have:
\begin{align}
&\quad \Exu{x\in \ZmodN^{\calV \times [n]}} {\overline{\chi_{b}(x)}\prod_{e' \preceq e} N^k \mu_{\langle e' \rangle} \Big( x_{|e'} - \widetilde{\bfz}(e') \Big)}\label{eq:product_at_most_e}\\
&= \Exu{x} {\overline{\chi_{b}(x)}\prod_{e' \prec e} N^k \mu_{\langle e' \rangle} \Big( x_{|e'} - \widetilde{\bfz}(e') \Big) \cdot 
\Exu{\widetilde{x}}{ N^k \mu_{\langle e \rangle} \Big( x_{|e} - \widetilde{\bfz}(e) \Big) \,\middle|\,x_{v^*},\left(x_{|e'} \right)_{e' \prec e}}}. \nonumber
\end{align}
Note that by our choice of the acyclic ordering $\prec$, we know that conditioning on $x_{v^*}$ and $(\widetilde{x}_{\mid e'})_{e'\prec e}$ only fixes at most one coordinate of the coordinates in $e$. If no coordinate is fixed, it is easy to see that the inner conditional expectation evaluates to $1$. Otherwise, suppose the $i$-th vertex $v$ of $e$ is fixed to $x_{v}=c\in \ZmodN$ by the conditioning. In this case, the inner conditional expectation equals
\begin{align*}\Exu{x}{ N^k \mu_{\langle e \rangle} \Big( x_{|e} - \widetilde{\bfz}(e) \Big) \,\middle|\, x_{v^*},\left( x_{|e'} \right)_{e' \prec e}} = N\cdot \sum_{z\in \bZ_N^k,\, z_{i} = c} \mu_{\langle e\rangle}\left(z - \widetilde{\bfz}(e)\right) = N\cdot \frac{1}{N} =1 .
\end{align*}
due to the one-wise independence of $\mu_{\langle e \rangle}$. Therefore, the expectation in \eqref{eq:product_at_most_e} remains unchanged when the range of the product is changed from $e'\preceq e$ to $e'\prec e$. 

By applying this argument recursively --- removing the maximal element under $\prec$ at each step, we can derive that
\begin{equation}\label{eq:structure_part_remove_component}
\Exu{x\in \ZmodN^{\calV \times [n]}} { \overline{\chi_{b}(x)}\prod_{e\in E} N^k \mu_{\langle e\rangle} \Big( x_{|e} - \widetilde{\bfz}(e) \Big)}=\Exu{x\in \ZmodN^{\calV \times [n]}} {\overline{\chi_{b}(x)}\prod_{e\in E\setminus E_{1}} N^k \mu_{\langle e \rangle} \Big( x_{|e} - \widetilde{\bfz}(e) \Big)}.
\end{equation}
Since $v^*\in \supp(b)$, when $x$ is a uniformly random vector in $\ZmodN^{\calV\times [n]}$, the restriction of $x$ to $\bigcup_{i=2}^{t}V_{i}$ is independent with $\chi_{b}(x)$. Similarly to the argument in the second paragraph of this proof, we conclude that the right-hand side of \eqref{eq:structure_part_remove_component} equals 0. Since the left-hand side of \eqref{eq:structure_part_remove_component} equals $\langle g_{\bdzeta},\chi_{b}\rangle$, it follows that $\langle g_{\bdzeta},\chi_{b}\rangle=0$.
\end{proof}

To analyze the Fourier growth of $g_{\bdzeta}$, we are interested in knowing how many level-$\ell$ Fourier coefficients of $g_{\bdzeta}$ can be nonzero, for a given integer $\ell$. This task is reduced by the second statement of Lemma~\ref{lem:structure_part_mean} to the following purely set-theoretic question: how many $\ell$-element subsets $S\subseteq \bigcup_{i\in [t]}V_{i}$ satisfy $|S\cap V_{i}|\neq 1$ for all $i$, where $V_{1},\dots,V_{t}$ are pairwise disjoint finite sets? The next lemma provides an answer to this question.

\begin{lemma}\label{lem:no_size_1_by_induction}
Suppose $V_{1},\dots,V_{t}$ are pairwise disjoint finite sets such that $|V_{i}|\geq 2$ for any $i\in [t]$. Then for any integer $\ell\geq 2$, the number of subsets $S\subseteq \bigcup_{i\in [t]}V_{i}$ such that $|S|=\ell$ and $|S\cap V_{i}|\neq 1$ for any $i$ is at most
\[
\left(\frac{20}{\ell}\big(|V_{1}|^{2}+\dots+|V_{t}|^{2}\big)\right)^{\ell/2}.
\]
\end{lemma}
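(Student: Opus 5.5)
We count the admissible sets $S$ with a generating function and then extract the coefficient by a saddle-point (Chernoff-type) bound. Write $a_i=|V_i|$ and $W=\sum_i a_i^2$. An admissible $S$ is a choice, for each $i$, of a subset $S_i\subseteq V_i$ with $|S_i|\in\{0\}\cup\{2,3,\dots,a_i\}$, where $\sum_i|S_i|=\ell$. Hence the count equals the coefficient of $z^\ell$ in
\[
P(z)=\prod_{i=1}^{t}\Big((1+z)^{a_i}-a_i z\Big).
\]
Since all coefficients of $P$ are nonnegative, for every $z>0$ this coefficient is at most $P(z)/z^{\ell}$.

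The key step is a per-factor estimate of the form $(1+z)^{a}-az\le \exp(c\,a^2z^2)$ for an absolute constant $c$, valid in a suitable range of $z$. Two regimes should be handled.
\begin{itemize}
\item When $az\le 1$: we have $(1+z)^a-az\le e^{az}-az$. Since $e^{u}-u=1+\sum_{j\ge2}u^j/j!\le 1+u^2\le e^{u^2}$ for $0\le u\le 1$, this gives $(1+z)^a-az\le e^{a^2z^2}$.
\item When $az>1$: the bound $(1+z)^a\le e^{az}\le e^{a^2z^2}$ is immediate, since $az\le (az)^2$.
\end{itemize}
So in fact $(1+z)^{a}-az\le e^{a^2z^2}$ for all $z>0$ and $a\ge 2$, with $c=1$. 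Multiplying over $i$ gives
\[
P(z)\le \exp(W z^2).
\]

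Therefore the number of admissible sets is at most $\exp(Wz^2)\,z^{-\ell}$ for every $z>0$. We minimize by choosing $z=\sqrt{\ell/(2W)}$, which yields
\[
e^{\ell/2}\Big(\frac{2W}{\ell}\Big)^{\ell/2}=\Big(\frac{2eW}{\ell}\Big)^{\ell/2}.
\]
Since $2e<20$, this is at most $\big(\tfrac{20}{\ell}W\big)^{\ell/2}$, which is the claimed bound.

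The main obstacle is getting the per-factor inequality uniformly, so that the constant does not depend on $a_i$ or on the range of $z$. The elementary inequality $e^u-u\le e^{u^2}$ handles all $u\ge0$: for $u\ge1$ it follows from $e^u\le e^{u^2}$, and for $u\le 1$ it follows from the series bound $\sum_{j\ge2}u^j/j!\le u^2(e-2)\le u^2$. We also need $(1+z)^a-az\le e^{az}-az$, which holds because $(1+z)^a\le e^{az}$. Note that the hypotheses $|V_i|\ge 2$ and $\ell\ge 2$ are not essential to the argument; they only ensure the bound is meaningful.
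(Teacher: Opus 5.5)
Your proof is correct, and it reaches the bound by a different and somewhat cleaner route than the paper.

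\textbf{The paper's route.} It works with the multivariate expansion of $(a_1x_1+\dots+a_tx_t)^{\ell}$. It bounds the total coefficient mass of the singleton-free monomials by $(2\ell W)^{\ell/2}$, via induction on $t$: normalize $a_2^2+\dots+a_t^2=1$ and apply the finite inequality $(1+x)^{\ell}-\ell x\le(1+2\ell x^{2})^{\ell/2}$ of Proposition~\ref{prop:basic_calculus}. It then divides by $\ell!\ge(\ell/3)^{\ell}$, which implicitly uses $\binom{a}{s}\le a^{s}/s!$. This gives the constant $18$, which is rounded to $20$.

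\textbf{Your route.} You count exactly with the ordinary generating function $\prod_i\big((1+z)^{a_i}-a_iz\big)$. Its product structure absorbs the disjoint components at once, so no induction on $t$ is needed. You then extract the coefficient with the saddle-point bound $[z^{\ell}]P\le P(z)\,z^{-\ell}$. The only analytic input is $e^{u}-u\le e^{u^{2}}$, which plays the role of the paper's inequality on $(1+x)^{\ell}-\ell x$. The payoff is a shorter argument and the sharper constant $2e$.

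\textbf{Verification.} I checked each step:
\begin{itemize}
\item The coefficients of $P$ are products of binomials, hence nonnegative, so the saddle-point bound applies.
\item Each factor is positive for $z>0$, so the per-factor bounds may be multiplied.
\item $(1+z)^{a}\le e^{az}$ holds for $a\ge 0$ and $z>0$.
\item The two regimes $u\le 1$ and $u\ge 1$ correctly establish $e^{u}-u\le e^{u^{2}}$.
\item $W\ge 4>0$ whenever $t\ge 1$, so $z=\sqrt{\ell/(2W)}$ is admissible. When $t=0$ the count is trivially $0$ for $\ell\ge 2$.
\end{itemize}
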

\begin{proof}
In the polynomial ring $\bR[x_{1},\dots,x_{t}]$, a monomial is said to be singleton-free if the degree of any variable is not 1. For a polynomial $f\in \bR[x_{1},\dots,x_{t}]$, let $\mathsf{SF}(f)$ denote the sum of the coefficients of all singleton-free monomials in the expansion of $f$. We will show that
\begin{equation}\label{eq:generating_polynomial}
\mathsf{SF}\big((a_{1}x_{1}+\dots+a_{t}x_{t})^{\ell}\big)\leq \big(2\ell\cdot(a_{1}^{2}+\dots+a_{t}^{2})\big)^{\ell/2}\qquad\text{for any }a_{1},\dots,a_{t}\in \bR.
\end{equation}
Once we have \eqref{eq:generating_polynomial}, it is easy to see that the number of subsets $S\subseteq \bigcup_{i\in [t]}V_{i}$ such that $|S|=\ell$ and $|S\cap V_{i}|\neq 1$ for any $i$ is at most
\[
\frac{1}{\ell!}\cdot\mathsf{SF}\Big(\big(|V_{1}|x_{1}+\dots+|V_{t}|x_{t}\big)^{\ell}\Big)\leq \frac{\left(2\ell\cdot\big(|V_{1}|^{2}+\dots+|V_{t}|^{2}\big)\right)^{\ell/2}}{(\ell/3)^{\ell}}\leq \left(\frac{20}{\ell}\big(|V_{1}|^{2}+\dots+|V_{t}|^{2}\big)\right)^{\ell/2}.
\]

We prove \eqref{eq:generating_polynomial} by induction on $t$. For $t=1$ the inequality clearly holds, and we next assume $t\geq 2$ and that \eqref{eq:generating_polynomial} holds for smaller values of $t$. Without loss of generality assume $a_{1},\dots,a_{t}\geq 0$. If $a_{i}=0$ for all $i\geq 2$ then the desired inequality trivially holds, so we can assume the normalization $a_{2}^{2}+\dots+a_{t}^{2}=1$ (noting that \eqref{eq:generating_polynomial} is homogeneous in the $a_{i}$'s). We have
\begin{align*}
\mathsf{SF}\big((a_{1}x_{1}+\dots+a_{t}x_{t})^{\ell}\big)&= \mathsf{SF}\big((a_{2}x_{2}+\dots+a_{t}x_{t})^{\ell}\big)+\sum_{r=2}^{\ell}\binom{\ell}r{}a_{1}^{r}\cdot\mathsf{SF}\big((a_{2}x_{2}+\dots+a_{t}x_{t})^{\ell-r}\big)\\
&\leq (2\ell)^{\ell/2}+\sum_{r=2}^{\ell}\binom{\ell}r{}a_{1}^{r}\cdot(2\ell)^{(\ell-r)/2}\tag{by the induction hypothesis}\\
&=(2\ell)^{\ell/2}\left(\left(1+\frac{a_{1}}{\sqrt{2\ell}}\right)^{\ell}-\ell\cdot\frac{a_{1}}{\sqrt{2\ell}}\right)\\
&\leq \big(2\ell\cdot (1+a_{1}^{2})\big)^{\ell/2},\tag{using Proposition~\ref{prop:basic_calculus}}
\end{align*}
as desired.
\end{proof}

We are now ready to complete our analysis of the function $g_{\bdzeta}$ using the framework of Section~\ref{subsec:fourier_growth}.

\begin{lemma}\label{lem:structured_part}
    Fix a $\DIHP(G,n,\alpha,K)$ communication game, where $G=(\calV,\calE,N,(\mu_{\sfe})_{\sfe\in \calE})$. Let $\bdzeta$ be a restriction sequence that is not cyclic. If $\|\bdzeta\|\leq \gamma n$ for some constant $\gamma\in (0,1)$, then the function $g_{\bdzeta}$ (defined in Definition~\ref{def:g_bdzeta}) is $\big(n,20N^{2},\gamma n\log_{2} N,0\big)$-bounded. 
\end{lemma}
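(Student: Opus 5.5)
The plan is to check the three conditions of Definition~\ref{def:Cs_star_bounded} one at a time, with $C=20N^{2}$, $s^*=\gamma n\log_{2}N$ and $\delta=0$. Throughout, let $V_{1},\dots,V_{t}$ be the nontrivial components of $H_{\bdzeta}$ and $E$ its edge set.

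\textbf{Mean.} Condition (1) with $\delta=0$ is exactly Lemma~\ref{lem:structure_part_mean}(1), since $\bdzeta$ is not cyclic.

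\textbf{Maximum.} For condition (3), each factor in \eqref{eq:g_full_expansion} is at most $N^{k}$, so $\|g_{\bdzeta}\|_{\infty}\leq N^{k|E|}$. It therefore suffices to show $k|E|\leq \|\bdzeta\|$.
\begin{itemize}
\item Since $H_{\bdzeta}$ is cycle-free, the BFS ordering used in Lemma~\ref{lem:structure_part_mean} shows that component $V_{i}$ spans at most $(|V_{i}|-1)/(k-1)$ edges.
\item Each nontrivial component contains an edge, so $|V_{i}|\geq k$ and hence $|V_{i}|^{2}\geq k|V_{i}|$.
\item Combining these, $k|E|\leq \frac{k}{k-1}\sum_i|V_{i}|\leq \frac{1}{k-1}\sum_i|V_{i}|^{2}\leq\|\bdzeta\|$ for $k\geq 2$.
\end{itemize}
(If $k=1$ there are no cycle-free edges at all, so that case is trivial.) We conclude $\|g_{\bdzeta}\|_{\infty}\leq N^{\gamma n}=2^{s^*}$.

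\textbf{Fourier levels.} For condition (2) we bound coefficients and count supports separately.
\begin{itemize}
\item Each coefficient is small: since $g_{\bdzeta}\geq 0$, we have $|\widehat{g_{\bdzeta}}(b)|\leq \Ex{g_{\bdzeta}}=1$.
\item By Lemma~\ref{lem:structure_part_mean}(2), a nonzero coefficient at level $d$ requires $S=\supp(b)$ to satisfy $|S|=d$, $S\subseteq\bigcup_i V_{i}$, and $|S\cap V_{i}|\neq1$ for all $i$.
\item At $d=1$ no such $S$ exists, so $\|g_{\bdzeta}^{=1}\|_{\sfW}=0$.
\item For $d\geq2$, Lemma~\ref{lem:no_size_1_by_induction} bounds the number of such $S$ by $(20\|\bdzeta\|/d)^{d/2}$. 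Each $S$ carries at most $N^{d}$ vectors $b$.
\end{itemize}
Hence
\[
\left\|g_{\bdzeta}^{=d}\right\|_{\sfW}\leq N^{d}\left(\frac{20\gamma n}{d}\right)^{d/2}=\left(\frac{20N^{2}\gamma n}{d}\right)^{d/2}.
\]

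\textbf{Comparison with $F_{C}$.} It remains to compare this with Definition~\ref{def:Fourier_growth_bound}. Since $s^*\geq\gamma n$, we have
\[
F_{C}(n,d,s^*)\geq\left(\frac{C\sqrt{\gamma}\,n}{d}\right)^{d/2}.
\]
This dominates the bound above because $C\sqrt{\gamma}=20N^{2}\sqrt{\gamma}\geq 20N^{2}\gamma$ for $\gamma<1$.

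The only mildly delicate step is the edge count in the $\ell^\infty$ bound, which must use cycle-freeness and $|V_{i}|\geq k$ so that the estimate is in terms of $\|\bdzeta\|$ rather than $|\bdzeta|$. Everything else is a direct combination of Lemmas~\ref{lem:structure_part_mean} and~\ref{lem:no_size_1_by_induction}.
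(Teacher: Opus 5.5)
Your proposal is correct and follows essentially the same route as the paper's proof. The mean comes from Lemma~\ref{lem:structure_part_mean}(1), and the $\ell^{\infty}$ bound comes from $k|E_i|\leq |V_i|^{2}$ via acyclicity; identifying the number of factors in \eqref{eq:g_full_expansion} with $|E|$ silently uses the disjointness part of non-cyclicity. The level-$d$ bound combines the trivial coefficient bound with Lemmas~\ref{lem:structure_part_mean}(2) and~\ref{lem:no_size_1_by_induction}. Your direct comparison with $F_{C}(n,d,s^*)$ just replaces the paper's detour through $F_{20N^{2}}(n,\ell,\gamma^{2}n)$ and monotonicity. Your separate handling of $d=1$ is actually a bit more careful, since Lemma~\ref{lem:no_size_1_by_induction} is stated only for $\ell\geq 2$.

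The only slip is the parenthetical about $k=1$, whose justification is backwards. For $k=1$ every edge set is cycle-free, while $\|\bdzeta\|=0$ because there are no nontrivial components, so $k|E|\leq\|\bdzeta\|$ actually fails. That case is still trivial, because a one-wise independent $\mu$ on $\ZmodN$ is uniform and hence $g_{\bdzeta}\equiv 1$.
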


\begin{proof} It suffices to show that $g_{\bdzeta}$ satisfies the three conditions of $\big(n,20N^{2},\gamma n\log_{2}N,0\big)$-boundedness required by Definition~\ref{def:Cs_star_bounded}.

\paragraph{Expectation equals $1$.} It follows directly from Lemma~\ref{lem:structure_part_mean}(1) that $\Ex{g_{\bdzeta}}=1$.

\paragraph{Infinity-norm bound.}
Let  \(\bdzeta = \left(\bfz^{(\sfe, j)}\right)_{(\sfe, j) \in \calE \times [K]}\), and let $V_{1},\dots,V_{t}\subseteq \calV\times [n]$ be the list of nontrivial connected components of the hypergraph $H_{\bdzeta}$. For each $i\in [t]$, let $E_{i}$ be the set of edges of $H_{\bdzeta}$ within the component $V_{i}$. Since $H_{\bdzeta}$ does not contain cycles, for each $i\in [t]$ we have $(k-1)|E_{i}|<|V_{i}|$ and consequently $k|E_{i}|\leq |V_{i}|^{2}$. Therefore, we have
\begin{equation}\label{eq:g_infinity_norm}
    \sum_{(\sfe,j)\in\calE\times [K]} k\left|\supp(\bfz^{(\sfe,j)})\right| = \sum_{i=1}^{t} k|E_i|\leq \sum_{i=1}^t |V_i|^2 = \lVert \bdzeta\rVert, 
\end{equation}
where the first equality is due to the condition that the edge sets \(\left\{\supp(\bfz^{(\mathsf{e},j)})\right\}_{(\sfe,j)\in \calE\times [K]}\) are pairwise disjoint. Combining \eqref{eq:g_full_expansion}, \eqref{eq:g_infinity_norm} and the fact that $\lVert \mu_{\sfe}\rVert_{\infty} \leq 1$ for each $\sfe\in \calE$, we conclude that $\|g_{\bdzeta}\|_{\infty}\leq N^{\|\bdzeta\|}\leq N^{\gamma n}$.
    
\paragraph{Fourier growth bound.} For any $b\in \ZmodN^{\calV\times [n]}$, there are two possibilities:
\begin{enumerate}[label=(\arabic*)]
\item If $\supp(b)\not\subseteq \bigcup_{i\in [t]}V_{i}$ or $|\supp(b)\cap V_{i}|=1$ for some $i\in [t]$, then by Lemma~\ref{lem:structure_part_mean}(2) we have $\langle g_{\bdzeta},\chi_{b}\rangle=0$,
\item If $\supp(b)\subseteq  \bigcup_{i\in [t]}V_{i}$ and $|\supp(b)\cap V_{i}|\neq 1$ for all $i\in [t]$, then we have $\left|\langle g_{\bdzeta},\chi_{b}\rangle\right|\leq \|g_{\bdzeta}\|_{1}\cdot\|\chi_{b}\|_{\infty}\leq 1$, using Lemma~\ref{lem:structure_part_mean}(1).
\end{enumerate}
Therefore, for any integer $\ell\geq 1$, the Wiener norm $\left\|g_{\bdzeta}^{=\ell}\right\|_{\sfW}$ is upper bounded by the number of $b\in \ZmodN^{\calV\times [n]}$ such that $\supp(b)$ is an $\ell$-element subset of $\bigcup_{i\in [t]}V_{i}$ whose intersection with each $V_{i}$ has size not equal to 1. It then follows from Lemma~\ref{lem:no_size_1_by_induction} that
\[
\left\|g_{\bdzeta}^{=\ell}\right\|_{\sfW}\leq N^{\ell}\left(\frac{20}{\ell}\big(|V_{1}|^{2}+\dots+|V_{t}|^{2}\big)\right)^{\ell/2}\leq \left(\frac{20N^{2}\gamma n}{\ell}\right)^{\ell/2}\leq F_{20N^{2}}(n,\ell,\gamma^{2}n).
\]
Using Proposition~\ref{prop:Fourier_bound_non_decreasing}, we conclude that \(\left\|g_{\bdzeta}^{=\ell}\right\|_{\sfW}\leq F_{20N^{2}}\big(n,\ell,\gamma n\log_{2}N\big)\), as desired.
\end{proof}

\subsection{Analyzing the Pseudorandom Part by Induction}\label{subsec:discrepancy_bound}

As mentioned earlier, Lemmas~\ref{lem:relating_yes_no} and~\ref{lem:separating_structure_pseudorandom} together express the ratio $\calD_{\yes}(R)/\calD_{\no}(R)$ as the expectation of a product of $2K|\calE|$ functions. The product of the $K|\calE|$ structured-part functions has already been analyzed in Section~\ref{subsec:structured}, and in this subsection we will incorporate the remaining $K|\calE|$ pseudorandom-part functions into the product.

We will incorporate the $K|\calE|$ pseudorandom-part functions into the product \emph{one at a time}, using the following ``induction lemma'' (inspired by \cite[Lemma 6.1]{KK19}).
\begin{restatable}{lemma}{leminduction}\label{lem:induction}
Let $\calU$ be a $k$-universe embedded in a finite set $\Lambda$. Fix constants $C\in [1,+\infty)$ and $\delta_{0},\delta\in [0,\frac{1}{2}]$. Let $m\in \bZ$ and $s^*\in \bR$ be parameters satisfying
\[0< m\leq \frac{|\calU|^{3}}{2^{18}N^{10k}|\Lambda|^{2}}\quad\text{and}\quad \log_{2}(4/\delta)\leq s^*\leq (10C)^{-2}\delta^{4}m.\]
Suppose $f:\ZmodN^{\Lambda}\rightarrow[0,+\infty)$ is a $(|\calU|,C,s^*,\delta_{0})$-bounded function and $\calD$ is a pseudo-uniform distribution over $\calM_{\calU,m}$. For any parameter $\eta\in (0,\frac{1}{2})$, a random sample $M\sim\calD$ satisfies the following with probability at least $1-3\eta$: for any $A\subseteq \Map{M}{\ZNk}$ of size $|A|\geq 2^{-s^*}N^{km}$ and any one-wise independent distribution $\mu$ over $\ZNk$,
\[\text{the function }f\cdot \bdR^{\Lambda,M}_{\mu}[\phi_{A}]\text{ is }\Big(|\calU|,\,2^{22}\eta^{-2}N^{8k}C,\,2s^*,\,\delta_{0}+\eta^{-1}\delta\Big)\text{-bounded.}\]
\end{restatable}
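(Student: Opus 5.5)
We follow the strategy of \cite[Lemma 6.1]{KK19}, adapted to $\ZmodN$, to $k$-uniform matchings, and to pseudo-uniform rather than uniform $M$. Write $h:=\bdR^{\Lambda,M}_{\mu}[\phi_{A}]$. Then $h(x)=\Exs{\bdxi\in A}{\prod_{e\in M}N^{k}\mu(x_{|e}-\bdxi(e))}$. We check the three conditions of Definition~\ref{def:Cs_star_bounded} for $f\cdot h$ separately.

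\textbf{Infinity norm.} This condition is immediate. By Proposition~\ref{prop:infinity_norm_contraction}, $\|h\|_\infty\le\|\phi_A\|_\infty\le 2^{s^*}$, so $\|fh\|_\infty\le 2^{2s^*}$.

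\textbf{Fourier expansion of $h$.} Expanding each factor $N^k\mu(x_{|e}-\bdxi(e))$ in characters of $\ZNk$ gives
\[
\widehat{h}(b)=\prod_{e\in M}\widehat{\mu}'(b_{|e})\cdot\widehat{\phi_A}\big((b_{|e})_{e\in M}\big)
\]
for $b$ supported on $\bigcup M$, and $\widehat h(b)=0$ otherwise. Here $\widehat\mu'$ denotes the suitably normalized transform of $\mu$, which has absolute value at most $1$. One-wise independence of $\mu$ kills every $b$ with $|\supp(b)\cap e|=1$ for some $e\in M$. Consequently the level-$d$ mass of $h$ is controlled by the Fourier mass of $\phi_A$ on sets of $q\le d/2$ edges, i.e.\ by $\sum_{|T|=q}|\widehat{\phi_A}(\cdot)|$, with at most $N^{kq}$ characters per edge set. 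For a set $A$ of relative density $2^{-s}$ with $s\le 2s^*$, the level-$q$ inequality (Proposition~\ref{prop:level_d_classical}, applied on $(\ZNk)^{M}$) bounds the $\ell_2$ mass at edge-level $q$ by $(O(N^{k}s/q))^{q}$.

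\textbf{Expectation of $fh$ and the level bounds.} We write
\[
\Ex{fh}=\sum_b\widehat f(b)\overline{\widehat h(b)}.
\]
The $b=0$ term lies in $[1-\delta_0,1+\delta_0]$. The error consists of terms where $\supp(b)$ covers at least two vertices of each edge it touches, so it is a union of pieces of edges of $M$. For the level-$d$ bound on $fh$ we use
\[
(fh)^{=d}=\sum_{d_1,d_2}(f^{=d_1}h^{=d_2})^{=d},\qquad \|gh\|_{\sfW}\le\|g\|_{\sfW}\|h\|_{\sfW},
\]
and we must show that the total is at most $F_{C'}(|\calU|,d,2s^*)$ with $C'=2^{22}\eta^{-2}N^{8k}C$.

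\textbf{The key randomness step.} Averaging over $M\sim\calD$: for a fixed character $b$ of $f$ at level $d_1$, the probability that $\supp(b)$ is ``compatible'' with $M$ is small. Compatibility means that $q$ edges of $M$ each contain at least two points of $\supp(b)$, while the rest of $\supp(b)$ is shared with new points from $h$. By pseudo-uniformity (Definition~\ref{def:pseudo_uniform}, as in Corollary~\ref{cor:edge_appears_in_globalset}), each such pinned edge costs a factor of about $2\cdot 6^k m/|\calU|^{2}$ times the choices of remaining vertices. We therefore bound
\[
\Exs{M}{\textstyle\sum_{b}|\widehat f(b)|\cdot\ind{b\text{ interacts with }q\text{ edges of }M}}
\]
by $F_C(|\calU|,d_1,s^*)\cdot(O(N^k m d_1^2/|\calU|^2))^{q/2}$-type quantities. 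Since $m\le|\calU|^3/(2^{18}N^{10k}|\Lambda|^2)$, these quantities are summable. Markov's inequality (costing a factor $\eta^{-1}$) together with a union bound over the $O(1)$ families of levels (costing $3\eta$ in total) gives that, with probability $1-3\eta$, the expected interaction masses are bounded.

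\textbf{Finishing and main obstacle.} On this good event we bound each convolution term for every admissible $A$ and $\mu$ simultaneously. This works because the only dependence on $A$ and $\mu$ is through the level inequality above, which holds uniformly once $|A|\ge 2^{-s^*}N^{km}$. For the low-level error in the expectation, $s^*\le(10C)^{-2}\delta^4m$ gives total error at most $\eta^{-1}\delta$. For the high levels $d>C'^{-2}|\calU|$ nothing needs to be proved, and where needed we fall back on Proposition~\ref{prop:Cs_star_high_degree_bound}.

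The main obstacle is the combinatorial bookkeeping. We must show that the product of the two Fourier growth profiles, $F_C$ for $f$ and the level-$q$ bound for $h$, summed over all ways that supports overlap on edges of $M$, collapses to $F_{C'}(|\calU|,d,2s^*)$. This is the analogue of the central calculation in \cite[Section 6]{KK19}. We expect that adjusting the constants there, with $N^{k}$ replacing $2$ and the pseudo-uniform factor $2^{q}$ absorbed into $C'$, suffices. We would first carry it out for $k=2$, $N=2$ to match \cite{KK19} exactly, and then generalize.
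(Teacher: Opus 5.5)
Your outline follows the same route as the paper's proof. It uses the convolution theorem, Lemma~\ref{lem:SVD} to discard characters with a singleton edge, a level-$d$ inequality uniform in $A$ and $\mu$, overlap probabilities under a pseudo-uniform $M$, and then Markov plus a union bound. Your $\|\cdot\|_\infty$ bound and your treatment of $\Ex{fh}$ also match the paper. However, the ``combinatorial bookkeeping'' you defer is essentially the whole proof, and two steps you do state would fail as written.

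\textbf{Probability accounting.} Condition~(2) must hold simultaneously at every level $1\le\ell\le C'^{-2}|\calU|$, where $C'=2^{22}\eta^{-2}N^{8k}C$. That is $\Theta(|\calU|)$ levels, and the bound must hold for all admissible $A,\mu$. One Markov bound with a fixed factor $\eta^{-1}$ per ``family'' of levels, followed by a union bound over $O(1)$ families, cannot deliver the level-by-level thresholds $F_{C'}(|\calU|,\ell,2s^*)$ without paying for the number of levels. The paper first shows $\Exs{M\sim\calD}{\max_{A,\mu}\|(f\cdot\bdR^{\Lambda,M}_{\mu}[\phi_A])^{=\ell}\|_{\sfW}}\le F_{C_0}(|\calU|,\ell,s^*)$ with $C_0=2^{22}N^{8k}C$. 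The $\eta^{-2}$ in $C'$ then makes the threshold larger by at least $\eta^{-\ell}$ in both regimes, namely $(\eta^{-2})^{\ell/2}$ and $(\eta^{-4})^{\ell/4}$. So level $\ell$ fails with probability at most $\eta^{\ell}$, and $\eta+\sum_{\ell\ge1}\eta^{\ell}\le3\eta$. This geometric decay is the mechanism you need and do not identify.

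\textbf{The core estimate.} Your key randomness step points the right way, but it is incompatible with the split $(fh)^{=d}=\sum_{d_1,d_2}(f^{=d_1}h^{=d_2})^{=d}$ combined with $\|gh\|_{\sfW}\le\|g\|_{\sfW}\|h\|_{\sfW}$. That split charges the full $\|f^{=d_1}\|_{\sfW}$ for $d_1>d$ regardless of overlap. Fix $M$ and a character $z$ of $f$ with $\|z\|_{\sfH}=t$, and let $M'$ be the edges of $M$ meeting $\supp(z)$. Then $\|z+[\bfa]\|_{\sfH}\ge t-|\itn_{M,z}|+2\|\bfa_{\setminus M'}\|_{\sfH}$, so weight is lost only at internal vertices. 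You must therefore keep the character-level sum $\sum_z|\widehat f(z)|\sum_{\bfa\in\calX(M),\,\|z+[\bfa]\|_{\sfH}=\ell}|\widehat{\phi_A}(\bfa)|$ and average over $M$ separately for each $z$.

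A second problem is that the level inequality for $\phi_A$ at the edge-level of $\bfa$ does not apply here. The relevant $\bfa$ may be nonzero on every edge of $M'$, and $|M'|$ can far exceed $s^*$. The paper fixes $\bfa'=\bfa_{|M'}$ and applies Proposition~\ref{prop:level_d_classical} to $\phi_A\cdot\overline{\psi_{\bfa'\uplus\mathbf{0}}}$, which has the same $\ell^1$ and $\ell^2$ norms. This controls only $\bfa_{\setminus M'}$, at a cost of $N^{kt}$ for the choice of $\bfa'$. For $\ell>s^*$ the paper uses only $\|\phi_A\|_2\le 2^{s^*/2}$.

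The paper then needs three further ingredients:
\begin{itemize}
\item the bound $\Prs{M\sim\calD}{|\itn_{M,z}|=i,\ |\bd_{M,z}|=b}\le(24k^2)^t(im/|\calU|^2)^{i/2}(m/|\calU|)^b$ of Lemma~\ref{lem:q_calculation}, which reduces pseudo-uniform to uniform by counting submatchings;
\item Proposition~\ref{prop:Q_tilde}, to discard the coordinates of $z$ outside $\tcup\calU$;
\item a summation against $F_C$ split into three ranges of $t$, where the levels $t>C^{-2}|\calU|$ of $f$ are controlled only through Proposition~\ref{prop:Cs_star_high_degree_bound}. This is where the factor $|\Lambda|^2$ in the bound on $m$ is spent.
\end{itemize}
These estimates are where the $k$-uniform, $\ZmodN$ and pseudo-uniform features actually enter. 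Until they are carried out, the proposal is a plan rather than a proof.
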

Lemma~\ref{lem:induction} is proved in Section \ref{sec:induction}. In the rest of this section, we use the induction lemma to complete the proof of Lemma \ref{lem:discrepancy_bound} (restated below). 

\discrepancybound*

\begin{proof}[Proof of Lemma \ref{lem:discrepancy_bound} assuming Lemma~\ref{lem:induction}]
We begin by setting some notation for the proof. We write \(R=\prod_{(\sfe, j) \in \calE \times [K]} A^{(\sfe, j)}\) and \(\bdzeta = \left(\bfz^{(\sfe, j)}\right)_{(\sfe, j) \in \calE \times [K]}\). We fix an arbitrary total order on $\calE\times [K]$, and for each index $i\in \{1,2,\dots,K|\calE|\}$, if $(\sfe,j)\in \calE\times [K]$ is the $i$-th-ranked player, we use the following notations:

\begin{enumerate}[label=(\arabic*)]
\item Denote $n_{i}=n-|\supp(\bfz^{(\sfe,j)})|$ and $m_{i}=\alpha n-|\supp(\bfz^{(\sfe,j)})|$. Note that $m_{i}\leq \alpha n_{i}$. As long as $\gamma\leq \alpha /2$, we have $n_{i}\geq n-\|\bdzeta\|\geq n-\gamma n \geq n/2$ and $m_{i}\geq \alpha n-\gamma n\geq \alpha n/2$.
\item We use $\calM^{(\sfe,j)}$ as a shorthand for the matching space $\calM_{\calU_{\sfe},\,\alpha n,\,\bfz^{(\sfe,j)}}$. 
\item Let $\calD_{i}$ be the distribution of $\supp(\bfy)\setminus \supp(\bfz^{(\sfe,j)})$ where $\bfy$ is a uniformly random labeled matching in $A^{(\sfe,j)}$. Note that $\calD_{i}$ is a probability distribution over $\calM^{(\sfe,j)}$, and it is pseudo-uniform due to Proposition~\ref{prop:global_vs_pseudouniformity}.
\item For any matching $M\in \calM^{(\sfe,j)}$, let (see Definition~\ref{def:embedding})
\[A_{i,M}=A^{(\sfe,j)}_{M}:=\fraki^{-1}_{M,\bfz^{(\sfe,j)}}\big(A^{(\sfe,j)}\big)\subseteq \Map{M}{\ZNk}.\]
Note that the ratio $\left|A^{(\sfe,j)}_{M}\right|/\left|A^{(\sfe,j)}\right|$ is the probability of the matching $M$ under the distribution $\calD_{i}$. We also record the fact that
\begin{equation}\label{eq:expectation_of_A_M}
\Exu{M\sim \calD_{i}}{\frac{1}{|A_{i,M}|}}= \sum_{M\in \calM^{(\sfe,j)}}\frac{\left|A^{(\sfe,j)}_{M}\right|}{\left|A^{(\sfe,j)}\right|}\cdot\frac{1}{\left|A^{(\sfe,j)}_{M}\right|}=\frac{N^{-km_{i}}\left|\Omega^{\calU_{\sfe},\alpha n}_{\bfz^{(\sfe,j)}}\right|}{\left|A^{(\sfe,j)}\right|}\leq 2^{\gamma n}N^{-km_{i}},
\end{equation}
where we used the definition of $(\gamma n,\gamma n)$-goodness in the last transition.
\item For any matching $M\in \calM^{(\sfe,j)}$, let
\[
h_{i,M}=h^{(\sfe,j)}_{M}:=\bdR^{\calV\times [n],\,M}_{\mu_{\sfe}}\left[\phi_{A^{(\sfe,j)}_{M}}\right]\in L^{2}\big(\ZmodN^{\calV\times [n]}\big).
\]
\end{enumerate}

By Lemmas~\ref{lem:relating_yes_no} and~\ref{lem:separating_structure_pseudorandom}, we have
\begin{align}
\frac{\calD_{\yes}(R)}{\calD_{\no}(R)}&=\Exu{x\in \ZmodN^{\calV\times [n]}}{g_{\bdzeta}(x)\prod_{(\sfe,j)\in \calE\times [K]}\left(\sum_{M\in \calM^{(\sfe,j)}}\frac{\left|A^{(\sfe,j)}_{M}\right|}{\left|A^{(\sfe,j)}\right|}h^{(\sfe,j)}_{M}(x)\right)}\nonumber\\
&=\Exu{x\in \ZmodN^{\calV\times [n]}}{g_{\bdzeta}(x)\prod_{i=1}^{K|\calE|}\left(\Exu{{M\sim\calD_{i}}}{h_{i,M}(x)}\right)} \nonumber\\
&= \Exu{{(M_i)_{i=1}^{K|\calE|} \sim \prod_{i=1}^{K|\calE|}\calD_{i}}}{\Exu{x\in \ZmodN^{\calV\times [n]}}{g_{\bdzeta}(x)\prod_{i=1}^{K|\calE|}{h_{i,M_i}(x)}}}.\label{eq:relating_yesno}
\end{align}

\paragraph{Induction setup.} In the rest of the proof, we fix the following constants:
\begin{equation}\label{eq:choice_B_eta}
B=10^{16}N^{8k}K^{2}|\calE|^{2}\quad\text{and}\quad\eta=10^{-4}K^{-1}|\calE|^{-1}.
\end{equation}
We will prove that $\calD_{\yes}(R)/\calD_{\no}(R)\geq 1-10^{-3}$ if $\gamma$ is sufficiently small and $n$ is sufficiently large (after $\gamma$ is fixed). In other words, we taken $n\gg \gamma^{-1}\gg 1$.

For any tuple of sampled matchings $(M_{i})_{i=1}^{K|\calE|}$, where each $M_{i}$ is a possible sample from $\calD_{i}$, we say $(M_{i})_{i=1}^{K|\calE|}$ \emph{passes the $r$-th test} if 
\begin{equation}\label{eq:def_pass_test}
\text{the function}\qquad g_{\bdzeta}\prod_{i=1}^{r}h_{i,M_{i}}\qquad\text{is}\qquad\Big(n,\,20N^{2}B^{r},\,2^{r+1}\gamma n\log_{2}N,\,r\eta\Big)\text{-bounded,}
\end{equation}
for any given index $r\in\{0,1,2,\dots,K|\calE|\}$. By Lemma~\ref{lem:structured_part} we know that any tuple $(M_{i})_{i=1}^{K|\calE|}$ passes the 0-th test. If $(M_{i})_{i=1}^{K|\calE|}$ passes the $K|\calE|$-th test, then by definition we have
\[
\Ex{g_{\bdzeta}\prod_{i=1}^{K|\calE|}h_{i,M_{i}}}\geq 1-K|\calE|\eta.
\]
We claim that for each $r\in \{1,2,\dots,K|\calE|\}$, when each $M_{i}$ is sampled independently from $\calD_{i}$, the probability that $(M_{i})_{i=1}^{K|\calE|}$ passes the $r$-th test conditioned on it passes the $(r-1)$-th test is at least $1-4\eta$. Once we have that, it follows from union bound that $(M_{i})_{i=1}^{K|\calE|}$ passes the $K|\calE|$-th test with probability $1-4K|\calE|\eta$, and thus the expectation on the right-hand side of \eqref{eq:relating_yesno} is least $(1-4K|\calE|\eta)(1-K|\calE|\eta)\geq 1-5K|\calE|\eta\geq 1-10^{-3}$, as desired.

\paragraph{Induction step.} We next prove the induction-step claim using Lemma~\ref{lem:induction}. We fix an index $r\in\{1,2,\dots,K|\calE|\}$ and assume that $(M_{i})_{i=1}^{K|\calE|}$ passes the $(r-1)$-th test. By the definition \eqref{eq:def_pass_test} and Proposition~\ref{prop:boundedness_adjustment}, we have
\begin{equation}
\text{the function }g_{\bdzeta}\prod_{i=1}^{r-1}h_{i,M_{i}}\text{ is }\Big(n_{r},\,40N^{2}B^{r-1},\,2^{r}\gamma n\log_{2}N,\,(r-1)\eta\Big)\text{-bounded.}
\end{equation}
We apply Lemma~\ref{lem:induction} to the function $f=g_{\bdzeta}\prod_{i=1}^{r-1}h_{i,M_{i}}$ with the parameters
\begin{equation}\label{eq:choice_C_delta_s_star}
C=40N^{2}B^{r-1},\qquad \delta_{0}=(r-1)\eta,\qquad\delta=\eta^{2},\qquad s^*=2^{r}\gamma n\log_{2}N,
\end{equation}
and
\[
m=m_{r}\leq \alpha n_{r}\leq \frac{n_{r}}{2^{20}N^{10k}|\calV|^{2}}\leq \frac{n_{r}^{3}}{2^{18}N^{10k}|\calV|^{2}n^{2}}.
\]
The condition $\log_{2}(4/\delta)\leq s^*\leq (10C)^{-2}\delta^{4}m$ stated in Lemma~\ref{lem:induction} is satisfied as long as we take $n\gg \gamma^{-1}\gg 1$.  

Recalling that the probability of any matching $M$ under $\calD_{r}$ is proportional to $|A_{r,M}|$, we have
\begin{align*}
\Pru{M_{r}\sim \calD_{r}}{\left|A_{r,M_{r}}\right|\leq 2^{-s^*}N^{km_{r}}}\leq \frac{\Exs{M_{r}\sim \calD_{r}}{|A_{r,M_{r}}|^{-1}}}{2^{s^*}N^{-km_{r}}}\leq 2^{-s^*+\gamma n}\leq 2^{-\gamma n}\leq \eta,
\end{align*}
where we used \eqref{eq:expectation_of_A_M} in the second transition and $n\gg  \gamma^{-1}$ in the last transition. Therefore, when $M_{r}$ is sampled from $\calD_{r}$, with probability at least $1-4\eta$ both of the following statements hold:
\begin{enumerate}[label=(\arabic*)]
\item $|A_{r,M_{r}}|\geq  2^{-s^*}N^{km_{r}}$.
\item The conclusion of Lemma~\ref{lem:induction} holds for $M_{r}$, i.e. for any $A\subseteq \Map{M_{r}}{\ZNk}$ of size $|A|\geq 2^{-s^*}N^{km_{r}}$ and any one-wise independent distribution $\mu$, the function $f\cdot \bdR^{\calV\times [n],\,M_{r}}_{\mu}[\phi_{A}]$ is
\[\Big(n_{r},\,(2^{22}\eta^{-2}N^{8k}C)\cdot40N^{2}B^{r-1},\,2s^*,(r-1)\eta+\eta^{-1}\delta\Big)\text{-bounded.}
\]
By \eqref{eq:choice_B_eta}, \eqref{eq:choice_C_delta_s_star} and Proposition~\ref{prop:boundedness_adjustment}, it is also
\[
\Big(n,\,20N^{2}B^{r},\,2^{r+1}\gamma n\log_{2}N,\,r\eta\Big)\text{-bounded}.
\]
\end{enumerate}
Combining the two statements above, we conclude that with probability at least $1-4\eta$ over the sampled matching $M_{r}\sim \calD_{r}$, we have
\[
f\cdot \bdR^{\calV\times [n],\,M_{r}}_{\mu}[\phi_{A}]=g_{\bdzeta}\prod_{i=1}^{r}h_{i,M_{i}}\qquad\text{is}\qquad\Big(n,\,20N^{2}B^{r},\,2^{r+1}\gamma n\log_{2}N,\,r\eta\Big)\text{-bounded}.
\]
In other words, $(M_{i})_{i=1}^{K|\calE|}$ passes the $r$-th test with conditional probability at least $1-4\eta$.
\end{proof}

%% file: induction_lemma.tex
\section{Proof of the Induction Lemma}\label{sec:induction}

The goal of this section is to prove Lemma~\ref{lem:induction}. The lemma features several combinatorial/analytic objects of study, each requiring separate treatment. We begin in Section~\ref{subsec:svd} by studying the Fourier-analytic properties of the operator $\bdR^{\Lambda,M}_{\mu}$. In Section~\ref{subsec:random_matchings}, we examine the pseudo-uniform distribution of matchings and establish its relevant combinatorial properties. We then address the technical core of the lemma in Sections~\ref{subsec:mass_transfer} and~\ref{subsec:induction_step}, where we analyze the Fourier growth of the product $f\cdot \bdR^{\Lambda,M}_{\mu}[\phi_{A}]$ via the convolution theorem.

\subsection{Singular Value Decomposition}\label{subsec:svd}

A key property of the operator is that it admits a clean singular value decomposition: it maps Fourier characters on the space $\Map{M}{\ZNk}$ to scalar multiples of character functions on $\ZmodN^{\Lambda}$. We first define the natural character functions on $\Map{M}{\ZNk}$.

\begin{definition}
For a finite set $M$ and a map $\bfa:M\rightarrow \ZNk$, we define a character function $\psi_{\bfa}:\Map{M}{\ZNk}\rightarrow \bC$ by
\[
\psi_{\bfa}(\bdxi):=\prod_{e\in M}\chi_{\bfa(e)}(\bdxi(e)).
\]
For any $f\in L^{2}\big(\Map{M}{\ZNk}\big)$, let $\widehat{f}(\bfa):=\langle f,\psi_{\bfa}\rangle$. 
\end{definition}

Due to the one-wise independence of the distribution $\mu$, some characters on the space $\Map{M}{\ZNk}$ are ``killed'' by the operator $\bdR^{\Lambda,M}_{\mu}$ (and we must use that to our advantage in order to prove Lemma~\ref{lem:induction}). The following definition collects the characters that are \emph{not} killed.

\begin{definition}
For a finite set $M$, we let $\calX(M)$ be the collection of character indices $\bfa:M\rightarrow\ZNk$ such that $\|\bfa(e)\|_{\sfH}\neq 1$ for any $e\in M$.
\end{definition}

\begin{remark}
Note that the Hamming weight of $\bfa(e)$, denoted by $\|\bfa(e)\|_{\sfH}$, is the number of $i\in [k]$ such that $\bfa(e)_{i}\neq 0$. This is to be distinguished from the Hamming weight of $\bfa$ itself, denoted by $\|\bfa\|_{\sfH}$, which stands for the number of $e\in M$ such that $\bfa(e)\neq 0$. See Section~\ref{subsec:general_notations} for the general notational convention of Hamming weights.
\end{remark}

The following notations will be used many times throughout Section~\ref{sec:induction}.

\begin{notation}
Suppose $S$ and $T$ are disjoint finite sets. For two maps $\bfa_{1}:S\rightarrow \ZNk$ and $\bfa_{2}:T\rightarrow\ZNk$, we define their concatenation $\bfa_{1}\uplus \bfa_{2}:S\sqcup T\rightarrow \ZNk$ by setting $(\bfa_{1}\uplus \bfa_{2})(e):=\bfa_{1}(e)$ for $e\in S$ and $(\bfa_{1}\uplus \bfa_{2})(e):=\bfa_{2}(e)$ for $e\in T$.
\end{notation}

\begin{notation}
Suppose $S$ and $M$ are finite sets such that $S\subseteq M$. For a map $\bfa:M\rightarrow\ZNk$, we define $\bfa_{|S}:S\rightarrow \ZNk$ to be the restriction of $\bfa$ to $S$, and define $\bfa_{\setminus S}:M\setminus S\rightarrow\ZNk$ to be the restriction of $\bfa$ to $M\setminus S$. 
\end{notation}

\begin{definition}
Let $\calU$ be a $k$-universe embedded in a finite set $\Lambda$, and let $M\in \calM_{\calU,m}$ be a matching of size $m\leq |\calU|$. For any character index $\bfa:M\rightarrow \ZNk$, we use $[\bfa]$ to denote the corresponding character index in $\ZmodN^{\Lambda}$ defined by
\begin{enumerate}
\item $[\bfa]_{v_{i}}=\bfa(e)_{i}$ for any edge $e=(v_{1},\dots,v_{k})\in M$,
\item $[\bfa]_{v}=0$ for any vertex $v\in \Lambda$ that is not contained in any edge of $M$.
\end{enumerate}
\end{definition}

We are now ready to present the singular value decomposition lemma. 

\begin{lemma}\label{lem:SVD}
Let $\calU$ be a $k$-universe d in a finite set $\Lambda$, and let $\mu$ be a one-wise independent distribution over $\ZNk$. Fix a matching $M\in \calM_{\calU,m}$ of size $m\leq |\calU|$. For any $f\in L^{2}\big(\Map{M}{\ZNk}\big)$ and any $b\in \ZmodN^{\Lambda}$, we have
\[
\begin{cases}\left|\left\langle \bdR^{\Lambda,M}_{\mu}[f], \chi_b \right\rangle_{L^2(\ZmodN^{\Lambda})}\right|\leq \left|\widehat{f}(\bfa)\right|,&\text{if }b=[\bfa]\text{ for some }\bfa\in \calX(M),\\
\left\langle \bdR^{\Lambda,M}_{\mu}[f], \chi_b \right\rangle_{L^2(\ZmodN^{\Lambda})}=0,&\text{if }b\neq [\bfa]\text{ for any }\bfa\in \calX(M).
\end{cases}
\]
\end{lemma}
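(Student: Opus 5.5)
We compute the Fourier coefficient $\langle \bdR^{\Lambda,M}_{\mu}[f],\chi_b\rangle$ directly. The plan is to first do this for a single character $f=\psi_{\bfa}$, and then extend by linearity using the expansion $f=\sum_{\bfa}\widehat{f}(\bfa)\psi_{\bfa}$. Recall that
\[
\bdR^{\Lambda,M}_{\mu}[f](x)=\sum_{\bdxi}\prod_{e\in M}\mu\big(x_{|e}-\bdxi(e)\big)f(\bdxi).
\]
Substituting $w_e=x_{|e}-\bdxi(e)$ gives $\bdR^{\Lambda,M}_{\mu}[f](x)=\Exs{w}{f\big((x_{|e}-w_e)_{e\in M}\big)}$, where the $w_e\sim\mu$ are independent. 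Applying this to $f=\psi_{\bfa}$ and using multiplicativity of characters, we get
\[
\bdR^{\Lambda,M}_{\mu}[\psi_{\bfa}](x)=\prod_{e\in M}\chi_{\bfa(e)}(x_{|e})\,\Exs{w_e\sim\mu}{\overline{\chi_{\bfa(e)}(w_e)}}=\lambda_{\bfa}\,\chi_{[\bfa]}(x),
\]
with $\lambda_{\bfa}:=\prod_{e\in M}\overline{\widehat{\mu}'(\bfa(e))}$. Here $\widehat{\mu}'(a)=\sum_w\mu(w)\chi_a(w)$, and since $\mu$ is a probability distribution we have $|\widehat{\mu}'(a)|\le 1$. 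Note that $\chi_{\bfa(e)}(x_{|e})$ is exactly the restriction of $\chi_{[\bfa]}$ to the coordinates of $e$. This identity uses that the edges of $M$ are disjoint, so that $[\bfa]$ is well defined and the product of characters collapses to $\chi_{[\bfa]}$.

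Next we use one-wise independence. If $\|\bfa(e)\|_{\sfH}=1$ with its nonzero entry in coordinate $i$, then $\widehat{\mu}'(\bfa(e))=\Exs{w\sim\mu}{\chi(\cdot\, w_i)}$ is the Fourier coefficient of the uniform marginal at a nonzero frequency, which vanishes. Hence $\lambda_{\bfa}=0$ whenever $\bfa\notin\calX(M)$.

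Combining by linearity,
\[
\bdR^{\Lambda,M}_{\mu}[f]=\sum_{\bfa\in\calX(M)}\widehat{f}(\bfa)\lambda_{\bfa}\chi_{[\bfa]}.
\]
The map $\bfa\mapsto[\bfa]$ is injective, again because the edges of $M$ are disjoint, so distinct $\bfa$ give distinct characters. Orthonormality of the characters $\chi_b$ then gives $\langle\bdR^{\Lambda,M}_{\mu}[f],\chi_b\rangle=\widehat{f}(\bfa)\lambda_{\bfa}$ if $b=[\bfa]$ with $\bfa\in\calX(M)$, and $0$ otherwise. The claimed bound follows from $|\lambda_{\bfa}|\le1$.

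None of these steps is a genuine obstacle. The only points requiring care are the conjugation and sign conventions in the character computation, and checking injectivity of $\bfa\mapsto[\bfa]$; both are routine consequences of $M$ being a matching.
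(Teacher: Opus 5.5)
Your proof is correct and is essentially the paper's computation run in the dual direction: the paper pairs each kernel column $\bfR^{\Lambda,M}_{\mu}(\cdot,\bdxi)$ with $\chi_b$ and sums against $f$, arriving at $\widehat{f}(\bfa)\prod_{e\in M}r(\bfa(e))$ with $r(t)=\sum_{z}\mu(z)\overline{\chi_t(z)}$, which is exactly your $\widehat{f}(\bfa)\lambda_{\bfa}$. You instead compute $\bdR^{\Lambda,M}_{\mu}[\psi_{\bfa}]=\lambda_{\bfa}\chi_{[\bfa]}$ and then expand $f$, which additionally requires Fourier inversion and injectivity of $\bfa\mapsto[\bfa]$; you justify both correctly.
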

\begin{proof}
Recall from Section~\ref{subsec:general_notations} that $\mu(\cdot)$ denotes the probability mass function of $\mu$. For each character index \(t \in \mathbb{Z}_N^k\), define
\[
r(t) := \sum_{z \in \mathbb{Z}_N^k} \mu(z)\, \overline{\chi_t(z)} = N^k \cdot \widehat{\mu}(t).
\]
Since \(\mu\) is assumed to be one-wise independent, we know that \(\widehat{\mu}(t) = 0\) for any \(t \in \mathbb{Z}_N^k\) with exactly one nonzero coordinate. Thus, \(r(t) = 0\) for such \(t\). Additionally, we have \(r(0) = 1\) and \(|r(t)| \leq 1\) for all \(t\) since $|r(t)|\leq  \mathbb{E}_{z \sim \mu}\, \left|\overline{\chi_t(z)}\right|=1$.

By Definition~\ref{def:Markov_R} we have
\begin{equation}\label{eq:SVD_first_step}
\left\langle \bdR^{\Lambda,M}_{\mu}[f], \chi_b \right\rangle_{L^2(\ZmodN^{\Lambda})}=\sum_{\bdxi:M\rightarrow \ZNk}f(\bdxi)\left\langle\bfR^{\calU,m}_{\mu}(\cdot,\bdxi),\chi_{b}\right\rangle_{L^{2}(\mathbb{Z}_N^{\Lambda})}.
\end{equation}
If $b\neq [\bfa]$ for any $\bfa\in\Map{M}{\ZNk}$, from Definition~\ref{def:Markov_R} it is easy to see that 
\[\left\langle\bfR^{\calU,m}_{\mu}(\cdot,\bdxi),\chi_{b}\right\rangle_{L^{2}(\mathbb{Z}_N^{\Lambda})}=0\text{ for any }\bdxi\in \Map{M}{\ZNk},\] in which case \eqref{eq:SVD_first_step} evaluates to 0, as desired. We next assume $b=[\bfa]$ for some $\bfa\in\Map{M}{\ZNk}$. We can then calculate from \eqref{eq:SVD_first_step} and Definition~\ref{def:Markov_R} as follows:
\begin{align*}
\left\langle \bdR^{\Lambda,M}_{\mu}[f], \chi_{[\bfa]} \right\rangle_{L^2(\ZmodN^{\Lambda})}&=\sum_{\bdxi:M\rightarrow\ZNk}f(\bdxi)\cdot\Exu{x\in \ZmodN^{\Lambda}}{\overline{\chi_{[\bfa]}(x)}\cdot\prod_{e\in M}\mu\big(x_{|e}-\bdxi(e)\big)}\\
&=\sum_{\bdxi:M\rightarrow\ZNk}\left(f(\bdxi)\prod_{e\in M}\Exu{z\in \ZNk}{\mu\big(z-\bdxi(e)\big)\,\overline{\chi_{\bfa(e)}(z)}}\right)\\
&=\sum_{\bdxi:M\rightarrow\ZNk}\left(f(\bdxi)\prod_{e\in M}\left(\widehat{\mu}(\bfa(e))\,\overline{\chi_{\bfa(e)}(\bdxi(e))}\right)\right)\\
&=N^{-k|M|}\prod_{e\in M}r(\bfa(e))\cdot\sum_{\bdxi:M\rightarrow\ZNk}\left(f(\bdxi)\prod_{e\in M}\overline{\chi_{\bfa(e)}(\bdxi(e))}\right)\\
&=\prod_{e\in M}r(\bfa(e))\cdot\langle f,\psi_{\bfa}\rangle=\widehat{f}(\bfa)\prod_{e\in M}r(\bfa(e)).
\end{align*}
Recall that for a fixed $e\in M$, the complex number $r(\bfa(e))$ always has absolute value at most 1, and equals 0 if $\bfa(e)$ has exactly one nonzero coordinate. Therefore, $\prod_{e\in M}r(\bfa(e))$ has absolute value at most 1, and vanishes if $\bfa\not\in\calX(M)$. The desired conclusion thus follows from the calculation above.
\end{proof}

\subsection{Random Matchings}\label{subsec:random_matchings}

Throughout this subsection, we fix a $k$-universe $\calU$ embedded in a finite set $\Lambda$, and a nonnegative integer $m\leq |\calU|$. Our goal is to study the behavior  of a pseudo-uniformly random matching $M\in \calM_{\calU,m}$ with respect to a fixed vector $z\in \ZmodN^{\Lambda}$, in order to prepare for the convolution analysis in Section~\ref{subsec:mass_transfer}.

As will become clear in Section~\ref{subsec:mass_transfer}, given an edge $e\in M$ that intersects with the vertex set $\supp(z)$, it is important to distinguish between the case where $e$ has exactly one common vertex with $\supp(z)$ and the case where they intersect in more than one vertex. We will classify all vertices in $\supp(z)$ that are touched by the matching $M$ into the following two types.

\begin{definition}\label{def:internal_boundary_vertex}
Let $e = (v_{1},\dots,v_{k}) \in \tprod \mathcal{U}$ be an edge, and let 
$z \in \ZmodN^{\Lambda}$ be a vector. For any $i \in [k]$ with $z_{v_i} \neq 0$, the vertex $v_{i}$ is called a \emph{boundary vertex of $z$ to $e$} if $z_{v_j} = 0$ for all $j \in [k]\setminus\{i\}$; it is called \emph{an internal vertex of $z$ to $e$} if $z_{v_j} \neq  0$ for some $j \in [k]\setminus\{i\}$.
\end{definition}

\begin{definition}\label{def:internal_boundary_set}
For any matching $M\in \calM_{\calU,m}$ and any vector $z\in \ZmodN^{\Lambda}$, we define two vertex sets as follows:
\begin{align*}
\bd_{M,z}&:=\{v\in \supp(z)\mid v\text{ is boundary to some }e\in M\},\\
\itn_{M,z}&:=\{v\in \supp(z)\mid v\text{ is internal to some }e\in M\}.
\end{align*}
\end{definition}
For fixed $M,z$, the two sets $\bd_{M,z}$ and $\itn_{M,z}$ are always disjoint since $M$ is a matching.  
Intuitively, when $m$ is a small fraction of $|\calU|$ and $M$ is a pseudo-uniformly random matching in $\calM_{\calU,m}$, the subsets $\bd_{M,z}$ and $\itn_{M,z}$ are typically not too large relative to $\supp(z)$. To formalize this idea, we make the following definition.

\begin{definition}\label{def:q_for_pseudo_uniform}
For any nonnegative integers $t,i,b$ such that $i+b\leq t$, we define a quantity $q^{\calU,m}_{\Lambda}(t,i,b)\in [0,1]$ by
\[
q^{\calU,m}_{\Lambda}(t,i,b):=\max_{z,\calD}\Pru{M\sim\calD}{|\itn_{M,z}|=i\text{ and }|\bd_{M,z}|=b\big.},
\]
where the maximum is taken over all vectors $z\in \ZmodN^{\Lambda}$ of Hamming weight $\|z\|_{\sfH}=t$ and all pseudo-uniform distributions $\calD$ over $\calM_{\calU,m}$.
\end{definition}

We then prove an upper bound on the quantity $q_{\Lambda}^{\calU,m}(t,i,b)$.

\begin{lemma}\label{lem:q_calculation}
For any nonnegative integers $t,i,b$ such that $i+b\leq t$, we have
\[
q^{\calU,m}_{\Lambda}(t,i,b)\leq (24k^{2})^{t}\left(\frac{im}{|\calU|^{2}}\right)^{i/2}\left(\frac{m}{|\calU|}\right)^{b}.
\]
\end{lemma}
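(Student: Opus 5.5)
Fix $z$ with $\|z\|_{\sfH}=t$, let $T:=\supp(z)$, and fix a pseudo-uniform $\calD$ over $\calM_{\calU,m}$. The plan is a union bound over the possible sets of edges of $M$ that touch $T$, with Definition~\ref{def:pseudo_uniform} giving the probability that each such edge set is contained in $M$. On the event $|\itn_{M,z}|=i$ and $|\bd_{M,z}|=b$, let $S\subseteq M$ be the edges meeting $T$ in at least two vertices, and $S'\subseteq M$ the edges meeting $T$ in exactly one vertex. Because $M$ is a matching, $|S'|=b$. Also $\sum_{e\in S}|e\cap T|=i$ with each term at least $2$, so $|S|=:s\le i/2$. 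The event therefore implies $S\cup S'\subseteq M$ for a matching $S\cup S'$ of this shape, and by pseudo-uniformity
\[
\Pru{M\sim\calD}{S\cup S'\subseteq M}\le 2^{s+b}\Pru{M\in\calM_{\calU,m}}{S\cup S'\subseteq M}\le 2^{s+b}\left(\frac{6^{k-1}m}{|\calU|^{k}}\right)^{s+b},
\]
where the last inequality is the uniform computation from the proof of Corollary~\ref{cor:edge_appears_in_globalset} with trivial restriction. If $s+b>m$ the probability is $0$ and there is nothing to prove.

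Next I count the pairs $(S,S')$. First choose the disjoint sets $I,B\subseteq T$ that play the roles of $\itn_{M,z}$ and $\bd_{M,z}$; there are at most $3^{t}$ ways. For $S'$, each $v\in B$ lies in exactly one edge, which must avoid $T$ elsewhere; the position of $v$ inside the edge can be chosen in $k$ ways and the other $k-1$ vertices in at most $|\calU|^{k-1}$ ways. This gives at most $(k|\calU|^{k-1})^{b}$ choices. Combined with the probability factor, this contributes $(2k6^{k-1}m/|\calU|)^{b}$, which has the required $(m/|\calU|)^{b}$ shape. For $S$, partition $I$ into $s$ blocks of size at least $2$ and assign each block to one edge. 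Each block must occupy distinct coordinate positions, and the remaining $k-|\text{block}|\le k-2$ vertices are free, giving at most $k^{|\text{block}|}|\calU|^{k-2}$ choices per block. The number of partitions of $I$ into $s$ blocks is at most $\binom{i}{s}s^{i}\le 2^{i}s^{i-s}$ up to harmless factors. Combined with $(2\cdot 6^{k-1}m/|\calU|^{k})^{s}$, the total is about
\[
(Ck)^{i}\, s^{i-s}\left(\frac{m}{|\calU|^{2}}\right)^{s}.
\]

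The main obstacle is turning this into the bound $(im/|\calU|^{2})^{i/2}$ uniformly in $s\le i/2$. Write $x:=m/|\calU|^{2}$, which is below $1$. Then $s^{i-s}x^{s}=s^{i/2-s}\,(s x)^{s}\, s^{i/2}\cdot s^{-i/2}$, and I will compare against $(ix)^{i/2}$. Using $s\le i/2$ together with $s^{i-s}\le i^{i-s}$, the ratio to $(ix)^{i/2}$ is at most $(1/(ix))^{i/2-s}$ times constants. This is dangerous when $ix\ll1$, since then small $s$ gives a large factor.

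To fix this I will refine the block count. When $s$ blocks cover $i$ vertices, the excess $i-2s$ vertices each land in a fixed edge, and they should cost a factor $1/|\calU|$ each because the vertex is forced rather than free. The right bookkeeping is that an edge containing $j\ge2$ vertices of $T$ has only $k-j$ free vertices, so it contributes $|\calU|^{k-j}$ rather than $|\calU|^{k-2}$. Redoing the count gives a weight $(m/|\calU|^{j})$ per block, so the total becomes
\[
\sum_{\text{partitions}} m^{s}|\calU|^{-i}\le (Ck)^{i}\,\#\{\text{partitions}\}\, m^{s}|\calU|^{-i}.
\]
Since $m\le|\calU|$, the factor $m^{s}|\calU|^{-i}$ is maximized at $s=i/2$, where it equals $(m/|\calU|^{2})^{i/2}$, and the number of partitions with $s\le i/2$ blocks is at most $i^{i/2}\cdot C^{i}$. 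This yields $(C'k)^{i}(im/|\calU|^{2})^{i/2}$. Collecting the constants $3^{t}$, $(2k6^{k})^{b}$, and $(C'k)^{i}$ into $(24k^{2})^{t}$ is routine, although it needs care to check that the stated constant actually suffices.
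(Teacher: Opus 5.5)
Your reduction to the uniform case via pseudo-uniformity, and the union bound over the edges of $M$ that meet $\supp(z)$, are fine and match the paper's first step. The gap is in the last paragraph.

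There you bound $m^{s}|\calU|^{-i}$ by its value at $s=i/2$, and separately claim that the number of partitions of $I$ into blocks of size at least $2$ is at most $C^{i}i^{i/2}$. For $k\geq 3$ that claim is false. Suppose $I$ consists of $i/3$ vertices in each of $U_{1},U_{2},U_{3}$. The partitions into transversal triples alone number $((i/3)!)^{2}\approx (i/(3e))^{2i/3}$, which is not $O(C^{i}i^{i/2})$. More generally, partitions into $k$-blocks number roughly $i^{(1-1/k)i}$.

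Partitions with few blocks are numerous, but they come with a smaller power $m^{s}$, so the two factors cannot be maximized separately. The repair is to keep them coupled:
\begin{itemize}
\item The number of partitions into exactly $s$ blocks is at most $\binom{i}{s}s^{i-s}\leq 2^{i}s^{i-s}$ (choose the block minima, then assign every other element to a block).
\item The $s$ internal edges lie in $M$, so $s\leq m$; also $s\leq i$ and $i/2-s\geq 0$.
\item Hence $s^{i-s}m^{s}=s^{i/2}\,s^{i/2-s}\,m^{s}\leq (im)^{i/2}$.
\item Summing over $s\leq i/2$ costs only another factor $2^{i}$.
\end{itemize}

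Even with this repair, your bookkeeping does not reach the stated constant. The per-edge factor $2\cdot 6^{k-1}$ comes from the computation in the proof of Corollary~\ref{cor:edge_appears_in_globalset}. Combined with counting the free vertices of each edge by $|\calU|^{k-1}$, it gives a constant such as $(2k\cdot 6^{k-1})^{t}$ in the case $i=0$, $b=t$. This is exponential in $k$, so it cannot be absorbed into $(24k^{2})^{t}$ for large $k$; with $6^{k-1}$ it already fails at $k=4$. What you would actually prove is the lemma with some $C^{k}$ in place of $24k^{2}$.

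The paper avoids both problems by symmetrizing in the uniform case. After the same reduction, it fixes $M$ and averages over $\supp(z)\cap\tcup\calU$ with prescribed part sizes; there are at least $(|\calU|/r)^{r}$ such sets. It then counts configurations inside the fixed $M$ by
\[
\binom{m}{b}k^{b}\binom{m-b}{d}\binom{kd}{i}\binom{k|\calU|}{r-i-b},\qquad d\leq i/2.
\]
Because the edges of $M$ dictate the blocks, no set partitions are ever enumerated. The factor $i^{i/2}$ then comes from the convexity bound $b^{b}i^{i}(r-i-b)^{r-i-b}\geq (r/3)^{r}$, and this yields the clean constant $(12k^{2})^{t}$ before the factor $2^{t}$ from pseudo-uniformity.
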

\begin{proof} The definition of $q^{\calU,m}_{\Lambda}(t,i,b)$ features a pseudo-uniform distribution $\calD$ over $\calM_{\calU,m}$. In the first part of the proof, we reduce the problem to the case where $\calD$ is the uniform distribution over $\calM_{\calU,m}$; the uniform case will then be handled in the second part of the proof.

\paragraph{Reducing to the uniform case.} For a fixed vector $z\in\ZmodN^{\Lambda}$ such that $\|z\|_{\sfH}=t$, we let $\calS_{z,i,b}$ be the collection of partial matchings $S\in \calM_{\calU,\leq t}$ such that $|\itn_{S,z}|=i$, $|\bd_{S,z}|=b$ and every edge of $S$ contains at least one vertex in $\supp(z)$. We clearly have
\begin{equation}\label{eq:pseudo-uniform-i-b}
\Pru{M\sim\calD}{|\itn_{M,z}|=i\text{ and }|\bd_{M,z}|=b\big.}\leq \sum_{S\in \calS_{z,i,b}}\Pru{M\sim\calD}{S\subseteq M}.
\end{equation}
For each $S\in \calS_{z,i,b}$, by Definition~\ref{def:pseudo_uniform} we have 
\begin{equation}\label{eq:reduce_to_uniform-i-b}
\Pru{M\sim\calD}{S\subseteq M}\leq 2^{|S|}\Pru{M\in\calM_{\calU,m}}{S\subseteq M}\leq 2^{t}\Pru{M\in\calM_{\calU,m}}{S\subseteq M}.
\end{equation}
For any $M\in\calM_{\calU,m}$, let $\mathsf{cnt}(M,z,i,b)$ be the number of partial matchings $S\in \calS_{z,i,b}$ that are contained in $M$. Combining \eqref{eq:pseudo-uniform-i-b} and \eqref{eq:reduce_to_uniform-i-b}, we have
\begin{equation}\label{eq:expected_count_of_submatchings}
\Pru{M\sim\calD}{|\itn_{M,z}|=i\text{ and }|\bd_{M,z}|=b\big.}\leq 2^{t}\sum_{S\in \calS_{z,i,b}}\Pru{M\in\calM_{\calU,m}}{S\subseteq M}=2^{t}\Exu{M\in \calM_{\calU,m}}{\mathsf{cnt}(M,z,i,b)}.
\end{equation}
It then suffices to give an upper bound for the expectation of $\mathsf{cnt}(M,z,i,b)$ over a uniformly random $M\in\calM_{\calU,m}$.

\paragraph{Analysis in the uniform case.} We fix a vector $z\in \ZmodN^{\Lambda}$ with $|\supp(z)\cap\tcup\calU|=r$, and proceed to estimate the expected number of partial matchings in $\calS_{z,i,b}$ that are contained in a uniformly random matching $M\in\calM_{\calU,m}$. Note that this number $\mathsf{cnt}(M,z,i,b)$ only depends on $\supp(z)\cap\tcup\calU$, rather than on the full vector $z$. 

Suppose $|\supp(z)\cap U_{j}|=r_{j}$, for $j\in [k]$. We may then equivalently think of $M$ as a fixed matching in $\calM_{\calU,m}$, and $R_{z}:=\supp(z)\cap\bigcup\calU$ as a random subset of $\tcup \calU$ conditioned on $|R_{z}\cap U_{j}|=r_{j}$ for all $j\in [k]$. The number of such subsets $R_{z}$ is
\[
\prod_{j=1}^{k}\binom{|\calU|}{r_{j}}\geq \prod_{j=1}^{k}\left(\frac{|\calU|}{r_{j}}\right)^{r_{j}}\geq \left(\frac{|\calU|}{r}\right)^{r}.
\]
A valid choice of a pair $(R_{z},S)$ such that $S\subseteq M$ and $S\in \calS_{z,i,b}$ must comply with the following three steps: 

\begin{enumerate}[label=(\arabic*)]
\item First choose a set $\bd_{S,z}$ with size $b$. The number of such choices is $\binom{m}{b}k^{b}$.
\item Next choose a set $\itn_{S,z}$ with size $i$. Note that since every edge of $M$ that intersects with $\itn_{S,z}$ must contain at least 2 vertices from $\itn_{S,z}$, the number of such edges is at most $i/2$. Therefore, (after the first step) the number of choices for $\itn_{S,z}$ is at most $\binom{m-b}{d}\binom{kd }{i}$, where $d:=\min\{\lfloor i/2\rfloor ,m-b\}$.
\item The first two steps completely determine the partial matching $S$. The third step is to choose the set $R_{z}\setminus(\bd_{S,z}\cup \itn_{S,z})$. The number of such choices is at most $\binom{k|\calU|}{r-i-b}$.
\end{enumerate}
We can thus conclude that for a fixed $M\in \calM_{\calU,m}$ and a random $z\in \ZmodN^{\Lambda}$ with $|\supp(z)\cap U_{j}|=r_{j}$ for each $j\in [k]$, the expected number of $S\in \calS_{z,i,b}$ such that $S\subseteq M$ is at most
\begin{equation}\label{eq:q_calculation}
\binom{m}{b}k^{b}\binom{m-b}{d}\binom{kd}{i}\binom{k|\calU|}{r-i-b}\cdot\left(\frac{|\calU|}{r}\right)^{-r},
\end{equation}
where $r:=\sum_{j=1}^{k}r_{j}\geq i+b$ (the probability is 0 if $r<i+b$).

To simplify this bound, we first note that if $i/2\leq m-b$, then $d=\lfloor i/2\rfloor$ and
\[
\binom{m-b}{d}\binom{kd}{i}\leq \left(\frac{3(m-b)}{i/2}\right)^{i/2}\left(\frac{3ki/2}{i}\right)^{i}\leq \left(\frac{15k^{2}(m-b)}{i}\right)^{i/2}.
\]
If $i/2>m-b$, then $d=m-b$ and
\[
\binom{m-b}{d}\binom{kd}{i}\leq \left(\frac{3k(m-b)}{i}\right)^{i}\leq \left(\frac{3k}{2}\right)^{i}.
\]
Since it is always true that $i\leq k(m-b)$,\footnote{Otherwise $q^{\calU,m}_{\Lambda}(t,i,b)=0$.} in both cases we have
\[
\binom{m-b}{d}\binom{kd}{i}\leq\left(\frac{15 k^{3}(m-b)}{i}\right)^{i/2}.
\]
Plugging this into \eqref{eq:q_calculation} and simplifying the other terms in \eqref{eq:q_calculation}, we get
\begin{align*}
\max_{\substack{z\in \ZmodN^{\Lambda}\\ \|z\|_{\sfH}=t}}\Exu{M\in \calM_{\calU,m}}{\mathsf{cnt}(M,z,i,b)}&\leq \max_{i+b\leq r\leq t}\left(\frac{3m}{b}\right)^{b}k^{b}\left(\frac{15k^{3}(m-b)}{i}\right)^{i/2}\left(\frac{3k|\calU|}{r-i-b}\right)^{r-i-b}\left(\frac{r}{|\calU|}\right)^{r}\\
&\leq \max_{i+b\leq r\leq t}(12k^{2})^{r}\left(\frac{im}{|\calU|^{2}}\right)^{i/2}\left(\frac{m}{|\calU|}\right)^{b}=(12k^{2})^{t}\left(\frac{im}{|\calU|^{2}}\right)^{i/2}\left(\frac{m}{|\calU|}\right)^{b}, 
\end{align*}
where in the second transition we used the convexity inequality $b^{b}i^{i}(r-i-b)^{r-i-b}\geq (r/3)^{r}$. 

Combining the above with \eqref{eq:expected_count_of_submatchings}, for any pseudo-uniform distribution $\calD$ over $\calM_{\calU,m}$ we have
\[
\max_{z\in \ZmodN^{\Lambda},\; \|z\|_{\sfH}=t}\Pru{M\sim\calD}{|\itn_{M,z}|=i\text{ and }|\bd_{M,z}|=b\big.}\leq (24k^{2})^{t}\left(\frac{im}{|\calU|^{2}}\right)^{i/2}\left(\frac{m}{|\calU|}\right)^{b},
\]
as desired.
\end{proof}

\subsection{Transfer of Fourier Mass}\label{subsec:mass_transfer}

Throughout this subsection, we fix a $k$-universe $\calU$ embedded in a finite set $\Lambda$, and a nonnegative integer $m\leq |\calU|$. The goal of this subsection is to analyze the following quantity.

\begin{definition}\label{def:transfer_Q}
Suppose $s^*$ is a real number with $0\leq s^*\leq |\calU|$. For any nonnegative integers $t$ and $\ell$, we define a quantity $Q^{\calU,m}_{\Lambda}(s^*,t,\ell)$ by
\begin{equation}\label{eq:def_of_Q}
Q^{\calU,m}_{\Lambda}(s^*,t,\ell):=\max_{z,\calD}\Exu{M\sim\calD}{\max_{A}\sum_{\bfa\in \calX(M),\;\|z+[\bfa]\|_{\sfH}=\ell} \left|\widehat{\phi_{A}}(\bfa)\right|},
\end{equation}
where the first maximum is taken over all vectors $z\in \ZmodN^{\Lambda}$ of Hamming weight $\|z\|_{\sfH}=t$ and all pseudo-uniform distributions $\calD$ over $\calM_{\calU,m}$, while the second maximum taken is over all sets $A\subseteq \Map{M}{\ZNk}$ of size $|A|\geq 2^{-s^*}N^{km}$. 
\end{definition}

Intuitively, the quantity $Q^{\calU,m}_{\Lambda}(s^*,t,\ell)$ measures the ``rate of transfer'' of Fourier mass from level $t$ to level $\ell$ when a function is multiplied by $\bdR^{\Lambda,M}_{\mu}[\phi_{A}]$. In Sections~\ref{subsubsec:mass_transfer_to_0_weight}, \ref{subsubsec:mass_transfer_to_low_weight}, and~\ref{subsubsec:mass_transfer_to_intermediate_weight}, we derive upper bounds on $Q^{\calU,m}_{\Lambda}(s^*,t,\ell)$ for different ranges of $\ell$.

The reason for treating these ranges separately is that, in the ``low-level'' regime where $1 \le \ell \le s^*$, we can invoke the level-$d$ inequality (Proposition~\ref{prop:level_d_classical}) to obtain sharper bounds; see Section~\ref{subsubsec:apply_hypercontract}.

Before starting the concrete calculations, we record the following simple observation that handles the case $t>|\calU|$.

\begin{proposition}\label{prop:Q_tilde}
Suppose $s^*$ is a real number with $0\leq s^*\leq |\calU|$. For any nonnegative integers $t$ and $\ell$, we have
\[
Q^{\calU,m}_{\Lambda}(s^*,t,\ell)\leq \max_{r}Q^{\calU,m}_{\Lambda}(s^*,r,\ell-t+r),
\]
where the maximum is taken over all nonnegative integers $r$ such that $(t-\ell)^{+}\leq r\leq \min\{t,k|\calU|\}$. 
\end{proposition}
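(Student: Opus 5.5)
The approach is to show that only the part of $z$ lying inside $\tcup\calU$ matters, and then replace $z$ by its restriction there. Fix $z\in\ZmodN^{\Lambda}$ with $\|z\|_{\sfH}=t$ and a pseudo-uniform $\calD$ attaining (or nearly attaining) the maximum in \eqref{eq:def_of_Q}. Write $z=z'+z''$, where $z'$ is the restriction of $z$ to $\tcup\calU$ and $z''$ is supported on $\Lambda\setminus\tcup\calU$. Set $r:=\|z'\|_{\sfH}$, so $\|z''\|_{\sfH}=t-r$ and $r\leq \min\{t,k|\calU|\}$, since $|\tcup\calU|=k|\calU|$.

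For every matching $M\in\calM_{\calU,m}$ and every $\bfa\in\calX(M)$, the vector $[\bfa]$ is supported on $\tcup\calU$. It is therefore disjoint from $\supp(z'')$, and we get
\[
\|z+[\bfa]\|_{\sfH}=\|z'+[\bfa]\|_{\sfH}+(t-r).
\]
Consequently, the condition $\|z+[\bfa]\|_{\sfH}=\ell$ is equivalent to $\|z'+[\bfa]\|_{\sfH}=\ell-t+r$. This holds for every $M$ and every admissible $A$, so the inner sum in \eqref{eq:def_of_Q} for $(z,\ell)$ coincides term by term with the inner sum for $(z',\ell-t+r)$. Taking $\max_A$ and then $\Exs{M\sim\calD}{\cdot}$ gives a quantity bounded by $Q^{\calU,m}_{\Lambda}(s^*,r,\ell-t+r)$, because $z'$ has Hamming weight exactly $r$ and is a legitimate competitor in the maximum.

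It remains to handle the range condition on $r$. If $\ell-t+r<0$, no $\bfa$ can satisfy the condition, so the expectation is $0$ and the bound holds trivially. Otherwise $r\geq t-\ell$, and together with $r\geq 0$ this gives $r\geq (t-\ell)^{+}$. Taking the maximum over $z$ and $\calD$ then yields the claim.

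I expect no real obstacle here. The only points needing care are checking that $[\bfa]$ is supported inside $\tcup\calU$ (immediate from the definition of $[\bfa]$) and treating the degenerate negative-level case separately so that the range of $r$ stated in the proposition is respected.
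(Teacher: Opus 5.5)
Your proposal is correct and follows essentially the same argument as the paper: you restrict $z$ to $\tcup\calU$, observe that $[\bfa]$ is supported there so that $\|z+[\bfa]\|_{\sfH}$ shifts by exactly $t-r$, and then compare the defining maxima term by term. Your explicit treatment of the case $\ell-t+r<0$, where the inner sum is empty, is a small piece of bookkeeping that the paper leaves implicit.
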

\begin{proof}
Notice that for any $z\in \ZmodN^{\Lambda}$, if we define $z'\in\ZmodN^{\Lambda}$ by
\[
z'_{v}=\begin{cases}
z_{v},&\text{if }v\in\tcup\calU,\\
0,&\text{if }v\not\in\tcup\calU
\end{cases}\qquad\text{for any }v\in\Lambda,
\]
then for any $\bfa\in\calX(M)$ we have $\|z+[\bfa]\|_{\sfH}=\ell$ if and only if $\|z'+[\bfa]\|_{\sfH}=\ell-(\|z\|_{\sfH}-\|z'\|_{\sfH})$. Therefore, in the defining equation \eqref{eq:def_of_Q} of $Q^{\calU,m}_{\Lambda}(s^*,t,\ell)$, we may add the requirement $\supp(z)\subseteq \tcup\calU$ to the first maximum and replace the condition $\|z+[\bfa]\|_{\sfH}=\ell$ in the sum with the condition $\|z+[\bfa]\|_{\sfH}=\ell-t+\|z\|_{\sfH}$, without changing the value of the right-hand side. The desired conclusion then immediately follows.
\end{proof}

\subsubsection{Mass Transfer to the Zero-Weight Coefficient}\label{subsubsec:mass_transfer_to_0_weight}

The case where $\ell=0$ can be handled relatively easily, as the following lemma shows.

\begin{lemma}\label{lem:mass_transfer_to_0_weight}
Suppose $s^*$ is a real number with $0\leq s^*\leq |\calU|$. For any nonnegative integer $t$, we have
\[
Q^{\calU,m}_{\Lambda}(s^*,t,0)\leq q^{\calU,m}_{\Lambda}(t,t,0).
\]
\end{lemma}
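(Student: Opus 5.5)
The plan is to show, for every $z$ of weight $t$ and every matching $M$, that the inner quantity $\max_A\sum_{\bfa\in\calX(M),\,z+[\bfa]=0}|\widehat{\phi_A}(\bfa)|$ is at most $\ind{|\itn_{M,z}|=t\text{ and }|\bd_{M,z}|=0}$. Taking expectation over $M\sim\calD$ then gives exactly the probability in Definition~\ref{def:q_for_pseudo_uniform}, and maximizing over $z$ and pseudo-uniform $\calD$ yields $Q^{\calU,m}_{\Lambda}(s^*,t,0)\leq q^{\calU,m}_{\Lambda}(t,t,0)$.

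The condition $\|z+[\bfa]\|_{\sfH}=0$ means $[\bfa]=-z$, so $\bfa$ is uniquely determined by $z$ and $M$ whenever it exists. Indeed, $[\bfa]$ is supported on the vertices covered by $M$, and on each edge $e$ its coordinates are read off from $\bfa(e)$. Hence the sum has at most one term. Since $\phi_A$ is a density, each Fourier coefficient satisfies $|\widehat{\phi_A}(\bfa)|\leq \Ex{\phi_A}=1$, uniformly in $A$.

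It remains to see when such an $\bfa$ exists and lies in $\calX(M)$. Existence requires $\supp(z)\subseteq\tcup M$, where $\tcup M$ denotes the set of vertices covered by edges of $M$. Then $\bfa(e)=-z_{|e}$ for each $e\in M$. Membership in $\calX(M)$ requires $\|z_{|e}\|_{\sfH}\neq1$ for every $e\in M$, i.e.\ no vertex of $\supp(z)$ is a boundary vertex of $z$ to any edge of $M$. Combined with $\supp(z)\subseteq\tcup M$, this says every vertex of $\supp(z)$ is internal to some edge of $M$. So $|\itn_{M,z}|=t$, and since $\itn_{M,z}$ and $\bd_{M,z}$ are disjoint subsets of $\supp(z)$, also $|\bd_{M,z}|=0$. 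This gives the claimed indicator bound.

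There is no real obstacle here. The only care needed is the bookkeeping that the unique candidate $\bfa$ is forced to be $-z$ restricted to the edges of $M$, and that the ``not weight one'' condition in $\calX(M)$ translates exactly into the absence of boundary vertices.
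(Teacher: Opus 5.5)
Your proposal is correct and follows essentially the same argument as the paper: at most one $\bfa\in\calX(M)$ satisfies $z+[\bfa]=0$, and it exists only if every vertex of $\supp(z)$ is internal to an edge of $M$, so the summand is bounded by $\ind{|\itn_{M,z}|=t\text{ and }|\bd_{M,z}|=0}$. Your explicit justification of $|\widehat{\phi_A}(\bfa)|\leq 1$ via $\phi_A$ being a density is a step the paper uses without stating it.
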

\begin{proof}
Note that for any vector $z\in \ZmodN^{\Lambda}$ and any matching $M\in \calM_{\calU,m}$, there is at most one character index $\bfa\in \calX(M)$ such that $z+[\bfa]=0$. Furthermore, by the definition of $\calX(M)$, there exists such an $\bfa\in \calX(M)$ only if every vertex in $\supp(z)$ is internal to some edge of $M$ (as defined in Definition~\ref{def:internal_boundary_vertex}). Equivalently, this happens only if $|\itn_{M,z}|=t$ and $|\bd_{M,z}|=0$ (as defined in Definition~\ref{def:internal_boundary_set}). Therefore, by Definition~\ref{def:transfer_Q} we have
\[
Q^{\calU,m}_{\Lambda}(s^*,t,0)\leq \max_{z,\calD}\Exu{M\sim \calD}{\ind{|\itn_{M,z}|=t\text{ and }|\bd_{M,z}|=0\big.}},
\]
where the maximum is taken over all vectors $z\in \ZmodN^{\Lambda}$ of Hamming weight $\|z\|_{\sfH}=t$ and all pseudo-uniform distributions $\calD$ over $\calM_{\calU,m}$. It then follows from Definition~\ref{def:q_for_pseudo_uniform} that $Q^{\calU,m}_{\Lambda}(s^*,t,0)\leq q^{\calU,m}_{\Lambda}(t,t,0)$.
\end{proof}
\subsubsection{Applying Hypercontractivity}\label{subsubsec:apply_hypercontract}

To handle the cases where $\ell\geq 1$, we prepare the following lemma using the level-$d$ inequality on the product space $\Map{M}{\ZNk}$.

\begin{lemma}\label{lem:induction_level_d}
    Fix a matching $M\in \calM_{\calU,m}$, a vector $z\in \ZmodN^{\Lambda}$ of Hamming weight $\|z\|_{\sfH}=t$, and a set $A \subseteq \Map{M}{\ZNk}$ of size $|A|\geq 2^{-s^*}N^{km}$. For any nonnegative integer $\ell$, we have
        \begin{equation}\label{eq:induction_level_d_1}
             \sum_{\bfa\in \calX(M),\;\|z+[\bfa]\|_{\sfH}=\ell} \left|\widehat{\phi_{A}}(\bfa)\right|\leq 2^{s^*/2}N^{2kt}  \cdot\left(\frac{6m}{\ell-t+|\mathsf{in}_{M,z}|}\right)^{(\ell-t+|\mathsf{in}_{M,z}|)/4}.
        \end{equation}
        If $\ell\leq s^*$, then we also have
        \begin{equation}\label{eq:induction_level_d_2}
            \sum_{\bfa\in \calX(M),\;\|z+[\bfa]\|_{\sfH}=\ell} \left|\widehat{\phi_{A}}(\bfa)\right|\leq N^{2kt} \cdot \left(\frac{8\sqrt{m s^*}}{\ell -t+|\mathsf{in}_{M,z}|}\right)^{(\ell -t+|\mathsf{in}_{M,z}|)/2}.
        \end{equation}
\end{lemma}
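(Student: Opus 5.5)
The plan is to reduce the sum over $\bfa$ to a sum of Fourier coefficients of $\phi_A$ at a fixed Hamming weight $\|\bfa\|_{\sfH}$ (counted in edges). We then bound those coefficients with the level-$d$ inequality (Proposition~\ref{prop:level_d_classical}) for \eqref{eq:induction_level_d_2}, and with Cauchy--Schwarz for \eqref{eq:induction_level_d_1}. The constraint $\|z+[\bfa]\|_{\sfH}=\ell$ is used only combinatorially, and it forces $\bfa$ to have substantial weight.

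\textbf{Combinatorics of $\bfa$.} Fix $\bfa\in\calX(M)$ with $\|z+[\bfa]\|_{\sfH}=\ell$. Split the edges of $M$ with $\bfa(e)\neq 0$ into two kinds: those that avoid $\supp(z)$, or meet it only in a boundary vertex, and those containing internal vertices of $z$. At most $|\itn_{M,z}|/2$ edges contain internal vertices, and they carry at most $k|\itn_{M,z}|$ coordinates of $[\bfa]$. The coordinates of $\supp(z+[\bfa])$ are of three types:
\begin{itemize}
\item coordinates in $\supp(z)$ not covered by an edge where $\bfa$ is nonzero, of which there are at most $t-|\itn_{M,z}|$ (boundary vertices may cancel, but that only helps);
\item coordinates of edges with $\bfa(e)\ne0$.
\end{itemize}
Each edge with $\bfa(e)\neq0$ contributes at least two nonzero coordinates because $\bfa\in\calX(M)$, and at most $k$. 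Hence the number $w:=\|\bfa\|_{\sfH}$ of active edges satisfies $kw\geq \ell-t+|\itn_{M,z}|=:d$. More usefully, we will organize the count so that the relevant parameter is $d$.

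We will enumerate $\bfa$ as follows. First choose the values on edges meeting $\supp(z)$; these edges are determined by $M$ and $z$, number at most $t$, and give at most $N^{kt}$ choices. The remaining $\bfa$ lives on edges disjoint from $\supp(z)$, and its contribution to $\|z+[\bfa]\|_{\sfH}$ is exactly $\sum_e\|\bfa(e)\|_{\sfH}$, which is pinned to within $t$ of $d$. Writing $\bfa=\bfa_1\uplus\bfa_2$ accordingly, the sum is at most $N^{kt}$ times a maximum over $\bfa_1$ of $\sum_{\bfa_2}|\widehat{\phi_A}(\bfa_1\uplus\bfa_2)|$. This inner sum ranges over $\bfa_2$ of total coordinate weight at least $d-O(t)$, with every edge having weight $\geq2$. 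We absorb the $O(t)$ slack into the second $N^{kt}$ factor by monotonicity manipulations; this is where $N^{2kt}$ comes from.

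\textbf{Analytic bound.} To handle the fixed $\bfa_1$, we pass to the restriction of $\phi_A$ in which the edges of $\bfa_1$ are averaged against the character $\psi_{\bfa_1}$. This produces a function $g$ on the remaining coordinates with $\|g\|_1\le1$ and $\|g\|_2^2\le\|\phi_A\|_2^2\le 2^{s^*}$. We apply Cauchy--Schwarz on level $j\ge d'$ coordinate-weight, where the number of characters with edge-weight $w$ and coordinate-weight $j\ge 2w$ is at most $\binom{m}{w}N^{j}\binom{kw}{j}$. For \eqref{eq:induction_level_d_1} we use $\sum|\widehat g|^2\le 2^{s^*}$, which gives roughly $2^{s^*/2}(\binom{m}{w}\cdots)^{1/2}\lesssim 2^{s^*/2}(6m/j)^{j/4}$ since $w\le j/2$. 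For \eqref{eq:induction_level_d_2}, with $\ell\le s^*$, we instead use Proposition~\ref{prop:level_d_classical} at degree $w$ on the product space $\Map{M}{\ZNk}$ (alphabet size $N^k$, so $\log(\|g\|_2/\|g\|_1)\le s^*/2$), which gives $\|g^{=w}\|_2^2\le (4N^k s^*/w)^{w}$. Combined with the character count $(em/w)^{w/2}$ and $w\le j/2$, this yields $(8\sqrt{ms^*}/j)^{j/2}$ after absorbing $N$-powers into $N^{2kt}$ and summing the geometric series over $j$.

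\textbf{Main obstacle.} The hard part is the bookkeeping that turns edge weight $w$ into coordinate weight $j\ge d$ uniformly. In particular, we must ensure that the $N^{k}$ factors per edge coming from the alphabet size in hypercontractivity do not accumulate beyond $N^{2kt}$. This would need either an absorption into the constant $8$ or $6$, which is doubtful, or a more careful level-$d$ inequality on the subspace spanned by characters with $\|\bfa(e)\|_{\sfH}\ne1$. I would first try to prove the per-$w$ bound directly and verify that the per-edge $N^k$ factor cancels against the $N^{-j}$ coming from Parseval normalization on $\Map{M}{\ZNk}$.
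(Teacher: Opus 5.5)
Your route is the paper's: fix $\bfa_{|M'}$ on the set $M'$ of edges meeting $\supp(z)$ (at most $|\calX(M')|\le N^{kt}$ choices). Then apply Cauchy--Schwarz to the sum over the remaining coordinates, count the terms, and bound the $\ell^2$ mass by Parseval for \eqref{eq:induction_level_d_1} and by the level-$d$ inequality on the alphabet $\ZNk$ for \eqref{eq:induction_level_d_2}.

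The obstacle you flag at the end is fatal. The $N^{k}$ factors per active edge are really there: the $\psi_{\bfa}$ are orthonormal for the normalized inner product, so Parseval supplies no compensating $N^{-j}$. In fact the inequalities as stated are false. Take $k=2$, $m=1$, $z=0$ (so $t=0$ and $d:=\ell-t+|\itn_{M,z}|=\ell$), $\ell=2$, and $A=\{\bdxi_0\}$, which is admissible with $s^*=2\log_2 N$. Then $\phi_A(\bdxi)=N^{2}\,\ind{\bdxi=\bdxi_0}$, so $|\widehat{\phi_A}(\bfa)|=|\psi_{\bfa}(\bdxi_0)|=1$ for every $\bfa$. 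The left-hand side therefore counts the $\bfa(e)\in\bZ_N^{2}$ with both coordinates nonzero, namely $(N-1)^2$. The right-hand sides are $\sqrt{3}\,N$ and $4\sqrt{2\log_2 N}$, both smaller once $N\ge 4$.

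The paper's proof has the same defect. Its intermediate bounds are $(6mN^{k}/d)^{d/2}$ for the number of terms and $(8N^{k}s^*/d)^{d/2}$ for the level-$d$ mass. The final summation over $\bfa'\in\calX(M')$ silently drops the resulting $N^{kd/4}$, resp.\ $N^{kd/2}$, which cannot be hidden in $N^{2kt}$ since $d$ can be much larger than $t$. What both arguments actually prove is
\[
2^{s^*/2}N^{kt}\left(\frac{6mN^{k}}{d}\right)^{d/4}\qquad\text{and}\qquad N^{kt}\left(\frac{8N^{k}\sqrt{ms^*}}{d}\right)^{d/2}.
\]
Since $d\le\ell$, the extra factor is at most $N^{k\ell/2}$. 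The later mass-transfer and induction lemmas appear able to absorb this by enlarging the power of $N$ in their constants, so the main results seem unaffected. This corrected form is the statement you should aim to prove.

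There are also two smaller problems.

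First, your combinatorial step points the wrong way. The lower bound $kw\ge d$ is of no use. Moreover, the claim that at most $t-|\itn_{M,z}|$ coordinates of $\supp(z)$ escape the active edges is false, since internal vertices on edges where $\bfa$ vanishes survive. What is needed is $\|z+[\bfa_{|M'}]\|_{\sfH}\ge t-|\itn_{M,z}|$: vertices of $\supp(z)$ not covered by $M$ survive, and an edge containing a boundary vertex always keeps a nonzero coordinate. Together with $\|[\bfa_{\setminus M'}]\|_{\sfH}\ge 2\|\bfa_{\setminus M'}\|_{\sfH}$, this gives $\|\bfa_{\setminus M'}\|_{\sfH}\le d/2$. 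Then there is no ``$O(t)$ slack'' to absorb, and a single factor $N^{kt}$ suffices.

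Second, your partially averaged $g$ can have $\|g\|_1\ll 1$, so $\log_2(\|g\|_2/\|g\|_1)\le s^*/2$ does not follow. It is cleaner to apply the H\"older bound \eqref{ineq:level_d_classical} directly, since it is nondecreasing in both $\|f\|_1$ and $\|f\|_2$. Apply it to $f=\phi_A\cdot\overline{\psi_{\bfa_{|M'}\uplus\mathbf{0}}}$, whose norms equal those of $\phi_A$, at degree $d/2$ with $\|f\|_1=1$, $\|f\|_2\le 2^{s^*/2}$ and $q=4s^*/d\ge 2$. The paper's appeal to Proposition~\ref{prop:level_d_classical} implicitly needs this monotone form too, because that proposition's hypothesis involves the actual ratio $\|\phi_A\|_2/\|\phi_A\|_1$, which may be below $2^{s^*/2}$.
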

\begin{proof}    
    Let $M'\subseteq M$ be the set of edges of $M$ that contains at least one vertex in $\supp(z)$. For any $\bfa'\in \calX(M')$, we have
    \begin{equation}\label{eq:pre_level_d}
        \sum_{\substack{\bfa\in \calX(M),\;\|z+[\bfa]\|_{\sfH}=\ell,\\
        \bfa_{|M'}=\bfa'}}\left|\widehat{\phi_A}(\bfa)\right| 
        \leq \sqrt{\sum_{\substack{\bfa\in \calX(M),\;\|z+[\bfa]\|_{\sfH}=\ell,\\
        \bfa_{|M'}=\bfa'}} 1}\cdot \sqrt{\sum_{\substack{\bfa\in \calX(M),\;\|z+[\bfa]\|_{\sfH}=\ell,\\
        \bfa_{|M'}=\bfa'}} \left|\widehat{\phi_A}(\bfa)\right|^2}.
    \end{equation}

    Note that for any $\bfa\in \calX(M)$, we have
    \[
    \|z+[\bfa]\|_{\sfH}=\left\|z+[\bfa_{|M'}]\right\|_{\sfH}+\left\|[\bfa_{\setminus M'}]\right\|_{\sfH}\geq t-\left|\itn_{M,z}\right|+2\left\|\bfa_{\setminus M'}\right\|_{\sfH}.
    \]
    If $\|z+[\bfa]\|_{\sfH}=\ell$, it follows that 
    \begin{equation}\label{eq:i_at_least_t-ell}
    \left\|\bfa_{\setminus M'}\right\|_{\sfH}\leq (\ell-t+|\itn_{M,z}|)/2.
    \end{equation}
    Therefore, 
    \begin{equation}\label{eq:induction_level_L0}
    \sum_{\substack{\bfa\in \calX(M),\;\|z+[\bfa]\|_{\sfH}=\ell,\\
        \bfa_{|M'}=\bfa'}} 1\leq \sum_{j=0}^{\lfloor (\ell-t+|\itn_{M,z}|)/2\rfloor}\binom{m}{j}N^{kj}\leq \left(\frac{6mN^{k}}{\ell-t+|\itn_{M,z}|}\right)^{(\ell-t+|\itn_{M,z}|)/2}.
    \end{equation}
    Similarly, we also have
    \begin{equation}\label{eq:induction_level_d}
    \sum_{\substack{\bfa\in \calX(M),\;\|z+[\bfa]\|_{\sfH}=\ell,\\
        \bfa_{|M'}=\bfa'}} \left|\widehat{\phi_A}(\bfa)\right|^2 \leq
    \sum_{\substack{\bfa''\in \calX(M\setminus M')\\ \|\bfa''\|_{\sfH}\leq (\ell-t+|\itn_{M,z}|)/2}} \left|\widehat{\phi_A}(\bfa'\uplus \bfa'')\right|^2
    \leq \left\|\big(\phi_{A}\cdot \overline{\psi_{\bfa'\uplus \mathbf{0}}}\big)^{\leq (\ell-t+|\itn_{M,z}|)/2}\right\|_{2}^{2},
    \end{equation}
    Note that
    \[
\frac{\left\|\phi_{A}\cdot \overline{\psi_{\bfa'\uplus \mathbf{0}}}\right\|_{2}}{\left\|\phi_{A}\cdot \overline{\psi_{\bfa'\uplus \mathbf{0}}}\right\|_{1}}=\frac{\|\phi_{A}\|_{2}}{\|\phi_{A}\|_{1}}=\sqrt{\frac{2^{m}}{|A|}}\leq 2^{s^{*}/2}.
    \]
    If $\ell\leq s^*$, we have $(\ell-t+|\itn_{M,z}|)/2\leq \ell/2\leq \log_{2}(\|\phi_{A}\|_{2}/\|\phi_{A}\|_{1})$, so we may apply Proposition~\ref{prop:level_d_classical} to \eqref{eq:induction_level_d} and get
    \begin{equation}\label{eq:induction_level_d_bound}
     \sum_{\substack{\bfa\in \calX(M),\;\|z+[\bfa]\|_{\sfH}=\ell,\\
        \bfa_{|M'}=\bfa'}} \left|\widehat{\phi_A}(\bfa)\right|^2\leq \left(\frac{8N^{k}s^*}{\ell-t+|\itn_{M,z}|}\right)^{(\ell-t+|\itn_{M,z}|)/2}.
    \end{equation}

    We are now ready to finish the proof. We first establish \eqref{eq:induction_level_d_1}. Apply \eqref{eq:induction_level_L0} to the first square-root term on the right-hand side of \eqref{eq:pre_level_d}, and bound the second term by \(\|\phi_A\|_2 \le 2^{s^*/2}\).
Summing the resulting inequality over all $\bfa' \in \calX(M')$ immediately yields \eqref{eq:induction_level_d_1}, since $|\calX(M')| \le  N^{k|M'|}\leq N^{kt}$.

We next prove \eqref{eq:induction_level_d_2} for $\ell \le s^*$. As before, we apply \eqref{eq:induction_level_L0} to the first square-root term in \eqref{eq:pre_level_d}. This time, however, we bound the second square-root term using \eqref{eq:induction_level_d} together with \eqref{eq:induction_level_d_bound}. Summing over all $\bfa' \in \calX(M')$ then immediately gives \eqref{eq:induction_level_d_2}.
\end{proof}

\subsubsection{Mass Transfer to Low-Weight Coefficients}\label{subsubsec:mass_transfer_to_low_weight}

We can now use \eqref{eq:induction_level_d_2} to calculate an upper bound on $Q_{\Lambda}^{\calU,m}(s^*,t,\ell)$ in the case $1\leq \ell \leq s^*$.

\begin{lemma}\label{lem:mass_transfer_to_low_weight}
Suppose $s^*$ is a real number with $0\leq s^*\leq |\calU|$. For nonnegative integers $t$ and $\ell$ such that $\ell\leq s^*$, we have
\[Q^{\calU,m}_{\Lambda}(s^*,t,\ell)\leq 4\left(\frac{72N^{8k}t}{\sqrt{ms^*}}\right)^{t/2}\left(\frac{16\sqrt{ms^*}}{\ell}\right)^{\ell/2}\left(\frac{16m\sqrt{ms^*}}{|\calU|^{2}}\right)^{(t-\ell)^{+}/2}.\]
\end{lemma}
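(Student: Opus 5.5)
The plan is to fix $z$ with $\|z\|_{\sfH}=t$ and a pseudo-uniform $\calD$, apply the pointwise bound \eqref{eq:induction_level_d_2} of Lemma~\ref{lem:induction_level_d} for each sampled $M$, and then average over $M$. Only the random quantity $i:=|\itn_{M,z}|$ matters, and its distribution is controlled by Lemma~\ref{lem:q_calculation}.

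\textbf{Step 1: fix the integrand.} Since $\ell\leq s^*$, for every $M$ the inner maximum over $A$ in \eqref{eq:def_of_Q} is at most
\[
N^{2kt}\left(\frac{8\sqrt{ms^*}}{\ell-t+i}\right)^{(\ell-t+i)/2}.
\]
A character $\bfa\in\calX(M)$ with $\|z+[\bfa]\|_{\sfH}=\ell$ can exist only if $i\geq (t-\ell)^{+}$. So I split the expectation according to the values of $i$ and $b:=|\bd_{M,z}|$. This gives
\[
Q^{\calU,m}_{\Lambda}(s^*,t,\ell)\leq N^{2kt}\sum_{i\geq (t-\ell)^+}\sum_{b\leq t-i} q^{\calU,m}_{\Lambda}(t,i,b)\left(\frac{8\sqrt{ms^*}}{\ell-t+i}\right)^{(\ell-t+i)/2}.
\]
The sum over $b$ is harmless. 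I drop the factor $(m/|\calU|)^{b}\leq 1$, and it costs at most a factor $t+1\leq 2^{t}$, which is absorbed into the constants.

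\textbf{Step 2: plug in $q$.} Writing $j:=\ell-t+i\geq 0$, Lemma~\ref{lem:q_calculation} bounds each term by
\[
(24k^{2})^{t}\left(\frac{im}{|\calU|^{2}}\right)^{i/2}\left(\frac{8\sqrt{ms^*}}{j}\right)^{j/2}.
\]
The goal is to rewrite this in the three factors of the target. I use the identity $i/2=(t-\ell)/2+j/2$ and put the two exponent pieces in separate factors:
\begin{itemize}
\item a factor $\bigl(m\sqrt{ms^*}/|\calU|^{2}\bigr)^{(t-\ell)/2}$;
\item a factor $\bigl(\sqrt{ms^*}/\ell\bigr)^{\ell/2}$;
\item a leftover factor $\bigl(t/\sqrt{ms^*}\bigr)^{t/2}$.
\end{itemize}
The leftover factor arises because $i\leq t$ and $(i\,m/|\calU|^{2})^{i/2}$ carries an extra $(\sqrt{ms^*})^{-1}$ per unit relative to $(m\sqrt{ms^*}/|\calU|^{2})$.

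More precisely, I write $(im/|\calU|^{2})^{i/2}$ as the product of $\bigl(m\sqrt{ms^*}/|\calU|^{2}\bigr)^{i/2}$ and $\bigl(i/\sqrt{ms^*}\bigr)^{i/2}$. The second piece is at most $\bigl(t/\sqrt{ms^*}\bigr)^{t/2}$ up to $\bigl(\sqrt{ms^*}/t\bigr)^{(t-i)/2}$ corrections. I then combine the powers of $\sqrt{ms^*}$ with $\bigl(\sqrt{ms^*}/j\bigr)^{j/2}$ to reach $\bigl(\sqrt{ms^*}/\ell\bigr)^{\ell/2}$. For this I use the convexity inequality $j^{j}(t-i)^{t-i}\geq (\ell/2)^{\ell}$, the same trick as in Lemma~\ref{lem:q_calculation}, noting that $j+(t-i)=\ell$.

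When $t<\ell$, I bound the excess factor $\bigl(m\sqrt{ms^*}/|\calU|^{2}\bigr)^{(j-\ell)/2}$ against a matching term. When $t\geq\ell$, I bound the factor $\bigl(m\sqrt{ms^*}/|\calU|^{2}\bigr)^{j/2}\leq 1$ using $s^*\leq|\calU|$ and $m\leq|\calU|$. All stray constants ($24k^{2}$, $N^{2k}$, the $8$'s, $2^{t}$) fold into $72N^{8k}$ and $16$, since $k^{2}\leq N^{2k}$.

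\textbf{Step 3: sum over $i$.} The factor $4$ should come from a geometric sum over $i$ (equivalently over $j$). Consecutive terms shrink by a ratio of order $(m\sqrt{ms^*}/|\calU|^{2})^{1/2}$ times constants, which is at most $1/2$ when $m$ is small relative to $|\calU|$. If that smallness is not available here, I fall back on bounding the number of terms by $t+1$ and absorbing it into $72^{t/2}$.

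\textbf{Main obstacle.} I expect the exponent bookkeeping in Step 2 to be the hard part. Splitting $(i/\sqrt{ms^*})^{i/2}$ and $(\sqrt{ms^*}/j)^{j/2}$ so that exactly $(t/\sqrt{ms^*})^{t/2}(\sqrt{ms^*}/\ell)^{\ell/2}$ emerges requires care, particularly when $t-i$ and $j$ are both small or zero (using the convention $0^{0}=1$). I will first try to use monotonicity of $x\mapsto(c/x)^{x}$ for $x\leq c/e$, and if that fails, handle the cases $t\leq\ell$ and $t>\ell$ separately.
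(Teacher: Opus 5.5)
Your proposal is correct and follows essentially the same route as the paper: bound the inner maximum by \eqref{eq:induction_level_d_2}, split according to $(i,b)$ with $i\geq (t-\ell)^{+}$, insert Lemma~\ref{lem:q_calculation}, regroup the exponents into the three target factors, and sum geometrically over $i$. The paper, too, relies on smallness conditions such as $m/|\calU|\leq 1/2$ and $16m\sqrt{ms^*}/|\calU|^{2}\leq 1/4$, which hold in the application but are not stated in the lemma. The differences are cosmetic. The paper does the exponent bookkeeping in one step via Proposition~\ref{prop:calculus}, namely $i^{i}\ell^{\ell}t^{-t}(\ell-t+i)^{-(\ell-t+i)}\leq 2^{i+\ell}$, where you combine $i^{i}\leq t^{i}$ with the Jensen bound $j^{j}(t-i)^{t-i}\geq(\ell/2)^{\ell}$. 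The paper also sums the $b$-terms geometrically rather than counting them. Your route still yields the stated constants $72N^{8k}$ and $16$.
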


\begin{proof}
We first expand the definition \eqref{eq:def_of_Q} as follows:
\begin{equation}\label{eq:def_of_Q_expanded}
Q^{\calU,m}_{\Lambda}(s^*,t,\ell)=\max_{z,\calD}\Exu{M\sim\calD}{\sum_{i,b\geq 0}\ind{|\itn_{M,z}|=i\text{ and }|\bd_{M,z}|=b}\cdot\max_{A}\sum_{\bfa\in \calX(M),\;\|z+[\bfa]\|_{\sfH}=\ell} \left|\widehat{\phi_{A}}(\bfa)\right|},
\end{equation}
where the first maximum is taken over all vectors $z\in \ZmodN^{\Lambda}$ of Hamming weight $\|z\|_{\sfH}=t$ and all pseudo-uniform distributions $\calD$ over $\calM_{\calU,m}$, while the second maximum taken is over all sets $A\subseteq \Map{M}{\ZNk}$ of size $|A|\geq 2^{-s^*}N^{km}$. Note that only when $|\itn_{M,z}|\geq t-\ell$ can there exist $\bfa\in \calX(M)$ such that $\|z+[\bfa]\|_{\sfH}=\ell$ (see \eqref{eq:i_at_least_t-ell}). Therefore, applying Definition~\ref{def:q_for_pseudo_uniform} and Lemma~\ref{lem:induction_level_d} to \eqref{eq:def_of_Q_expanded} yields
\begin{align*}
    Q^{\calU,m}_{\Lambda}(s^*,t,\ell)&\leq \sum_{i=(t-\ell)^{+}}^{+\infty}\sum_{b=0}^{+\infty}q^{\calU,m}_{\Lambda}(t,i,b)\cdot N^{2kt}\left(\frac{8\sqrt{ms^*}}{\ell-t+i}\right)^{(\ell-t+i)/2}\\
    &\leq \sum_{i=(t-\ell)^{+}}^{+\infty}\sum_{b=0}^{+\infty}(24k^{2}N^{2k})^{t}\left(\frac{im}{|\calU|^{2}}\right)^{i/2}\left(\frac{m}{|\calU|}\right)^{b}\left(\frac{8\sqrt{ms^*}}{\ell-t+i}\right)^{(\ell-t+i)/2}\tag{using Proposition~\ref{lem:q_calculation}}\\
    &\leq 2(24N^{4k})^{t}
    \sum_{i=(t-\ell)^{+}}^{+\infty}\left(\frac{im}{|\calU|^{2}}\right)^{i/2}\left(\frac{8\sqrt{ms^*}}{\ell-t+i}\right)^{(\ell-t+i)/2}\tag{using $m/|\calU|\leq 1/2$}\\
    &\leq \frac{2\left(24N^{4k}\sqrt{t}\right)^{t}}{(\ell/2)^{\ell/2}}\sum_{i=(t-\ell)^{+}}^{+\infty}\left(\frac{2m}{|\calU|^{2}}\right)^{i/2}\left(8\sqrt{ms^*}\right)^{(\ell-t+i)/2}\tag{using $i^{i}\ell^{\ell}t^{-t}(\ell-t+i)^{-(\ell-t+i)}\leq 2^{i+\ell}$, which is due to Proposition~\ref{prop:calculus}}\\
    &=2\left(\frac{72N^{8k}t}{\sqrt{ms^*}}\right)^{t/2}\left(\frac{16\sqrt{ms^*}}{\ell}\right)^{\ell/2}\sum_{i=(t-\ell)^+}^{+\infty}\left(\frac{16m\sqrt{ms^*}}{|\calU|^{2}}\right)^{i/2}\tag{rearranging}\\
    &\leq 4\left(\frac{72N^{8k}t}{\sqrt{ms^*}}\right)^{t/2}\left(\frac{16\sqrt{ms^*}}{\ell}\right)^{\ell/2}\left(\frac{16m\sqrt{ms^*}}{|\calU|^{2}}\right)^{(t-\ell)^{+}/2},\tag{using $16m\sqrt{ms^*}/|\calU|^{2}\leq 1/4$}
\end{align*}
as desired.
\end{proof}

\subsubsection{Mass Transfer to Intermediate-Weight Coefficients}\label{subsubsec:mass_transfer_to_intermediate_weight}

To handle the case where $\ell>s^*$, we will use \eqref{eq:induction_level_d_1} instead of \eqref{eq:induction_level_d_2}.

\begin{lemma}\label{lem:mass_transfer_to_intermediate_weight}
Suppose $m \le (6k)^{-1}|\mathcal{U}|$ and $s^*$ is a real number with $0\leq s^*\leq |\calU|$. For nonnegative integers $t$ and $\ell$ such that $\ell>s^*$, we have
\[Q^{\calU,m}_{\Lambda}(s^*,t,\ell)\leq 4\left(\frac{2^{16}N^{16k}t}{m}\right)^{t/4}\left(\frac{96m}{\ell}\right)^{\ell/4}\left(\frac{12m^{3}t}{|\calU|^{3}}\right)^{(t-\ell)^{+}/4}.\]
\end{lemma}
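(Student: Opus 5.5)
The proof will mirror that of Lemma~\ref{lem:mass_transfer_to_low_weight}, with \eqref{eq:induction_level_d_1} in place of \eqref{eq:induction_level_d_2}. First I expand the definition \eqref{eq:def_of_Q} as in \eqref{eq:def_of_Q_expanded}, splitting according to the values $i=|\itn_{M,z}|$ and $b=|\bd_{M,z}|$. By \eqref{eq:i_at_least_t-ell}, only $i\geq (t-\ell)^{+}$ contributes. On each event I bound the inner maximum over $A$ by \eqref{eq:induction_level_d_1}, and the probability of the event by $q^{\calU,m}_{\Lambda}(t,i,b)$, which Lemma~\ref{lem:q_calculation} controls. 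This gives
\[
Q^{\calU,m}_{\Lambda}(s^*,t,\ell)\leq \sum_{i\geq (t-\ell)^{+}}\sum_{b\geq 0}(24k^{2})^{t}\left(\frac{im}{|\calU|^{2}}\right)^{i/2}\left(\frac{m}{|\calU|}\right)^{b}2^{s^*/2}N^{2kt}\left(\frac{6m}{\ell-t+i}\right)^{(\ell-t+i)/4}.
\]
The $b$-sum is a geometric series with ratio $m/|\calU|\leq 1/2$, so it contributes a factor of at most $2$.

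Next I absorb the factor $2^{s^*/2}$. Since $\ell>s^*$, we have $2^{s^*/2}\leq 2^{\ell/2}$, so this factor can be folded into the constant in the $(\cdot/\ell)^{\ell/4}$ term. Depending on which bookkeeping is cleaner, it may instead go into the $(\ell-t+i)$ term. Then I normalize the denominators exactly as in the low-level lemma. Using Proposition~\ref{prop:calculus} in the form $i^{i}\ell^{\ell}t^{-t}(\ell-t+i)^{-(\ell-t+i)}\leq 2^{i+\ell}$, raised to the power $1/4$, I replace $(\ell-t+i)^{-(\ell-t+i)/4}$ by $\ell^{-\ell/4}t^{t/4}i^{-i/4}$ times $2^{(i+\ell)/4}$. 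The leftover $i^{i/4}$ combines with $(im/|\calU|^{2})^{i/2}$ to give $i^{3i/4}$.

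I then rearrange the summand into the shape
\[
\left(\frac{c_1N^{16k}t}{m}\right)^{t/4}\left(\frac{c_2 m}{\ell}\right)^{\ell/4}\left(\frac{c_3 i^{3}m^{3}}{|\calU|^{4}\cdot(\text{stuff})}\right)^{i/4}.
\]
Concretely, $(6m)^{(\ell-t+i)/4}$ splits as $(6m)^{\ell/4}(6m)^{-t/4}(6m)^{i/4}$. The $i$-th term is then essentially $(c\, i^{3}m^{3}/|\calU|^{4})^{i/4}$ up to constants. Because $i\leq t$ (and $i\leq k|\calU|$ by Proposition~\ref{prop:Q_tilde}-type reasoning, or trivially $i\leq |\supp(z)|$), I bound $i^{3}/|\calU|^{4}\leq t\cdot i^{2}/|\calU|^{4}$. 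Since $i\leq km\cdot$const here, and the hypothesis $m\leq (6k)^{-1}|\calU|$ holds, this gives $i^2\lesssim k^2m^2\leq |\calU|^2/36$, so the ratio in the $i$-th term is at most roughly $12m^{3}t/|\calU|^{3}$ times a constant below $1$ relative to the next term. The series in $i$ therefore decays geometrically and is dominated by its first term $i=(t-\ell)^{+}$, giving the factor $(12m^{3}t/|\calU|^{3})^{(t-\ell)^{+}/4}$ together with another factor of $2$.

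The main obstacle is this last step. The geometric decay in $i$ now comes from $(i^{3}m^{3}/|\calU|^{4})^{1/4}$, not from a clean ratio as in the low-level case, so $i$ must be controlled. For this I will use that each internal vertex lies in an edge of $M$, which gives $i\leq k m$. Combined with $m\leq(6k)^{-1}|\calU|$, this yields $i\leq |\calU|/6$, enough to make the ratio of consecutive terms at most $1/2$ once constants like $2^{16}N^{16k}$ and $96$ absorb the remaining factors of $2$, $6$, and $24k^{2}N^{2k}$. Collecting constants (the product of the two factors of $2$ gives the leading $4$) yields the stated bound.
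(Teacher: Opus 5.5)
Your overall strategy works, and once repaired it is slightly more direct than the paper's. As written, however, the decisive step contains an exponent error that breaks it. Proposition~\ref{prop:calculus} contributes a factor $i^{-i/4}$. Combined with the $i^{i/2}$ from Lemma~\ref{lem:q_calculation}, this leaves $i^{i/4}$, not $i^{3i/4}$. Concretely, after applying Proposition~\ref{prop:calculus}, the summand is at most $(12m/\ell)^{\ell/4}\,(t/(6m))^{t/4}\,(12im^{3}/|\calU|^{4})^{i/4}$.

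With your $i^{3}$ the argument fails. For $i$ near $km$, which can be $\Theta(|\calU|)$, the base $i^{3}m^{3}/|\calU|^{4}$ is of order $|\calU|^{2}$, so there is no geometric decay. Moreover, your estimate $i^{3}\leq t\,i^{2}\lesssim t|\calU|^{2}/36$ only produces $tm^{3}/|\calU|^{2}$, which is a factor $|\calU|$ worse than the $12tm^{3}/|\calU|^{3}$ you claim.

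With the correct exponent the step goes through:
\begin{itemize}
\item Only $(t-\ell)^{+}\leq i\leq\min\{t,km\}$ contributes. Terms with $i>km$ vanish because $\itn_{M,z}$ lies among the $km$ vertices covered by $M$.
\item Split $(12im^{3}/|\calU|^{4})^{i/4}$ as $(\cdot)^{(t-\ell)^{+}/4}\cdot(\cdot)^{(i-(t-\ell)^{+})/4}$.
\item Bound the first factor using $i\leq t$.
\item Bound the second using $12im^{3}/|\calU|^{4}\leq 12km^{4}/|\calU|^{4}\leq 1/16$.
\end{itemize}
The $i$-sum is then at most $2\,(12tm^{3}/|\calU|^{4})^{(t-\ell)^{+}/4}$. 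The $b$-sum contributes another factor $2$. Finally, $2^{s^*/2}\leq 4^{\ell/4}$ and $24^{4}k^{8}N^{8k}/6\leq 2^{16}N^{16k}$ finish the proof.

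Once repaired, your route differs from the paper's as follows. The paper first bounds $Q^{\calU,m}_{\Lambda}(s^*,r,\ell)$ only for $r\leq k|\calU|$, obtaining decay in $i$ from $i\leq r$; this needs $12m^{3}r/|\calU|^{4}\leq 1/16$. It then reaches arbitrary $t$, which can exceed $k|\calU|$ when $z$ has support outside $\tcup\calU$, via Proposition~\ref{prop:Q_tilde} and a second application of Proposition~\ref{prop:calculus}. That second step costs an extra $2^{\ell/4}$, which is why the paper's constant is $96$ rather than $48$. Your observation $|\itn_{M,z}|\leq km$ makes the decay uniform in $t$. The reduction is therefore unnecessary, and you get the slightly stronger bound with $(48m/\ell)^{\ell/4}$ and $|\calU|^{4}$ in the last factor.
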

\begin{proof}
We first calculate an upper bound for $Q^{\calU,m}_{\Lambda}(s^*,r,\ell)$, for any nonnegative integers $r$ and $\ell$ such that $r\leq k|\calU|$ (we do not assume $\ell>s^*$ for now). 

As in the proof of Lemma~\ref{lem:mass_transfer_to_low_weight}, we apply Definition~\ref{def:q_for_pseudo_uniform} and Lemma~\ref{lem:induction_level_d} to \eqref{eq:def_of_Q_expanded}. The only difference is that rather than applying the bound \eqref{eq:induction_level_d_2} in Lemma~\ref{lem:induction_level_d} that works for $\ell\leq s^*$, this time we apply the bound \eqref{eq:induction_level_d_1} which works for all nonnegative integers $\ell$. We get
\begin{align*}
Q^{\calU,m}_{\Lambda}(s^*,r,\ell)&\leq \sum_{i=(r-\ell)^{+}}^{r}\sum_{b=0}^{+\infty}q^{\calU,m}_{\Lambda}(r,i,b)\cdot 2^{s^*/2}N^{2kr}\left(\frac{6m}{\ell-r+i}\right)^{(\ell-r+i)/4}\\
&\leq 2^{s^*/2}\sum_{i=(r-\ell)^{+}}^{r}\sum_{b=0}^{+\infty}(24k^{2}N^{2k})^{r}\left(\frac{im}{|\calU|^{2}}\right)^{i/2}\left(\frac{m}{|\calU|}\right)^{b}\left(\frac{6m}{\ell-r+i}\right)^{(\ell-r+i)/4}\tag{using Proposition~\ref{lem:q_calculation}}\\
&\leq 2^{s^*/2+1}(24N^{4k})^{r}\sum_{i=(r-\ell)^{+}}^{r}\left(\frac{im}{|\calU|^{2}}\right)^{i/2}\left(\frac{6m}{\ell-r+i}\right)^{(\ell-r+i)/4}\tag{using $m/|\calU|\leq 1/2$}\\
&\leq \frac{2^{s^*/2+1}\left(24N^{4k}r^{1/4}\right)^{r}}{(\ell/2)^{\ell/4}}\sum_{i=(r-\ell)^{+}}^{r}\left(\frac{m\sqrt{2i}}{|\calU|^{2}}\right)^{i/2}(6m)^{(\ell-r+i)/4}
\tag{using $i^{i}\ell^{\ell}r^{-r}(\ell-r+i)^{-(\ell-r+i)}\leq 2^{i+\ell}$, which is due to Proposition~\ref{prop:calculus}}\\
&=2^{s^*/2+1}\left(\frac{2^{11}3^{3}N^{16k}r}{m}\right)^{r/4}\left(\frac{12m}{\ell}\right)^{\ell/4}\sum_{i=(r-\ell)^{+}}^{r}\left(\frac{12m^{3}i}{|\calU|^{4}}\right)^{i/4}\tag{rearranging}\\
&\leq 2^{s^*/2+2}\left(\frac{2^{16}N^{16k}r}{m}\right)^{r/4}\left(\frac{12m}{\ell}\right)^{\ell/4}\left(\frac{12m^{3}r}{|\calU|^{4}}\right)^{(r-\ell)^{+}/4}\tag{using $12m^{3}r/|\calU|^{4}\leq 1/16$}
.\end{align*}

Now we give an upper bound for $Q^{\calU,m}_{\Lambda}(s^*,t,\ell)$ for nonnegative integers $t$ and $\ell$ such that $\ell>s^*$. By Proposition~\ref{prop:Q_tilde} and the calculation above, we have
\begin{align*}
Q^{\calU,m}_{\Lambda}(s^*,t,\ell)&\leq
\max_{r}Q^{\calU,m}_{\Lambda}(s^*,r,\ell-t+r)\tag{where $r$ ranges over all integers such that $(t-\ell)^+\leq r\leq \min\{k|\calU|,t\}$}\\
&\leq \max_{r}\left(2^{s^*/2+2}\left(\frac{2^{16}N^{16k}r}{m}\right)^{r/4}\left(\frac{12m}{\ell-t+r}\right)^{(\ell-t+r)/4}\left(\frac{12m^{3}r}{|\calU|^{4}}\right)^{(t-\ell)^{+}/4}\right)\\
&\leq 2^{s^*/2+2}\left(\frac{2^{16}N^{16k}t}{m}\right)^{t/4}\left(\frac{24m}{\ell}\right)^{\ell/4}\left(\frac{12m^{3}t}{|\calU|^{4}}\right)^{(t-\ell)^{+}/4}\tag{using $r^{r}\ell^{\ell}t^{-t}(\ell-t+r)^{-(\ell-t+r)}\leq 2^{\ell}$, which is due to Proposition~\ref{prop:calculus}}\\
&\leq 4\left(\frac{2^{16}N^{16k}t}{m}\right)^{t/4}\left(\frac{96m}{\ell}\right)^{\ell/4}\left(\frac{12m^{3}t}{|\calU|^{4}}\right)^{(t-\ell)^{+}/4},\tag{using $s^*\leq \ell$}
\end{align*}
as desired.
\end{proof}
\subsection{The Induction Step}\label{subsec:induction_step}
In this subsection, we use the upper bounds on ``Fourier mass transfer'' calculated in Section~\ref{subsec:mass_transfer} to control the Fourier growth of the product function $f\cdot \bdR^{\Lambda,M}_{\mu}[\phi_{A}]$ featured in Lemma~\ref{lem:induction}. In Sections~\ref{subsubsec:inductive_0},~\ref{subsubsec:inductive_low} and~\ref{subsubsec:inductive_intermediate}, we apply the upper bounds from Sections~\ref{subsubsec:mass_transfer_to_0_weight},~\ref{subsubsec:mass_transfer_to_low_weight} and~\ref{subsubsec:mass_transfer_to_intermediate_weight}, respectively, to control the Fourier growth of $f\cdot \bdR^{\Lambda,M}_{\mu}[\phi_{A}]$ on different levels.

\subsubsection{Inductive Bound for the Zero-Weight Coefficient}\label{subsubsec:inductive_0}
\begin{lemma}\label{lem:inductive_bound_for_0}
Let $\calU$ be a $k$-universe embedded in a finite set $\Lambda$. Fix constants $C\in [1,+\infty)$ and $\delta_{0},\delta\in [0,\frac{1}{2}]$. Let $m\in \bZ$ and $s^*\in \bR$ be parameters satisfying
\[0< m\leq \frac{|\calU|^{2}}{2^{14}k^{4}N|\Lambda|}\quad\text{and}\quad \log_{2}(4/\delta)\leq s^*\leq (\delta/2)^{4}C^{-2}|\calU|.\]Suppose $f:\ZmodN^{\Lambda}\rightarrow [0,+\infty)$ is a $(|\calU|,C,s^*,\delta_{0})$-bounded function. Then we have
\begin{equation}\label{eq:induction_to_0_weight}
\sum_{z\in \ZmodN^{\Lambda}\setminus\{0\}}\left|\widehat{f}(z)\right|Q^{\calU,m}_{\Lambda}\big(s^*,\|z\|_{\sfH},0\big)\leq \delta.
\end{equation}
\end{lemma}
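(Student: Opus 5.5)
Group the left-hand side of \eqref{eq:induction_to_0_weight} by the level $t=\|z\|_{\sfH}\geq 1$, so it equals $\sum_{t\geq 1}\|f^{=t}\|_{\sfW}\,Q^{\calU,m}_{\Lambda}(s^*,t,0)$. By Lemma~\ref{lem:mass_transfer_to_0_weight} and Lemma~\ref{lem:q_calculation} (with $i=t$, $b=0$),
\[
Q^{\calU,m}_{\Lambda}(s^*,t,0)\leq q^{\calU,m}_{\Lambda}(t,t,0)\leq (24k^{2})^{t}\left(\frac{tm}{|\calU|^{2}}\right)^{t/2}.
\]
We then multiply this by a bound on $\|f^{=t}\|_{\sfW}$. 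Three ranges of $t$ are handled separately, and each should contribute at most about $\delta/3$.

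In the low range $1\leq t\leq s^*$, the boundedness of $f$ gives $\|f^{=t}\|_{\sfW}\leq (C\sqrt{s^*|\calU|}/t)^{t/2}$ (valid since $s^*\leq C^{-2}|\calU|$). The $t$-th term is then at most
\[
\left((24k^2)^2\,C\sqrt{s^*}\,m\,|\calU|^{-3/2}\right)^{t/2}.
\]
Because $m\leq |\calU|^{2}/(2^{14}k^{4}N|\Lambda|)\leq |\calU|/(2^{14}k^4)$ (using $|\Lambda|\geq |\calU|$) and $C\sqrt{s^*}\leq (\delta/2)^2\sqrt{|\calU|}$, the base is at most $(24k^2)^2 2^{-14}k^{-4}(\delta/2)^2<\delta^2/4$. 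This gives a geometric series bounded by about $\delta/3$.

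In the middle range $s^*<t\leq C^{-2}|\calU|$, we use $\|f^{=t}\|_{\sfW}\leq (C^{2}|\calU|/t)^{t/4}$. The term is then at most
\[
\left((24k^2)^4 C^2 m^2 t\,|\calU|^{-3}\right)^{t/4}\leq \left((24k^2)^4 C^2 m^2 |\calU|^{-2}\right)^{t/4}.
\]
This requires $C^2m^2/|\calU|^2$ to be small, i.e.\ $m\lesssim |\calU|/C$. I worry this is not implied by the hypotheses as stated. One fix is to use $t\leq C^{-2}|\calU|$, which gives $C^2 t\leq|\calU|$, so the base is at most $(24k^2)^4m^2|\calU|^{-2}\leq 2^{-40}$ (since $m\leq|\calU|/(2^{14}k^4)$). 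Since $t>s^*\geq\log_2(4/\delta)$, the sum is then at most roughly $2^{-10s^*}\ll\delta$.

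The high range $t>C^{-2}|\calU|$ is the main obstacle, because there the only available control is Proposition~\ref{prop:Cs_star_high_degree_bound}: $\|f^{=t}\|_{\sfW}\leq 2^{s^*/2}(3N|\Lambda|/t)^{t/2}$. The product with $q$ is at most
\[
2^{s^*/2}\left((24k^2)^2\,3N|\Lambda|\,m\,|\calU|^{-2}\right)^{t/2}.
\]
The hypothesis $m\leq |\calU|^2/(2^{14}k^4N|\Lambda|)$ is tailored exactly to make this base at most $(24^2\cdot3)\cdot 2^{-14}<1/8$. The term is thus at most $2^{s^*/2-3t/2}$. Since $t>C^{-2}|\calU|\geq (2/\delta)^4 s^*\geq 16s^*$, this is at most $2^{-t}$, and the tail is at most $2^{-16s^*}\leq(\delta/4)^{16}$. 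Finally, summing the three ranges gives \eqref{eq:induction_to_0_weight}, after a careful check that the constants (in particular $s^*\geq\log_2(4/\delta)$, used to absorb the factor $2^{s^*/2}$ and the geometric tails) fit under $\delta$.
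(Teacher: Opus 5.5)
Your argument is correct and is essentially the paper's proof: group by level $t$, bound $Q^{\calU,m}_{\Lambda}(s^*,t,0)$ by $q^{\calU,m}_{\Lambda}(t,t,0)$, use the $F_C$ bound for $1\le t\le s^*$, and use Proposition~\ref{prop:Cs_star_high_degree_bound} for large $t$.

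The one difference is your separate middle range $s^*<t\le C^{-2}|\calU|$, which the paper does not need. It applies Proposition~\ref{prop:Cs_star_high_degree_bound} to every $t\ge s^*$ and absorbs $2^{s^*/2}\le 2^{t/2}$. The hypothesis $m\le |\calU|^2/(2^{14}k^4N|\Lambda|)$ then makes each term at most $(1/4)^{t/2}=2^{-t}$, so that range contributes at most $2^{-s^*+1}\le \delta/2$. This sidesteps your worry about $C$ entirely, and it needs only $t\ge s^*$, not $t\ge 16s^*$.

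There is also a small arithmetic slip in your middle range. The bound $(24k^2)^4m^2|\calU|^{-2}$ is at most $24^4\cdot 2^{-28}\approx 2^{-9.6}$, not $2^{-40}$. This still gives a contribution of order $(\delta/4)^{2.4}\ll\delta$, so your conclusion stands.
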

\begin{proof}
The left-hand side of \eqref{eq:induction_to_0_weight} is at most $S_{1}+S_{2}$, where
\[
S_{1}:=\sum_{t=1}^{\lfloor s^*\rfloor}Q^{\calU,m}_{\Lambda}(s^*,t,0)\cdot \left\|f^{=t}\right\|_{\sfW},\quad\text{and}\quad S_{2}:=\sum_{t=\lceil s^*\rceil}^{|\Lambda|}Q^{\calU,m}_{\Lambda}(s^*,t,0)\cdot\left\|f^{=t}\right\|_{\sfW}.
\]
Applying Lemma~\ref{lem:mass_transfer_to_0_weight} and Definition~\ref{def:Cs_star_bounded} to $S_{1}$, we can calculate
\begin{align*}
S_{1}&\leq \sum_{t=1}^{+\infty}q^{\calU,m}_{\Lambda}(t,t,0)\cdot F_{C}(|\calU|,t,s^*)\\
&\leq \sum_{t=1}^{+\infty}(24k^{2})^{t}\left(\frac{tm}{|\calU|^{2}}\right)^{t/2}\cdot \left(\frac{C\sqrt{|\calU|s^*}}{t}\right)^{t/2}\tag{using Proposition~\ref{lem:q_calculation} and Definition~\ref{def:Fourier_growth_bound}}\\
&=\sum_{t=1}^{+\infty}\left(\frac{576k^4Cm\sqrt{s^*}}{\left|\calU\right|^{3/2}}\right)^{t/2}\tag{rearranging}\\
&\leq \sum_{t=1}^{\infty}\left(\frac{\delta^{2}}{16}\right)^{t/2}\leq \frac{\delta}{2}.\tag{using the upper bounds on $m$ and $s^*$}
\end{align*}
Applying Lemma~\ref{lem:mass_transfer_to_0_weight} and Proposition~\ref{prop:Cs_star_high_degree_bound} to $S_{2}$, we can calculate
\begin{align*}
S_{2}&\leq \sum_{t=\lceil s^*\rceil}^{+\infty}(24k^{2})^{t}\left(\frac{tm}{|\calU|^{2}}\right)^{t/2}\cdot 2^{s^*/2}\left(\frac{3N|\Lambda|}{t}\right)^{t/2}\tag{using Proposition~\ref{lem:q_calculation}}\\
&\leq \sum_{t=\lceil s^*\rceil}^{+\infty}\left(\frac{2^{12}k^{4}Nm|\Lambda|}{|\calU|^{2}}\right)^{t/2}\tag{replacing $s^*$ by $t$}\\
&\leq \sum_{t=\lceil s^*\rceil}^{+\infty}4^{-t/2}\leq 2^{-s^*+1}\leq \frac{\delta}{2}.\tag{using the upper bound on $m$ and the lower bound on $s^*$}
\end{align*}
Therefore, we have $S_{1}+S_{2}\leq \frac{\delta}{2}+\frac{\delta}{2}=\delta$, as desired.
\end{proof}

\subsubsection{Inductive Bound for Low-Weight Coefficients}\label{subsubsec:inductive_low}
\begin{lemma}\label{lem:inductive_bound_for_low}
Let $\calU$ be a $k$-universe embedded in a finite set $\Lambda$. Fix constants $C\in [1,+\infty)$ and $\delta\in [0,\frac{1}{2}]$. Let $m\in \bZ$ and $s^*\in \bR$ be parameters satisfying
\[0< m\leq \frac{|\calU|^{2}}{2^{15}N^{8k+1}|\Lambda|}\quad\text{and}\quad 1\leq s^*\leq (2C)^{-2}m.\]Suppose $f:\ZmodN^{\Lambda}\rightarrow [0,+\infty)$ is a $(|\calU|,C,s^*,\delta)$-bounded function. Then for each integer $\ell$ such that $1\leq \ell\leq s^*$, we have
\begin{equation}\label{eq:induction_to_low_weight}
\sum_{z\in \ZmodN^{\Lambda}}\left|\widehat{f}(z)\right|Q^{\calU,m}_{\Lambda}\big(s^*,\|z\|_{\sfH},\ell\big)\leq \left(\frac{2^{22}N^{8k}C\sqrt{|\calU|s^*}}{\ell}\right)^{\ell/2}.
\end{equation}
\end{lemma}

\begin{proof}
The left-hand side of \eqref{eq:induction_to_low_weight} is at most $S_{0}+S_{1}+S_{2}+S_{3}$, where
\begin{alignat*}{2}
S_{0}&:=Q^{\calU,m}_{\Lambda}(s^*,0,\ell)\cdot \left\|f^{=0}\right\|_{\sfW},\qquad&
S_{1}&:=\sum_{t=1}^{\lfloor s^* \rfloor}
Q^{\calU,m}_{\Lambda}(s^*,t,\ell)\cdot \left\|f^{=t}\right\|_{\sfW},\\
S_{2}&:=\sum_{t=\lceil s^* \rceil}^{\lfloor C^{-2}|\calU| \rfloor}
Q^{\calU,m}_{\Lambda}(s^*,t,\ell)\cdot \left\|f^{=t}\right\|_{\sfW},\qquad&
S_{3}&:=\sum_{t=\lceil C^{-2}|\calU| \rceil}^{|\Lambda|}
Q^{\calU,m}_{\Lambda}(s^*,t,\ell)\cdot \left\|f^{=t}\right\|_{\sfW}.
\end{alignat*}
It suffices to show that $S_{0},S_{1},S_{2},S_{3}$ are all at most $S:=\left(2^{18}N^{8k}C\sqrt{|\calU|s^*}/\ell\right)^{\ell/2}$.

Applying Lemma~\ref{lem:mass_transfer_to_low_weight} and Definition~\ref{def:Cs_star_bounded} to $S_{0}$, we can calculate
\[
S_{0}\leq 4\left(\frac{16\sqrt{ms^*}}{\ell}\right)^{\ell/2}\cdot(1+\delta)\leq S.
\]
Applying Lemma~\ref{lem:mass_transfer_to_low_weight} and Definition~\ref{def:Cs_star_bounded} to $S_{1}$, we can calculate
\begin{align*}
S_{1}&\leq \sum_{t=1}^{+\infty} 4\left(\frac{72N^{8k}t}{\sqrt{ms^*}}\right)^{t/2}\left(\frac{16\sqrt{ms^*}}{\ell}\right)^{\ell/2}\left(\frac{16m\sqrt{ms^*}}{|\calU|^{2}}\right)^{(t-\ell)^{+}/2}\cdot \left(\frac{C\sqrt{|\calU|s^*}}{t}\right)^{t/2}\\
&=4\left(\frac{16\sqrt{ms^*}}{\ell}\right)^{\ell/2}\cdot \sum_{t=1}^{+\infty}\left(72N^{8k}C\sqrt{\frac{|\calU|}{m}}\right)^{t/2}\left(\frac{16m\sqrt{ms^*}}{|\calU|^{2}}\right)^{(t-\ell)^{+}/2}\tag{rearranging}\\
&\leq 4\left(\frac{2^{11}N^{8k}C\sqrt{|\calU|s^*}}{\ell}\right)^{\ell/2}\cdot \sum_{t=1}^{+\infty}\left(\frac{2^{11}N^{8k}Cm\sqrt{s^*}}{|\calU|^{3/2}}\right)^{(t-\ell)^{+}/2}\tag{using $t\leq \ell+(t-\ell)^+$}\\
&\leq 4\left(\frac{2^{11}N^{8k}C\sqrt{|\calU|s^*}}{\ell}\right)^{\ell/2}\cdot (\ell+1)\leq S\tag{using the upper bounds on $m$ and $s^*$}.
\end{align*}
Applying Lemma~\ref{lem:mass_transfer_to_low_weight} and Definition~\ref{def:Cs_star_bounded} to $S_{2}$, we can calculate
\begin{align*}
S_{2}&\leq \sum_{t=\lceil s^*\rceil}^{\lfloor C^{-2}|\calU|\rfloor} 4\left(\frac{72N^{8k}t}{\sqrt{ms^*}}\right)^{t/2}\left(\frac{16\sqrt{ms^*}}{\ell}\right)^{\ell/2}\left(\frac{16m\sqrt{ms^*}}{|\calU|^{2}}\right)^{(t-\ell)/2}\cdot \left(\frac{C^{2}|\calU|}{t}\right)^{t/4}\\
&=4\left(\frac{|\calU|^{2}}{m\ell}\right)^{\ell/2}\cdot \sum_{t=\lceil s^*\rceil}^{\lfloor C^{-2}|\calU|\rfloor}\left(\frac{1152N^{8k}Cm\sqrt{t}}{|\calU|^{3/2}}\right)^{t/2}\tag{rearranging}\\
&\leq 4\left(\frac{|\calU|^{2}}{m\ell}\right)^{\ell/2}\left(\frac{2^{11}N^{8k}Cm\sqrt{s^*}}{|\calU|^{3/2}}\right)^{s^*/2}\cdot \sum_{t=\lceil s^*\rceil}^{+\infty}\left(\frac{2^{12}N^{8k}m}{|\calU|}\right)^{(t-s^*)/2}\tag{using $t^{t}\leq (s^*)^{s^*}(3t)^{t-s^*}$}\\
&\leq 8 \left(\frac{|\calU|^{2}}{m\ell}\right)^{\ell/2}\left(\frac{2^{11}N^{8k}Cm\sqrt{s^*}}{|\calU|^{3/2}}\right)^{\ell/2}\leq S\tag{using $\ell\leq s^*$ and the upper bound on $m$}.
\end{align*}
Applying Lemma~\ref{lem:mass_transfer_to_low_weight} and Proposition~\ref{prop:Cs_star_high_degree_bound} to $S_{3}$, we can calculate
\begin{align*}
S_{3}&\leq \sum_{t=\lceil C^{-2}|\calU|\rceil}^{|\Lambda|}4\left(\frac{72N^{8k}t}{\sqrt{ms^*}}\right)^{t/2}\left(\frac{16\sqrt{ms^*}}{\ell}\right)^{\ell/2}\left(\frac{16m\sqrt{ms^*}}{|\calU|^{2}}\right)^{(t-\ell)/2}\cdot 2^{s^*/2}\left(\frac{3N|\Lambda|}{t}\right)^{t/2}\\
&\leq 4\left(\frac{|\calU|^{2}}{m\ell}\right)^{\ell/2}\cdot\sum_{t=\lceil C^{-2}|\calU|\rceil}^{|\Lambda|}\left(\frac{2^{13}N^{8k+1}m|\Lambda|}{|\calU|^{2}}\right)^{t/2}\tag{rearranging and using $s^*\leq C^{-2}|\calU|$}\\
&\leq 8\left(\frac{|\calU|^{2}}{m\ell}\right)^{\ell/2}\cdot 4^{-C^{-2}|\calU|/2}\tag{using the upper bound on $m$}\\
&\leq 8\left(\frac{|\calU|^{2}}{m\ell}\right)^{\ell/2}\left(\frac{C^{-2}|\calU|}{\ell}\right)^{-\ell}\tag{using the inequality $2^{-a}\leq a^{-1}$ for any $a>0$}\\
&=8\left(\frac{C^{2}\ell}{m}\right)^{\ell/2}\leq 8\leq S.\tag{using $\ell\leq s^*\leq (2C)^{-2}m$}
\end{align*}

In conclusion, we have shown $S_{0},S_{1},S_{2},S_{3}\leq S$ and thus $S_{0}+S_{1}+S_{2}+S_{3}\leq 4S$, as desired.
\end{proof}

\subsubsection{Inductive Bound for Intermediate-Weight Coefficients}\label{subsubsec:inductive_intermediate}

\begin{lemma}\label{lem:inductive_bound_for_intermediate}
Let $\calU$ be a $k$-universe embedded in a finite set $\Lambda$. Fix constants $C\in [1,+\infty)$ and $\delta\in [0,\frac{1}{2}]$. Let $m\in \bZ$ and $s^*\in \bR$ be parameters satisfying
\[0< m\leq \frac{|\calU|^{3}}{2^{18}kN^{8k+2}|\Lambda|^{2}}\quad\text{and}\quad 2\leq s^*\leq C^{-2}|\calU|.\]Suppose $f:\ZmodN^{\Lambda}\rightarrow [0,+\infty)$ is a $(|\calU|,C,s^*,\delta)$-bounded function. Then for each integer $\ell$ such that $s^*\leq \ell\leq (2C)^{-2}|\calU|$, we have
\begin{equation}\label{eq:induction_to_intermediate_weight}
\sum_{z\in \ZmodN^{\Lambda}}\left|\widehat{f}(z)\right|Q^{\calU,m}_{\Lambda}\big(s^*,\|z\|_{\sfH},\ell\big)\leq \left(\frac{(2^{22}N^{8k}C)^{2}|\calU|}{\ell}\right)^{\ell/4}.
\end{equation}
\end{lemma}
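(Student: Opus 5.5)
The plan mirrors the proof of Lemma~\ref{lem:inductive_bound_for_low}, with Lemma~\ref{lem:mass_transfer_to_intermediate_weight} used in place of Lemma~\ref{lem:mass_transfer_to_low_weight}. The hypothesis on $m$ gives $m\leq(6k)^{-1}|\calU|$, so Lemma~\ref{lem:mass_transfer_to_intermediate_weight} applies to every $t$ with $\ell>s^*$. The boundary case $\ell=s^*$ can be handled either by the low-weight bound or by noting that the two bounds agree up to constants there. Since the sum depends on $z$ only through $\|z\|_{\sfH}$, we group by Hamming weight and bound the left-hand side of \eqref{eq:induction_to_intermediate_weight} by $S_{0}+S_{1}+S_{2}+S_{3}$. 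These are the same four pieces as before: $t=0$, $1\leq t\leq s^*$, $s^*<t\leq C^{-2}|\calU|$, and $t>C^{-2}|\calU|$. Each piece is $\sum_t Q^{\calU,m}_{\Lambda}(s^*,t,\ell)\|f^{=t}\|_{\sfW}$. We aim to show that each is at most $S:=\big((2^{20}N^{8k}C)^{2}|\calU|/\ell\big)^{\ell/4}$, so that $4S$ gives the claim.

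For $S_0$, use $\|f^{=0}\|_{\sfW}\leq 1+\delta$ together with Lemma~\ref{lem:mass_transfer_to_intermediate_weight} at $t=0$. This gives $4(96m/\ell)^{\ell/4}$, which is at most $S$ because $m\leq|\calU|$.

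For $S_1$ and $S_2$, insert $F_C(|\calU|,t,s^*)$. Both regimes are covered by the crude bound $F_C\leq 2^{s^*/4}(C^2|\calU|/t)^{t/4}$ from Proposition~\ref{prop:Fourier_growth_crude_bound}, and the factor $2^{s^*/4}\leq 2^{\ell/4}$ is harmless since $s^*\leq\ell$. Multiplying out, the $t$-dependent factors $(t/m)^{t/4}(|\calU|/t)^{t/4}$ combine to $(|\calU|/m)^{t/4}$. We then split $t^{t/4}=t^{\min(t,\ell)/4}\cdot t^{(t-\ell)^+/4}$ and write $t\leq \ell+(t-\ell)^+$.
\begin{itemize}
\item For $t\leq\ell$, the product is at most $(2^{16}N^{16k}C^2|\calU|/m)^{t/4}(96m/\ell)^{\ell/4}$. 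This is increasing in $t$, so it is at most $(2^{23}N^{16k}C^2|\calU|/\ell)^{\ell/4}$, and summing over at most $\ell$ values of $t$ costs a harmless factor.
\item For $t>\ell$, each excess unit of $t$ contributes a factor of roughly $\big(2^{16}N^{16k}C^2|\calU|\cdot 12m^3t/(m|\calU|^4)\big)^{1/4}\lesssim (N^{16k}C^2m^2|\Lambda|/|\calU|^3)^{1/4}$, using $t\leq k|\calU|$ or $t\leq|\Lambda|$. Given the hypothesis on $m$ and $C^2\leq|\calU|/s^*$, this factor is small, so the tail is a convergent geometric series.
\end{itemize}
One point needs care here: $C^2$ enters multiplied by $m^2/|\calU|^3$, so we need $C^2 m^2|\Lambda|\ll|\calU|^3$. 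I would use $C^{-2}|\calU|\geq s^*\geq 2$ along with the $|\Lambda|^2$ in the $m$ bound, possibly trading a factor via $t\leq|\Lambda|$, to get this.

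For $S_3$, use Proposition~\ref{prop:Cs_star_high_degree_bound}, $\|f^{=t}\|_{\sfW}\leq 2^{s^*/2}(3N|\Lambda|/t)^{t/2}$, together with Lemma~\ref{lem:mass_transfer_to_intermediate_weight}. Now the per-$t$ factor is about $(N^{16k+2}m^2|\Lambda|^2 t/|\calU|^3)^{1/4}\cdot$const, where the extra $t^{1/2}$ is absorbed using $t\leq|\Lambda|$. This is at most $1/2$ by the $m$ hypothesis, which is exactly why $|\Lambda|^2$ and $|\calU|^3$ appear there. So $S_3\lesssim (|\calU|/\ell)^{\ell/4}\cdot 2^{-c\,C^{-2}|\calU|}$. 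Since $\ell\leq(2C)^{-2}|\calU|$, the decay $2^{-cC^{-2}|\calU|}$ dominates any $(C^{-2}|\calU|/\ell)^{\ell/4}$ loss, as with $2^{-a}\leq a^{-1}$ in the low-weight case, leaving something at most $S$.

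I expect the main obstacle to be bookkeeping the $(t-\ell)^+$ tail in $S_2$ and $S_3$. The geometric ratio there must be shown to be below $1/2$ uniformly, using only $m\leq |\calU|^3/(2^{18}kN^{8k+2}|\Lambda|^2)$ and $s^*\leq C^{-2}|\calU|$. The powers of $t$ that are not cancelled by $F_C$ have to be bounded via $t\leq|\Lambda|$, and one must check that the constants close within $2^{22}$.
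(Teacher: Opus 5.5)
Your plan has the same architecture as the paper's proof. The paper splits into $t=0$, $1\le t\le C^{-2}|\calU|$ (using the crude bound of Proposition~\ref{prop:Fourier_growth_crude_bound}; your $S_1$ and $S_2$ are merged there) and $t\ge C^{-2}|\calU|$ (using Proposition~\ref{prop:Cs_star_high_degree_bound}), against the same target $S$. The case $\ell=s^*$ needs no special care, because the proof of Lemma~\ref{lem:mass_transfer_to_intermediate_weight} only uses $s^*\le\ell$.

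The gap is in the step you flagged yourself, and the route you propose for it fails. In your $S_2$, the ratio per excess unit of $t$ is $\bigl(2^{16}\cdot 12\,N^{16k}C^{2}m^{2}t/|\calU|^{3}\bigr)^{1/4}$. Suppose you replace $t$ by $|\Lambda|$ and $C^{2}$ by $|\calU|/s^*\le|\calU|/2$. Then the hypothesis on $m$ only bounds $C^{2}m^{2}|\Lambda|/|\calU|^{3}$ by a constant times $|\calU|^{4}/|\Lambda|^{3}$, and using $k|\calU|$ instead is no better. This grows like $|\calU|$ when $|\Lambda|=O(|\calU|)$, which is exactly the situation in the application ($|\Lambda|=|\calV|n$, $|\calU|\approx n$). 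So the inequality $C^{2}m^{2}|\Lambda|\ll|\calU|^{3}$ you are hoping for does not follow from the hypotheses.

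The missing observation is that the summation range itself gives $C^{2}t\le|\calU|$. With this, the ratio is at most $(2^{20}N^{16k}m^{2}/|\calU|^{2})^{1/4}\le 2^{-4}$. The sum over $1\le t\le C^{-2}|\calU|$ is then at most $\ell+1$ times your bound for $t\le\ell$. This is what the paper's ``using the upper bound on $m$'' in its $S_1$ implicitly relies on.

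Your $S_3$ accounting is also wrong in a way that matters. There is no leftover positive power of $t$: $t^{t/4}\cdot t^{(t-\ell)/4}\cdot t^{-t/2}=t^{-\ell/4}$. Absorbing a spurious $t^{1/2}$ into $|\Lambda|^{1/2}$ would make the ratio of order $m^{2}|\Lambda|^{3}/|\calU|^{3}$, which again is not small.

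You must also use the $|\calU|^{4}$ denominator that the proof of Lemma~\ref{lem:mass_transfer_to_intermediate_weight} actually produces, not the $|\calU|^{3}$ in its displayed statement. With $|\calU|^{3}$, neither tail closes.

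Done correctly, the $t$-th term is $4\cdot 2^{s^*/2}\bigl(c_{1}N^{16k+2}|\Lambda|^{2}/(\ell t)\bigr)^{\ell/4}\rho^{\,t-\ell}$ for absolute constants $c_1,c_2$. Here $\rho^{4}=c_{2}N^{16k+2}m^{2}|\Lambda|^{2}/|\calU|^{4}\lesssim N^{-2}|\calU|^{2}/|\Lambda|^{2}$ by the hypothesis on $m$. So the prefactor is not $(|\calU|/\ell)^{\ell/4}$.

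The factor $C^{\ell/2}$ in the target comes from $t^{-\ell/4}\le(C^{-2}|\calU|)^{-\ell/4}$. What remains is an excess of $(N|\Lambda|/|\calU|)^{\ell/2}$ (up to constants$^{\ell}$), together with $2^{s^*/2}\le 2^{\ell/2}$. This is paid for by $\rho^{t-\ell}\le\rho^{3\ell}$, which is valid because $t\ge C^{-2}|\calU|\ge 4\ell$. Note that this uses the $|\calU|/|\Lambda|$ gain inside $\rho$, not merely $\rho\le 1/2$, so a bare decay $2^{-cC^{-2}|\calU|}$ would not suffice.
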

\begin{proof}
The left-hand side of \eqref{eq:induction_to_low_weight} is at most $S_{0}+S_{1}+S_{2}$, where
\begin{alignat*}{2}
S_{0}:=Q^{\calU,m}_{\Lambda}(s^*,0,\ell)\cdot \left\|f^{=0}\right\|_{\sfW},&\qquad&
S_{1}&:=\sum_{t=1}^{\lfloor C^{-2}|\calU| \rfloor}
Q^{\calU,m}_{\Lambda}(s^*,t,\ell)\cdot \left\|f^{=t}\right\|_{\sfW},\\
\text{and}&\qquad&
S_{2}&:=\sum_{t=\lceil C^{-2}|\calU| \rceil}^{|\Lambda|}
Q^{\calU,m}_{\Lambda}(s^*,t,\ell)\cdot \left\|f^{=t}\right\|_{\sfW}.
\end{alignat*}
It suffices to show that $S_{0},S_{1},S_{2}$ are all at most $S:=\left((2^{20}N^{8k}C)^{2}|\calU|/\ell\right)^{\ell/4}$.

Applying Lemma~\ref{lem:mass_transfer_to_intermediate_weight} and Definition~\ref{def:Cs_star_bounded} to $S_{0}$, we can calculate
\[
S_{0}\leq 4\left(\frac{96m}{\ell}\right)^{\ell/4}\cdot(1+\delta)\leq S.
\]
Applying Lemma~\ref{lem:mass_transfer_to_intermediate_weight} and Proposition~\ref{prop:Fourier_growth_crude_bound} to $S_{1}$, we can calculate
\begin{align*}
S_{1}&\leq \sum_{t=1}^{\lfloor C^{-2}|\calU| \rfloor} 4\left(\frac{2^{16}N^{16k}t}{m}\right)^{t/4}\left(\frac{96m}{\ell}\right)^{\ell/4}\left(\frac{12m^{3}t}{|\calU|^{4}}\right)^{(t-\ell)^{+}/4}\cdot 2^{s^*/4}\left(\frac{C^{2}|\calU|}{t}\right)^{t/4}\\
&=2^{s^*/4+2}\left(\frac{96m}{\ell}\right)^{\ell/4}\cdot \sum_{t=1}^{\lfloor C^{-2}|\calU| \rfloor}\left(2^{8}N^{8k}C\sqrt{\frac{|\calU|}{m}}\right)^{t/2}\left(\frac{12m^{3}t}{|\calU|^{4}}\right)^{(t-\ell)^{+}/4}\tag{rearranging}\\
&\leq 4\left(\frac{(2^{12}N^{8k}C)^{2}|\calU|}{\ell}\right)^{\ell/4}\cdot \sum_{t=1}^{\lfloor C^{-2}|\calU| \rfloor}\left(\frac{(2^{10}N^{8k}C)^{2}m^{2}t}{|\calU|^{3}}\right)^{(t-\ell)^{+}/4}\tag{using $s^*\leq \ell$ and $t\leq \ell+(t-\ell)^+$}\\
&\leq 4\left(\frac{(2^{12}N^{8k}C)^{2}|\calU|}{\ell}\right)^{\ell/4}\cdot (\ell+1)\leq S\tag{using the upper bound on $m$}.
\end{align*}
Applying Lemma~\ref{lem:mass_transfer_to_intermediate_weight} and Proposition~\ref{prop:Cs_star_high_degree_bound} to $S_{2}$, we can calculate
\begin{align*}
S_{2}&\leq \sum_{t=\lceil C^{-2}|\calU|\rceil}^{|\Lambda|}4\left(\frac{2^{16}N^{16k}t}{m}\right)^{t/4}\left(\frac{96m}{\ell}\right)^{\ell/4}\left(\frac{12m^{3}t}{|\calU|^{4}}\right)^{(t-\ell)/4}\cdot 2^{s^*/2}\left(\frac{3N|\Lambda|}{t}\right)^{t/2}\\
&\leq 4\left(\frac{8|\calU|^{4}}{m^{2}\ell}\right)^{\ell/4}\cdot\sum_{t=\lceil C^{-2}|\calU|\rceil}^{|\Lambda|}t^{-\ell/4}\left(\frac{2^{16}N^{8k+1}m|\Lambda|}{|\calU|^{2}}\right)^{t/2}\tag{using $s^*\leq C^{-2}|\calU|$ and rearranging}\\
&\leq 4\left(\frac{8C^{2}|\calU|^{3}}{m^{2}\ell}\right)^{\ell/4}\cdot\sum_{t=\lceil C^{-2}|\calU|\rceil}^{|\Lambda|}\left(\frac{2^{16}N^{8k+1}m|\Lambda|}{|\calU|^{2}}\right)^{t/2}\\
&\leq 8\left(\frac{8C^{2}|\calU|^{3}}{m^{2}\ell}\right)^{\ell/4}\left(\frac{2^{16}N^{8k+1}m|\Lambda|}{|\calU|^{2}}\right)^{\ell}\tag{using the upper bound on $m$ and $\ell\leq (2C)^{-2}|\calU|$}\\
&= 8\left(\frac{8C^{2}|\calU|}{\ell}\right)^{\ell/4}\left(\frac{2^{16}N^{8k+1}\sqrt{m}|\Lambda|}{|\calU|^{3/2}}\right)^{\ell}\tag{rearranging}\\
&\leq 8\left(\frac{(2^{16}N^{8k}C)^{2}|\calU|}{\ell}\right)^{\ell/4}\leq S.\tag{using the upper bound on $m$ and rearranging}
\end{align*}

In conclusion, we have shown $S_{0},S_{1},S_{2}\leq S$ and thus $S_{0}+S_{1}+S_{2}\leq 3S$, as desired.
\end{proof}

\subsubsection{Finishing the Proof}

We are now ready to finish the proof of the induction lemma, restated below.

\leminduction*

\begin{proof}
For any fixed matching $M\in\calM_{\calU,m}$, subset $A\subseteq \Map{M}{\ZNk}$, one-wise independent distribution $\mu$ over $\ZNk$, function $f\in L^{2}(\ZmodN^{\Lambda})$ and nonnegative integer $\ell$, we have
\begin{align*}
\left\|\left(f\cdot \bdR^{\Lambda,M}_{\mu}[\phi_{A}]\right)^{=\ell}\right\|_{\sfW}
&\leq \sum_{z\in\ZmodN^{\Lambda}}\left(\left|\widehat{f}(z)\right|\sum_{b\in \ZmodN^{\Lambda},\;\|z+b\|_{\sfH}=\ell}\left|\left\langle \bdR^{\Lambda,M}_{\mu}[\phi_{A}],\chi_{b}\right\rangle\right|\right)\tag{by the convolution theorem}\\
&\leq \sum_{z\in\ZmodN^{\Lambda}}\left(\left|\widehat{f}(z)\right|\sum_{\bfa\in\calX(M),\;\|z+[\bfa]\|_{\sfH}=\ell}\left|\widehat{\phi_{A}}(\bfa)\right|\right).\tag{using Lemma~\ref{lem:SVD}}
\end{align*}
Taking maximum over all $A$ with $|A|\geq 2^{-s^*}N^{km}$, maximum over all one-wise independent $\mu$, and expectation over $M\sim\calD$, we get (using Definition~\ref{def:transfer_Q})
\begin{equation}\label{eq:turn_into_Q}
\Exu{M\sim \calD}{\max_{A,\,\mu}\left\|\left(f\cdot \bdR^{\Lambda,M}_{\mu}[\phi_{A}]\right)^{=\ell}\right\|_{\sfW}}\leq \sum_{z\in\ZmodN^{\Lambda}}\left|\widehat{f}(z)\right|Q^{\calU,m}_{\Lambda}(s^*,\|z\|_{\sfH},\ell).
\end{equation}

Now suppose $f:\ZmodN^{\Lambda}\rightarrow[0,+\infty)$ is a $(|\calU|,C,s^*,\delta_{0})$-bounded function. Taking $\ell=0$ and substituting $(f-1)$ for $f$ in \eqref{eq:turn_into_Q} yields
\[
\Exu{M\sim \calD}{\max_{A,\,\mu}\Big|\Ex{(f-\Ex{f})\cdot\bdR^{\Lambda,M}_{\mu}[\phi_{A}]}\Big|}\leq \sum_{z\in\ZmodN^{\Lambda}\setminus\{0\}}\left|\widehat{f}(z)\right|Q^{\calU,m}_{\Lambda}(s^*,\|z\|_{\sfH},0)\leq \delta,
\]
where we used Lemma~\ref{lem:inductive_bound_for_0} in the last transition. Since the function $\bdR^{\Lambda,M}_{\mu}[\phi_{A}]$ has expected value 1 by Proposition~\ref{prop:preverses_density_function} , this means
\[
\Exu{M\sim \calD}{\max_{A,\,\mu}\Big|\Ex{f\cdot\bdR^{\Lambda,M}_{\mu}[\phi_{A}]}-\Ex{f}\Big|}\leq \delta.
\]
Noting that $|\Ex{f}-1|\leq \delta_{0}$ (due to the $(|\calU|,C,s^*,\delta_{0})$-boundedness assumption on $f$) and using Markov's inequality, we obtain
\begin{equation}\label{eq:induction_level_0_prob}
\Pru{M\sim \calD}{\max_{A,\,\mu}\Big|\Ex{f\cdot\bdR^{\Lambda,M}_{\mu}[\phi_{A}]}-1\Big|\geq \delta_{0}+\eta^{-1}\delta}\leq \eta.
\end{equation}
For any integer $\ell$ such that $1\leq \ell\leq s^*$, combining \eqref{eq:turn_into_Q} with Lemma~\ref{lem:inductive_bound_for_low} yields
\[
\Exu{M\sim \calD}{\max_{A,\,\mu}\left\|\left(f\cdot \bdR^{\Lambda,M}_{\mu}[\phi_{A}]\right)^{=\ell}\right\|_{\sfW}}\leq \left(\frac{2^{22}N^{8k}C\sqrt{|\calU|s^*}}{\ell}\right)^{\ell/2}=F_{2^{22}N^{8k}C}\big(|\calU|,\ell,s^*\big),
\]
so by Markov's inequality and Proposition~\ref{prop:Fourier_bound_non_decreasing} we have
\begin{equation}\label{eq:induction_level_low_prob}
\Pru{M\sim \calD}{\max_{A,\,\mu}\left\|\left(f\cdot \bdR^{\Lambda,M}_{\mu}[\phi_{A}]\right)^{=\ell}\right\|_{\sfW}\geq F_{2^{22}\eta^{-2}N^{8k}C}\big(|\calU|,\ell,2s^*\big)}\leq (\eta^{2})^{\ell/2}=\eta^{\ell}.
\end{equation}
For any integer $\ell$ such that $s^*\leq \ell\leq (2C)^{-2}|\calU|$, combining \eqref{eq:turn_into_Q} and Lemma~\ref{lem:inductive_bound_for_intermediate} yields
\[
\Exu{M\sim \calD}{\max_{A,\,\mu}\left\|\left(f\cdot \bdR^{\Lambda,M}_{\mu}[\phi_{A}]\right)^{=\ell}\right\|_{\sfW}}\leq \left(\frac{(2^{22}N^{8k}C)^{2}|\calU|}{\ell}\right)^{\ell/4}=F_{2^{22}N^{8k}C}\big(|\calU|,\ell,s^*\big),
\]
so by Markov's inequality and Proposition~\ref{prop:Fourier_bound_non_decreasing} we have
\begin{equation}\label{eq:induction_level_intermediate_prob}
\Pru{M\sim \calD}{\max_{A,\,\mu}\left\|\left(f\cdot \bdR^{\Lambda,M}_{\mu}[\phi_{A}]\right)^{=\ell}\right\|_{\sfW}\geq F_{2^{22}\eta^{-2}N^{8k}C}\big(|\calU|,\ell,2s^*\big)}\leq (\eta^{4})^{\ell/4}=\eta^{\ell}.
\end{equation}
Finally, noting that $\|f\|_{\infty}\leq 2^{s^*}$ (due to the $(|\calU|,C,s^*,\delta_{0})$-boundedness assumption on $f$) and using Proposition~\ref{prop:infinity_norm_contraction}, we have\footnote{Here we need to recall that the maximum is taken over all $A\subseteq \Map{M}{\ZNk}$ such that $|A|\geq 2^{-s^*}N^{km}$, for which $\|\phi_{A}\|_{\infty}=|A|^{-1}N^{km}\leq 2^{s^*}$.}
\begin{equation}\label{eq:induction_infinity_norm_prob}
\Pru{M\sim \calD}{\max_{A,\,\mu}\left\|f\cdot \bdR^{\Lambda,M}_{\mu}[\phi_{A}]\right\|_{\infty}\leq 2^{2s^*}}=1.
\end{equation}
Combining \eqref{eq:induction_level_0_prob}, \eqref{eq:induction_level_low_prob}, \eqref{eq:induction_level_intermediate_prob} and \eqref{eq:induction_infinity_norm_prob} by union bound, we conclude that with probability at least
\(
1-\eta-\sum_{\ell=1}^{+\infty}\eta^{\ell}\geq 1-3\eta
\)
over $M\sim \calD$, any one-wise independent distribution $\mu$ and any $A\subseteq \Map{M}{\ZNk}$ with $|A|\geq 2^{-s^*}N^{km}$ satisfy:
\[\text{the function }f\cdot \bdR^{\Lambda,M}_{\mu}[\phi_{A}]\text{ is }\Big(|\calU|,\,2^{22}\eta^{-2}N^{8k}C,\,2s^*,\,\delta_{0}+\eta^{-1}\delta\Big)\text{-bounded.}\qedhere\]
\end{proof}

%% file: twowise_independence.tex
\section{The Two-Wise Independent Case}

In Sections~\ref{sec:communication_lower_bound} to~\ref{sec:induction}, we have shown that given a distribution-labeled $k$-graph $G=(\calV,\calE,N,(\mu_{\sfe})_{\sfe\in \calE})$ and a sufficiently small constant $\alpha$, the communication game $\DIHP(G,n,\alpha,K)$ has communication complexity at least $\Omega(\sqrt{n})$. The assumption that the distributions $\mu_{\sfe}$ (for $\sfe\in\calE$) are one-wise independent has played a crucial role in the proof (see e.g. the proofs of Lemmas~\ref{lem:structure_part_mean} and~\ref{lem:SVD}). In this section, we show that if the distributions $\mu_{\sfe}$ are further assumed to be \emph{two-wise independent}, then the communication lower bound for $\DIHP(G,n,\alpha,K)$ can be improved from $\Omega(\sqrt{n})$ to $\Omega(n)$.

\begin{restatable}{theorem}{twowiselowerbound}\label{thm:twowise}
    Fix a independent distribution-labeled $k$-graph $G=(\calV,\calE,N,(\mu_{\sfe})_{\sfe\in \calE})$, an integer $K>0$ and a parameter $\alpha\in \big(0,2^{-20}N^{-10k}|\calV|^{-2}\big]$. If $\mu_{\sfe}$ is two-wise independent for any $\sfe\in\calE$, then there exists a constant $\gamma=\gamma(G,\alpha,K)>0$ such that $\CC(G,n,\alpha,K)\geq \gamma n$.
\end{restatable}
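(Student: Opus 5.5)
We will follow the same two-step structure-vs.-randomness scheme as in the proofs of Theorems~\ref{thm:sqrt_lower_bound} and~\ref{thm:linear_lower_bound}: a decomposition lemma producing good structured rectangles, followed by a one-sided discrepancy bound. The discrepancy side is already strong enough. Lemma~\ref{lem:discrepancy_bound} handles $(\gamma n,\gamma n)$-good rectangles, and its proof uses only one-wise independence. The bottleneck for an $\Omega(n)$ round-independent bound is therefore the decomposition step. The weight $\|\bdzeta\|$ can grow multiplicatively per round, and cycles can be found with $\widetilde O(\sqrt n)$ bits, so neither Lemma~\ref{lem:regularity_decomposition} nor Lemma~\ref{lem:linear_decomposition} suffices. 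The plan is to use two-wise independence so that neither cyclicity nor large components hurt the structured part. We will then only need a decomposition controlling the potential $\phi$ and the total number of exposed edges $|\bdzeta|=\sum|\supp(\bfz^{(\sfe,j)})|$, which grows additively by Definition~\ref{def:global_protocol} and Markov's inequality as in \eqref{eq:expectation_potential}.

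Concretely, we will redefine goodness by dropping conditions (1) and (3) of Definition~\ref{def:good_rec_2}. In their place we require only that the edge sets $\supp(\bfz^{(\sfe,j)})$ are pairwise disjoint and that $|\bdzeta|\le \gamma n$, together with the density condition (2). Obtaining such rectangles for protocols of cost $\gamma' n$ should follow from Lemma~\ref{lem:arbitrary_to_global} and the potential bound \eqref{eq:expectation_potential}. The bound must hold with $3dc$ replaced by the total cost, since $\sum$ over rounds of the increments is at most $3|\Pi|$ when costs are counted per message. Disjointness is needed because two players on the same $\sfe$ exposing the same edge $e$ make $g_{\bdzeta}$ contain $\mu_{\sfe}(x_{|e}-z)\mu_{\sfe}(x_{|e}-z')$. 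We will handle this either by showing it has small probability, or by noting that it still yields a bounded density (inconsistent labels give $\calD_{\yes}(R)=0$ only if $\mu$ has small support). We expect to simply redo the Step 2/Step 3 estimate of Lemma~\ref{lem:bounded_growth}, restricted to the event $d=1$ for a repeated edge. Via Corollary~\ref{cor:edge_appears_in_globalset}, this event has probability $O(|\bdzeta|/n^{k-1})\cdot\alpha$ per round, which is tiny for $k\ge2$.

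The structured part is where two-wise independence enters. We will prove an analogue of Lemma~\ref{lem:structure_part_mean}: for any (possibly cyclic but edge-disjoint) $\bdzeta$, $\Ex{g_{\bdzeta}}=1$, and $\inp{g_{\bdzeta}}{\chi_b}$ is bounded in absolute value by $1$. It is nonzero only if $\supp(b)$ lies in the union of exposed edges and every exposed edge $e$ that touches $\supp(b)$ does so through a character of Hamming weight at least $3$ restricted to $e$. The cleanest route is to expand each factor $N^k\mu_{\sfe}(x_{|e}-z)$ in the Fourier basis of $\ZNk$. Two-wise independence kills all nonconstant terms of weight $1$ or $2$, so every term is a product over edges of characters of weight $0$ or at least $3$, with coefficients of modulus at most $1$. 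The main obstacle is bounding $\|g_{\bdzeta}^{=\ell}\|_{\sfW}$ with $|\bdzeta|\le\gamma n$ edges when cycles are allowed, because cancellations among different edge-character combinations are no longer automatic. We will first try the crude triangle-inequality bound $\|g^{=\ell}\|_{\sfW}\le\sum_{j}\binom{|\bdzeta|}{j}N^{kj}$ over choices of $j$ active edges each contributing at least $3$ coordinates, yet with $\ell\ge 3j/2$ after overlaps. This should give roughly $(C|\bdzeta|/\ell)^{2\ell/3}$, which is within $F_C(n,\ell,s^*)$ with $s^*\approx\gamma n\log N$ for $\ell\lesssim\gamma n$. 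Higher levels are covered by the $\infty$-norm bound $N^{k|\bdzeta|}$, using the same Proposition~\ref{prop:Cs_star_high_degree_bound} mechanism. This gives $(n,C',O(\gamma n\log N),0)$-boundedness, replacing Lemma~\ref{lem:structured_part}.

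Finally, we will run the induction of Lemma~\ref{lem:discrepancy_bound} verbatim using Lemma~\ref{lem:induction}, which needs only one-wise independence and pseudo-uniformity. This yields $\calD_{\yes}(R)\ge(1-10^{-3})\calD_{\no}(R)$ for all new-good rectangles, and the advantage calculation from the proof of Theorem~\ref{thm:sqrt_lower_bound} then concludes $\CC(G,n,\alpha,K)\ge\gamma n$.
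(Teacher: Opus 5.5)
You have the right architecture: weaken goodness to something controlled additively, keep the induction via Lemma~\ref{lem:induction}, and replace only the structured-part lemma. But you weakened goodness too far. With only edge-disjointness and $|\bdzeta|\le\gamma n$, the structured-part lemma you propose is false, and so is any one-sided discrepancy bound for such rectangles. Take $N=2$, $k=3$, $K\ge 4$, and $\mu_{\sfe}$ uniform on $\{w\in\bZ_2^3:w_1+w_2+w_3=0\}$, which is two-wise independent. Then $N^k\mu_{\sfe}(x_{|e}-\bfz(e))=2\cdot\ind{\sum_{v\in e}x_v=\sum_{i}\bfz(e)_i}$. Suppose the exposed edges contain an even cover whose label sums add to $1$, for example four edges on six vertices with every vertex covered twice, exposed by four players on the same $\sfe$. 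Then $g_{\bdzeta}\equiv 0$. Hence $\Ex{g_{\bdzeta}}=0$, and by Lemmas~\ref{lem:relating_yes_no} and~\ref{lem:separating_structure_pseudorandom}, $\calD_{\yes}(R)=0<\calD_{\no}(R)$, even though $R$ is dense, edge-disjoint and has $O(1)$ exposed edges. Cancellation among overlapping edge characters is therefore not a technicality. It makes $\|\boldsymbol{\Sigma}B\|_{\sfH}=0$ possible with many active edges, so your inequality $\ell\ge 3j/2$ is unjustified. Even granting it, choosing $j\approx 2\ell/3$ active edges freely gives $(C\gamma n/\ell)^{2\ell/3}$. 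This exceeds $F_C(n,\ell,s^*)=(C\sqrt{s^*n}/\ell)^{\ell/2}$ at small $\ell$, and the induction framework needs exponent $\ell/2$.

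The missing idea is a structural hypothesis on $H_{\bdzeta}$ that is strong enough for the structured part and also available for free against protocols of cost $\Theta(n)$ with any number of rounds. The paper uses almost-acyclicity: any $\ell$ edges of $H_{\bdzeta}$ cover at least $\ell(k-1.1)$ vertices. It gets this from the random input rather than from the protocol's behaviour. Since $H_{\bdzeta}\subseteq H_{\bfY}$ for all $\bfY\in R$ and $H_{\bdzeta}$ has at most $10^{3}|\Pi|\le\delta n$ edges, a violation would be a small dense vertex set in the random hypergraph $H_{\bfY}$. By Lemma~\ref{lem:random_graph_no_dense_sets}, this has probability $o(1)$ under $\calD_{\no}$, independently of the number of rounds. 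Almost-acyclicity gives each component at least $(k-2.2)|E_i|$ degree-one vertices. Hence $\|\boldsymbol{\Sigma}B\|_{\sfH}\ge\max\{2t,\tfrac45 r\}$ for $r$ active edges in $t$ components (Lemma~\ref{lem:ell_at_least_t_and_r}), which yields $\Ex{g_{\bdzeta}}=1$. Combined with the degree bound $d\le kK|\calE|$ on $H_{\bdzeta}$ and the count $(6d)^r(3m/t)^t$ of such edge sets (Lemmas~\ref{lem:bound_for_frakE} and~\ref{lem:bound_on_calB_ell}), it gives level-$\ell$ mass at most $\bigl((15dN^k)^4 m/\ell\bigr)^{\ell/2}$. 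Only one anchor edge per component costs a factor $m$, and $t\le\ell/2$. With this condition added to your goodness notion, the rest of your plan goes through as in the paper: Markov control of the potential and edge count, then the induction of Lemma~\ref{lem:discrepancy_bound}.
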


The proof of Theorem~\ref{thm:twowise} follows the same approach as the proof of Theorem~\ref{thm:sqrt_lower_bound} in Sections~\ref{sec:communication_lower_bound} to~\ref{sec:induction}. Given a low-cost communication protocol for $\DIHP(G,n,\alpha, K)$, the plan is to first decompose the joint input space into structured rectangles satisfying certain conditions, and then prove discrepancy bounds for these rectangles. For Theorem~\ref{thm:twowise}, the decomposition step is carried out in Section~\ref{subsec:decomposition_twowise}, followed by the analysis of the discrepancy in Section~\ref{subsec:discrepancy_twowise}. In Section~\ref{subsec:implication_twowise}, we prove Thoerems~\ref{thm:multi_pass_lin} and~\ref{thm:unbounded_width} using Theorem~\ref{thm:twowise}.

\subsection{Decomposition into ``Fair'' Rectangles}\label{subsec:decomposition_twowise}

Throughout this subsection, we fix a distribution-labeled $k$-graph $G=(\calV,\calE,N,(\mu_{\sfe})_{\sfe\in \calE]})$ and a communication game $\DIHP(G,n,\alpha,K)$.

Lemma~\ref{lem:regularity_decomposition}, we showed that the rectangle decomposition induced by any protocol $\Pi$ with $|\Pi| = o(\sqrt{n})$ can be refined into ``good'' structured rectangles. 
For a structured rectangle $(\bdzeta, R)$ to be deemed ``good,'' the restriction sequence $\bdzeta$ must be acyclic (see Definition~\ref{def:good_rec_2}). 
Informally, this reflects the fact that a communication protocol with $o(\sqrt{n})$ communication cost is incapable of detecting cycles in the hypergraph formed by the joint input.

This phenomenon no longer holds if the communication bound is relaxed from $o(\sqrt{n})$ to $o(n)$ (see e.g. Theorem~\ref{thm:bipartiteness_tester}). 
Consequently, to obtain an analogous decomposition result for protocols with $o(n)$ communication cost, we must correspondingly weaken the notion of goodness imposed on the structured rectangles.

To this end, we seek a relaxation of acyclicity (cf.~Definition~\ref{def:cyclic_res_seq}) for restriction sequences $\bdzeta$ and their associated hypergraphs $H_{\bdzeta}$. Recall from Section~\ref{subsec:general_notations} that a $k$-uniform hypergraph is said to be acyclic if any $\ell$ edges together cover more than $\ell(k-1)$ vertices. This motivates the following definition.

\begin{definition}
A $k$-uniform hypergraph $H$ is said to be \emph{$C$-locally-almost-acyclic}, where $C$ is a positive integer, if for any nonnegative integer $\ell\leq C$, any $\ell$ distinct edges in $H$ together cover at least $\ell(k-1.1)$ vertices. If $H$ is $C$-locally-almost-acyclic for any $C\geq 1$, we simply say that $H$ is \emph{almost-acyclic}.
\end{definition}

In Definition~\ref{def:weight_of_restriction}, we defined a hypergraph $H_{\bdzeta}$ for any restriction sequence $\bdzeta$. For notational convenience in this subsection, we will use the same notation for a general sequence of labeled matchings.

\begin{notation}
Given a sequence $\bfZ=(\bfz^{(\sfe,j)})_{(\sfe,j)\in \calE\times [K]}$ of labeled matchings, where $\bfz^{(\sfe,j)}\in \Omega^{\calU_{\sfe},\leq \alpha n}$ for any $(\sfe,j)\in \calE\times [K]$, we let $H_{\bfZ}$ denote the hypergraph with vertex set $\calV\times [n]$ and edge set
\[
\bigcup_{(\sfe,j)\in \calE\times [K]}\supp(\bfz^{(\sfe,j)}).
\]
\end{notation}

\begin{definition}\label{def:almost_acyclic_labeled_matchings}
A sequence of labeled matchings $\bfZ=(\bfz^{(\sfe,j)})_{(\sfe,j)\in \calE\times [K]}$ is said to be $C$-locally-almost-acyclic (respectively, almost-acyclic) if the edge sets $\big(\supp(\bfz^{(\sfe,j)})\big)_{(\sfe,j)\in \calE\times [K]}$ are pairwise disjoint, and the hypergraph $H_{\bfZ}$ is $C$-locally-almost-acyclic (respectively, almost-acyclic). 
\end{definition}

We can now state the relaxed version of the goodness notion in Definition~\ref{def:good_rec_2}, by replacing acyclicity with almost-acyclicity.

\begin{definition}\label{def:fair_rec}
    Let $W$ be a positive real number. We say a structured rectangle $(\boldsymbol{\zeta},R)$, where \(R=\prod_{(\sfe, j) \in \calE \times [K]} A^{(\sfe, j)}\) and \(\bdzeta = \left(\bfz^{(\sfe, j)}\right)_{(\sfe, j) \in \calE \times [K]}\), is \emph{$W$-fair} if the following conditions hold:
    \begin{enumerate}[label=(\arabic*)]
        \item The restriction sequence $\bdzeta$ is almost-acyclic.
        \item $\sum_{(\mathsf{e},j)}\left|\supp(\bfz^{(\mathsf{e},j)} )\right| \leq W$.
        \item  $\left|A^{(\sfe,j)}\right| / \left|\Omega^{\calU_{\sfe},\alpha n}_{\bfz^{(\sfe,j)}}\right|\geq 2^{-W}$ for all $(\sfe,j)\in \calE\times [K]$. 
    \end{enumerate}
\end{definition}

The remaining goal of this subsection is to prove that the rectangle decomposition induced by any protocol $\Pi$ with $|\Pi|=o(n)$ can be refined into ``fair'' structured rectangles. The proof of this decomposition lemma turns out to be much simpler than its counterpart Lemma~\ref{lem:regularity_decomposition} (which is proved in \cite[Appendix A]{FMW25b}). The main reason this decomposition lemma is simpler is the fact that, roughly speaking, a ``random regular hypergraph'' is with high probability locally-almost-acyclic, as we formlize below.

Recall that in the $\DIHP(G,n,\alpha,K)$ game, each player $(\sfe,j)$ receives a labeled matching $\bfy^{(\sfe,j)}\in\Omega^{\calU_{\sfe},\alpha n}$. In both the yes case and the no case, the support of $\bfy^{(\sfe,j)}$ is a uniformly random matching $M^{(\sfe,j)}\in \calM_{\calU_{\sfe},\alpha n}$ (independent from the other players). For any matching $M\in \calM_{\calU_{\sfe},\alpha n}$ and any vertex set $V\subseteq \calV\times [n]$, we let $M[V]$ denote the set of hyperedges $(v_{1},\dots,v_{k})\in M$ such that $v_{1},\dots,v_{k}\in V$. The following lemma is a slightly extended version of \cite[Lemma 6]{BOT02}:

\begin{lemma}\label{lem:random_graph_no_dense_sets}
Fix a distribution-labeled $k$-graph $G=(\calV,\calE,N,(\mu_{\sfe})_{\sfe\in \calE})$, an integer $K>0$ and a parameter $\alpha\in (0,1]$.\footnote{In this lemma, one can without loss of generality assume $\alpha=1$.} For any constant $p>\frac{1}{k-1}$, there exists a constant $\delta\in (0,1)$ such that the following holds: if $M^{(\sfe,j)}$ is sampled independently and uniformly from $\calM_{\calU_{\sfe},\alpha n}$ for each player $(\sfe,j)\in \calE\times [K]$, then the condition
\begin{equation}\label{eq:locally_sparse}
\sum_{(\sfe,j)\in \calE\times [K]}\left|M^{(\sfe,j)}[V]\right|\leq  p|V|\qquad\text{for all }V\subseteq \calV\times [n]\text{ with }|V|\leq \delta n
\end{equation}
is satisfied with probability $1-o(1)$, where $o(1)$ denotes a term tending to 0 as $n\rightarrow +\infty$.
\end{lemma}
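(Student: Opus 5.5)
We prove the statement with a first-moment (union bound) argument over small vertex sets. For a set of hyperedges to violate \eqref{eq:locally_sparse}, there must be some $V$ with $|V|=v\leq \delta n$ such that $e:=\lceil pv\rceil$ edges of the union $\bigcup_{(\sfe,j)}M^{(\sfe,j)}$ (counting edges from different players as distinct) all lie inside $V$. Since $p>\frac{1}{k-1}$, a bad set must have $v\geq k$; more precisely, any nonempty bad set satisfies $e\geq 1$, so $v\geq k$ and $e>v/(k-1)$. We bound the expected number of pairs $(V,F)$ with $|V|=v$, where $F$ is a collection of $e$ potential hyperedges inside $V$, each assigned to a player $(\sfe,j)$ and each appearing in that player's matching.

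\textbf{Counting.} Fix $v$. There are at most $\binom{|\calV|n}{v}\leq (e|\calV|n/v)^{v}$ choices of $V$. The number of ways to pick $e$ hyperedges inside $V$, together with their players, is at most $(K|\calE|v^{k})^{e}$. For a fixed assignment, the probability that all chosen edges appear is a product over players. Edges assigned to the same player that are not vertex-disjoint give probability $0$. Otherwise, by the uniform-matching computation in the proof of Corollary~\ref{cor:edge_appears_in_globalset} (with $\bfz$ trivial and $|S|\leq n/2$), the probability is at most $(6^{k-1}\alpha n/n^{k})^{d}$ for $d$ edges of one player. Independence across players then gives at most $(c_0 n^{1-k})^{e}$ for a constant $c_0=c_0(k,\alpha)$. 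When $e>n/2$ we have $v\geq e/p$ large, but then $v>\delta n$ for small $\delta$, so that regime does not arise. Hence the expected count for fixed $v$ is at most
\[
\Big(\frac{c_1 n}{v}\Big)^{v}\Big(\frac{c_2 v^{k}}{n^{k-1}}\Big)^{e}
=\Big(\frac{v}{n}\Big)^{(k-1)e-v}\cdot c_1^{v}c_2^{e}.
\]

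\textbf{Summation.} Write $\varepsilon_0:=(k-1)p-1>0$, so $(k-1)e-v\geq \varepsilon_0 v$. Also $e\leq pv+1$, so $c_1^{v}c_2^{e}\leq c_3^{v}$. The term is therefore at most $\big(c_3 (v/n)^{\varepsilon_0}\big)^{v}$.

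We split the sum over $v\in[k,\delta n]$ into two ranges. For $v\leq \log n$, each term is at most $c_3^{v}(\log n/n)^{\varepsilon_0 v}$, so these terms sum to $o(1)$. For $\log n<v\leq \delta n$, choose $\delta$ so small that $c_3\delta^{\varepsilon_0}\leq 1/2$. Each term is then at most $2^{-v}$, and these terms sum to at most $2\cdot 2^{-\log n}=o(1)$. By Markov's inequality, condition \eqref{eq:locally_sparse} fails with probability $o(1)$.

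\textbf{Main obstacle.} The main obstacle is bookkeeping at the small end. When $v$ is constant, $(k-1)e-v$ could in principle be $0$, which would leave only an $O(1)$ bound. This is exactly where the strict inequality $p>1/(k-1)$ is used: it forces the exponent of $v/n$ to be at least $\varepsilon_0 v>0$ for every $v\geq k$. Because $e$ is an integer, rounding should be handled by taking $e=\lceil pv\rceil$ and using $e\geq pv$. The remaining steps are routine estimates of binomial coefficients.
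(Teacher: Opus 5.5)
Your strategy (a first-moment bound over pairs $(V,F)$, with $F$ a set of $e=\lceil pv\rceil$ player-labelled hyperedges inside $V$) is the same as the paper's, but the counting step fails as written. The displayed identity is wrong. In fact
\[
\Big(\frac{c_1 n}{v}\Big)^{v}\Big(\frac{c_2 v^{k}}{n^{k-1}}\Big)^{e}=v^{e}\Big(\frac{v}{n}\Big)^{(k-1)e-v}c_1^{v}c_2^{e}.
\]
The extra factor $v^{e}\geq v^{pv}$ is superexponential in $v$. It is a genuine loss, not just an algebra slip. Your count $(K|\calE|v^{k})^{e}$ counts ordered $e$-tuples of edges, so it overcounts each unordered collection $F$ by a factor of $e!$. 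With this factor the $v$-th term is roughly $\big(c\,v^{p}(v/n)^{\varepsilon_0}\big)^{v}$. That tends to infinity once $v\gg n^{\varepsilon_0/(p+\varepsilon_0)}$, and in particular for $v$ near $\delta n$. No choice of $\delta$ makes the sum $o(1)$, so your argument does not cover the linear range $|V|\leq\delta n$ that the lemma asserts.

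The repair is to count unordered collections. There are at most
\[
\binom{K|\calE|v^{k}}{e}\leq \Big(\frac{3K|\calE|v^{k}}{e}\Big)^{e}\leq \Big(\frac{3K|\calE|v^{k-1}}{p}\Big)^{e}
\]
of them, using $e\geq pv$. Combine this with your probability bound $(c_0n^{1-k})^{e}$ and the $\binom{|\calV|n}{v}$ choices of $V$. The result is exactly $c_1^{v}c_2^{e}(v/n)^{(k-1)e-v}$, and from there your summation argument is fine.

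The paper gets the same saving slightly differently. It labels each edge inside $V$ by its first vertex and its player, so the relevant count is $\binom{K|\calE|\,|V|}{\lceil p|V|\rceil}\leq (3K|\calE|/p)^{p|V|}$, which is exponential in $|V|$. The edge at a fixed first vertex then lies inside $V$ with probability at most $(|V|/n)^{k-1}$.
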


\begin{proof}
For a fixed $V\subseteq\calV\times [n]$ such that $|V|\leq \delta n$, let $S_{V}$ be the set of tuples $(v,\sfe,j)\in V\times \calE\times [K]$ such that some edge $(v_{1},\dots,v_{k})\in M^{(\sfe,j)}[V]$ has $v$ as its first vertex, i.e. $v_{1}=v$. We therefore have
\[
|S_{V}|=\sum_{(\sfe,j)\in \calE\times [K]}\left|M^{(\sfe,j)}[V]\right|.
\]
For any fixed subset $S\subseteq V\times \calE\times [K]$, it is not hard to see that when the random matchings $M^{(\sfe,j)}$ are independently sampled, we have
\[
\Pr{S\subseteq S_{V}}\leq \left(\frac{|V|^{k-1}}{n^{k-1}}\right)^{|S|}.
\]
Therefore we have
\begin{align}
\Pr{|S_{V}|\geq p|V|}&\leq\sum_{\substack{S\subseteq V\times \calE\times [K]\\ |S|=\lceil p|V|\rceil}}\Pr{S\subseteq S_{V}}\leq \binom{K|\calE|\cdot |V|}{\lceil p|V|\rceil}\left(\frac{|V|}{n}\right)^{(k-1)\lceil p|V|\rceil}\nonumber\\
&\leq \left(\frac{3K|\calE|}{p}\right)^{p|V|}\left(\frac{|V|}{n}\right)^{(k-1)p|V|}=\left(C_{1}\cdot\frac{ |V|^{(k-1)p}}{n^{(k-1)p}}\right)^{|V|},\label{eq:locally_sparse_fixed_V}
\end{align}
where we used $|V|\leq n$ in the third transition and denoted $C_{1}=(3K|\calE|/p)^{p}$.

Let $C_{2}=3C_{1}|\calV|$ and pick $\delta\in (0,1)$ such that  $C_{2}\cdot\delta^{(k-1)p-1}\leq 1/2$. Applying union bound over all $V\subseteq \calV\times [n]$ with $|V|\leq \delta n$ to \eqref{eq:locally_sparse_fixed_V}, we get
\[
\Pr{\text{condition \eqref{eq:locally_sparse} fails}}\leq \sum_{r=1}^{\lfloor \delta n\rfloor}\binom{n|\calV|}{r}\left(C_{1}\cdot\frac{r^{(k-1)p}}{n^{(k-1)p}}\right)^{r}\leq \sum_{r=1}^{\lfloor\delta n\rfloor}\left(C_{2}\cdot\frac{r^{(k-1)p-1}}{n^{(k-1)p-1}}\right)^{r}\leq o(1).\qedhere
\]
\end{proof}

Lemma~\ref{lem:random_graph_no_dense_sets} immediately implies the following corollary.

\begin{corollary}\label{cor:random_graph_no_dense_sets}
Fix a distribution-labeled $k$-graph $G=(\calV,\calE,N,(\mu_{\sfe})_{\sfe\in \calE})$, an integer $K>0$ and a parameter $\alpha\in (0,1]$. There exists a constant $\delta\in (0,1)$ such that 
\[
\Pru{\bfY\sim\calD_{\no}}{\bfY\text{ is }\delta n\text{-locally-almost-acyclic}}=1-o(1).
\]
\end{corollary}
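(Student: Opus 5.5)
We derive Corollary~\ref{cor:random_graph_no_dense_sets} from Lemma~\ref{lem:random_graph_no_dense_sets} applied with the constant $p:=\frac{1}{k-1.1}$. Since $k\geq 2$, we have $k-1.1<k-1$, hence $p>\frac{1}{k-1}$ and the lemma applies. Let $\delta$ be the constant it provides. Under $\calD_{\no}$ the supports $M^{(\sfe,j)}=\supp(\bfy^{(\sfe,j)})$ are independent uniform matchings in $\calM_{\calU_{\sfe},\alpha n}$, so condition \eqref{eq:locally_sparse} holds with probability $1-o(1)$. It remains to show that on this event, together with the event that the supports are pairwise disjoint, $\bfY$ is $\delta n$-locally-almost-acyclic.

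For the hypergraph condition, take $\ell\leq \delta n$ distinct edges of $H_{\bfY}$ and let $V$ be the set of vertices they cover. Suppose $|V|<\ell(k-1.1)$. Then $|V|<\ell k$, and we need $|V|\leq \delta n$ in order to invoke \eqref{eq:locally_sparse}. This is the one point needing care, since $|V|$ could exceed $\delta n$. To handle it, apply the lemma with $\delta':=\delta/k$ in place of the parameter and require $\ell\leq \delta' n$, so that $|V|\leq k\ell\leq \delta n$. Equivalently, rename the constant so that the corollary's $\delta$ is the lemma's constant divided by $k$; this is harmless because the corollary only asserts existence of some $\delta$. All $\ell$ edges lie in $\bigcup_{(\sfe,j)}M^{(\sfe,j)}[V]$, so \eqref{eq:locally_sparse} gives $\ell\leq p|V|<p\,\ell(k-1.1)=\ell$, a contradiction. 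Hence every such collection covers at least $\ell(k-1.1)$ vertices.

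For the disjointness requirement in Definition~\ref{def:almost_acyclic_labeled_matchings}, observe that two players can share an edge only if they correspond to the same $\sfe$, since the universes $\calU_{\sfe}$ differ otherwise. A fixed edge lies in two independent uniform matchings of size $\alpha n$ with probability $O((\alpha n)^2/n^{2k})$. Summing over the $n^k$ edges and the $O(K^2|\calE|)$ pairs of players gives $O(n^{2-k})$, which is $o(1)$ for $k\geq 3$. For $k=2$ this bound is only $O(1)$, so disjointness instead follows from \eqref{eq:locally_sparse} itself: a repeated edge counted twice on its own vertex set $V$ of size $2$ gives $2\leq 2p$, which contradicts $p=1/0.9$ only if the count exceeds $2p\approx 2.22$. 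That fails, so for $k=2$ a direct count is needed, and the probability of a repeated edge is then a constant rather than $o(1)$. I expect this $k=2$ disjointness issue to be the main obstacle. I would first check whether the paper's intended use tolerates it, for example because two-wise independence forces $k\geq 3$ for nontrivial $\mu_{\sfe}$: a two-wise independent distribution on $\ZmodN^{2}$ is uniform. Under that convention $k\geq 3$, and a union bound over the two failure events completes the proof.
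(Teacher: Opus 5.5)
Your proof is correct for $k\geq 3$ and follows the paper's route. You apply Lemma~\ref{lem:random_graph_no_dense_sets} with $p=1/(k-1.1)$ and extract both requirements of Definition~\ref{def:almost_acyclic_labeled_matchings} on the high-probability event \eqref{eq:locally_sparse}. Your rescaling $\delta\mapsto\delta/k$, which ensures the covered vertex set has size at most $\delta n$, is needed; the paper's one-line proof leaves it implicit.

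The one real difference is the disjointness step. You use a separate first-moment count: $\alpha^{2}n^{2-k}$ expected common edges per pair of players with the same universe. (Since $\calE$ is a multiset, the right condition is ``same universe'' rather than ``same $\sfe$'', but the number of such pairs is still a constant.) The paper instead reads disjointness off the same event. A shared edge with vertex set $V$ gives $\sum_{(\sfe,j)}|M^{(\sfe,j)}[V]|\geq 2>k/(k-1.1)=p|V|$, and the last inequality holds exactly when $k\geq 3$. You tested this argument only at $k=2$, where it does fail. For $k\geq 3$ it saves the extra union bound, while your count shows just as clearly where $k\geq 3$ enters.

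Your $k=2$ concern is a real defect in the statement, not an obstacle to work around. Suppose $k=2$ and two players share a universe (e.g.\ $K\geq 2$). The number of common edges of their independent uniform $\alpha n$-matchings in $U_{\sfv_{1}}\times U_{\sfv_{2}}$ has mean $\alpha^{2}$. By a second-moment bound it is positive with probability at least $\alpha^{2}/(1+\alpha^{2})$. So the corollary as literally stated is false for $k=2$.

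The paper's proof also silently needs $k\geq 3$. The corollary is only invoked in Lemma~\ref{lem:decomposition_twowise}, which assumes $k\geq 3$. So do not appeal to two-wise independence, which plays no role in a statement about $\calD_{\no}$; simply add $k\geq 3$ as a hypothesis. With that hypothesis your argument is complete.
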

\begin{proof}
It suffices to note that if two of the matchings $M^{(\sfe,j)}=\supp(\bfy^{(\sfe,j)})$, where $(\sfe,j)$ ranges in $\calE\times [K]$, have a common edge $(v_{1},\dots,v_{k})$, then condition~\eqref{eq:locally_sparse} is violated for $p=1/(k-1.1)$ and $V=\{v_{1},\dots,v_{k}\}$.
\end{proof}

We are now ready to state and prove the decomposition lemma needed for Theorem~\ref{thm:twowise}.

\begin{lemma}\label{lem:decomposition_twowise}
    Fix a distribution-labeled $k$-graph $G=(\calV,\calE,N,(\mu_{\sfe})_{\sfe\in \calE})$, an integer $K>0$ and a parameter $\alpha \in (0,1)$. Assume $k\geq 3$. Then there exists a constant $\eta\in (0,1)$ such that given any communication protocol $\Pi$ for $\DIHP(G,n,\alpha,K)$ with $|\Pi|\leq \eta\cdot  n$, there exists a collection $\calR$ of pairwise-disjoint structured rectangles $(\boldsymbol{\zeta},R)$ in the space \(\prod_{(\sfe, j) \in \calE \times [K]} \Omega^{\calU_{\sfe}, \alpha n}\) such that the following conditions hold: 
    \begin{enumerate}[label=(\arabic*)]
        \item \(\calD_{\no}\left(\bigcup_{(\boldsymbol{\zeta},R)\in \calR}R\right)\geq 0.99\).
        \item Each $(\boldsymbol{\zeta},R)\in\calR$ is $(10^{3}|\Pi|)$-fair.
         \item For each $(\boldsymbol{\zeta},R)\in \calR $, there exists $a_R \in \{0,1\}$ such that $\Pi(\bfY) = a_R$ for every $\bfY \in R$. 
    \end{enumerate}
\end{lemma}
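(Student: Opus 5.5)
The plan is to reuse the global-protocol machinery of Section~\ref{sec:decomposition} without tracking weights or rounds, and to replace the cyclicity analysis by Corollary~\ref{cor:random_graph_no_dense_sets}.

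First, apply Lemma~\ref{lem:arbitrary_to_global}, taking the round number to be the number of rounds of $\Pi$. This gives a global protocol $\Pi^{\reff}$ whose leaf rectangles refine those of $\Pi$. Telescoping \eqref{eq:global_protocol_potential_increase} gives a bound on the expected leaf potential. Since $\Pi^{\reff}$ is built from $\Pi$ bit by bit, as in \cite[Section A.2]{FMW25b}, the total increment is $O(|\Pi|)$ rather than $3r|\Pi|$: each bit costs a constant in potential. So $\Exs{\bfY\sim\calD_{\no}}{\phi(\bdzeta(\bfY),R(\bfY))}\leq 3|\Pi|$. By Markov's inequality, with probability at least $0.995$ the leaf $(\bdzeta,R)$ has $\phi\leq 10^{3}|\Pi|$. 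Because $\phi$ dominates both $\sum|\supp(\bfz^{(\sfe,j)})|$ and each $\log_2\big(|\Omega^{\calU_{\sfe},\alpha n}_{\bfz^{(\sfe,j)}}|/|A^{(\sfe,j)}|\big)$, this yields conditions (2) and (3) of $W$-fairness with $W=10^{3}|\Pi|$. Condition (3) of the lemma is inherited from Lemma~\ref{lem:arbitrary_to_global}.

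Second, I handle almost-acyclicity by comparing $\bdzeta$ with the full input. Every $\bfY$ in a leaf $R$ subsumes $\bdzeta$ coordinatewise, so $H_{\bdzeta}$ is a subhypergraph of $H_{\bfY}$, and the supports of $\bdzeta$ are disjoint whenever those of $\bfY$ are. Moreover, $\sum|\supp(\bfz^{(\sfe,j)})|\leq 10^{3}|\Pi|\leq 10^{3}\eta n\leq\delta n$ once $\eta$ is small relative to the $\delta$ of Corollary~\ref{cor:random_graph_no_dense_sets}. Then any $\ell$ edges of $H_{\bdzeta}$ satisfy $\ell\leq\delta n$. Consequently, if $\bfY$ is $\delta n$-locally-almost-acyclic, these $\ell$ edges cover at least $\ell(k-1.1)$ vertices, so $H_{\bdzeta}$ is almost-acyclic for every $\ell$. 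The corollary gives this for $\bfY\sim\calD_{\no}$ with probability $1-o(1)$. The hypothesis $k\geq 3$ is where $p=1/(k-1.1)>1/(k-1)$ is meaningful and finite, which keeps the corollary applicable.

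Third, I let $\calR$ consist of the leaves $(\bdzeta,R)$ that satisfy the potential bound and are almost-acyclic, and I check condition (1). The subtle point is that a leaf may contain some locally-almost-acyclic $\bfY$ while $\bdzeta$ is still fine; almost-acyclicity of $\bdzeta$ is a property of the leaf itself. A leaf fails only if it violates the potential bound, or if every $\bfY$ in it is non-locally-almost-acyclic. In the latter case, any single locally-almost-acyclic $\bfY$ in the leaf already certifies $\bdzeta$. So the $\calD_{\no}$-mass of failing leaves is at most $0.005+o(1)\leq 0.01$ for large $n$.

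The main obstacle is the first step: extracting an $O(|\Pi|)$ potential bound independent of the number of rounds. If Lemma~\ref{lem:arbitrary_to_global} is only usable with the per-round accounting, I would instead invoke the round-free version of the construction in \cite[Lemma A.4]{FMW25b} directly. That version bounds the expected potential by a constant times $|\Pi|$.
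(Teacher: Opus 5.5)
Your proposal is correct and follows the paper's proof essentially step for step. The paper also takes the leaf rectangles of the global refinement and imports the round-independent bound $\sum_{(\bdzeta,R)}\calD_{\no}(R)\,\phi(\bdzeta,R)\le 3|\Pi|$ from \cite[Proof of Lemma A.8]{FMW25b} rather than from per-round telescoping, applies Markov to get fairness conditions (2)--(3), and uses $H_{\bdzeta}\subseteq H_{\bfY}$ for all $\bfY\in R$ to place the non-almost-acyclic leaves inside the $o(1)$ event of Corollary~\ref{cor:random_graph_no_dense_sets}.

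One correction to your side remark: $k\geq 3$ is not needed to make $p=1/(k-1.1)$ finite, since it is finite for $k=2$ as well. It is needed in the corollary's edge-disjointness step. There, a repeated edge on $V=\{v_1,\dots,v_k\}$ violates $\sum|M[V]|\le p|V|$ only if $2>k/(k-1.1)$, i.e.\ $k\geq 3$; for $k=2$, two random matchings share an edge with constant probability.
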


\begin{proof}[Proof]
By transforming the protocol $\Pi$ into a global protocol $\Pi^{\reff}$ and taking the leaf rectangles of $\Pi^{\reff}$, it is shown (implicitly) in \cite[Proof of Lemma A.8]{FMW25b} that the space \(\prod_{(\sfe, j) \in \calE \times [K]} \Omega^{\calU_{\sfe}, \alpha n}\) can be partitioned into a collection $\calR^{\textup{leaf}}$ of structured rectangles such that:
\begin{itemize}
    \item $\sum_{(\bdzeta,R)\in \calR^{\textup{leaf}}} \calD_{\no}(R)\cdot \phi(\bdzeta,R)  \leq 3|\Pi|$ (the potential function $\phi$ is defined in Definition~\ref{def:restriction_potential}).
    \item For each $(\boldsymbol{\zeta},R)\in \calR^{\textup{leaf}}$, there exists $a_R \in \{0,1\}$ such that $\Pi(\bfY) = a_R$ for every $\bfY \in R$. 
\end{itemize}
Let $\calR_{1}\subseteq \calR^{\textup{leaf}}$ be the sub-collection of structured rectangles $(\bdzeta,R)$ such that $\bdzeta$ is not almost-acyclic, and let $\calR_{2}\subseteq \calR^{\textup{leaf}}$ be the collection of those violating the second or third conditions of $(10^{3}|\Pi|)$-fairness (as per Definition~\ref{def:fair_rec}). Using the definition of $\phi(\bdzeta,R)$, we know that $\phi(\bdzeta,R)\geq 10^{3}|\Pi|$ for any $(\bdzeta,R)\in \calR_{2}$. Therefore, by Markov's inequality we have
\[
\sum_{(\bdzeta,R)\in \calR_{2}}\calD_{\no}(R)\leq \frac{1}{10^{3}|\Pi|}\sum_{(\bdzeta,R)\in \calR^{\textup{leaf}}}\calD_{\no}(R)\cdot \phi(\bdzeta,R)\leq 3\cdot 10^{-3}.
\]
It suffices to show that if $\eta\in (0,1)$ is chosen to be small enough, we have
\begin{equation}\label{eq:few_are_not_almost_acyclic}
\sum_{(\bdzeta,R)\in \calR_{1}\setminus \calR_{2}}\calD_{\no}(R)\leq 10^{-3}.
\end{equation}
Once we have that, the collection $\calR=\calR^{\textup{leaf}}\setminus(\calR_{1}\cup\calR_{2})$ would satisfy the three conditions in this lemma.

For any $(\bdzeta,R)\in \calR^{\textup{leaf}}\setminus \calR_{2}$, since the second condition of $(10^{3}|\Pi|)$-fairness is satisfied, we know that $H_{\bdzeta}$ has at most $10^{3}|\Pi|$ edges. Furthermore, for any $\bfY\in R$, we know that $H_{\bdzeta}$ is a subgraph of $H_{\bfY}$. Therefore, if $\bdzeta$ is not almost-acyclic, then any $\bfY\in R$ is not $(10^{3}|\Pi|)$-locally-almost-acyclic. From Corollary~\ref{cor:random_graph_no_dense_sets} we know that if the constant $\eta$ is chosen to be small enough, then
\[
\Pru{\bfY\sim\calD_{\no}}{\bfY\text{ is not }(10^{3}|\Pi|)\text{-locally-almost-acyclic}}=o(1),
\]
and hence 
\[
\sum_{(\bdzeta,R)\in \calR^{\textup{leaf}}\setminus \calR_{2}}\ind{\bdzeta\text{ is not locally-acyclic}}\cdot \calD_{\no}(R)\leq o(1).
\]
This clearly implies the desired inequality \eqref{eq:few_are_not_almost_acyclic}.
\end{proof}


\subsection{Discrepancy of ``Fair'' Rectangles}\label{subsec:discrepancy_twowise}

The reason we were content with decomposing into ``fair'' structured rectangles (instead of requiring ``good'' ones) in Section~\ref{subsec:decomposition_twowise} is that when the distributions $\mu_{\sfe}$ in $G=(\calV,\calE,N,(\mu_{\sfe})_{\sfe\in \calE})$ are two-wise independent, we can prove a discrepancy bound even for fair rectangles. 

\begin{lemma}\label{lem:discrepancy_twowise}
    Fix a two-wise independent distribution-labeled $k$-graph $G=(\calV,\calE,N,(\mu_{\sfe})_{\sfe\in \calE})$, an integer $K>0$ and a parameter $\alpha\in \left(0,2^{-20}N^{-10k}|\calV|^{-2}\right]$. There exists a constant $\gamma\in (0,1)$ such that for any $\gamma n$-fair structured rectangle $(\boldsymbol{\zeta},R)$, we have
    \begin{align*}
        \calD_{\yes}(R)\geq (1-10^{-3})\cdot \calD_{\no}(R).
    \end{align*}
\end{lemma}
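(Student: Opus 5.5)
We follow the proof of Lemma~\ref{lem:discrepancy_bound} in Section~\ref{subsec:discrepancy_bound} essentially verbatim; only the analysis of the structured part changes. By Lemmas~\ref{lem:relating_yes_no} and~\ref{lem:separating_structure_pseudorandom}, as in \eqref{eq:relating_yesno}, the ratio $\calD_{\yes}(R)/\calD_{\no}(R)$ equals the expectation over independent pseudo-uniform samples $M_i\sim\calD_i$ of $\Ex{g_{\bdzeta}\prod_i h_{i,M_i}}$. Here $\calD_i$ is pseudo-uniform by Proposition~\ref{prop:global_vs_pseudouniformity}. The bound \eqref{eq:expectation_of_A_M} still holds with $2^{\gamma n}$, because it uses only the density condition, which is condition (3) of $\gamma n$-fairness. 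Also $|\supp(\bfz^{(\sfe,j)})|\le\gamma n$, so $n_i\ge n/2$ and $m_i\ge \alpha n/2$. Since two-wise independence implies one-wise independence, Lemma~\ref{lem:induction} applies unchanged. The whole induction of \eqref{eq:def_pass_test} therefore goes through, with the same constants $B$ and $\eta$, once the base case is established: we need $g_{\bdzeta}$ to be $(n,C_0,s_0^*,0)$-bounded for some $C_0=O_{N,k}(1)$ and $s_0^*=O(\gamma n\log N)$.

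So the real task is an analogue of Lemma~\ref{lem:structured_part} for almost-acyclic $\bdzeta$ with at most $\gamma n$ edges. The infinity norm is immediate: $\|g_{\bdzeta}\|_\infty\le N^{k|\bdzeta|}\le N^{k\gamma n}$. For the mean, we order the edges of $H_{\bdzeta}$ so that each edge meets the union of the earlier ones in at most two vertices. Then we peel edges off one at a time from the end, using two-wise independence of $\mu_{\langle e\rangle}$: conditioning on at most two coordinates of $x_{|e}$ leaves conditional expectation $1$. This gives $\Ex{g_{\bdzeta}}=1$. The required ordering follows from almost-acyclicity and a counting argument. If at some stage every remaining edge met the union of the others in at least $3$ vertices, then those $\ell'$ edges would cover at most $\ell'(k-1.5)$ vertices (each vertex shared by two edges saves at least one vertex, and each edge contributes at least $3$ shared incidences), contradicting the $\ell'(k-1.1)$ bound. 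Here we need $k\ge 3$, which matches the assumption in Lemma~\ref{lem:decomposition_twowise}; for $k=2$, two-wise independence makes the statement degenerate anyway.

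For the Fourier growth, the same peeling argument gives the analogue of Lemma~\ref{lem:structure_part_mean}(2): $\widehat{g_{\bdzeta}}(b)\neq 0$ only if $\supp(b)\subseteq V(H_{\bdzeta})$ and $\supp(b)$ is ``locally dense''. Concretely, in any peeling order relative to $\supp(b)$, the last edge cannot carry $1$ or $2$ unique coordinates of $\supp(b)$ together with at most two vertices of earlier edges; otherwise that edge integrates out. Equivalently, every nonempty set of edges touching $\supp(b)$ must contain at least $3$ coordinates of $\supp(b)$ per leaf-like edge, which forces $|\supp(b)|$ to be large relative to the number of edges it spans. We then count supports of size $\ell$: choose about $\ell/2$ (or fewer) edges among the at most $\gamma n$ edges of $H_{\bdzeta}$, then choose coordinates inside them. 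This gives at most $(O_k(\gamma n)/\ell)^{\ell/2}$ supports. Since each coefficient has modulus at most $1$, we get $\|g_{\bdzeta}^{=\ell}\|_{\sfW}\le N^\ell(O_k(\gamma n)/\ell)^{\ell/2}\le F_{C_0}(n,\ell,\gamma n\log_2 N)$, mirroring Lemma~\ref{lem:no_size_1_by_induction}.

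The main obstacle is this Fourier support characterization. Almost-acyclicity is much weaker than acyclicity, so the clean component structure used in Lemma~\ref{lem:structure_part_mean} is gone. We must show that a level-$\ell$ character survives only if its support is covered by roughly $\ell/2$ edges (or at least $O(\ell)$ edges with a combinatorial count of $(O(\gamma n)/\ell)^{\ell/2}$). The first approach to try is a greedy peeling argument: repeatedly remove an edge that shares at most two vertices with the union of the other edges and with $\supp(b)$. Each removal either kills the character via two-wise independence, or the edge contributes at least $3$ support coordinates to $b$ beyond those shared. This bounds the number of edges used by $O(\ell)$ and the number of starting configurations by $(O(\gamma n))^{\ell/3}$, which suffices after adjusting $C_0$.
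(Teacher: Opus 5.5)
Your framework matches the paper and is correct. That covers the ratio formula, pseudo-uniformity of the $\calD_i$, the bound on $\Ex{1/|A_{i,M}|}$ from condition (3) of fairness, and reusing Lemma~\ref{lem:induction} with the structured part as the base case. Your ordering argument for $\Ex{g_{\bdzeta}}=1$ is also valid: otherwise $\ell'$ edges would cover at most $\ell'(k-1.5)$ vertices. It is a legitimate alternative to the paper, which gets the mean from a Fourier expansion.

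\textbf{The gap.} It is the Fourier-growth bound for $g_{\bdzeta}$, which you rightly call the crux; your proposed support characterization and peeling dichotomy are false. Take $N=2$, $k=3$, and $\mu$ uniform on $\{w:w_1+w_2+w_3=0\}$, which is two-wise independent. Take two edges $e_1=(a,b,c)$ and $e_2=(a,b,d)$, e.g.\ from two players on the same $\sfe$. They cover $4\ge 2(k-1.1)$ vertices, so this is almost-acyclic. Then $g_{e_1}g_{e_2}=4\cdot\ind{x_a+x_b+x_c=s_1,\ x_a+x_b+x_d=s_2}$ has coefficient $\pm1$ on the level-$2$ character supported on $\{c,d\}$. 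Both edges are ``leaf-like'', yet each carries only one support coordinate. Integrating out $e_2$ against the character yields neither $0$ nor $1$, but the nonconstant function $(-1)^{s_2+x_a+x_b}$. After that, two-wise independence of $\mu_{\langle e_1\rangle}$ no longer helps.

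With $\Theta(\gamma n)$ vertex-disjoint such pairs, $g_{\bdzeta}$ has $\Theta(\gamma n)$ nonzero level-$2$ coefficients, which contradicts a count like $(O(\gamma n))^{\ell/3}$. Similarly, take a path of $r$ edges in which consecutive edges share exactly one vertex. Summing its constraints gives a nonzero coefficient at level $\ell=r+2$, supported on the vertices of degree one, and covering that support needs all $r\approx\ell$ edges. So supports are not covered by about $\ell/2$ edges, and choosing $\Theta(\ell)$ edges freely among $\gamma n$ overcounts badly.

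\textbf{The missing ideas.} The paper avoids peeling with a character attached, which is exactly what breaks at edges sharing two vertices. Instead it expands each factor. By two-wise independence, $g_e=1+\sum_{\|b^{(e)}\|_{\sfH}\ge 3}\widehat{g_e}(b^{(e)})\chi_{b^{(e)}}$. Hence $\widehat{g_{\bdzeta}}(b)=\sum_{E'\subseteq E}\sum_{B\in\calB(E')}\ind{\boldsymbol{\Sigma}B=b}\prod_{e\in E'}\widehat{g_e}(B(e))$, and the question becomes pure combinatorics of the sums $\boldsymbol{\Sigma}B$.

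Almost-acyclicity then gives $|V_i^{=1}|\ge(k-2.2)|E_i|$ in each component. Each edge leaves at most $k-3$ of its vertices outside $\supp B(e)$. Together these yield $\|\boldsymbol{\Sigma}B\|_{\sfH}\ge\max\{2t,\tfrac45|E'|\}$, where $t$ is the number of nontrivial components of $E'$.

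Finally, $H_{\bdzeta}$ is a union of $K|\calE|$ matchings, so each edge meets at most $d=kK|\calE|$ others. Hence the number of $E'$ with $r\le 5\ell/4$ edges and $t\le\ell/2$ components is at most $(6d)^r(3m/t)^t$. One free choice per component rather than per edge is what yields $\|g_{\bdzeta}^{=\ell}\|_{\sfW}\le\big((15dN^k)^4\gamma n/\ell\big)^{\ell/2}$. Your proposal has neither the expansion nor this connectivity-plus-degree enumeration, so the base case of the induction is not established.
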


As is the case with Lemma~\ref{lem:discrepancy_bound}, the proof of Lemma~\ref{lem:discrepancy_twowise} proceeds by handling the ``structured information'' and ``pseudorandom noise'' in $R$ separately. It turns out that the psueodorandom part in Lemma~\ref{lem:discrepancy_twowise} can be handled in essentially the same way as in Section~\ref{subsec:discrepancy_bound}, while the structured part requires a different treatment from its counterpart in Section~\ref{subsec:structured}. In Section~\ref{subsubsec:structured_two_wise}, we prove a lemma (resembling Lemma~\ref{lem:structured_part}) that handles the structured part of Lemma~\ref{lem:discrepancy_twowise}; the proofs of Lemma~\ref{lem:discrepancy_twowise} and Theorem~\ref{thm:twowise} can then be finished rather quickly in Section~\ref{subsubsec:induction_twowise}.

\subsubsection{Fourier Growth of The Structured Part}\label{subsubsec:structured_two_wise}

Recall that in Section~\ref{subsec:structured}, in order to analyze the ``structured-part function'' $g_{\bdzeta}$ (defined in Definition~\ref{def:g_bdzeta}), a key step is to answer a purely combinatorial question: in a hypergraph, how many vertex subsets (of a given size) intersect each connected component in at least 2 vertices? It turns out that under the new set of assumptions stated in Lemma~\ref{lem:discrepancy_twowise}, the analysis of the same function $g_{\bdzeta}$ hinges on a different set of combinatorial questions about hypergraphs.

In a large hypergraph $H=(V,E)$, if a subset $E'\subseteq E$ of $r$ edges is chosen uniformly at random, most likely they will not be incident to each other. Our first question is, for how many size-$r$ subset $E'\subseteq E$ does the subgraph $(V,E')$ have only a small number of connected components? We answer this question in Lemma~\ref{lem:bound_for_frakE} by assuming a degree bound on the original hypergraph $H$.

\begin{definition}
Let $t,r$ be positive integers such that $r\geq t$, and let $H=(V,E)$ be a $k$-uniform hypergraph. We use $\mathfrak{E}(t,r,H)$ to denote the collection of subsets $E'\subseteq E$ such that $|E'|=r$ and the number of nontrivial connected components of the subgraph $(V,E')$ is $t$.
\end{definition}

\begin{lemma}\label{lem:bound_for_frakE}
Let $d,t,r$ be positive integers such that $r\geq t$. Let $H$ be a $k$-uniform hypergraph with $m$ edges, and suppose that every edge in $H$ is incident to at most $d$ other edges. Then we have $|\mathfrak{E}(t,r,H)|\leq (6d)^{r}(3m/t)^{t}$.
\end{lemma}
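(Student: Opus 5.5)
We count the sets $E'\in\mathfrak{E}(t,r,H)$ by encoding each one as a list of $t$ connected pieces. Form the ``line graph'' $L$ whose vertices are the $m$ edges of $H$, with two edges adjacent when they share a vertex of $H$. By assumption $L$ has maximum degree at most $d$. A set $E'$ of edges forms a connected subgraph of $H$ exactly when $E'$ induces a connected subgraph of $L$. Since nontrivial components of $(V,E')$ correspond to connected components of $L[E']$, the sets in $\mathfrak{E}(t,r,H)$ are precisely the $r$-subsets of $V(L)$ whose induced subgraph in $L$ has exactly $t$ components.

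\textbf{Step 1: single component.} We first invoke the standard bound for connected sets in a bounded-degree graph: for a vertex $x$ of a graph with maximum degree $d$, the number of connected vertex sets of size $s$ containing $x$ is at most $(ed)^{s-1}\le (3d)^{s}$. To prove this cleanly we would encode each connected set by a spanning tree traversed in DFS order from $x$. Each step of the walk either moves to one of at most $d$ neighbours or backtracks, so the number of walks is at most $(d+1)^{2(s-1)}$. To reach the sharper $(ed)^{s-1}$ we would instead use the known tree-counting bound $\binom{ds}{s-1}/s$. Any bound of the form $(c\,d)^{s}$ with $c$ an absolute constant is enough here, since the target has $(6d)^r$.

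\textbf{Step 2: assembling $t$ components.} Choose the component sizes $s_1,\dots,s_t\ge 1$ with $\sum_i s_i=r$; there are $\binom{r-1}{t-1}\le 2^{r}$ such compositions. Choose an unordered set of $t$ ``root'' edges, one per component, in at most $\binom{m}{t}\le (em/t)^t\le (3m/t)^t$ ways. Order the roots canonically, for instance by a fixed labelling of $E$, and attach the sizes $s_i$ in that order; this costs no extra factor. By Step 1, the $i$-th component containing its root can be chosen in at most $(3d)^{s_i}$ ways. Every $E'\in\mathfrak{E}(t,r,H)$ arises from at least one such choice: take its components, pick each component's minimum edge as the root, and read off the sizes.

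\textbf{Step 3: the final bound.} Multiplying the three counts gives
\[
|\mathfrak{E}(t,r,H)|\le 2^r\,(3m/t)^t\prod_{i=1}^{t}(3d)^{s_i}=(6d)^r(3m/t)^t .
\]
The main obstacle is Step 1: making the connected-subgraph count give a constant at most $3$ per element, so that together with the $2^r$ from compositions it fits inside $6$. If the clean $(ed)^{s-1}$ estimate is awkward to justify, we would fall back on rooting each component at an arbitrary edge. We would then argue by a greedy exploration: repeatedly add a neighbour of the current set, chosen from at most $d\cdot s$ candidates, and divide by the $(s-1)!$ orderings, since each set is produced by many exploration orders. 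This yields $d^{s-1}s^{s-1}/(s-1)!\le (ed)^{s-1}$. The degenerate case $d=0$ (all edges isolated) forces $r=t$, and then the claim reduces to $\binom{m}{t}\le(3m/t)^t$ with the convention $0^0=1$.
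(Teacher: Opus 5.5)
Your proof is correct and follows essentially the same route as the paper. The paper also bounds the number of compositions by $2^r$, counts connected $c_i$-edge subgraphs by passing to the line graph and using the standard $(ed)^{s-1}$ bound for rooted connected sets (giving $m(3d)^{c_i}$), and then divides by $t!$ for the orderings of the components, which is equivalent to your unordered root set costing $\binom{m}{t}\le(3m/t)^t$.

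Two side remarks need fixing. First, your greedy fallback for Step~1 is invalid: a connected set need not arise from $(s-1)!$ exploration orders (a set inducing a path with the root at an endpoint arises from exactly one). A correct self-contained substitute is the BFS encoding that records, for each of the $s$ processed vertices and each of its $d$ neighbour slots, whether that slot discovers a new vertex; this injects into $0/1$ strings of length $ds$ with $s-1$ ones and gives $\binom{ds}{s-1}\le(ds)^{s-1}/(s-1)!\le(ed)^{s-1}$. Second, your $d=0$ remark is moot, since $d\ge 1$ is a hypothesis, and it is also false, because $(6\cdot 0)^r=0$.
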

\begin{proof}
A sequence of $t$ connected subgraphs of $H$ with $r$ edges in total can be chosen by the following process:
\begin{enumerate}
\item First choose positive integers $c_{1},\dots,c_{t}$ such that $c_{1}+\dots+c_{t}=r$. The number of such tuples $(c_{1},\dots,c_{t})$ is $\binom{r-1}{t-1}\leq 2^{r}$.
\item Then for each $i\in [t]$, choose a connected subgraph of $H$ with $c_{i}$ edges. The number of such connected subgraphs is at most $m(3d)^{c_{i}}$, by standard results in enumerative combinatorics.\footnote{For example, one can first transit to the line graph of $H$ and then use \cite[Lemma 2.1(c)]{BCKL13}.} 
The total number of choices in this step is thus at most $m^{t}(3d)^{c_{1}+\dots+c_{t}}=m^{t}(3d)^{r}$.
\end{enumerate}
Since each member of the collection $\mathfrak{E}(t,r,H)$ is counted $t!$ times by this method, we have
\[
|\mathfrak{E}(t,r,H)|\leq \frac{2^{r}\cdot m^{t}(3d)^{r}}{t!}\leq (6d)^{r}\left(\frac{3m}{t}\right)^{t}.\qedhere
\]
\end{proof}

While in Section~\ref{subsec:structured}, the number of nonzero Fourier coefficients of the function $g_{\bdzeta}$ was controlled by the number of ways to choose at least 2 vertices from each connected component of a hypergraph, the latter quantity is unaffordable as an upper bound in the current context. In order to utilize the two-wise independence, we will associate the nonzero Fourier coefficients of $g_{\bdzeta}$ with a slightly more complicated combinatorial object: the collection $\calB(E)$ defined in the next definition.

\begin{definition}\label{def:calB}
Suppose $(V,E)$ is a $k$-uniform hypergraph. We define $\calB(E)$ to be the collection of maps $B:E\rightarrow\ZmodN^{V}$ such that for each $e=(v_{1},\dots,v_{k})\in E$, we have
\[B(e)\in \ZmodN^{V},\qquad \supp(B(e))\subseteq\{v_{1},\dots,v_{k}\},\qquad\text{and}\qquad\left|\supp(B(e))\right|=\|B(e)\|_{\sfH}\geq 3.\]
For a map $B\in \calB(E)$, we denote $\boldsymbol{\Sigma} B:=\sum_{e\in E}B(e)$. For any nonnegative integer $\ell$, we additionally define
\[
\calB_{\ell}(E)=\big\{B\in \calB(E)\,\big|\,\|\boldsymbol{\Sigma} B\|_{\sfH}=\ell\big\}\subseteq \ZmodN^{V}.
\]
\end{definition}

Before explaining how~\Cref{def:calB} is related to the nonzero Fourier coefficients of $g_{\bdzeta}$, we first prove some basic facts about the collection $\calB(E)$.

\begin{lemma}\label{lem:ell_at_least_t_and_r}
Suppose $(V,E)$ is an almost-acyclic $k$-uniform hypergraph. Let $t$ be the number of nontrivial connected components of the hypergraph $(V,E)$, and let $r=|E|$. Then for any $B\in \calB(E)$, we have
\[\|\boldsymbol{\Sigma}B\|_{\sfH}\geq \max\left\{2t,\frac{4}{5}r\right\}.\]
\end{lemma}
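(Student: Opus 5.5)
We reduce to a single connected component and then argue by a counting bound. Every edge lies in exactly one nontrivial component, and the vectors $B(e)$ are supported on the vertices of $e$. Hence $\boldsymbol{\Sigma}B$ restricted to a component $V_i$ equals $\sum_{e\in E_i}B(e)$, where $E_i$ is the edge set of $V_i$. The Hamming weight of $\boldsymbol{\Sigma}B$ is therefore the sum of the weights of these component sums, and $r=\sum_i|E_i|$. So it suffices to prove, for a single connected almost-acyclic hypergraph with $r_i\geq 1$ edges, that the component sum has Hamming weight at least $\max\{2,\tfrac45 r_i\}$. Summing over components then gives $\|\boldsymbol{\Sigma}B\|_{\sfH}\geq\sum_i\max\{2,\tfrac45 r_i\}\geq\max\{2t,\tfrac45 r\}$.

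For the main step, fix a connected component with edge set $E'$, $|E'|=r'$, covering the vertex set $V'$. A vertex $v\in V'$ can fall outside the support of the sum only if the total contribution $\sum_{e\ni v}B(e)_v$ cancels to $0$. Cancellation requires at least two edges to carry a nonzero entry at $v$. Call such vertices \emph{cancelled}, and let $c$ be their number.

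\emph{Lower bound on incidences.} Each edge has at least $3$ nonzero entries, so the number of nonzero incidences is at least $3r'$. Let $u=|V'|-c$ be the number of uncancelled vertices that receive nonzero entries; the support weight is at least $u$.

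\emph{Upper bound on incidences.} Count incidences by vertex degree $\deg(v)$ in $E'$. We have $\sum_{v\in V'}\deg(v)=kr'$. Almost-acyclicity applied to all $r'$ edges gives $|V'|\geq r'(k-1.1)$. Hence
\[
\sum_{v}(\deg(v)-1)=kr'-|V'|\leq 1.1r'.
\]
This quantity is the total excess multiplicity. Each cancelled vertex has nonzero entries from at least two edges. Write $n_v\geq 2$ for the number of nonzero entries at a cancelled vertex $v$. Then the nonzero incidences at cancelled vertices total $\sum_{v\text{ cancelled}}n_v\leq 2\sum_{v\text{ cancelled}}(n_v-1)\leq 2\cdot 1.1r'=2.2r'$.

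\emph{Conclusion of the count.} The remaining nonzero incidences, at least $3r'-2.2r'=0.8r'$, lie on uncancelled vertices. These are distinct, except that an uncancelled vertex may carry several nonzero entries. The excess at uncancelled vertices must be charged against the same $1.1r'$ budget. A cleaner way to organize this is to bound the weight below by the number of nonzero incidences minus the total excess, $\sum_v(n_v-1)$, minus the cancelled vertices themselves:
\[
\|\cdot\|_{\sfH}\geq 3r'-\sum_v(n_v-1)-c.
\]
Since $c\leq\sum_{v\text{ cancelled}}(n_v-1)$, this gives weight $\geq 3r'-2\cdot 1.1r'=0.8r'$. The delicate point is using the slack correctly: cancelled vertices are penalized twice, uncancelled excess only once, and the inequality $c\leq\sum_{\text{cancelled}}(n_v-1)$ is exactly what makes the total loss at most $2.2r'$.

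\emph{The bound $\geq 2$ per component.} For $r'\geq3$ this already follows from $0.8r'\geq 2.4$. For $r'\in\{1,2\}$ we argue directly. With one edge the weight is at least $3$. With two edges, almost-acyclicity forces the edges to share at most one vertex, since $2(k-1.1)>2k-2$. Each edge keeps at least $2$ nonzero entries off the shared vertex, so the weight is at least $4$.

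I expect the main obstacle to be the charging argument in the incidence count: it must handle the cancelled/uncancelled distinction correctly using only the global excess bound $1.1r'$. A secondary concern is ensuring that the almost-acyclicity inequality is applied with the right count, all $r'$ edges of the component rather than a local subset.
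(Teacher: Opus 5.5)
Your main counting step is correct, and it takes a different route from the paper. The paper looks only at degree-one vertices. From $k|E_i|=\sum_v\deg_i(v)\ge 2|V_i|-|V_i^{=1}|$ and $|V_i|\ge(k-1.1)|E_i|$ it gets $|V_i^{=1}|\ge(k-2.2)|E_i|$. A degree-one vertex leaves the support only if its unique edge $e$ has $B(e)_v=0$, and that happens at most $k-3$ times per edge. This gives $0.8|E_i|$ without ever reasoning about cancellation.

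You instead count all nonzero incidences (at least $3r'$), subtract the total excess $\sum_{v:\,n_v\ge1}(n_v-1)\le kr'-|V'|$, and then subtract the cancelled vertices, charging them a second time against their excess. Both arguments land on exactly the same intermediate bound $2|V'|-(2k-3)r'$, so neither is stronger; the paper's is simply shorter.

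Two presentational fixes are needed.
\begin{itemize}
\item In your ``cleaner'' inequality the sum $\sum_v(n_v-1)$ must range only over vertices with $n_v\ge 1$. Otherwise vertices with no nonzero entry contribute $-1$, and the inequality no longer follows.
\item Your earlier claim that the weight is at least $u=|V'|-c$ is false, since $u$ also counts uncancelled vertices that receive no nonzero entry. You never use it, so delete it.
\end{itemize}

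The one actual error is your direct treatment of $r'=2$. The inequality $2(k-1.1)>2k-2$ is false, because $2(k-1.1)=2k-2.2<2k-2$. So almost-acyclicity only forbids two edges from sharing three or more vertices; they may share two. In that case the weight can be as small as $2$, not $4$. For example, take $k=3$, $e_1=(a,b,c)$ and $e_2=(a,b,d)$. Let $B(e_1)$ have nonzero entries $x,y,z$ on $a,b,c$, and let $B(e_2)$ have entries $-x,-y,w$ on $a,b,d$. The sum is supported on $\{c,d\}$. This configuration is almost-acyclic, since the two edges cover $4\ge 3.8$ vertices.

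The lemma still holds, because it only asks for $\ge 2$. You get this for free from your own main bound, exactly as the paper does: the weight is at least $0.8\cdot 2=1.6$, and Hamming weight is an integer. Alternatively, each edge keeps at least one nonzero entry off the at most two shared vertices, and these land on distinct degree-one vertices. Replace your $r'=2$ paragraph with one of these remarks; the $r'=1$ and $r'\ge 3$ cases are fine as written.
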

\begin{proof} Let $V_{1},\dots, V_t$ be the vertex sets of the nontrivial connected components of $(V,E)$, and let $E_{1},\dots,E_{t}$ be the edge sets they induce. For each $i\in [t]$, we write \(\boldsymbol{\Sigma}_i B : =\sum\nolimits_{e\in E_i } B(e).\) Since $V_{1},\dots,V_{t}$ are pairwise disjoint and $\supp(\boldsymbol{\Sigma}_i B) \subseteq V_i$, we have 
\[
    \| \boldsymbol{\Sigma} B\|_{\sfH} =\sum_{i=1}^t \left\lVert \boldsymbol{\Sigma}_i B\right \rVert_{\sfH}. 
\]
Therefore, it suffices to show for each $i\in [t]$ that
\begin{align}\label{ineq:singlecomponenthw}
\left\lVert \boldsymbol{\Sigma}_i B\right \rVert_{\sfH} \geq \max\left\{2,\frac{4}{5}|E_i|\right\}.
\end{align}
If $E_{i}$ consists of a single edge $e$, then by definition we have $\|\boldsymbol{\Sigma}_{i}B\|_{\sfH}=\|B(e)\|_{\sfH}\geq 3$. In the rest of the proof, we fix an $i\in [t]$ and assume $|E_{i}|\geq 2$. Our goal is to show that $\|\boldsymbol{\Sigma}_{i}B\|_{\sfH}\geq \frac{4}{5}|E_{i}|$. Once we have that, it also follows that $\|\boldsymbol{\Sigma}_{i}B\|_{\sfH}\geq \frac{4}{5}\cdot 2=1.6$ and hence $\|\boldsymbol{\Sigma}_{i}B\|_{\sfH}\geq 2$.

We define $V_i^{=1}\subseteq V_i$ to be the set of vertices with degree-1 in the graph $(V_i,E_i)$. For $v\in V_i$, we use $\mathrm{deg}_i (v)$ to denote the degree of vertex $v$ in the graph $(V_i,E_i)$. Since every vertex $v\in V_i \setminus V_i^{=1}$ must satisfy $\mathrm{deg}_i (v)\geq 2$,  
\begin{align*}
    k|E_i|=\sum_{v\in V_{i}}\deg_{i}(v)\geq \left|V_i^{=1}\right| + 2\left|V_i \setminus V_i^{=1}\right| = 2|V_i| - \left|V_i^{=1}\right|. 
\end{align*}
Using the condition $|V_i|\geq (k-1.1)|E_i|$ (which follows from the definition of almost-cyclicity) and rearranging, we get
\begin{align}\label{ineq:degree1vertex}
    \left|V_i^{=1}\right|\geq (k-2.2)|E_i|. 
\end{align}
For any $v\in V_{i}^{=1}$, there is a unique edge $e\in E_{i}$ that contains $v$, so we have $B(e)_{v}=(\boldsymbol{\Sigma}_{i}B)_{v}$. This means
\[
V_{i}^{=1}\setminus \supp(\boldsymbol{\Sigma}_{i}B)\subseteq \bigcup_{e=(v_{1},\dots,v_{k})\in E_{i}}\{v_{1},\dots,v_{k}\}\setminus \supp(B(e)).
\]
Using the condition that $\|B(e)\|_{\sfH}\geq 3$, we get
\[
\left|V^{=1}_{i}\setminus \supp(\boldsymbol{\Sigma}_{i}B)\right|\leq (k-3)|E_{i}|.
\]
Combining this with \eqref{ineq:degree1vertex} yields
\begin{align*}
\|\boldsymbol{\Sigma}_{i}B\|_{\sfH}=|\supp(\boldsymbol{\Sigma}_{i}B)|=\left|V_{i}^{=1}\right|-\left|V^{=1}_{i}\setminus \supp(\boldsymbol{\Sigma}_{i}B)\right|\geq (k-2.2)|E_{i}|-(k-3)|E_{i}|=\frac{4}{5}|E_{i}|,
\end{align*}
as desired.
\end{proof}

\begin{lemma}\label{lem:bound_on_calB_ell}
Let $d,\ell,m$ be positive integers such that $m\geq \ell$. Let $H$ be an almost-acyclic $k$-uniform hypergraph with $m$ edges, and suppose that every edge in $H$ is incident to at most $d$ other edges. Then we have
\[
\sum_{E'\subseteq E}\left|\calB_{\ell}(E')\right|\leq \left(\frac{(15dN^{k})^{4}m}{\ell}\right)^{\ell/2}.
\]
\end{lemma}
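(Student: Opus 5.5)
The plan is to organize the sum by the size and component structure of $E'$. Fix $E'\subseteq E$ with $|E'|=r$, and let $t$ be the number of nontrivial connected components of $(V,E')$. The subgraph $(V,E')$ is itself almost-acyclic, since the property passes to subgraphs. Hence Lemma~\ref{lem:ell_at_least_t_and_r} applies: $\calB_{\ell}(E')$ can be nonempty only if $\ell\geq 2t$ and $\ell\geq \frac45 r$, that is, $t\leq \ell/2$ and $r\leq \frac54\ell$. We may also assume every edge of $E'$ is used, since each $B(e)$ has Hamming weight at least $3$. This gives the crude count $|\calB_{\ell}(E')|\leq |\calB(E')|\leq (2^{k}N^{k})^{r}$, because each $B(e)$ is a vector supported on the $k$ vertices of $e$.

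Putting these together with Lemma~\ref{lem:bound_for_frakE}, I would bound
\[
\sum_{E'\subseteq E}|\calB_{\ell}(E')|\leq \sum_{t\leq \ell/2}\ \sum_{t\leq r\leq 5\ell/4}|\mathfrak{E}(t,r,H)|\cdot(2N)^{kr}\leq \sum_{t\leq \ell/2}\ \sum_{t\leq r\leq 5\ell/4}(12dN^{k})^{r}\left(\frac{3m}{t}\right)^{t}.
\]
Next, using $r\leq \frac54\ell$ (and $d,N\geq1$), the factor $(12dN^{k})^{r}$ is at most $(12dN^{k})^{5\ell/4}$. The number of pairs $(t,r)$ is at most $\ell^{2}\leq 4^{\ell}$, so that summation is absorbed into constants.

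The key step is controlling $(3m/t)^{t}$ for $t\leq \ell/2$ by a bound of the form $(cm/\ell)^{\ell/2}$. The function $t\mapsto (3m/t)^{t}$ is increasing for $t\leq 3m/e$, and $m\geq \ell$, so its maximum over $t\leq \ell/2$ occurs at $t=\ell/2$ and equals $(6m/\ell)^{\ell/2}$; if $\ell/2>3m/e$ it is bounded by $e^{3m/e}\le e^{\ell}$, which is also of this form since $m\ge\ell$. Combining the pieces gives a bound of
\[
4^{\ell}(12dN^{k})^{5\ell/4}(6m/\ell)^{\ell/2}\leq \left(\frac{16\cdot 6\cdot(12dN^{k})^{5/2}m}{\ell}\right)^{\ell/2}.
\]
Since $(12dN^k)^{5/2}\cdot 96\leq (15dN^k)^{4}$, this gives the stated bound.

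The main obstacle is the constant and exponent bookkeeping. In particular, I must double-check that the $r$-dependence really enters only through the edge count $r\le 5\ell/4$, and not through additional factors of $m$. It does: Lemma~\ref{lem:bound_for_frakE} charges $m$ only once per component. Finally, the edge case $t=0$ (that is, $E'=\emptyset$) contributes only when $\ell=0$, which is excluded since $\ell$ is positive.
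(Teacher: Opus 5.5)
Your proposal is essentially the paper's own proof: restrict to $t\le \ell/2$ and $r\le 5\ell/4$ via Lemma~\ref{lem:ell_at_least_t_and_r}, count edge sets with Lemma~\ref{lem:bound_for_frakE}, bound $|\calB(E')|$ edge by edge, and maximize $(3m/t)^t$ at $t=\ell/2$. The only cosmetic difference is that the paper uses the cruder range $r\le 2\ell$.

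There is one arithmetic slip that matters. $(6d)^r(2N)^{kr}$ equals $(6\cdot 2^{k}dN^{k})^{r}$, not $(12dN^k)^r$. With the extra factor $2^{kr}$, your chain of inequalities no longer closes. For $d=1$, $N=2$, $k=3$ the final comparison you need becomes $96\cdot 384^{5/2}\le 120^4$, which is false. For larger $k$ the loss $2^{5k\ell/4}$ exceeds the available slack $O(1)^{\ell}\,2^{3k\ell/4}$ between $(6dN^k)^{5\ell/4}$ and $(15dN^k)^{2\ell}$.

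The fix is the count your own justification actually gives. Each $B(e)$ is determined by its $k$ coordinates on the vertices of $e$, so $|\calB(E')|\le N^{k|E'|}$; the $2^k$ choice of support is redundant. This is the bound the paper uses. With it, your estimate reads $4^{\ell}(6dN^k)^{5\ell/4}(6m/\ell)^{\ell/2}$, and $96\,(6dN^k)^{5/2}\le(15dN^k)^4$ finishes the proof.
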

\begin{proof}
By Lemma~\ref{lem:ell_at_least_t_and_r}, we know that for any $E'\subseteq E$ such that $\calB_{\ell}(E')$ is nonempty, there exist positive integers $t\leq \ell/2$ and $r\leq 5\ell/4$ such that $E'\in \mathfrak{E}(t,r,H)$. Therefore, we have
\begin{align*}
\sum_{E'\subseteq E}\left|\calB_{\ell}(E')\right|&\leq \sum_{t=1}^{\lfloor \ell/2\rfloor}\sum_{r=t}^{\lfloor 5\ell/4\rfloor}|\mathfrak{E}(t,r,H)|\cdot \max_{E'\in \mathfrak{E}(t,r,H)}\left|\calB(E')\right|\\
&\leq \sum_{t=1}^{\lfloor \ell/2\rfloor}\sum_{r=t}^{2\ell}(6d)^{r}\left(\frac{3m}{t}\right)^{t}N^{kr}\tag{using Lemma~\ref{lem:bound_for_frakE}}\\
&\leq 2\cdot(6dN^{k})^{2\ell}\cdot\sum_{t=1}^{\lfloor \ell/2\rfloor} \left(\frac{3m}{t}\right)^{t}\\
&\leq 2\ell\cdot(6dN^{k})^{2\ell}\left(\frac{3m}{\ell/2}\right)^{\ell/2}\leq \left(\frac{(15dN^{k})^{4}m}{\ell}\right)^{\ell/2},\tag{using $m\geq \ell$}
\end{align*}
as desired.
\end{proof}

We can now state and prove the following analogue of Lemma~\ref{lem:structured_part}.

\begin{lemma}\label{lem:structured_part_2}
    Fix a $\DIHP(G,n,\alpha,K)$  game, where $G=(\calV,\calE,N,(\mu_{\sfe})_{\sfe\in \calE})$. Suppose $\mu_{\sfe}$ is two-wise independent for any $\sfe\in\calE$. If $\bdzeta$ is an almost-acyclic restriction sequence (as per Definition~\ref{def:almost_acyclic_labeled_matchings}) such that $H_{\bdzeta}$ has at most $\gamma n$ edges, then the function $g_{\bdzeta}$ (defined in Definition~\ref{def:g_bdzeta}) is $\big(n,(15kN^{k}K|\calE|)^{4},\gamma kn\log_{2}N,0\big)$-bounded. 
\end{lemma}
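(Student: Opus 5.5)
We will verify the three conditions of Definition~\ref{def:Cs_star_bounded} with $C=(15kN^{k}K|\calE|)^{4}$, $s^*=\gamma kn\log_2 N$ and $\delta=0$, following the template of Lemma~\ref{lem:structured_part}. The new ingredient is a Fourier expansion of $g_{\bdzeta}$ edge by edge, which lets two-wise independence kill small characters.

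\textbf{Expectation and sup norm.} Almost-acyclicity includes pairwise disjointness of the supports, and almost-acyclic hypergraphs are acyclic because $k-1.1<k-1$. So Lemma~\ref{lem:structure_part_mean}(1) applies and gives $\Ex{g_{\bdzeta}}=1$. For the infinity norm, $\|\mu_{\sfe}\|_\infty\le 1$ gives $\|g_{\bdzeta}\|_\infty\le N^{k|E|}\le N^{k\gamma n}=2^{s^*}$, where $E$ is the edge set of $H_{\bdzeta}$ and $|E|\le\gamma n$.

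\textbf{Fourier expansion.} By \eqref{eq:structure_part_rewrite}, $g_{\bdzeta}(x)=\prod_{e\in E}N^k\mu_{\langle e\rangle}(x_{|e}-\widetilde\bfz(e))$. Expand each factor in characters of $\ZNk$. The function $w\mapsto N^k\mu(w-c)$ has coefficient at $a\in\ZNk$ of modulus $|N^k\widehat\mu(a)|\le 1$. By two-wise independence this coefficient vanishes when $1\le\|a\|_{\sfH}\le 2$, and it equals $1$ at $a=0$.

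Multiplying out the product, $g_{\bdzeta}=\sum_{E'\subseteq E}\sum_{B\in\calB(E')}c_{B}\,\chi_{\boldsymbol\Sigma B}$ with $|c_B|\le1$. Here $E'$ is the set of edges carrying a nonzero character, and $B(e)$ is that character placed on the vertices of $e$. Hence
\[
\left\|g_{\bdzeta}^{=\ell}\right\|_{\sfW}\le\sum_{E'\subseteq E}|\calB_\ell(E')|.
\]
Distinct $(E',B)$ may produce the same $\boldsymbol\Sigma B$, but the triangle inequality handles this.

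\textbf{Degree bound and level estimate.} To apply Lemma~\ref{lem:bound_on_calB_ell}, we need each edge of $H_{\bdzeta}$ to meet at most $d$ others. A vertex $v\in\{\sfv\}\times[n]$ can only be covered by edges of the matchings $\supp(\bfz^{(\sfe,j)})$ with $\sfv\in\sfe$. Each such matching covers $v$ at most once, so $\deg v\le K|\calE|$. Therefore each edge meets at most $d=kK|\calE|$ other edges.

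With $m=|E|\le\gamma n$, Lemma~\ref{lem:bound_on_calB_ell} gives, for $1\le\ell\le m$,
\[
\left\|g_{\bdzeta}^{=\ell}\right\|_{\sfW}\le\Big(\tfrac{(15kK|\calE|N^k)^4\gamma n}{\ell}\Big)^{\ell/2}.
\]
For $\ell\le s^*$ this is at most $(C\sqrt{s^* n}/\ell)^{\ell/2}=F_C(n,\ell,s^*)$, using $\gamma n\le\sqrt{s^*n}$ since $\gamma n\le s^*$. Here the monotonicity of Proposition~\ref{prop:Fourier_bound_non_decreasing} is what lets us compare against $F_C$.

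For levels with $\ell>m$, the hypothesis $m\ge\ell$ of the lemma fails, but such levels are harmless. By Lemma~\ref{lem:ell_at_least_t_and_r}, any $E'$ contributing to level $\ell$ has $|E'|\le 5\ell/4$. Re-running the computation in Lemma~\ref{lem:bound_on_calB_ell} with $t\le\min\{\ell/2,m\}$ bounds $(3m/t)^t$ by $e^{O(\ell)}$ with no $m$-dependence. This gives a bound $(O(dN^k))^{O(\ell)}$, which is below $(C^2n/\ell)^{\ell/4}$ for $\ell\le C^{-2}n$.

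\textbf{Main obstacle.} The main obstacle is the bookkeeping in this last regime, together with making the constant $C$ absorb the factor $4$ versus $2$ powers. If the $\ell>m$ case resists, the fallback is to observe that $\gamma n\ge m$ still allows substituting $\max\{m,\ell\}$ for $m$ in the lemma's proof, since only the upper bound on $(3m/t)^t$ was used.
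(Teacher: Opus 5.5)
\textbf{Gap in the expectation step.} You claim that almost-acyclic hypergraphs are acyclic ``because $k-1.1<k-1$'', but the inequality gives the opposite implication. Acyclicity demands that every $\ell$ edges cover \emph{more than} $\ell(k-1)$ vertices. Almost-acyclicity only demands \emph{at least} $\ell(k-1.1)$ vertices, which is strictly weaker.

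For instance, take $k=3$ and two edges $(u,v,w_1),(u,v,w_2)$ belonging to two different players of the same $\sfe$. They cover $4\ge 2(3-1.1)$ vertices, so they are allowed, yet they form a cycle. This relaxation is exactly why fair rectangles were introduced: protocols with $o(n)$ communication can find cycles.

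Consequently Lemma~\ref{lem:structure_part_mean}(1) is not available, and this is more than a technicality. For the pair above, suppose the labels agree on $u,v$. Then
\[
\Ex{g_{\bdzeta}}=N^{2}\sum_{y}\mu_{12}(y)^{2},
\]
where $\mu_{12}$ is the marginal of $\mu_{\sfe}$ on the first two coordinates. This exceeds $1$ unless $\mu_{12}$ is uniform. So the computation of the mean must genuinely use two-wise independence, and your proof as written never does.

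\textbf{The fix, and the rest of your proof.} Your own Fourier expansion already supplies the repair, and it is exactly the paper's argument. We have $\Ex{g_{\bdzeta}}=\widehat{g_{\bdzeta}}(\vec 0)$, which is the sum of $c_B$ over pairs $(E',B)$ with $\boldsymbol{\Sigma}B=\vec 0$. Apply Lemma~\ref{lem:ell_at_least_t_and_r} to $(\calV\times[n],E')$, which is almost-acyclic as a subgraph of $H_{\bdzeta}$. It gives $\|\boldsymbol{\Sigma}B\|_{\sfH}\ge 2t\ge 2$ whenever $E'\neq\emptyset$. Hence only $E'=\emptyset$ contributes, with coefficient $1$.

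With that change, the rest of your argument coincides with the paper's proof: the sup-norm bound, the degree bound $d=kK|\calE|$, the application of Lemma~\ref{lem:bound_on_calB_ell}, and the comparison with $F_C$.

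Your separate treatment of levels $\ell>|E|$ is correct, and it is care the paper skips: there the hypothesis $m\ge\ell$ of Lemma~\ref{lem:bound_on_calB_ell} fails. The cleanest route is your fallback. Lemma~\ref{lem:bound_on_calB_ell} stays valid when $m$ is only an upper bound on the number of edges, so take $m=\max\{|E|,\ell\}$. The case $\ell>|E|$ then yields $C^{\ell/2}$. This is at most $F_C(n,\ell,s^*)$, because $F_C(n,\ell,s^*)\ge (C^{2}n/\ell)^{\ell/4}\ge C^{\ell}$ for $\ell\le C^{-2}n$.
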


\begin{proof}
It suffices to show that $g_{\bdzeta}$ satisfies the three conditions of $\big(n,(15kN^{k}K|\calE|)^{4},\gamma kn\log_{2}N,0\big)$-boundedness required by Definition~\ref{def:Cs_star_bounded}.

\paragraph{Infinity-norm bound.} 
Let $E$ be the edge set of the hypergraph $H_{\bdzeta}$. For every $e\in E$, define a function $g_{e}\in L^2\big(\bZ_N^{\calV\times [n]}\big)$ by letting
\begin{align*}
    g_e (x) = N^{k}\mu_{\langle e\rangle}\big(x_{|e}-\widetilde{\bfz}(e)\big),\qquad\text{for all }x\in\ZmodN^{\calV\times [n]}.
\end{align*}
By \eqref{eq:g_full_expansion} we have $g_{\bdzeta}=\prod_{e\in E}g_{e}$, so
\[
\|g_{\bdzeta}\|_{\infty}\leq \prod_{e\in E}\|g_{e}\|_{\infty}\leq N^{k|E|}\leq N^{\gamma k n}.
\]

\paragraph{Expectation equals $1$.} 
For each edge $e=(v_{1},\dots,v_{k})\in E$, since $\mu_{\langle e\rangle}$ is two-wise independent, we can write 
\[
    g_{e}(x)=1+\sum_{b^{(e)}}\widehat{g_{e}}(b^{(e)})\cdot\chi_{b^{(e)}}(x),
\]
    where the sum is over all $b^{(e)}\in \ZmodN^{\calV\times [n]}$ such that $\supp(b^{(e)})\subseteq \{v_{1},\dots,v_{k}\}$ and $\left|\supp(b^{(e)})\right|\geq 3$.
We can then expand the product in the identity $g_{\bdzeta}=\prod_{e\in E}g_{e}$ and obtain
\[
g_{\bdzeta}(x)=\sum_{E'\subseteq E}\sum_{B\in \calB(E')}\left(\chi_{\boldsymbol{\Sigma} B}(x)\prod_{e\in E'}\widehat{g_{e}}(B(e))\right).
\]
Equivalently, for any $b\in \ZmodN^{\calV\times [n]}$ we have
\begin{equation}\label{eq:two_wise_structured_convolution}
\widehat{g_{\bdzeta}}(b)=\sum_{E'\subseteq E}\sum_{B\in \calB(E')}\left(\ind{\boldsymbol{\Sigma}B=b}\cdot\prod_{e\in E'}\widehat{g_{e}}(B(e))\right).
\end{equation}
For any $E'\subseteq E$, the hypergraph $(\calV\times [n],E')$ is almost-acyclic since it is a subgraph of $H_{\bdzeta}$. Therefore, by applying Lemma~\ref{lem:ell_at_least_t_and_r}, we know that whenever $\boldsymbol{\Sigma}B=\vec{0}$ for some $B\in \calB(E')$, we must have $E'=\emptyset$. It then follows from \eqref{eq:two_wise_structured_convolution} that $\widehat{g_{\bdzeta}}(\vec{0})=1$, or equivalently $\Ex{g_{\bdzeta}}=1$.

\paragraph{Fourier growth bound.} For each edge $e\in E$, since $\|g_{e}\|_{1}=1$, we know that $\left|\widehat{g_{e}}(b)\right|\leq 1$ for any $b\in \ZmodN^{\calV\times [n]}$. Therefore, it follows from \eqref{eq:two_wise_structured_convolution} that for any positive integer $\ell$, we have
\[
\left\|g_{\bdzeta}^{=\ell}\right\|_{\sfW}\leq \sum_{E'\subseteq E}\sum_{B\in \calB(E')}\ind{\|\boldsymbol{\Sigma}B\|_{\sfH}=\ell\big.}=\sum_{E'\subseteq E}\left|\calB_{\ell}(E')\right|
\]
Since $E$ is the union of $K|\calE|$ partial matchings, in the hypergraph $H_{\bdzeta}=(\calV\times [n],E)$, every vertex is incident to at most $K|\calE|$ edges. Consequently, every edge is incident to at most $kK|\calE|$ other edges. We may now apply Lemma~\ref{lem:bound_on_calB_ell} and obtain
\[
\left\|g_{\bdzeta}^{=\ell}\right\|_{\sfW}\leq \left(\frac{(15kN^{k}K|\calE|)^{4}\cdot \gamma n}{\ell}\right)^{\ell/2}\leq F_{(15kN^{k}K|\calE|)^{4}}\bigl(n,\ell,\gamma^{2}n\bigr)\leq F_{(15kN^{k}K|\calE|)^{4}}\bigl(n,\ell,\gamma n\log_{2}N\bigr),
\]
where we used Proposition~\ref{prop:Fourier_bound_non_decreasing} in the last transition.
\end{proof}

\subsubsection{Finishing up the Proofs}\label{subsubsec:induction_twowise}

As noted earlier, Lemma~\ref{lem:discrepancy_twowise} and Theorem~\ref{thm:twowise} can now be proved using essentially the same arguments as their counterparts, Lemma~\ref{lem:discrepancy_bound} and Theorem~\ref{thm:sqrt_lower_bound}. 
To avoid unnecessary repetition, we provide only brief proof sketches.

\begin{proof}[Proof Sketch of Lemma~\ref{lem:discrepancy_twowise}]
By the first and second condition of $\gamma n$-fairness, the restriction sequence $\bdzeta$ satisfies the assumptions of Lemma~\ref{lem:structured_part_2}. The conclusion of Lemma~\ref{lem:structured_part_2} can then be used as the base case for an induction argument similar to the one in the proof of Lemma~\ref{lem:discrepancy_bound}. 

More specifically, one can use Lemmas~\ref{lem:relating_yes_no} and~\ref{lem:separating_structure_pseudorandom} to express the ratio $\calD_{\yes}(R)/\calD_{\no}(R)$ as an expectation of the form (see \eqref{eq:relating_yesno})
\begin{equation}\label{eq:yes_no_ratio_two_wise}
\Exu{(M_{i})_{i=1}^{K|\calE|}}{\Exu{x}{g_{\bdzeta}(x)\prod_{i=1}^{K|\calE|}h_{i,M_{i}}(x)}},
\end{equation}
where each $h_{i,M_{i}}$ represents the ``pseudorandom part'' of the $i$-th player's messages. Next use Lemma~\ref{lem:induction} to inductively prove the following statement: if $n\gg \gamma^{-1}\gg 1$, then for any $r\in \{0,1,\dots,K|\calE|\}$, the function $g_{\bdzeta}\prod_{i=1}^{r}h_{i,M_{i}}$ is
\[
\Big(n,\,\bigl(15kN^{k}K|\calE|\bigr)^{4}\bigl(10^{16}N^{8k}K^{2}|\calE|^{2}\bigr)^{r},\,2^{r+1}\gamma n\log_{2}N,\,10^{-4}K^{-1}|\calE|^{-1}r\Big)\text{-bounded}
\]
with probability at least $1-4r/10^{4}$ over the randomness in $M_{1},\dots,M_{r}$. The $r=K|\calE|$ case of this statement then yields a lower bound for \eqref{eq:yes_no_ratio_two_wise}. 
\end{proof}

\begin{proof}[Proof Sketch of Theorem~\ref{thm:twowise}]
We apply Lemmas~\ref{lem:decomposition_twowise} and~\ref{lem:discrepancy_twowise} to obtain constants $\eta$ and $\gamma$, respectively. We show that any protocol $\Pi$ for $\DIHP(G,n,\alpha,K)$ with $|\Pi|\leq \min\{\eta,10^{-3}\gamma\}\cdot n$ must have $\adv(\Pi)<0.1$.

We apply Lemma~\ref{lem:decomposition_twowise} to $\Pi$, and let $\calR$ be the collection of structured rectangles obtained. By the upper bound on $|\Pi|$ and conclusion (2) of Lemma~\ref{lem:decomposition_twowise}, we know that each structured rectangle in $\calR$ is $\gamma n$-fair. It then follows from Lemma~\ref{lem:discrepancy_bound} that $\calD_{\yes}(R)\geq (1-10^{-3})\cdot \calD_{\no}(R)$ for each $(\bdzeta,R)\in \calR$. The rest of the proof of identical to the proof of Theorem~\ref{thm:sqrt_lower_bound}.
\end{proof}

\subsection{Implications of Two-Wise Independent DHIP}\label{subsec:implication_twowise}

In this subsection, we sketch the proofs of Theorems~\ref{thm:multi_pass_lin} and~\ref{thm:unbounded_width}. For the proof of Theorem~\ref{thm:multi_pass_lin}, we need to open up the proof of Lemma~\ref{lem:communication_reduction} (given in \cite[Proof of Lemma 5.14]{FMW25b}) as a white box.

\begin{proof}[Proof Sketch of Theorem~\ref{thm:multi_pass_lin}]
Given a constant $\varepsilon'\in(0,1)$ and predicate family $\calF\subseteq \{f:\Sigma^{k}\rightarrow\{0,1\}\}$ such that every predicate $f\in \calF$ supports two-wise independence, the proof proceeds as follows:

\begin{enumerate}
\item We can pick an instance $\calI\in\cspF$ with $\val_{\calI}< \rho(\calF)+\varepsilon'$ (by the definition~\eqref{eq:def_rho_F}). By Proposition~\ref{prop:two_wise_LP_val_1} we know that there is a solution to $\lp_{\calI}$ achieving objective value 1 in which the distribution over $\Sigma^{k}$ associated to every constraint is two-wise independent.

\item Then we use Lemma~\ref{lem:communication_reduction} (more precisely, the proof of this lemma given in \cite{FMW25b}) with \[\varepsilon=\rho(\calF)+\varepsilon'-\val_{\calI}\]
to convert $\calI$ into a $\DIHP(G,n,\alpha,K)$ communication game. Because of the two-wise independence of the distributions in the $\lp$ solution, the distribution-labeled $k$-graph $G=(\calV,\calE,N,(\mu_{\sfe}))_{\sfe\in \calE}$ provided by the proof of Lemma~\ref{lem:communication_reduction} satisfies $N=|\Sigma|$ and, importantly, that $\mu_{\sfe}$ is \emph{two-wise} independent for any $\sfe\in \calE$.
\end{enumerate}

By the conclusion of Lemma~\ref{lem:communication_reduction}, any $p$-pass streaming algorithm for $\McspF{1}{\rho_{\calF}+\varepsilon'}$ must use $(pK|\calE|)^{-1}\cdot \CC(G,n,\alpha,K)$ bits of memory on instances with $O(n)$ variables and $O(n)$ constraints. By the two-wise independence of the distributions $\mu_{\sfe}$ (for $\sfe\in\calE$), it follows from Theorem~\ref{thm:twowise} that $\CC(G,n,\alpha,K)\geq \Omega(n)$. Therefore, any $p$-pass streaming algorithm for $\McspF{1}{\rho(\calF)+\varepsilon'}$ must use $\Omega(n/p)$ bits of memory.
\end{proof}

For any predicate family $\calF \subseteq \{f : \Sigma^{k} \to \{0,1\}\}$ of unbounded width, the $\Omega(n)$ query lower bound for $\McspF{1}{1-\varepsilon}$ established in \cite{fei2025unbounded} relies on a known polynomial-time reduction (due to \cite{dalmau2013robust}) from a problem of the form $\mathsf{MaxCSP}(\calF')[1,\rho(\calF')+\varepsilon]$ to $\McspF{1}{1-\varepsilon}$, where $\calF'$ consists of predicates supporting two-wise independence. 

As shown in \cite[Section~5]{fei2025unbounded}, this reduction can be adapted to the bounded-degree query model. Moreover, it is straightforward to verify that the same reduction extends to the streaming setting, thereby deriving Theorem~\ref{thm:unbounded_width} from Theorem~\ref{thm:twowise}. 

Due to the technical length of the reduction, we omit the full details and provide only a proof sketch of Theorem~\ref{thm:unbounded_width}.

\begin{definition}
Let $\mathsf{G}$ be an Abelian group. For any element $b\in \mathsf{G}$, we let $S^{\mathsf{G}}_{b}:\mathsf{G}^{3}\rightarrow\{0,1\}$ be the predicate defined by
\[S^{\mathsf{G}}_{b}(x_{1},x_{2},x_{3})=1\quad\text{if and only if}\quad
x_{1}+x_{2}+x_{3}=b.
\]
We use \(\textsc{3Sum}_{
\mathsf{G}}\) to denote the predicate family $\left\{S_{b}^{\mathsf{G}}\,\middle|\,b\in \mathsf{G}\right\}$.
\end{definition}

\begin{proof}[Proof Sketch of Theorem~\ref{thm:unbounded_width}]
Given a predicate family $\calF\subseteq \{f:\Sigma^{k}\rightarrow\{0,1\}\}$ of unbounded width, the proof proceeds as follows:
\begin{enumerate}
\item We first use the result of \cite{dalmau2013robust} (see also the exposition in \cite{fei2025unbounded}) to obtain an Abelian group $\mathsf{G}$ such that the predicate family $\textsc{3Sum}_{\mathsf{G}}$ can be ``simulated'' by $\calF$.\footnote{For the precise notion of ``simulated,'' see \cite[Definition 2.1]{fei2025unbounded}. Technically, the relational structure $(\mathsf{G},\textsc{3Sum}_{\mathsf{G}})$ should be simulated by ``the core of $(\Sigma,\mathcal{F})$ appended with constant relations.'' We refer to \cite{fei2025unbounded} for the details.}
\item Then pick an instance $\calI\in\mathrm{CSP}(\textsc{3Sum}_{\mathsf{G}})$ with $\val_{\calI}<2/3$.\footnote{It is easy to see that such an instance $\calI$ exist (see e.g. \cite[Lemma 4.5]{fei2025unbounded}).} Since every predicate in the family $\textsc{3Sum}_{\mathsf{G}}$ supports two-wise independence, by Proposition~\ref{prop:two_wise_LP_val_1} we know that there is a solution to $\lp_{\calI}$ achieving objective value 1 in which the distribution over $\Sigma^{k}$ associated to every constraint is two-wise independent.

\item Use Lemma~\ref{lem:communication_reduction} (more precisely, the proof of this lemma given in \cite{FMW25b}) with $\varepsilon=2/3-\val_{\calI}$ to convert $\calI$ into a $\DIHP(G,n,\alpha,K)$ communication game. Because of the fact that $f^{-1}(1)$ is two-wise independent for any $f\in \textsc{3Sum}_{\mathsf{G}}$, the distribution-labeled $k$-graph $G=(\calV,\calE,N,(\mu_{\sfe}))_{\sfe\in \calE}$ provided by the proof of Lemma~\ref{lem:communication_reduction} satisfies $N=|\mathsf{G}|$ and, importantly, that $\mu_{\sfe}$ is \emph{two-wise} independent for any $\sfe\in \calE$.

\item The proof of Lemma~\ref{lem:communication_reduction} also provides a reduction map from the joint input space of $\DIHP(G,n,\alpha,K)$ to $\mathrm{CSP(\textsc{3Sum}_{\mathsf{G}})}$:
\[
\varphi_{n}:\prod_{(\sfe,j)\in \calE\times [K]}\Omega^{\calU_{\sfe},\alpha n}\longrightarrow\mathrm{CSP(\textsc{3Sum}_{\mathsf{G}})},
\] such that
\[
\Pru{\bfY\sim \calD_{\yes}}{\val_{\varphi_{n}(\bfY)}=1}=1\qquad\text{and}\qquad \Pru{\bfY\sim \calD_{\yes}}{\val_{\varphi_{n}(\bfY)}\leq \frac{2}{3}}\geq 1-o(1),
\]
where $o(1)$ denotes a term tending to 0 as $n\rightarrow +\infty$.

\item Use \cite[Section 5]{fei2025unbounded} to construct a reduction map $\psi_{n}$ from the image of $\varphi_{n}$ to $\cspF$, such that
\[
\Pru{\bfY\sim \calD_{\yes}}{\val_{\psi_{n}(\varphi_{n}(\bfY))}=1}=1\qquad\text{and}\qquad \Pru{\bfY\sim \calD_{\yes}}{\val_{\psi_{n}(\varphi_{n}(\bfY))}\leq 1-\varepsilon'}\geq 1-o(1),
\]
for some constant $\varepsilon'\in (0,1)$ that does not depend on $n$. The composed reduction map $\psi_{n}\circ \varphi_{n}$ satisfies the following:
\begin{itemize}
\item There is a fixed variable set $V_{n}$ with size $O(n)$, and a fixed collection $\calC_{n}^{(0)}$ of constraints on $V_{n}$. 
\item For each player $(\sfe,j)\in\calE\times [K]$, there is a deterministic map that generates from any input $\bfy\in\Omega^{\calU_{\sfe},\alpha n}$ a collection of $O(n)$ constraints on $V_{n}$.
\item The constraint sequence of the instance $\psi_{n}(\varphi_{n}(\bfY))$ is the concatenation of $\calC^{(0)}_{n}$ and the $K|\calE|$ constraint sequences generated by individual players. 
\end{itemize}
\end{enumerate}
From the last two steps above, we can conclude that for sufficiently large $n$, any $p$-pass streaming algorithm for $\McspF{1}{1-\varepsilon'}$ must use $(pK|\calE|)^{-1}\cdot \CC(G,n,\alpha,K)$ bits of memory on input instances with $O(n)$ variables and $O(n)$ constraints. By the two-wise independence of the distributions $\mu_{\sfe}$ (for $\sfe\in\calE$), it follows from Theorem~\ref{thm:twowise} that $\CC(G,n,\alpha,K)\geq \Omega(n)$. Therefore, any $p$-pass streaming algorithm for $\McspF{1}{1-\varepsilon'}$ must use $\Omega(n/p)$ bits of memory.
\end{proof}

%% file: appendices.tex
\section{Some Simple Inequalities}

\begin{proposition}\label{prop:basic_calculus}
For any real numbers $x,\ell$ such that $x\geq 0$ and $\ell\geq 2$, we have \[(1+x)^{\ell}-\ell x\leq (1+2\ell x^{2})^{\ell/2}.\]
\end{proposition}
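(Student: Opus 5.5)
The plan is to reduce both sides to simple one-variable comparisons in $u:=\ell x\ge 0$, and to split into the cases $u\le 4/3$ and $u\ge 4/3$. Two elementary facts are used throughout. First, $(1+x)^{\ell}\le e^{\ell x}=e^{u}$. Second, since $\ell/2\ge 1$, Bernoulli's inequality gives the lower bound
\[
(1+2\ell x^{2})^{\ell/2}\ \ge\ 1+\tfrac{\ell}{2}\cdot 2\ell x^{2}\ =\ 1+u^{2}.
\]

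\emph{Case $u\le 4/3$.} Here I will show $e^{u}-u\le 1+u^{2}$, which together with the two facts above gives $(1+x)^\ell-\ell x\le e^u-u\le 1+u^2\le(1+2\ell x^2)^{\ell/2}$. Expand
\[
e^{u}-1-u\ =\ u^{2}\sum_{j\ge 2}\frac{u^{j-2}}{j!}.
\]
Each term $u^{j-2}/j!$ is increasing in $u$, so for $0<u\le 4/3$ the sum is at most its value at $u=4/3$, namely $(e^{4/3}-1-4/3)/(16/9)\approx 0.82<1$. This yields $e^u-1-u\le u^2$, as needed.

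\emph{Case $u\ge 4/3$.} Here I simply drop the $-\ell x$ term and compare $\ell$-th roots: it suffices to show $(1+x)^{2}\le 1+2\ell x^{2}$. After cancelling the $1$ and dividing by $x>0$, this is equivalent to $x(2\ell-1)\ge 2$. Since $\ell\ge 2$ we have $2\ell-1\ge \tfrac32\ell$, so
\[
x(2\ell-1)\ \ge\ \tfrac32\,\ell x\ =\ \tfrac32\,u\ \ge\ \tfrac32\cdot\tfrac43\ =\ 2.
\]
Raising $(1+x)^2\le 1+2\ell x^2$ to the power $\ell/2>0$ then gives $(1+x)^{\ell}\le(1+2\ell x^{2})^{\ell/2}$, and hence the claim.

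The case $x=0$ is trivial, since both sides equal $1$. The only delicate point is choosing the threshold so that both cases close. A threshold of $1$ fails the second case, and a threshold of $2$ fails the first, since the tail constant there becomes $(e^{2}-3)/4>1$. The threshold $4/3$ works in both, so the main thing to double-check is the numerical value of $e^{4/3}$ used in the first case.
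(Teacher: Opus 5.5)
Your proof is correct and follows essentially the paper's argument. In the large case you drop $-\ell x$ and use $(1+x)^2\le 1+2\ell x^2$; in the small case you use $(1+x)^\ell-\ell x\le 1+\ell^2x^2\le(1+2\ell x^2)^{\ell/2}$. The only differences are that you split at $\ell x=4/3$ instead of $x=2/(2\ell-1)$, and that you justify the small-case bound through $(1+x)^\ell\le e^{\ell x}$ and the power series of $e^u$, which is more explicit than the paper's ``by taking second derivatives'' (you also spell out the Bernoulli step the paper leaves implicit).
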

\begin{proof}
If $x\geq 2/(2\ell-1)$, then $(1+x)^{2}\leq 1+2\ell x^{2}$ and hence $(1+x)^{\ell}\leq (1+2\ell x^{2})^{\ell/2}$.

If $x\leq 2/(2\ell -1)$, then by taking second derivatives one can show that $(1+x)^{\ell}\leq 1+\ell x+\ell^{2}x^{2}$, and hence $(1+x)^{\ell}-\ell x\leq 1+\ell^{2}x^{2}\leq (1+2\ell x^{2})^{\ell/2}$.
\end{proof}

\begin{proposition}\label{prop:calculus}
For real numbers $a,b,c,d$ such that $0<a\leq b\leq c\leq d$ and $a+d=b+c$, we have
\[
1\leq \frac{a^{a}d^{d}}{b^{b}c^{c}}\leq 2^{d}.
\]
\end{proposition}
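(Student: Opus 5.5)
The plan is to take logarithms and use convexity of $f(x)=x\log x$ on $(0,\infty)$. Put $s:=a+d=b+c$. The claim is then equivalent to
\[
0\;\leq\; f(a)+f(d)-f(b)-f(c)\;\leq\; d\log 2 .
\]

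For the lower bound, the constraints $a\leq b\leq c\leq d$ and $a+d=b+c$ let us write $b=\lambda a+(1-\lambda)d$ and $c=(1-\lambda)a+\lambda d$ for some $\lambda\in[0,1]$; the degenerate case $a=d$ forces $a=b=c=d$ and is trivial. Convexity of $f$ gives
\[
f(b)\leq \lambda f(a)+(1-\lambda)f(d),\qquad f(c)\leq (1-\lambda)f(a)+\lambda f(d),
\]
and adding these two inequalities gives $f(b)+f(c)\leq f(a)+f(d)$. This is the left inequality.

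For the upper bound we compare both sides with the midpoint $s/2$. By convexity (Jensen), $b^{b}c^{c}\geq (s/2)^{s}$. It therefore suffices to show
\[
a^{a}d^{d}\leq 2^{d}(s/2)^{s}=s^{s}2^{d-s}=s^{s}2^{-a}.
\]
We bound the two factors separately.
\begin{itemize}
\item Since $d\leq s$, we have $d^{d}\leq s^{d}$.
\item Since $a\leq d$, we have $a\leq s/2$, and hence $a^{a}\leq (s/2)^{a}=s^{a}2^{-a}$.
\end{itemize}
Multiplying the two bounds gives $a^{a}d^{d}\leq s^{s}2^{-a}$, which is the desired inequality.

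I do not expect a genuine obstacle here. The only step needing any care is choosing the right intermediate comparison: a symmetric comparison via $(s/2)^{s}$ alone does not reach the $2^{d}$ bound. Splitting $s^{s}=s^{a}s^{d}$ and absorbing the $2^{-a}$ factor into the $a^{a}$ term is what makes the upper bound go through.
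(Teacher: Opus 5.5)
Your proof is correct, but it takes a different route from the paper.

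The paper sets $r=b-a=d-c$ and writes the log-ratio exactly as an integral, $\ln\frac{a^{a}d^{d}}{b^{b}c^{c}}=\int_{a}^{c}\ln\frac{x+r}{x}\,\mathrm{d}x$. The lower bound is then just positivity of the integrand. For the upper bound, it enlarges the range of integration to $[0,c]$ and splits at $x=r$. It uses $\int_{0}^{r}\ln\frac{x+r}{x}\,\mathrm{d}x=r\ln 4$ and $\ln\frac{x+r}{x}\le\ln 2$ for $x\ge r$, which give $(c+r)\ln 2=d\ln 2$.

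You instead rely only on convexity of $x\log x$. The lower bound comes from a majorization (Karamata-type) step. The upper bound comes from Jensen at the midpoint, $b^{b}c^{c}\ge (s/2)^{s}$, followed by the monotonicity estimates $d^{d}\le s^{d}$ and $a^{a}\le (s/2)^{a}$.

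What your route buys: it needs no integration, and the upper bound holds under weaker hypotheses. It uses only $0<a\le d$, $b,c>0$ and $a+d=b+c$; the ordering $a\le b\le c\le d$ plays no role, since $b=c=s/2$ is the worst case for $b^{b}c^{c}$.

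What the paper's route buys: a single exact identity for the log-ratio from which both bounds are read off directly, making the slack in each estimate explicit.
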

\begin{proof}
Let $r=b-a=d-c$. We have
\begin{equation}\label{eq:logarithm_integral}
\ln\frac{a^{a}d^{d}}{b^{b}c^{c}}=\ln\frac{d^{d}}{c^{c}}-\ln\frac{b^{b}}{a^{a}}=\int_{a}^{c}\left(\frac{\d}{\d x}\ln\frac{(x+r)^{x+r}}{x^{x}}\right)\d x=\int_{a}^{c}\ln\frac{x+r}{x}\d x\geq 0,
\end{equation}
and therefore $a^{a}d^{d}b^{-b}c^{-c}\geq 1$. To obtain an upper bound on $a^{a}d^{d}b^{-b}c^{-c}$, we extend the interval of integral in \eqref{eq:logarithm_integral} from $[a,c]$ to $[0,c]$, and then divide it at the point $r$:
\begin{align*}
\ln\frac{a^{a}d^{d}}{b^{b}c^{c}}&\leq 
\int_{0}^{r}\ln\frac{x+r}{x}\d x+\int_{r}^{c}\ln\frac{x+r}{x}\d x
\\
&\leq \int_{0}^{r}\ln\frac{x+r}{x}\d x+\int_{r}^{c}\ln2\d x\\
&= r\cdot \int_{0}^{1}\ln\left(\frac{x+1}{x}\right)\d x+(c-r)\ln 2\\
&=r\ln 4+(c-r)\ln 2=(c+r)\ln 2=d\ln 2.
\end{align*}
Therefore, we have $a^{a}d^{d}b^{-b}c^{-c}\leq 2^{d}$, as desired.
\end{proof}